\newcounter{hours}\newcounter{minutes}
\newcommand\printtime{\setcounter{hours}{\time/60}%
  \setcounter{minutes}{\time-\value{hours}*60}%
  \thehours\,h\,\theminutes}
\newcommand\dateandtime{\today\quad\printtime}
\newcommand\A{\mathcal{A}}
\newcommand\B{\mathcal{B}}
\renewcommand\C{\mathcal{C}}
\newcommand\D{\mathcal{D}}
\newcommand\J{\mathcal{J}}
\newcommand\E{\mathcal{E}}
\newcommand\F{\mathcal{F}}
\newcommand\K{\mathcal{K}}
\renewcommand\R{\mathcal{R}}
\renewcommand\L{\mathcal{L}}
\renewcommand\H{\mathcal{H}}
\newcommand\Card{{\rm Card}}
\newcommand\Id{{\rm Id}}
\let\edge\xrightarrow
\newcommand\intertitre[1]{\paragraph{\textit{#1}\/.}}
\begin{document}

%====== Full title =====================================
\title{Symbolic dynamics}

%====== Full authors ===================================
\author{Marie-Pierre B\'eal$^{1}$, Jean Berstel$^{1}$, S{\o}ren Eilers$^{2}$,
Dominique Perrin$^{1}$}

%====== Running head: short form for authors and short tilte
\markboth{M.-P. B\'eal, J. Berstel, S. Eilers, D. Perrin}{Symbolic dynamics}

%====== Address and emails ============================
\address{ $^1$LIGM (Laboratoire d'Informatique Gaspard-Monge),
  Universit\'e Paris-Est\\ 
  email:\,\url{{beal,berstel,perrin}@univ-mlv.fr}
  \\[4pt]
  $^2$Institut for Matematiske Fag, K{\o}benhavns Universitet\\
  email:\,\url{eilers@math.ku.dk}}

%====== Makes the title and labels the chapter ========
\maketitle\label{chapterBBEP} 
\begin{center}
  \dateandtime
\end{center}
%=====================================================

%======= Mandatory classification and keywords ========
\begin{classification}
  68Q45, 37B10
\end{classification}

\begin{keywords}
  Symbolic dynamics, 
\end{keywords}

%======= produces a table of contents of the chapter
\localtableofcontents
\vfill
%======================================================
%
% Your contribution starts here
%
%======================================================

\section{Introduction}
Symbolic dynamics is part of dynamical systems theory. It studies 
discrete dynamical systems called shift spaces and their relations
under appropriately defined morphisms, in particular isomorphisms
called conjugacies. A special emphasis has been
put on the classification of shift spaces up to conjugacy or flow equivalence.

There is a considerable overlap between symbolic dynamics and automata
theory. Actually, one of the basic objects of symbolic dynamics, the
sofic systems, are essentially the same as finite automata. In
addition, the morphisms of shift spaces are a particular case of
rational transductions, that is functions defined by finite automata
with output. The difference is that symbolic dynamics considers mostly
infinite words and that all states of the automata are initial and
final. Also, the morphisms are particular transductions which are
given by local maps.

This chapter presents some of the links between automata theory and
symbolic dynamics. The emphasis is on two particular points. The
first one is the interplay between some particular classes of
automata, such as local automata and results on embeddings
of shifts of finite type. The second one is the connection between
syntactic semigroups and the classification of sofic shifts
up to conjugacy.

The chapter is organized as follows. 
In Section~\ref{sectionShiftSpaces},
we introduce the basic notions of symbolic dynamics: shift spaces,
conjugacy and flow equivalence. We state without proof two important
results: the Decomposition Theorem and the Classification Theorem.

In Section~\ref{sectionAutomata}, we introduce automata in relation to
sofic shifts. In Section~\ref{sectionMinimalAutomata}, we define
two kinds of minimal automata for shift spaces: the Krieger automaton
and the Fischer automaton. We also relate these automata with the
syntactic semigroup of a shift space. 

In Section~\ref{sectionSymbolicConjugacy},
we state and prove an analogue due to Nasu of the Decomposition
Theorem and of the Classification Theorem.

In Section~\ref{sectionSpecialFamilies} we consider two special
families of automata: local automata and automata with finite delay.
We show that they are related to shifts of finite  type and of
almost finite type, respectively. We prove an embedding theorem
(Theorem~\ref{BBEP:TheoremLocal}) which is a counterpart for automata
of a result known as Nasu's masking lemma.

In Section~\ref{sectionSyntacticInvariants} we study syntactic
invariants of sofic shifts. We introduce the syntactic graph of an
automaton.  We show that that the syntactic graph of an automaton is
invariant under conjugacy (Theorem~\ref{theoremSyntGraphAutomata}) and
also under flow equivalence. We finally state some results concerning
the shift spaces corresponding to some pseudovarieties of ordered
semigroups.

We follow the notation of the book of Doug Lind and Brian
Marcus~\cite{Lind&Marcus:1995}.  In general, we have not not
reproduced the proofs of the results which can be found there.  We
thank Mike Boyle and Alfredo Costa for their help.

%%%%%%%%%%%%%%%%%%%%%%%%%%%%%%%%%%%%%%%%%%%%%%%%
%
%  Shift Spaces
%
%%%%%%%%%%%%%%%%%%%%%%%%%%%%%%%%%%%%%%%%%%%%%
\section{Shift spaces}\label{sectionShiftSpaces}

This section contains basic definitions concerning symbolic dynamics.

The first subsection gives the definition of shift spaces, and the
important case of edge shifts. 

The next
subsection and thus also under (Section~\ref{subSectionConjugacy}) introduces conjugacy,
and the basic notion of state splitting and merging. It contains
the statement of two important theorems, the Decomposition Theorem
(Theorem~\ref{ShiftDecompositionTheorem}) and the Classification
Theorem (Theorem~\ref{ClassificationTheorem}).

The last subsection (Section~\ref{subSectionFlowEquivalence})
introduces flow equivalence, and states Frank's characterization of
flow equivalent edge shifts (Theorem~\ref{thmFranks}).

\subsection{Shift spaces}\label{subSectionShiftspace}

Let $A$ be a finite alphabet.  We  denote by $A^*$ the set of words on
$A$ and by $A^+$ the set of nonempty words. A word $v$ is a \emph{factor} of a word $t$ if $t=uvw$ for some
words $u,w$. 

We denote by $A^\Z$ the set of
biinfinite sequences of symbols from $A$.  This set is a topological
space in the product topology of the discrete topology on $A$.
The \emph{shift transformation}\index{shift space!transformation} on $A^\Z$ is the map
$\sigma_A$ from $A^\Z$ onto itself defined by $y=\sigma_A(x)$ if
$y_n=x_{n+1}$ for $n\in \Z$.
A set $X\subset A^\Z$ is \emph{shift invariant} if $\sigma(X)=X$.
A \emph{shift space}\index{shift space} on the alphabet $A$ is a shift-invariant subset of
$A^\Z$ which is closed in the topology.
The set $A^\Z$ itself is a shift space called the \emph{full
  shift}\index{full shift}\index{shift space!full}.

For a set $W\subset A^*$ of words (whose elements are called the
\emph{forbidden factors}\index{forbidden factors}\index{shift
  space!forbidden factors}), we denote by $X^{(W)}$ the set of $x\in
A^\Z$ such that no $w\in W$ is a factor of $x$.

\begin{proposition}
  The shift spaces on the alphabet $A$ are the sets $X^{(W)}$, for $W\subset A^*$.
\end{proposition}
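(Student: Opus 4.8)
The plan is to prove the two inclusions separately, showing that every set of the form $X^{(W)}$ is a shift space, and conversely that every shift space arises this way.

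First I would show that $X^{(W)}$ is a shift space for any $W\subset A^*$. Shift invariance is immediate: the factors of a point $x$ and of $\sigma_A(x)$ are the same, so $x\in X^{(W)}$ if and only if $\sigma_A(x)\in X^{(W)}$, whence $\sigma_A(X^{(W)})=X^{(W)}$. For closedness, I would argue that the complement is open: if $x\notin X^{(W)}$, then some $w\in W$ occurs as a factor of $x$, say $w=x_ix_{i+1}\cdots x_{i+|w|-1}$; then every $y\in A^\Z$ agreeing with $x$ on the coordinates $i,\dots,i+|w|-1$ also contains $w$ as a factor, and this set of such $y$ is a basic open cylinder containing $x$ and disjoint from $X^{(W)}$. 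Hence the complement of $X^{(W)}$ is open, so $X^{(W)}$ is closed.

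Conversely, given a shift space $X$, I would produce a witnessing set of forbidden factors by taking $W$ to be the set of all words in $A^*$ that do \emph{not} occur as a factor of any point of $X$; equivalently, let $B(X)$ be the set of all factors of points of $X$ (the language of $X$) and set $W=A^*\setminus B(X)$. By construction $X\subset X^{(W)}$, since no point of $X$ can contain a word outside $B(X)$. The reverse inclusion $X^{(W)}\subset X$ is where the topological hypotheses are used: suppose $x\in X^{(W)}$, so every factor of $x$ lies in $B(X)$, i.e.\ every finite block $x_{-n}\cdots x_n$ of $x$ appears somewhere in some point of $X$; by shift invariance of $X$ we may translate so that such a point $x^{(n)}\in X$ agrees with $x$ on coordinates $-n,\dots,n$. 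Then $x^{(n)}\to x$ in the product topology, and since $X$ is closed, $x\in X$.

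The main obstacle is the last step: one must correctly exploit shift invariance to recenter the chosen points $x^{(n)}$ so that the agreement with $x$ happens on a symmetric window around coordinate $0$, and then check convergence in the product topology. This is routine once set up, but it is the only place where both defining properties of a shift space (closedness and shift invariance) are genuinely needed, and getting the indices right is the one point that requires care.
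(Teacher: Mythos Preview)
Your argument is correct and is the standard proof of this fact. Note that the paper states this proposition without proof (it is a basic result, treated for instance in Lind and Marcus~\cite{Lind&Marcus:1995}), so there is no paper proof to compare against; your approach is exactly the expected one.
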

A shift space $X$ is of \emph{finite
  type}\index{shift space!finite type}  if there is a
finite set  $W\subset A^*$ such that $X=X^{(W)}$.
\begin{example}\label{ExGoldenMeanShift}
  Let $A=\{a,b\}$, and let $W=\{bb\}$. The shift $X^{(W)}$ is composed
  of the sequences without two consecutive $b$'s. It is a shift of
  finite type, called the \emph{golden mean shift}\index{golden mean
    shift}\index{shift space!golden mean}.
\end{example}

Recall that a set $W\subset A^*$ is said to be \emph{recognizable} if
it can be recognized by a finite automaton or, equivalently, defined
by
a regular expression.
A shift space $X$ is said to be 
\emph{sofic}\index{shift space!sofic}\index{sofic shift} if there is
a recognizable set $W$ such that $X=X^{(W)}$.
Since a finite set is recognizable, any  shift of finite type is sofic.

\begin{example}\label{ExEvenShift}
  Let $A=\{a,b\}$, and let $W=a(bb)^*ba$. The shift $X^{(W)}$ is composed
  of the sequences where two consecutive occurrences of the symbol $a$
  are separated by an even number of $b$'s. It is a sofic shift called
  the \emph{even shift}\index{even shift}\index{shift space!even}. It is not
a shift of finite type. Indeed, assume that $X=X^{(V)}$ for a finite
set
$V\subset A^*$. Let $n$ be the maximal length of the words of $V$.
A biinfinite repetition of the word $ab^n$ has the same blocks
of length at most $n$ as a biinfinite repetition of the word
$ab^{n+1}$.
However, one is in $X$ if and only if the other is not in $X$, a contradiction.
\end{example}

\begin{example}
Let $A=\{a,b\}$ and let $W=\{ba^nb^ma\mid n,m\ge 1, n\ne m\}$. The shift $X^{(W)}$
is composed of infinite sequences of the form 
$\ldots a^{n_i}b^{n_i}a^{n_{i+1}}b^{n_{i+1}}\ldots$. The set $W$ is
not
recognizable and it can be shown that $X$ is not sofic.
\end{example}

\intertitre{Edge shifts} 

In this chapter, a \emph{graph}\index{graph} $G=(Q,\E)$ is a pair
composed of a finite set $Q$ of
\emph{vertices}\index{vertex}\index{graph!vertex} (or
\emph{states}\index{state}\index{graph!state}), and a finite set $\E$ of
\emph{edges}\index{edge}\index{graph!edge}. The graph is equipped with
two maps $i,t:\E\to Q$ which associate, to an edge $e$, its
\emph{initial} and \emph{terminal}
vertex\index{vertex!initial}\index{vertex!terminal}%
\index{edge!initial vertex}\index{edge!terminal vertex}\footnote{We
  avoid the use of the terms `initial state' or `terminal state' of an
  edge to avoid confusion with the initial or terminal states of an
  automaton}. We say that $e$ starts in $i(e)$ and ends in
$t(e)$. Sometimes, $i(e)$ is called the \emph{source}\index{source of
    an edge}\index{edge!source} and $t(e)$ is called the
  \emph{target}\index{target of an edge}\index{edge!target} of $e$.

  We also say that $e$ is an incoming edge for $t(e)$, and an outgoing
  edge for $i(e)$.  Two edges $e,e'\in\E$ are
  \emph{consecutive}\index{edge!consecutive} if $t(e)=i(e')$.

For $p,q\in Q$, we denote by $\E_p^{q}$ the set of edges of a graph
$G=(Q,\E)$ starting in state $p$ and ending in state $q$.  The
\emph{adjacency matrix}\index{adjacency matrix}%
\index{graph!adjacency matrix}\index{matrix!adjacency} of a graph
$G=(Q,\E)$ is the $Q\times Q$-matrix $M(G)$ with elements in $\N$
defined by
\begin{displaymath}
  M(G)_{pq}=\Card(\E_p^{q})\,.
\end{displaymath}

A (finite or biinfinite) \emph{path}\index{path in a
  graph}\index{graph!path} is a (finite or biinfinite) sequence of
consecutive edges. The \emph{edge shift}\index{edge shift}%
\index{shift space!edge} on the graph $G$ is the set of biinfinite
paths in $G$. It is denoted by $X_G$ and is a shift of finite type on
the alphabet of edges. Indeed, it can be defined by taking the set of
non-consecutive edges for the set of forbidden factors. The converse
does not hold, since the golden mean shift is not an edge shift.
However, we shall see below (Proposition~\ref{BBEP:prop:conj}) that
every shift of finite type is conjugate to an edge shift.

A graph is \emph{essential}%
\index{graph!essential}\index{essential!graph} if every state has at
least one incoming and one outgoing edge. This implies that every edge
is on a biinfinite path. The \emph{essential part} of a graph $G$
is the subgraph obtained by restricting to the set of vertices and
edges which are on a biinfinite path.

%%%%%%%%%%%%%%%%%%%%%%%%%%%%%%%%%%%%%%%%%%%%%%%%%%%%%%
\subsection{Conjugacy}\label{subSectionConjugacy}

\intertitre{Morphisms}

Let $X$ be a shift space on an alphabet $A$, and let $Y$ be a shift
space on an alphabet $B$.

A \emph{morphism}\index{morphism!of shifts}\index{shift space!morphism}
$\varphi$ from $X$ into $Y$ is a continuous map from $X$ into $Y$
which commutes with the shift.  This means that
$\varphi\circ\sigma_A=\sigma_B\circ\varphi$.

Let $k$ be a positive integer. A
\emph{$k$-block}\index{block}\index{k-block@$k$-block} of $X$ is a
factor of length~$k$ of an element of $X$.  We denote by $\B(X)$ the
set of all blocks of $X$ and by $\B_{k}(X)$ the set of $k$-blocks of
$X$.  A function $f:\B_{k}(X)\to B$ is called a $k$-\emph{block
  substitution}\index{block!substitution}\index{substitution!block}.
Let now $m,n$ be fixed nonnegative integers with $k=m+1+n$. Then the
function $f$ defines a map $\varphi$ called \emph{sliding block
  map}\index{block map!sliding}\index{sliding block map}%
\index{map!sliding block} with \emph{memory}%
\index{block map!memory}\index{memory, block map} $m$ and
\emph{anticipation}\index{anticipation, block map}%
\index{block map!anticipation} $n$ as follows. The image of $x\in X$
is the element $y=\varphi(x)\in B^\Z$ given by
\begin{displaymath}
  y_i=f(x_{i-m}\cdots x_i\cdots x_{i+n})\,.
\end{displaymath}
We denote $\varphi=f_\infty^{[m,n]}$.  It is a sliding block map
from $X$ into $Y$ if $y$ is in $Y$ for all $x$ in $X$. We also say
that $\varphi$ is a $k$-block map from $X$ into $Y$. The
simplest case occurs when $m=n=0$. In this case, $\varphi$ is
a $1$-block map.

The following result is Theorem~6.2.9 in~\cite{Lind&Marcus:1995}.

\begin{theorem}[Curtis--Lyndon--Hedlund]%
  \index{Curtis--Lyndon--Hedlund Theorem}%
  \index{theorem!Curtis--Lyndon--Hedlund} A map from a shift space $X$
  into a shift space $Y$ is a morphism if and only if it is a sliding
  block map.
\end{theorem}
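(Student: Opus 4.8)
The plan is to prove the two implications separately: the forward direction (sliding block map $\Rightarrow$ morphism) is a direct verification, while the converse rests on the compactness of $X$, which is where the finiteness of the alphabet enters decisively.

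For the easy direction, suppose $\varphi=f_\infty^{[m,n]}$ for a block substitution $f:\B_k(X)\to B$ with $k=m+1+n$. I would check shift-commutation coordinatewise: writing $y=\varphi(x)$, the $i$-th coordinate of $\sigma_B(\varphi(x))$ is $y_{i+1}=f(x_{i+1-m}\cdots x_{i+1+n})$, while the $i$-th coordinate of $\varphi(\sigma_A(x))$ is $f((\sigma_A x)_{i-m}\cdots(\sigma_A x)_{i+n})=f(x_{i+1-m}\cdots x_{i+1+n})$, so the two maps coincide. Continuity follows because $y_i$ depends only on the finitely many coordinates $x_{i-m},\ldots,x_{i+n}$: the preimage of a basic cylinder of $Y$ is a finite union of cylinders of $X$, hence open.

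For the converse, assume $\varphi$ is continuous and commutes with the shift. I would define $g:X\to B$ by $g(x)=\varphi(x)_0$. Since the projection onto coordinate $0$ is continuous and $B$ carries the discrete topology, $g$ is a continuous map into a discrete space, hence locally constant. As $X$ is compact (a closed subset of the compact space $A^\Z$), there is an integer $N\ge 0$ such that $g(x)$ depends only on the central block $x_{-N}\cdots x_N$: every point of $X$ lies in a cylinder on which $g$ is constant, finitely many such cylinders cover $X$, and $N$ may be taken as the largest radius that occurs. This produces a well-defined block substitution $f:\B_{2N+1}(X)\to B$ with $f(x_{-N}\cdots x_N)=g(x)$. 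Shift-commutation then recovers $\varphi$ from $f$: iterating $\varphi\circ\sigma_A=\sigma_B\circ\varphi$ gives $\varphi(x)_i=\varphi(\sigma_A^i(x))_0=g(\sigma_A^i(x))$ for every $i\in\Z$, and evaluating the window yields $\varphi(x)_i=f(x_{i-N}\cdots x_{i+N})$. Hence $\varphi=f_\infty^{[N,N]}$ is a sliding block map with memory and anticipation $N$.

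The hard part is precisely the compactness step in the converse: passing from the pointwise local constancy of $g$ to a single uniform window radius $N$ valid simultaneously for all points of $X$. This is exactly where the finiteness of the alphabet is indispensable, since it guarantees that $A^\Z$, and therefore its closed subset $X$, is compact; the finite subcover is what upgrades ``locally constant'' to ``determined by a bounded block.'' Everything else in the argument is formal bookkeeping with the shift-commutation identity.
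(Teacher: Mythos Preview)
Your proof is correct and is the standard argument for the Curtis--Lyndon--Hedlund theorem. Note, however, that the paper does not actually prove this statement: it merely cites it as Theorem~6.2.9 in Lind and Marcus, so there is no proof in the paper to compare against. Your argument is essentially the one given in that reference, so there is no discrepancy to discuss.
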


%%%%%%%%%%%%%%%%%%%%%%%%%%%%%%%%%%%%%%%
\intertitre{Conjugacies of shifts}

A morphism from a shift $X$ onto a shift $Y$ is called a
\emph{conjugacy}\index{conjugacy!of shifts}\index{shift space!conjugacy} if
it is one-to-one from $X$ onto $Y$. Note that in this case, using
standard topological arguments, one shows that the inverse mapping is
also a morphism, and thus a conjugacy.

We define the $n$-th \emph{higher block
  shift}\index{higher!block shift}\index{shift space!higher block} $X^{[n]}$
of a shift $X$ over the alphabet $A$ as follows. The alphabet of
$X^{[n]}$ is the set $B=\B_n(X)$ of blocks of length $n$ of $X$. 
\begin{proposition}
The shifts $X$ and $X^{[n]}$ for $n\ge 1$ are conjugate.
\end{proposition}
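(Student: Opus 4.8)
The plan is to realize the correspondence between $X$ and $X^{[n]}$ by an explicit pair of sliding block maps, letting the Curtis--Lyndon--Hedlund Theorem supply continuity and shift-commutativity for free.

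First I would make the definition of $X^{[n]}$ explicit. Its alphabet is $B=\B_{n}(X)$, and a point $y\in B^{\Z}$ belongs to $X^{[n]}$ exactly when, writing each letter $y_{i}$ as a word $a_{1}^{(i)}\cdots a_{n}^{(i)}$ over $A$, consecutive letters overlap, i.e.\ $a_{j+1}^{(i)}=a_{j}^{(i+1)}$ for $1\le j\le n-1$ and all $i\in\Z$, and the decoded word belongs to $X$; equivalently, $X^{[n]}$ is the image of $X$ under the $n$-block map reading off the window of length~$n$ at each position. I would begin by checking that these two descriptions coincide.

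Next, define $\varphi=f_{\infty}^{[0,n-1]}$, where $f\colon\B_{n}(X)\to B$ is the identity map, viewed as an $n$-block substitution with memory~$0$ and anticipation~$n-1$; concretely $\varphi(x)_{i}=x_{i}x_{i+1}\cdots x_{i+n-1}$. Each $\varphi(x)_{i}$ lies in $\B_{n}(X)=B$ and consecutive letters overlap by construction, so $\varphi$ maps $X$ into $X^{[n]}$; being a sliding block map, it is a morphism by the Curtis--Lyndon--Hedlund Theorem. In the other direction, let $\psi$ be the $1$-block map induced by $g\colon B\to A$, $g(a_{1}\cdots a_{n})=a_{1}$, so that $\psi(y)_{i}$ is the first symbol of the letter $y_{i}$. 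I would then verify $\psi(X^{[n]})\subseteq X$: for $y\in X^{[n]}$ and $x=\psi(y)$, the overlap relations show that the factor of $x$ of length $n$ starting at position $i$ is exactly the word $y_{i}\in\B_{n}(X)$, so every factor of $x$ is a factor of a block of $X$ and hence a block of $X$; since a shift space is exactly a set of the form $X^{(W)}$, this forces $x\in X$. Thus $\psi$ too is a sliding block map between shift spaces, hence a morphism.

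Finally I would check that the two maps are mutually inverse. The identity $\psi\circ\varphi=\Id_{X}$ is immediate since the first symbol of $x_{i}\cdots x_{i+n-1}$ is $x_{i}$. For $\varphi\circ\psi=\Id_{X^{[n]}}$, given $y\in X^{[n]}$ the letter $(\varphi\circ\psi)(y)_{i}$ is the length-$n$ window of $\psi(y)$ at position $i$, whose $j$-th symbol is the first symbol of $y_{i+j-1}$, which by the overlap relations equals the $j$-th symbol of $y_{i}$; hence $(\varphi\circ\psi)(y)_{i}=y_{i}$. Therefore $\varphi$ is a bijective morphism, i.e.\ a conjugacy, and $X$ and $X^{[n]}$ are conjugate. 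The only point requiring any care is the inclusion $\psi(X^{[n]})\subseteq X$ --- that the overlap conditions together with membership of the letters in $\B_{n}(X)$ really do pull the decoded sequence back into $X$ --- and this is handled by the characterization of shift spaces through forbidden factors; should $X^{[n]}$ instead be defined directly as the image $\varphi(X)$, this step is vacuous and the argument becomes purely formal.
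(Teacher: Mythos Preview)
Your proof is correct and follows essentially the same approach as the paper: both exhibit an explicit $n$-block map from $X$ to $X^{[n]}$ with a $1$-block inverse. The only cosmetic difference is that the paper uses memory $n-1$ and anticipation $0$ (so $\varphi(x)_i=x_{i-n+1}\cdots x_i$) with the inverse reading off the \emph{last} letter, whereas you use the mirror-image convention; your more careful discussion of the definition of $X^{[n]}$ and of why $\psi$ lands in $X$ is extra care the paper omits, since it simply takes $X^{[n]}$ to be the image of $\varphi$ by definition.
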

\begin{proof}
Let
$f:\B_n(X)\to B$ be the $n$-block substitution which maps the factor
$x_1\cdots x_n$ to itself, viewed as a symbol of the alphabet $B$. By
construction, the shift $X^{[n]}$ is the image of $X$ by the map
$f^{[n-1,0]}_\infty$. This map is a conjugacy since it is bijective, and
its inverse is the $1$-block map $g_\infty$ corresponding to the
$1$-block map which associates to the symbol $x_1\cdots x_n$ of $B$
the symbol $x_n$ of $A$. 
\end{proof}

Let $G=(Q,\E)$ be a graph. For an integer $n\ge 1$, denote by
$G^{[n]}$ the following graph called the $n$-th \emph{higher edge
  graph}\index{higher!edge graph}\index{graph!higher edge} of
$G$. For $n=1$, one has $G^{[1]}=G$. For $n>1$, the set of states of
$G^{[n]}$ is the set of paths of length $n-1$ in $G$. The edges of
$G^{[n]}$ are the paths of length~$n$ of $G$. The start state of an
edge $(e_1,e_2,\ldots,e_n)$ is $(e_1,e_2,\ldots,e_{n-1})$ and its end
state is $(e_2,e_3,\ldots,e_n)$.

The following result shows that the higher block shifts of an edge
shift are again edge shifts.
\begin{proposition}
Let $G$ be a graph. For $n\ge 1$, one has $X_G^{[n]}=X_{G^{[n]}}$. 
\end{proposition}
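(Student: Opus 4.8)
The plan is to unwind both sides of the claimed equality $X_G^{[n]}=X_{G^{[n]}}$ into explicit descriptions of their biinfinite sequences over a common alphabet, and then observe that the two descriptions coincide. First I would fix the alphabets: the edge shift $X_G$ lives over the alphabet $\E$ of edges of $G$, so by definition its $n$-th higher block shift $X_G^{[n]}$ lives over the alphabet $\B_n(X_G)$, which is precisely the set of length-$n$ blocks occurring in biinfinite paths of $G$; since $G$ may be taken essential (or one restricts to its essential part without changing the edge shift), these are exactly the paths of length $n$ in $G$, i.e.\ the edge set of $G^{[n]}$. On the other side, $X_{G^{[n]}}$ is by definition the set of biinfinite paths in $G^{[n]}$, which is a shift over the same alphabet of length-$n$ paths of $G$. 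So both shifts sit inside $(\B_n(X_G))^\Z$ and it suffices to check they contain the same sequences.

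Next I would characterize membership in each. A biinfinite sequence $(b_i)_{i\in\Z}$ with each $b_i\in\B_n(X_G)$ lies in $X_G^{[n]}$ iff it is the image under the conjugacy $f^{[n-1,0]}_\infty$ (from the previous proposition) of some biinfinite path $(e_j)_{j\in\Z}$ in $G$, which means $b_i=(e_{i-n+1},\ldots,e_i)$ for all $i$; equivalently, writing $b_i=(e_1^{(i)},\ldots,e_n^{(i)})$, consecutive blocks overlap in the standard window-sliding way: $e_{k+1}^{(i)}=e_k^{(i+1)}$ for $1\le k\le n-1$. On the other hand, $(b_i)$ lies in $X_{G^{[n]}}$ iff consecutive $b_i$ are consecutive edges of $G^{[n]}$, i.e.\ $t(b_i)=i(b_{i+1})$ in $G^{[n]}$; by the definition of $G^{[n]}$ the end state of the edge $b_i=(e_1^{(i)},\ldots,e_n^{(i)})$ is $(e_2^{(i)},\ldots,e_n^{(i)})$ and the start state of $b_{i+1}=(e_1^{(i+1)},\ldots,e_n^{(i+1)})$ is $(e_1^{(i+1)},\ldots,e_{n-1}^{(i+1)})$, so this condition is exactly $(e_2^{(i)},\ldots,e_n^{(i)})=(e_1^{(i+1)},\ldots,e_{n-1}^{(i+1)})$, which is the same overlap condition as before. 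Hence the two membership criteria agree, giving $X_G^{[n]}=X_{G^{[n]}}$. The case $n=1$ is immediate since $G^{[1]}=G$ and $X^{[1]}=X$.

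The one point that needs a little care — and the closest thing to an obstacle — is the identification of the alphabet $\B_n(X_G)$ with the edge set of $G^{[n]}$ (the length-$n$ paths of $G$). A length-$n$ path of $G$ need not a priori extend to a biinfinite path, so it need not occur as a block of $X_G$; conversely every block of $X_G$ is visibly a length-$n$ path. This is harmless: either one works throughout with essential graphs (for which the excerpt notes every edge lies on a biinfinite path, so every finite path extends biinfinitely), or one simply notes that passing to the essential part of $G$ changes neither $X_G$ nor the relevant part of $G^{[n]}$, so both alphabets shrink to the same set. After this bookkeeping the proof is the routine verification above that "overlapping windows" and "consecutive edges in $G^{[n]}$" are literally the same condition, so I would present it tersely.
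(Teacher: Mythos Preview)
Your argument is correct. The paper states this proposition without proof, presumably regarding it as an immediate consequence of the definitions; your unwinding of both sides into the same sliding-window overlap condition is exactly the routine verification one would supply, and your handling of the essential-graph bookkeeping is appropriate.
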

A shift of finite type need not be an edge shift. For example
the golden mean shift of Example~\ref{ExGoldenMeanShift} is not an edge shift.
However, any shift of finite type comes from an edge shift in the
following sense.
\begin{proposition}\label{BBEP:prop:conj}
  Every shift of finite type is conjugate to an edge shift.
\end{proposition}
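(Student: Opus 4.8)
The plan is to realize $X$ up to conjugacy as one of its higher block shifts, and to observe that once $X$ is presented by forbidden blocks of a single common length, this higher block shift is literally an edge shift.

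First I would normalize the presentation. Let $X$ be a shift of finite type, so $X=X^{(W)}$ for some finite $W\subseteq A^*$. If $\varepsilon\in W$ then $X=\emptyset$, which is the edge shift of the empty graph, so we may assume every word of $W$ is nonempty and set $n=\max_{w\in W}|w|\ge 1$. Let $W'$ be the set of words of length exactly $n$ having some element of $W$ as a factor. Then $X^{(W')}=X^{(W)}$: every element of $W'$ has a factor in $W$, so forbidding $W$ forbids $W'$; conversely an occurrence of a word $w\in W$ in a biinfinite sequence, being of length $\le n$, extends to an occurrence of some length-$n$ word of $W'$. Hence we may assume that every word of $W$ has length exactly $n$.

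Next I would introduce the graph $G=(Q,\E)$ with vertex set $Q=\B_{n-1}(X)$ and edge set $\E=\B_n(X)$, where the edge $a_1a_2\cdots a_n$ goes from the vertex $a_1\cdots a_{n-1}$ to the vertex $a_2\cdots a_n$ (for $n=1$ this is a bouquet with the single vertex $\varepsilon$ and one loop per letter occurring in $X$). The claim to establish is that, under the natural identification of the alphabet $\B_n(X)$ of $X^{[n]}$ with $\E$, one has $X_G=X^{[n]}$. Indeed, a biinfinite path in $G$ is a sequence $(w_k)_{k\in\Z}$ of edges with $t(w_k)=i(w_{k+1})$ for all $k$, and this overlap condition says exactly that $w_k=a_ka_{k+1}\cdots a_{k+n-1}$ for some biinfinite sequence $(a_k)_{k\in\Z}$ over $A$; belonging to $X_G$ then amounts to $w_k\in\B_n(X)$ for all $k$. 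On the other hand $(w_k)$ lies in $X^{[n]}$ precisely when the sequence $(a_k)$ above can be taken in $X$. So the claim reduces to: a biinfinite sequence over $A$ all of whose length-$n$ factors lie in $\B_n(X)$ is an element of $X$. This is exactly where the uniform length $n$ is used: every forbidden word has length $n$, so the only factors that could be forbidden are the length-$n$ windows, and each such window, being an $n$-block of $X=X^{(W)}$, is a factor of an element of $X$ and hence not in $W$; the reverse inclusion is immediate. Thus $X_G=X^{[n]}$, and since $X$ is conjugate to $X^{[n]}$ by the result established above, $X$ is conjugate to the edge shift $X_G$.

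The step I expect to require the most care is the interaction between the normalization and this last equality: the point of forcing all forbidden words to share the length $n$ is that membership in $X$ then stops being a condition about arbitrarily long factors and becomes the purely local statement that every length-$n$ window is an $n$-block of $X$, which is precisely what "being a biinfinite path in $G$" expresses. Everything else is routine unwinding of the definitions of $X^{[n]}$ and of $X_G$.
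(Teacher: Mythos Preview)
Your proof is correct and follows essentially the same approach as the paper: define the graph $G$ on vertices $\B_{n-1}(X)$ and edges $\B_n(X)$, verify $X_G=X^{[n]}$, and conclude via the conjugacy $X\simeq X^{[n]}$. The only cosmetic difference is that you first replace $W$ by a set of words all of length exactly $n$, whereas the paper simply takes $n$ to be the maximal length of the original forbidden words and uses directly that any $n$-block of $X$ contains no factor from $W$; both routes justify the same key inclusion $X_G\subseteq X^{[n]}$.
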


\begin{proof}
  We show that for every shift of finite type $X$ there is an integer
  $n$ such that $X^{[n]}$ is an edge shift. Let $W\subset A^*$ be a
  finite set of words such that $X=X^{(W)}$, and let $n$ be the
  maximal length of the words of $W$. If $n=0$, $X$ is the full shift.
  Thus we assume $n\ge1$. Define a graph $G$ whose vertices are the
  blocks of length $n-1$ of $X$, and whose edges are the block of
  length $n$ of $X$. For $w\in\B_n(X)$, the initial (resp. terminal)
  vertex of $w$ is the prefix (resp. suffix) of length $n-1$ of $w$.
  
  We show that $X_G=X^{[n]}$. An element of $X^{[n]}$ is always an
  infinite path in $G$.  To show the other inclusion, consider
  an infinite path $y$ in $G$. It is the sequence of $n$-blocks of an
  element $x$ of $A^\Z$ which does not contain any block on $W$.
  Since $X=X^{(W)}$, we get that $x$ is in $X$. Consequently, $y$ is
  in $X^{[n]}$. This proves the equality.
\end{proof}

\begin{proposition}
A shift space that is conjugate to a shift of finite type is itself
of finite type.
\end{proposition}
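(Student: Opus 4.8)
The plan is to use the standard reformulation of the finite-type condition in terms of finite memory: a shift space $Z$ over an alphabet $C$ is of finite type if and only if there is an integer $M\ge 0$ (one then says $Z$ is \emph{$M$-step}) such that, whenever $uv\in\B(Z)$, $vw\in\B(Z)$ and $|v|\ge M$, one has $uvw\in\B(Z)$. I would first record this equivalence: if $Z=X^{(W)}$ with all words of $W$ of length at most $M+1$ the gluing property is immediate, and conversely the property shows that forbidding exactly the words of length $M+1$ that are not in $\B(Z)$ defines $Z$. Equivalently, and more convenient here, $Z$ is $M$-step precisely when any two points of $Z$ that agree on $M$ consecutive coordinates can be spliced at that block into a third point of $Z$.

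Now let $X$ be an $M$-step shift of finite type over $A$ and let $\varphi\colon X\to Y$ be a conjugacy onto a shift space $Y$ over $B$, with inverse $\psi\colon Y\to X$. By the Curtis--Lyndon--Hedlund Theorem both $\varphi$ and $\psi$ are sliding block maps, and after enlarging the windows symmetrically I may assume a common radius $N$, so that $\varphi=f_\infty^{[N,N]}$ and $\psi=g_\infty^{[N,N]}$ for suitable block substitutions $f$ and $g$. The claim is that $Y$ is $K$-step for $K=M+4N$.

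To verify the gluing condition for $Y$, take $y,y'\in Y$ agreeing on a window of length $K$, say on the coordinates $[0,K-1]$. Applying $\psi$, whose coordinate at position $i$ reads only $y_{[i-N,i+N]}$, the points $x=\psi(y)$ and $x'=\psi(y')$ of $X$ agree on the coordinates $[N,K-1-N]$, a block of length $K-2N=M+2N\ge M$. I then splice $x$ and $x'$ at a point in the middle of this agreement block; since $X$ is $M$-step the result $x''$ lies in $X$, and I set $z=\varphi(x'')\in Y$. Because $\varphi$ also has radius $N$, the coordinate $z_i$ depends only on $x''_{[i-N,i+N]}$, which coincides with $x_{[i-N,i+N]}$ to the left of the splice and with $x'_{[i-N,i+N]}$ to the right; hence $z$ agrees with $y=\varphi(x)$ on a left half-line and with $y'=\varphi(x')$ on a right half-line, overlapping on the window where $y$ and $y'$ already coincide. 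Thus $z$ is exactly the splice of $y$ and $y'$ and belongs to $Y$, which is the $K$-step condition.

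The substance of the argument is the two-sided bookkeeping of the window radii: the overlap length $K$ must be large enough that, after it is shortened by $N$ on each side when pulling back through $\psi$, the remaining agreement block still has length at least $M$ for the $M$-step property of $X$, and then large enough again that pushing forward through $\varphi$ reconstructs $y$ and $y'$ on overlapping half-lines. The choice $K=M+4N$ leaves room for both reductions, and checking that the spliced point $z$ genuinely restricts to $y$ and to $y'$ on the two sides is a routine index chase around the splice point.
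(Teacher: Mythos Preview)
Your argument is correct. The splicing characterization of the finite-type property is standard, and your bookkeeping works: since $x$ and $x'$ agree on the full block $[N,M+3N-1]$ of length $M+2N$, the spliced point $x''$ equals $x$ on $(-\infty,M+3N-1]$ and $x'$ on $[N,\infty)$, so $z=\varphi(x'')$ agrees with $y$ on $(-\infty,M+2N-1]$ and with $y'$ on $[2N,\infty)$; these half-lines overlap inside $[0,K-1]$, and together cover $\Z$, so $z$ is precisely the desired splice of $y$ and $y'$.

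The paper takes a different route. Rather than invoking the $M$-step characterization, it first replaces $X$ by an edge shift $X_G$ via Proposition~\ref{BBEP:prop:conj}, then passes to a higher edge graph so that $\varphi$ becomes a $1$-block map; viewing $G$ as an automaton labeled by $\varphi$, it exhibits an explicit finite forbidden set for $Y$, namely the words of length $m+n+2$ that are not labels of paths in $G$ (where $\varphi^{-1}$ has memory $m$ and anticipation $n$). That approach is more in keeping with the automata-theoretic theme of the chapter and produces a concrete presentation of $Y$ by forbidden blocks. Your approach is more elementary and intrinsic: it avoids the reduction to edge shifts and the recoding of $\varphi$, and it yields a clean quantitative bound on the step of $Y$ in terms of the step of $X$ and the window radii of the conjugacy and its inverse.
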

\begin{proof}
Let $\varphi:X\rightarrow Y$ be a conjugacy from a shift of finite
type $X$ onto a shift space $Y$. By Proposition~\ref{BBEP:prop:conj}, we
may assume that $X=X_G$ for some graph $G$. Changing $G$ into
some higher edge graph, we may assume that $\varphi$ is $1$-block.
We may consider $G$ as a graph labeled by $\varphi$.
Suppose that $\varphi^{-1}$ has memory $m$ and anticipation $n$.
Set $\varphi^{-1}=f^{[m,n]}_\infty$.
Let $W$ be the set of words of length $m+n+2$ which are
not the label of a path in $G$. We show that $Y=X^{(W)}$, which
implies
that $Y$ is of finite type.  Indeed, the
inclusion
$Y\subset X^{(W)}$ is clear. Conversely, consider $y$  in $X^{(W)}$.
For each $i\in\Z$, set $x_i=f(y_{i-m}\cdots y_i\cdots y_{i+n})$.
Since $y_{i-m}\cdots y_i\cdots y_{i+n}y_{i+n+1}$ is the label of a
path in $G$, the edges $x_i$ and $x_{i+1}$ are consecutive. Thus
$x=(x_i)_{i\in \Z}$ is in $X$ and $y=\varphi(x)$ is in $Y$.
\end{proof}

%%%%%%%%%%%%%%%%%%%%%%%%%%%%%%%%%%%%%%%%
\intertitre{Conjugacy invariants}

No effective characterization of conjugate shift spaces is known, even
for shifts of finite type. There are however several quantities that
are known to be invariant under conjugacy.

The \emph{entropy}\index{entropy}\index{shift space!entropy} of a
shift space $X$ is defined by
\begin{displaymath}
h(X)=\lim_{n\rightarrow\infty}\frac{1}{n}\log s_n\,,
\end{displaymath}
where $s_n=\Card(\B_n(X))$.  The limit exists because the sequence
$s_n$ is sub-additive (see \cite{Lind&Marcus:1995} Lemma 4.1.7). Note
that since $\Card(\B_n(X))\le\Card(A)^n$, we have
$h(X)\le\log\Card(A)$. If $X$ is nonempty, then $0\le h(X)$.

The following statement shows that the entropy is invariant under
conjugacy (see~\cite{Lind&Marcus:1995} Corollary 4.1.10).

\begin{theorem}
  If $X,Y$ are conjugate shift spaces, then $h(X)=h(Y)$.
\end{theorem}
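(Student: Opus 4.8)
The plan is to represent the conjugacy as a sliding block map via the Curtis--Lyndon--Hedlund theorem and then compare the block-counting sequences of the two shifts directly. Write $s_N(X)=\Card(\B_N(X))$, so that $h(X)=\lim_{N\to\infty}\frac1N\log s_N(X)$, and similarly for $Y$.

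First I would fix a conjugacy $\varphi\colon X\to Y$. By the Curtis--Lyndon--Hedlund theorem it is a sliding block map, say $\varphi=f_\infty^{[m,n]}$ for some $(m{+}1{+}n)$-block substitution $f$. Since $\varphi$ is a conjugacy, its inverse is again a morphism, hence also a sliding block map. The key observation is that the restriction of $y=\varphi(x)$ to a window of length $N$ is determined by the restriction of $x$ to a window of length $N+m+n$: the symbols $y_i,\dots,y_{i+N-1}$ are obtained by applying $f$ to the consecutive length-$(m{+}1{+}n)$ subwords of $x_{i-m}\cdots x_{i+N-1+n}$. Because $\varphi$ maps $X$ \emph{onto} $Y$, every block of $\B_N(Y)$ arises in this way from a block of $\B_{N+m+n}(X)$; thus there is a surjection $\B_{N+m+n}(X)\to\B_N(Y)$, and hence
\begin{displaymath}
  s_N(Y)\le s_{N+m+n}(X)\qquad\text{for all }N\ge 1.
\end{displaymath}

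Dividing by $N$, taking logarithms, and letting $N\to\infty$ — using that $\frac{N+m+n}{N}\to 1$ and that the limit defining entropy exists — gives
\begin{displaymath}
  h(Y)=\lim_{N\to\infty}\frac1N\log s_N(Y)\ \le\ \lim_{N\to\infty}\frac{N+m+n}{N}\cdot\frac{1}{N+m+n}\log s_{N+m+n}(X)=h(X).
\end{displaymath}
Applying the same argument to the conjugacy $\varphi^{-1}$ yields $h(X)\le h(Y)$, and therefore $h(X)=h(Y)$. (If $X=\emptyset$ then $Y=\emptyset$ and the statement is trivial or vacuous.)

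The argument is essentially routine once the block-map picture is in place; the only point needing a little care is the bookkeeping of memory and anticipation so that the bound $s_N(Y)\le s_{N+m+n}(X)$ is valid for every $N$, together with the remark that one needs only \emph{surjectivity} of $\B_{N+m+n}(X)\to\B_N(Y)$ — which comes for free from $\varphi$ being onto — and not injectivity of any block map.
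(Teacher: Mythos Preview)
Your proof is correct. Note, however, that the paper does not give its own proof of this theorem: it simply states the result with a reference to \cite{Lind&Marcus:1995}, Corollary~4.1.10. Your argument is precisely the standard one found there (Proposition~4.1.9 shows that a factor map cannot increase entropy via the block-count inequality $s_N(Y)\le s_{N+m+n}(X)$, and then applying this to both $\varphi$ and $\varphi^{-1}$ gives equality). One cosmetic slip: you wrote ``dividing by $N$, taking logarithms'' when you mean the reverse order, but the displayed computation is correct.
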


\begin{example}
  Let $X$ be the golden mean shift of
  Example~\ref{ExGoldenMeanShift}. Then a block of length $n+1$ is
  either a block of length $n-1$ followed by $ab$ or a block of length
  $n$ followed by $a$. Thus $s_{n+1}=s_n+s_{n-1}$. As a classical
  result, $h(X)=\log\lambda$ where $\lambda=(1+\sqrt{5})/2$ is the
  golden mean.
\end{example}

An element $x$ of a shift space $X$ over the alphabet $A$ has
\emph{period}\index{period} $n$ if $\sigma_A^n(x)=x$. If $\varphi:X\to
Y$ is a conjugacy, then an element $x$ of $X$ has period $n$ if and
only if $\varphi(x)$ has period $n$.

The \emph{zeta function}%
\index{zeta function}\index{shift space!zeta function} of a shift space $X$
is the power series
\begin{displaymath}
\zeta_X(z)=\exp\sum_{n\ge 0}\frac{p_n}{n}z^n\,,
\end{displaymath}
where $p_n$ is the number of elements $x$ of $X$ of period $n$.

It follows from the definition that the sequence $(p_n)_{n\in\N}$ is
invariant under conjugacy, and thus the zeta function of a shift
space is invariant under conjugacy.

Several other conjugacy invariants are known. One of them is the
Bowen-Franks group of a matrix which defines an invariant of the
associated shift space. This will be defined below.

\begin{example}
  Let $X=A^\Z$. Then $\zeta_X(z)=\frac{1}{1-kz}$, where
  $k=\Card(A)$. Indeed, one has $p_n=k^n$, since an element $x$ of
  $A^\Z$ has period $n$ if and only if it is a biinfinite repetition
  of a word of length $n$ over $A$.
\end{example}

%%%%%%%%%%%%%%%%%%%%%%%%%%%%%%%%%%

\intertitre{State splitting}

Let $G=(Q,\E)$ and $H=(R,\F)$ be graphs. A pair $(h,k)$ of surjective
maps $k:R\to Q$ and $h:\F\to\E$ is called a \emph{graph morphism}
\index{graph!morphism}\index{morphism!graph} from $H$ onto $G$ if the
two diagrams in Figure~\ref{f} are commutative.

\begin{figure}[hbt]
\centering
\gasset{Nframe=n}
\begin{picture}(15,15)
\node(f)(0,15){$\F$}\node(e)(15,15){$\E$}
\node(r)(0,0){$R$}\node(p)(15,0){$Q$}
\drawedge(f,e){$h$}\drawedge(f,r){$i$}\drawedge(e,p){$i$}\drawedge(r,p){$k$}
\end{picture}\qquad\qquad\qquad
\begin{picture}(15,15)
\node(f)(0,15){$\F$}\node(e)(15,15){$\E$}
\node(r)(0,0){$R$}\node(p)(15,0){$Q$}
\drawedge(f,e){$h$}\drawedge(f,r){$t$}\drawedge(e,p){$t$}\drawedge(r,p){$k$}
\end{picture}
\caption{Graph morphism.}\label{f}
\end{figure}
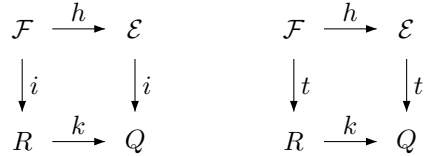

A graph morphism $(h,k)$ from $H$ onto $G$ is an
\emph{in-merge}\index{in-merge!graph morphism} from $H$ onto $G$ if
for each $p,q\in Q$ there is a partition $(\E_p^q(t))_{t\in
  k^{-1}(q)}$ of the set $\E_p^q$ with the following property.
 For each $r,t\in R$ 
and $p,q\in Q$ with $k(r)=p$, $k(t)=q$, the restriction of
the map $h$ to $\F_r^t$ is a bijection onto
$\E_p^q(t)$.  If this holds, then $G$ is called an
\emph{in-merge}\index{in-merge!of graph}\index{graph!in-merge} of $H$,
and $H$ is an \emph{in-split}%
\index{in-split!of graph}\index{graph!in-split} of $G$.\footnote{In
  this chapter, a \emph{partition}\index{partition} of a set $X$ is a
  family $(X_i)_{i\in I}$ of pairwise disjoint, possibly empty subsets
  of $X$, indexed by a set $I$, such that $X$ is the union of the sets
  $X_i$ for $i\in I$.}

Thus an in-split $H$ is obtained from a graph $G$ as follows: each
state $q\in Q$ is split into copies which are the states of $H$ in
the set $k^{-1}(q)$.  Each of these states $t$ receives a copy of
$\E_p^q(t)$ starting in $r$ and ending in $t$ for each $r$ in
$k^{-1}(p)$.

Each $r$ in $k^{-1}(p)$ has the same number of edges going out of $r$
and coming in $s$, for any $s\in R$. 
 
Moreover, for any $p,q\in Q$ and $e\in\E_p^q$, all edges in
$h^{-1}(e)$
have the same terminal vertex, namely the state $t$ such that $e\in\E_p^q(t)$.

\begin{example}\label{BBEP:exInMerge}
Let $G$ and $H$ be the graphs represented on
Figure~\ref{BBEP:figureInSplitGraphs}. Here $Q=\{1,2\}$ and
$R=\{3,4,5\}$. 
%============================= Figure 2 ==========================
\ifthenelse{\boolean{colorprint}}{
\begin{figure}[hbt]
\centering
\gasset{Nadjust=wh}
\begin{picture}(75,30)(-10,-5)
%\put(-10,-5){\framebox(75,30){}}
  \put(0,-5){
    \begin{picture}(20,20)
      \node(1)(0,10){$1$}\node[linecolor=green](2)(20,10){$2$}      
      \drawloop[linecolor=blue](1){}
      \drawloop[loopangle=180,linecolor=blue](1){}
      \drawedge[curvedepth=3,linecolor=green](1,2){}
      \drawedge[curvedepth=2,linecolor=blue](2,1){}      
      \drawedge[curvedepth=5,linecolor=red](2,1){}
    \end{picture}
  }
  \put(40,0){
    \begin{picture}(20,25)
      \node[linecolor=blue](3)(0,15){$3$}
      \node[linecolor=green](5)(20,7){$5$}
      \node[linecolor=red](4)(0,0){$4$}       
      \drawloop[linecolor=blue](3){}
      \drawloop[loopangle=180,linecolor=blue](3){}
      \drawedge[curvedepth=2,linecolor=green](3,5){}
      \drawedge[curvedepth=2,linecolor=green](4,5){}
      \drawedge[curvedepth=2,linecolor=blue](5,3){}
      \drawedge[curvedepth=2,linecolor=blue](4,3){}
      \drawedge[curvedepth=-2,linecolor=blue](4,3){}      
      \drawedge[curvedepth=2,linecolor=red](5,4){}
    \end{picture}
  }
\end{picture}
\caption{An in-split from $G$ (on the left) onto $H$ (on the
  right).}\label{BBEP:figureInSplitGraphs}
\end{figure}
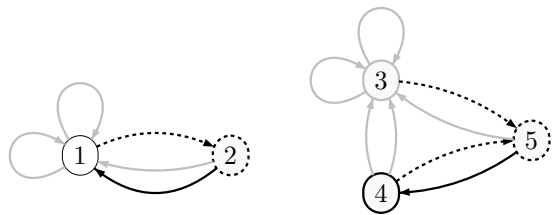
}{%
\begin{figure}[hbt]
\centering
\gasset{Nadjust=wh, linewidth=0.3}
\begin{picture}(75,30)(-10,-5)
%\put(-10,-5){\framebox(75,30){}}
  \put(0,-5){ 
    \begin{picture}(20,20)\gasset{fillcolor=lightgray!10}
      \node[linewidth=0.1,Nfill=n](1)(0,10){$1$}
      \node[dash={0.5 0.5}0](2)(20,10){$2$}      
      \drawloop[linecolor=lightgray](1){}
      \drawloop[loopangle=180,linecolor=lightgray](1){}
      \drawedge[curvedepth=3,dash={0.5 0.5}0](1,2){}
      \drawedge[curvedepth=2,linecolor=lightgray](2,1){}      
      \drawedge[curvedepth=5](2,1){}
    \end{picture}
  }
  \put(40,0){
    \begin{picture}(20,25)\gasset{fillcolor=lightgray!10}
      \node[linecolor=lightgray](3)(0,15){$3$}
      \node[dash={0.5 0.5}0](5)(20,7){$5$}
      \node(4)(0,0){$4$}        
      \drawloop[linecolor=lightgray](3){}
      \drawloop[loopangle=180,linecolor=lightgray](3){}
      \drawedge[curvedepth=2,dash={0.5 0.5}0](3,5){}
      \drawedge[curvedepth=2,dash={0.5 0.5}0](4,5){}
      \drawedge[curvedepth=2,linecolor=lightgray](5,3){}
      \drawedge[curvedepth=2,linecolor=lightgray](4,3){}
      \drawedge[curvedepth=-2,linecolor=lightgray](4,3){}      
      \drawedge[curvedepth=2](5,4){}
    \end{picture}
  }
\end{picture}
\caption{An in-split from $G$ (on the left) onto $H$ (on the
  right).}\label{BBEP:figureInSplitGraphs}
\end{figure}
}
%--------------------
The graph $H$ is an in-split of the graph $G$. The graph morphism
$(h,k)$ is defined by $k(3)=k(4)=1$ and $k(5)=2$. Thus the state $1$
of $G$ is split into two states $3$ and $4$ of $H$, and the map $h$ is
associated to the partition obtained as follows: the edges from $2$ to
$1$ are partitioned into two classes, indexed by $3$ and $4$
respectively, and containing each one edge from $2$ to $1$.
In the picture, the partitions are indicated by colors. The color of
an edge
on the right side corresponds to its terminal vertex. The color
of an edge on the left side is inherited through the graph morphism.
\end{example}
The following result is well-known (see~\cite{Lind&Marcus:1995}).
It shows that if $H$ is an in-split of a graph $G$, then $X_G$ and $X_H$ are
conjugate.

\begin{proposition}[\protect{\cite[Theorem 2.4.1]{Lind&Marcus:1995}}]\label{inMergingMap}
  If $(h,k)$ is an in-merge of a graph $H$ onto a graph $G$, then
  $h_\infty$ is a $1$-block conjugacy from $X_H$ onto $X_G$ and its
  inverse is $2$-block.
\end{proposition}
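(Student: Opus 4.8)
The plan is to unwind the definitions of in-merge and of the sliding block map $h_\infty$, and to produce the $2$-block inverse explicitly. First I would check that $h_\infty$ is well defined, i.e.\ that it maps $X_H$ into $X_G$: if $(f_i)_{i\in\Z}$ is a biinfinite path in $H$, then $t(f_i)=i(f_{i+1})$ in $R$, so applying $k$ and using the commutativity of the two diagrams in Figure~\ref{f} gives $t(h(f_i))=k(t(f_i))=k(i(f_{i+1}))=i(h(f_{i+1}))$ in $Q$; hence $(h(f_i))_{i\in\Z}$ is a biinfinite path in $G$, that is, an element of $X_G$. Since $h$ is surjective as part of a graph morphism, $h_\infty$ is onto $X_G$ as well (any path in $G$ can be lifted edge by edge — but surjectivity at the level of infinite paths needs the merge structure, which I address next when building the inverse).

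Next comes the heart of the argument: constructing a $2$-block map $g_\infty^{[0,1]}$ that inverts $h_\infty$. Given an edge $e\in\E_p^q$, the in-merge condition says there is a canonical choice of terminal vertex, namely the unique $t\in k^{-1}(q)$ with $e\in\E_p^q(t)$; moreover, for each such $t$ and each $r\in k^{-1}(p)$, the restriction $h\colon\F_r^t\to\E_p^q(t)$ is a bijection. So I would define a $2$-block substitution $g$ as follows: for a pair of consecutive edges $(e,e')$ in $G$, with $e\in\E_p^q(t)$ and $e'\in\E_q^{q'}(t')$, let $r$ be the terminal vertex assigned to $e$ (so $r=t$, with $k(r)=q$), and let $r'=t'$ be the terminal vertex assigned to $e'$; then the edge $e'$ lies in $\E_q^{q'}(r')$ with source in $k^{-1}(q)$, and since $k(r)=q$ the bijection $h\colon\F_r^{r'}\to\E_q^{q'}(r')$ picks out a unique edge $f'\in\F_r^{r'}$ with $h(f')=e'$; set $g(e,e')=f'$. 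One checks that for a biinfinite path $x=(e_i)$ in $G$, the sequence $g_\infty^{[0,1]}(x)=(g(e_i,e_{i+1}))_i$ is a path in $H$ (consecutive $f$'s share the vertex $r_{i+1}$, the terminal vertex assigned to $e_{i+1}$) and that $h_\infty$ and $g_\infty^{[0,1]}$ are mutually inverse: $h(g(e_i,e_{i+1}))=e_{i+1}$ by construction, and conversely, given a path $(f_i)$ in $H$, one has $g(h(f_{i-1}),h(f_i))=f_i$ because the terminal vertex of $h(f_{i-1})$ is exactly $t(f_{i-1})=i(f_i)$, and $f_i$ is the unique preimage of $h(f_i)$ in $\F_{i(f_i)}^{t(f_i)}$.

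Finally I would note that $h_\infty$ is a morphism of shifts by the Curtis--Lyndon--Hedlund theorem (it is a $1$-block map, hence a sliding block map, hence continuous and shift-commuting), and likewise $g_\infty^{[0,1]}$ is a morphism; being mutually inverse bijections, $h_\infty$ is a conjugacy from $X_H$ onto $X_G$, realized by a $1$-block map with $2$-block inverse, as claimed.

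I expect the main obstacle to be the bookkeeping in the definition of $g$: one must verify that the choice of $f'=g(e,e')$ depends only on the \emph{pair} $(e,e')$ and not on any global information about the path, and in particular that the source vertex of the lifted edge $f'$ is forced to be the terminal vertex assigned to the \emph{previous} edge $e$ — this is exactly the content of the last sentence before the proposition ("all edges in $h^{-1}(e)$ have the same terminal vertex"), so the merge axiom is used precisely here to guarantee well-definedness and the path property. Everything else is a routine diagram chase.
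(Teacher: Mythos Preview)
The paper does not actually prove this proposition; it merely cites \cite[Theorem~2.4.1]{Lind&Marcus:1995} and moves on. Your argument is the standard one and is substantively correct: you correctly identify that an edge $e\in\E_p^q$ determines its lifted terminal vertex (the unique $t\in k^{-1}(q)$ with $e\in\E_p^q(t)$), and that this is precisely the datum needed to pin down the source of the lift of the \emph{next} edge, yielding a $2$-block inverse.

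There is, however, a genuine indexing slip. You write the inverse as $g_\infty^{[0,1]}$, but with your definition of $g$ you compute $h(g(e_i,e_{i+1}))=e_{i+1}$; in the paper's convention $y_i=f(x_{i-m}\cdots x_{i+n})$, this gives $(h_\infty\circ g_\infty^{[0,1]})(x)_i=x_{i+1}$, i.e.\ the shift $\sigma$ rather than the identity. The fix is immediate: the window should be $[1,0]$, not $[0,1]$, so that $y_i=g(e_{i-1},e_i)$ is the lift of $e_i$ with source determined by $e_{i-1}$. (Equivalently, keep $[0,1]$ but redefine $g(e,e')$ to be the lift of $e$ rather than of $e'$ --- except that does not work here, since the in-merge partition is indexed by terminal vertices, so the source of the lift of $e$ is determined by the \emph{previous} edge, not the next one.) Once you correct the memory/anticipation pair, both compositions give the identity as claimed and the proof is complete.
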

The map $h_\infty$ from $X_H$ to $X_G$ is called an \emph{edge
  in-merging map}\index{edge in-merging map}\index{map!edge
  in-merging} and its inverse an \emph{edge in-splitting
  map}\index{edge in-splitting map}\index{map!edge in-splitting}.

 A \emph{column division matrix}\index{column division
  matrix}\index{division matrix}\index{matrix!column division}
over two sets $R,Q$ is an $R\times Q$-matrix $D$ with elements in $\{0,1\}$
such that each column has at least one $1$ and each row has exactly
one $1$. Thus, the columns of such a matrix represent a partition of
$R$ into $\Card(Q)$ sets.

The following result is Theorem 2.4.14 of~\cite{Lind&Marcus:1995}.
\begin{proposition}\label{defEquivInMerge}
  Let $G$ and $H$ be essential graphs.  The graph $H$ is an in-split
  of the graph $G$ if and only if there is an $R\times Q$-column
  division matrix $D$ and a $Q\times R$-matrix $E$ with nonnegative
  integer entries such that
\begin{equation}\label{EqSplitGraph}
M(G)=ED,\quad M(H)=DE.
\end{equation}
\end{proposition}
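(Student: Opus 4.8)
The plan is to prove the two implications of Proposition~\ref{defEquivInMerge} by carefully unwinding the definition of in-split in terms of the partitions $(\E_p^q(t))_{t\in k^{-1}(q)}$.

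For the forward direction, suppose $H$ is an in-split of $G$ via a graph morphism $(h,k)$ with the associated partitions. First I would define the $R\times Q$ matrix $D$ by $D_{rp}=1$ if $k(r)=p$ and $0$ otherwise; since $k$ is a well-defined surjective map, each row of $D$ has exactly one $1$ and each column at least one, so $D$ is a column division matrix whose partition of $R$ is exactly the family of fibers $(k^{-1}(p))_{p\in Q}$. Next I would define the $Q\times R$ matrix $E$ by $E_{pt}=\Card(\E_p^{k(t)}(t))$, i.e.\ the size of the block of the partition of $\E_p^{k(t)}$ indexed by $t$. Then I would verify the two matrix identities entrywise. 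For $M(H)=DE$: the $(r,t)$ entry of $DE$ is $\sum_{p} D_{rp}E_{pt} = E_{k(r)\,t} = \Card(\E_{k(r)}^{k(t)}(t))$, and by the in-split condition the restriction of $h$ to $\F_r^t$ is a bijection onto $\E_{k(r)}^{k(t)}(t)$, so $\Card(\F_r^t)=M(H)_{rt}$ equals this number. For $M(G)=ED$: the $(p,q)$ entry of $ED$ is $\sum_{t} E_{pt}D_{tq} = \sum_{t\in k^{-1}(q)} \Card(\E_p^q(t)) = \Card(\E_p^q)$ because $(\E_p^q(t))_{t\in k^{-1}(q)}$ is a partition of $\E_p^q$; this is $M(G)_{pq}$.

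For the converse, suppose we are given such $D$ and $E$. The matrix $D$ encodes a surjective map $k\colon R\to Q$ (send $r$ to the unique $p$ with $D_{rp}=1$; surjectivity follows from every column having a $1$), and essentiality of $G$ and $H$ will be used to guarantee there are enough edges to realize $H$ as a genuine in-split rather than merely matching adjacency matrices. From $M(G)=ED$ I get $M(G)_{pq}=\sum_{t\in k^{-1}(q)}E_{pt}$, so I may fix, for each $p,q$, a partition $(\E_p^q(t))_{t\in k^{-1}(q)}$ of the edge set $\E_p^q$ with $\Card(\E_p^q(t))=E_{pt}$. From $M(H)=DE$ I get $M(H)_{rt}=E_{k(r)\,t}=\Card(\E_{k(r)}^{k(t)}(t))$, so for each pair $r,t$ I may fix a bijection $h_{r,t}\colon\F_r^t\to\E_{k(r)}^{k(t)}(t)$; assembling these gives a map $h\colon\F\to\E$. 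Then I must check that $(h,k)$ is a graph morphism (the two squares of Figure~\ref{f} commute --- this is immediate from how $h_{r,t}$ was chosen, since an edge of $\F_r^t$ goes to $\F_{k(r)}^{k(t)}(t)\subset\E_{k(r)}^{k(t)}$, matching initial vertex $k(r)$ and terminal vertex $k(t)$) and that $h$ and $k$ are surjective (surjectivity of $h$ because each $h_{r,t}$ is onto $\E_p^q(t)$ and these cover $\E_p^q$; here essentiality ensures no relevant $\F_r^t$ is spuriously empty). By construction the in-split condition holds with exactly these partitions, so $H$ is an in-split of $G$.

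The main obstacle I expect is the converse direction, specifically the bookkeeping that the local bijections $h_{r,t}$ can be chosen compatibly so that the pair $(h,k)$ is globally a well-defined graph morphism with $h$ surjective, and pinning down precisely where the essentiality hypothesis is needed (to rule out degenerate cases where a row of $M(H)$ forces an edge set that cannot be realized, or where $k$ fails to lift to a surjection on edges). The forward direction is essentially a direct translation and should be routine. I would also remark that the partition of $R$ given by the columns of $D$ is exactly the fibers of $k$, which is the bridge between the matrix language of Proposition~\ref{defEquivInMerge} and the combinatorial language of the in-split definition.
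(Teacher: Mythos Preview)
Your proposal is correct and follows the standard approach. Note, however, that the paper does not actually give a proof of Proposition~\ref{defEquivInMerge}: it simply cites Theorem~2.4.14 of \cite{Lind&Marcus:1995}. The paper does, however, prove the labeled analogue (Proposition~\ref{defEquivLabeledInSplit}) in full, and that proof follows essentially the same line as yours: define $k$ from the column-division matrix $D$, read off the partitions $\E_p^q(t)$ from the second factor, and check the bijection condition; conversely, build $D$ from $k$ and the second factor from the partitions. Your argument is the unlabeled version of this and would serve perfectly well as the missing proof.

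One small remark on your concern about essentiality: in the converse direction, surjectivity of $h$ is already forced by the construction, since for any $e\in\E_p^q(t)$ and any $r\in k^{-1}(p)$ (which exists by surjectivity of $k$), the bijection $h_{r,t}$ hits $e$. Essentiality is used more subtly to ensure that the graphs $G$ and $H$ are genuinely determined by their adjacency matrices and that no spurious isolated vertices interfere with the correspondence between the matrix factorization and the graph-morphism picture; you need not worry about compatibility of the local bijections, as they act on disjoint domains and ranges.
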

\begin{example}
For the graphs $G,H$ of Example~\ref{BBEP:exInMerge}, one has
  $M(G)=DE$ and $M(H)=ED$
with
\begin{displaymath}
E=\begin{bmatrix}2&0&1\\1&1&0\end{bmatrix},\quad D=\begin{bmatrix}1&0\\1&0\\0&1\end{bmatrix}.
\end{displaymath}
\end{example}

Observe that a particular case of a column division matrix is a
permutation matrix. The corresponding in-split (or merge) is a
renaming of the states of a graph.

The notion of an \emph{out-merge} is defined symmetrically. A graph
morphism $(h,k)$ from $H$ onto $G$ is an
\emph{out-merge}\index{out-merge} from $H$ onto $G$ if for each
$p,q\in Q$ there is a partition $(\E_p^q(r))_{r\in k^{-1}(p)}$ of the
set $\E_p^q$ with the following property. For each $r,t\in R$,
and $p,q\in Q$ with $k(r)=p$, $k(t)=q$, the restriction of
the map $h$ to the set $\F_r^t$ is a bijection onto $\E_p^q(r)$.  If this
holds, then $G$ is called an \emph{out-merge} of $H$, and $H$ is an
\emph{out-split}\index{out-merge} of $G$.

Proposition~\ref{inMergingMap} also has a symmetrical version. Thus if
$(h,k)$ is an out-merge from $G$ onto $H$, then $h_\infty$ is a
$1$-block conjugacy from $X_H$ onto $X_G$ whose inverse is
$2$-block. The conjugacy $h_\infty$ is called an \emph{edge out-merging map}\index{map!edge out-merging} and its
inverse an \emph{edge out-splitting map}\index{map!edge out-splitting}.

Symmetrically, a \emph{row division matrix}\index{row division
  matrix}\index{division matrix}\index{matrix!row division} is a
matrix with elements in the set $\{0,1\}$ such that each column has at
least one $1$ and each row has exactly one $1$.

The following statement is symmetrical to
Proposition~\ref{defEquivInMerge}. 

\begin{proposition}
Let $G$ and $H$ be essential graphs.
The graph $H$ is an out-split of the graph $G$ if and only 
 if there is a row division matrix $D$ and a matrix $E$
with nonnegative integer entries
such that 
\begin{equation}\label{outSplitGraph}
M(G)=DE,\quad M(H)=ED.
\end{equation}
\end{proposition}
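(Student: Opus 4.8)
The plan is to deduce this statement from Proposition~\ref{defEquivInMerge} by the edge-reversal duality, rather than to redo the combinatorial argument. For a graph $G=(Q,\E)$, write $\widetilde G=(Q,\E)$ for the \emph{reversed} graph, having the same vertices and edges but with the maps $i$ and $t$ interchanged; then $M(\widetilde G)=M(G)^{\mathsf{T}}$, and $\widetilde G$ is essential exactly when $G$ is. The key observation is that a graph morphism $(h,k)$ from $H$ onto $G$ is an out-merge if and only if the same pair $(h,k)$, regarded as a morphism from $\widetilde H$ onto $\widetilde G$, is an in-merge. Indeed, passing to reversed graphs identifies the set $\E_p^q$ of edges of $G$ with the set of edges from $q$ to $p$ in $\widetilde G$, and $\F_r^t$ of $H$ with the set of edges from $t$ to $r$ in $\widetilde H$; under this identification a partition $(\E_p^q(r))_{r\in k^{-1}(p)}$ of $\E_p^q$ witnessing the out-merge at $p,q$ is exactly a partition of the corresponding edge set of $\widetilde G$ indexed over $k^{-1}(p)$, as required for an in-merge of $\widetilde H$ onto $\widetilde G$ at the states $q,p$, and the condition that $h$ restricted to $\F_r^t$ be a bijection onto $\E_p^q(r)$ translates verbatim into the in-merge condition. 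Hence $H$ is an out-split of $G$ if and only if $\widetilde H$ is an in-split of $\widetilde G$.

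Granting this, the proof is a transposition. If $H$ is an out-split of $G$, then $\widetilde H$ is an in-split of $\widetilde G$; as both graphs are essential, Proposition~\ref{defEquivInMerge} supplies an $R\times Q$-column division matrix $D'$ and a $Q\times R$-matrix $E'$ with nonnegative integer entries such that $M(\widetilde G)=E'D'$ and $M(\widetilde H)=D'E'$. Transposing these identities and using $M(\widetilde G)=M(G)^{\mathsf{T}}$, $M(\widetilde H)=M(H)^{\mathsf{T}}$ yields $M(G)=(D')^{\mathsf{T}}(E')^{\mathsf{T}}$ and $M(H)=(E')^{\mathsf{T}}(D')^{\mathsf{T}}$. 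Now $D:=(D')^{\mathsf{T}}$ is the transpose of a column division matrix, hence a row division matrix, and $E:=(E')^{\mathsf{T}}$ has nonnegative integer entries, so \eqref{outSplitGraph} holds. Conversely, from a row division matrix $D$ and a nonnegative integer matrix $E$ with $M(G)=DE$ and $M(H)=ED$, the transposes $D^{\mathsf{T}}$ and $E^{\mathsf{T}}$ meet the hypotheses of Proposition~\ref{defEquivInMerge} for the pair $\widetilde G,\widetilde H$, so $\widetilde H$ is an in-split of $\widetilde G$ and therefore $H$ is an out-split of $G$.

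The one step that is not purely formal is the key observation of the first paragraph: one must track carefully how the data of an out-merge --- the partitions $(\E_p^q(r))_{r\in k^{-1}(p)}$ indexed by the \emph{sources}, together with the associated bijectivity condition on $h$ --- correspond, under edge reversal, to the data of an in-merge --- the partitions $(\E_p^q(t))_{t\in k^{-1}(q)}$ indexed by the \emph{targets}. It is also worth noting that surjectivity of $h$ and $k$ and the commutativity of the two diagrams of Figure~\ref{f} are plainly symmetric under reversal, so a graph morphism stays a graph morphism. Everything after that is the transpose of an identity already available.
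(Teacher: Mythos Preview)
Your argument is correct, and it is exactly the duality the paper has in mind: the paper does not give an explicit proof of this proposition, merely stating that it is ``symmetrical'' to Proposition~\ref{defEquivInMerge}, and your edge-reversal/transpose argument is precisely the formalization of that symmetry. The only caveat is that the paper's stated definition of a row division matrix contains an apparent typo (it repeats the column condition verbatim), but your reading---that a row division matrix is the transpose of a column division matrix, i.e.\ each column has exactly one $1$ and each row at least one $1$---is clearly the intended one and is what makes the statement true.
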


\begin{example}
Let $G$ and $H$ be the graphs represented on
Figure~\ref{BBEP:figureOutSplitGraphs}. Here $Q=\{1,2\}$ and $R=\{3,4,5\}$.
%=================== Figure  ===================================
\ifthenelse{\boolean{colorprint}}{%
  \begin{figure}[hbt]
    \centering
    \gasset{Nadjust=wh}
    \begin{picture}(75,30)(-10,-5)
      % \put(-10,-5){\framebox(75,30){}}
      \put(0,0){
        \begin{picture}(20,20)
          \node(1)(0,10){$1$}\node(2)(20,10){$2$}
          \gasset{linecolor=blue}
          \drawloop[loopangle=-90](1){}\drawloop[loopangle=90](1){}
          \gasset{linecolor=green}
          \drawedge[curvedepth=2](2,1){}
          \gasset{AHangle=30,linewidth=.5,linecolor=red}
          \drawloop[loopangle=180](1){}
          \drawedge[curvedepth=3](1,2){}
        \end{picture}
      }
      \put(40,5){
        \begin{picture}(20,25)
          \node[linecolor=red](3)(0,15){$3$}\node[linecolor=green](5)(20,7){$5$}
          \node[linecolor=blue](4)(0,0){$4$}
          \gasset{linecolor=blue}
          \drawloop[loopangle=180](4){}\drawloop[loopangle=-90](4){}
          \drawedge[curvedepth=2](4,3){}
          \drawedge[curvedepth=-2](4,3){}
          \gasset{linecolor=green}
          \drawedge[curvedepth=2](5,4){}
          \drawedge[curvedepth=2](5,3){}
          \gasset{AHangle=30,linecolor=red}
          \drawloop[loopangle=180](3){}
          \drawedge[curvedepth=-5](3,4){}
          \drawedge[curvedepth=2](3,5){}
        \end{picture}
      }
    \end{picture}
    \caption{The graphs $G$ and $H$.}\label{BBEP:figureOutSplitGraphs}
  \end{figure}
}{%
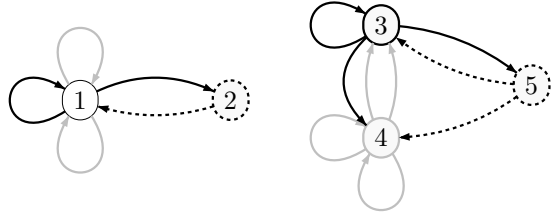
\begin{figure}[hbt]
  \centering
  \gasset{Nadjust=wh, linewidth=0.3}
  \begin{picture}(75,30)(-10,-5)
    %\put(-10,-5){\framebox(75,30){}}
    \gasset{fillcolor=lightgray!10}
    \put(0,0){
      \begin{picture}(20,20)
        \node[linewidth=0.1,Nfill=n](1)(0,10){$1$}
        \node[dash={0.5 0.5}0](2)(20,10){$2$}
        \drawloop[loopangle=-90,linecolor=lightgray](1){}
        \drawloop[loopangle=90,linecolor=lightgray](1){}
        \drawedge[dash={0.5 0.5}0,curvedepth=2](2,1){}
        \drawloop[loopangle=180](1){}
        \drawedge[curvedepth=3](1,2){}
      \end{picture}
    }
    \put(40,5){
      \begin{picture}(20,25)
        \node(3)(0,15){$3$}
        \node[dash={0.5 0.5}0](5)(20,7){$5$}
        \node[linecolor=lightgray](4)(0,0){$4$}
        \drawloop[linecolor=lightgray,loopangle=180](4){}
        \drawloop[linecolor=lightgray,loopangle=-90](4){}
        \drawedge[linecolor=lightgray,curvedepth=2](4,3){}
        \drawedge[linecolor=lightgray,curvedepth=-2](4,3){}
        \drawedge[dash={0.5 0.5}0,curvedepth=2](5,4){}
        \drawedge[dash={0.5 0.5}0,curvedepth=2](5,3){}       
        \drawloop[loopangle=180](3){}
        \drawedge[curvedepth=-5](3,4){}
        \drawedge[curvedepth=2](3,5){}
      \end{picture}
    }
  \end{picture}
  \caption{The graphs $G$ and $H$.}\label{BBEP:figureOutSplitGraphs}
\end{figure}
}
%======================= end Figure ================================
The graph $H$ is an out-split of the graph $G$. The graph morphism
$(h,k)$
is defined by $k(3)=k(4)=1$ and $k(5)=2$. The map $h$ is associated
with the partition indicated by the colors. The color of an edge on
the
right side corresponds to its initial vertex. On the left side, the
color is inherited through the graph morphism.
One has $M(G)=ED$ and $M(H)=DE$ with
\begin{displaymath}
D=\begin{bmatrix}1&1&0\\0&0&1\end{bmatrix},\quad E=\begin{bmatrix}1&1\\2&0\\1&0\end{bmatrix} .
\end{displaymath}
\end{example}

We use the term \emph{split}  to mean either an in-split or an out-split.
The same convention holds for a \emph{merge}.

\begin{proposition}\label{stExtensionShift}
  For $n\ge 2$, the graph $G^{[n-1]}$ is an in-merge of the
  graph $G^{[n]}$.
\end{proposition}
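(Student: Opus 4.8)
The plan is to exhibit an explicit graph morphism $(h,k)$ from $G^{[n]}$ onto $G^{[n-1]}$ and to verify the in-merge condition directly. The guiding remark is that a path of length $n-1$ in $G$ is simultaneously a \emph{state} of $G^{[n]}$ and an \emph{edge} of $G^{[n-1]}$ (and a path of length $n-2$ is a state of $G^{[n-1]}$). Under this identification I take $h$ and $k$ to be the ``delete the first edge'' map $\mathrm{suf}(e_1\cdots e_m)=e_2\cdots e_m$ (a path of length $0$ being read as the vertex $t(e_m)$): thus $h$ sends an edge $e_1\cdots e_n$ of $G^{[n]}$ to the edge $e_2\cdots e_n$ of $G^{[n-1]}$, while $k$ sends a state $e_1\cdots e_{n-1}$ of $G^{[n]}$ to the state $e_2\cdots e_{n-1}$ of $G^{[n-1]}$; equivalently, through the identification, $h$ and $k$ are the terminal-state maps of $G^{[n]}$ and of $G^{[n-1]}$ respectively. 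It is the terminal-state maps that are needed: using the initial-state maps would instead present $G^{[n-1]}$ as an \emph{out}-merge of $G^{[n]}$. Surjectivity of $h$ and $k$ amounts to the fact that in an essential graph every path can be prolonged by one edge on the left, so I assume throughout that $G$ is essential — the case of interest, and the only one in which these maps are onto. Commutativity of the two squares of Figure~\ref{f} is immediate from the definition of the initial and terminal states of an edge of a higher edge graph; e.g. for $\pi=e_1\cdots e_n$ one has $k(i_{G^{[n]}}(\pi))=e_2\cdots e_{n-1}=i_{G^{[n-1]}}(h(\pi))$, and similarly on the terminal side.

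The partition witnessing the in-merge comes from the same identification. Fix states $p,q$ of $G^{[n-1]}$, that is, paths of length $n-2$. A state $t$ of $G^{[n]}$ with $k(t)=q$ is a path of length $n-1$ whose terminal state, viewed as an edge of $G^{[n-1]}$, equals $q$; for such $t$ set $\E_p^q(t)=\{t\}$ if moreover the initial state of $t$, viewed as an edge of $G^{[n-1]}$, equals $p$, and $\E_p^q(t)=\emptyset$ otherwise. Any $e\in\E_p^q$ is a path of length $n-1$ with $i_{G^{[n-1]}}(e)=p$ and $t_{G^{[n-1]}}(e)=q$; viewed as a state of $G^{[n]}$ it satisfies $k(e)=q$, hence $e\in\E_p^q(e)$ and $e$ lies in no other class, so $(\E_p^q(t))_{t\in k^{-1}(q)}$ is indeed a partition of $\E_p^q$.

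Finally I would verify the bijection requirement: for $r,t\in R$ with $k(r)=p$ and $k(t)=q$, the restriction of $h$ to $\F_r^t$ is a bijection onto $\E_p^q(t)$. Both $\F_r^t$ and $\E_p^q(t)$ have at most one element — $\F_r^t$ because an edge of $G^{[n]}$, being a path of length $n$, is determined by its source (its first $n-1$ edges) together with its target (its last $n-1$ edges) — and they are nonempty under one and the same condition $t_{G^{[n-1]}}(r)=i_{G^{[n-1]}}(t)$: for $\F_r^t$ this expresses that $r$ and $t$ overlap so as to glue into a path of length $n$, and for $\E_p^q(t)$ it reads $i_{G^{[n-1]}}(t)=p=t_{G^{[n-1]}}(r)$ once one uses $k(r)=p$. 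When it holds, the unique edge of $\F_r^t$ is $r$ followed by the last edge of $t$, and $h=\mathrm{suf}$ sends it to $t$, the unique element of $\E_p^q(t)$; so $h$ restricts to the required bijection, which proves the proposition (and, via Proposition~\ref{inMergingMap}, gives the conjugacy $X_{G^{[n]}}\cong X_{G^{[n-1]}}$). A shorter but less transparent route uses Proposition~\ref{defEquivInMerge}: take $D$ to be the terminal-state incidence matrix of $G^{[n-1]}$ (a column division matrix since $G$ is essential) and $E$ its initial-state incidence matrix, and check $M(G^{[n-1]})=ED$ and $M(G^{[n]})=DE$ by the same overlap count. There is no real difficulty here; the points needing attention are the choice of the terminal rather than the initial state maps, keeping the index ranges straight, and the degenerate case $n=2$, where paths of length $0$ must be read as vertices and $G^{[n-1]}=G$ may carry parallel edges — but the singleton classes $\E_p^q(t)$ still exhaust $\E_p^q$ there as well.
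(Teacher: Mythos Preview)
Your proof is correct and follows essentially the same approach as the paper: both use the ``delete the first edge'' map to identify states of $G^{[n]}$ that share the same suffix of length $n-2$, which is exactly the equivalence the paper invokes when it says two paths of length $n-1$ differ only by the first edge. The paper's proof is a three-line sketch that asserts ``equivalent elements have the same output'' and stops there; you have simply unpacked this into the explicit maps $h,k$, the singleton partition classes $\E_p^q(t)$, and the verification of the bijection condition, and you have been careful to flag the essentialness hypothesis needed for surjectivity --- a point the paper leaves implicit.
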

\begin{proof}
  Consider for $n\ge 2$ the equivalence on the states of $G^{[n]}$
  which relates two paths of length $n-1$ which differ
  only by the first edge. It is clear that this equivalence is such
  that two equivalent elements have the same output. Thus $G^{[n-1]}$
  is an in-merge of $G^{[n]}$.
\end{proof}

%%%%%%%%%%%%%%%%%%%%%%%%%%%%%
\intertitre{The Decomposition Theorem}

The following result is known as the \emph{Decomposition
  Theorem}\index{Decomposition Theorem}\index{theorem!Decomposition}
(Theorem 7.1.2 in \cite{Lind&Marcus:1995}).
\begin{theorem}\label{ShiftDecompositionTheorem} Every
  conjugacy from an edge shift onto another is the composition of a
  sequence of edge splitting maps followed by a sequence of edge merging maps.
\end{theorem}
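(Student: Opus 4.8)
The plan is to reduce, by recoding, to the case of a $1$-block conjugacy, and then to strip off edge merging maps one at a time, inducting on the number of edges. First I would reduce to the $1$-block case. Let $\varphi\colon X_G\to X_H$ be a conjugacy and write $\varphi=f_\infty^{[m,n]}$ by the Curtis--Lyndon--Hedlund theorem. I would first absorb the anticipation: the conjugacy $X_G\to X_{G^{[n+1]}}$ that sends $x$ to the sequence of its length-$(n+1)$ forward blocks is, by Proposition~\ref{stExtensionShift} and its out-split analogue, a composition of edge out-splitting maps, and composing $\varphi$ with its inverse turns $\varphi$ into a conjugacy whose decoding map has memory $m$ and anticipation $0$. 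Then, symmetrically, absorb the memory by passing to a further higher edge graph via a composition of edge in-splitting maps (Proposition~\ref{stExtensionShift}). The upshot is $\varphi=\psi\circ\beta$, where $\beta$ is a composition of edge splitting maps and $\psi\colon X_{G'}\to X_H$ is genuinely $1$-block, with no residual shift, because forward blocks realise out-splittings and backward blocks realise in-splittings exactly. (Had one recoded on a single side only, a power of the shift of $X_G$ would remain; but that shift is itself the composition of the out-splitting map $X_G\to X_{G^{[2]}}$ followed by the in-merging map $X_{G^{[2]}}\to X_G$, so it does no harm.) Since prepending splitting maps to a decomposition of $\psi$ preserves the shape ``splittings, then mergings'', it suffices to decompose $\psi$.

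Next I would show that a $1$-block conjugacy is a composition of edge merging maps. Rename $G'$ to $G$ and write $\psi=\Phi_\infty$ with $\Phi\colon\E_G\to\E_H$; I induct on $\Card(\E_G)-\Card(\E_H)\ge 0$. Since $\psi$ is injective, $\Phi$ is injective on each set $\E_p^q$ (otherwise, swapping two edges of $\E_p^q$ at one position of a biinfinite path through them would produce a distinct path with the same image). If $\Phi$ is a bijection it therefore respects consecutivity of edges; as $G$ and $H$ are essential, their vertices are recovered as the blocks of the partition of the edge set into sets of edges leaving a common vertex, and $\Phi$ carries this partition of $\E_G$ to that of $\E_H$, so $\Phi$ extends to a graph isomorphism $G\cong H$, which is an edge in-merging map given by a permutation matrix; this is the base case. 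If $\Phi$ is not injective, then $\psi^{-1}=g_\infty^{[a,b]}$ has $a\ge 1$ or $b\ge 1$, and I would construct a nontrivial edge merging map through which $\Phi$ factors: group the vertices of $G$ so that within each class the outgoing edges carry, target by target, the same multiset of $\Phi$-values (when $b\ge1$), respectively the incoming edges carry, source by source, the same multiset of $\Phi$-values (when $a\ge1$), and check that this partition defines a graph morphism meeting the out-merge, respectively in-merge, conditions and that $\Phi$ is constant on the fibres of the associated edge map $h$. Then $\psi=\psi_1\circ m$ where $m=h_\infty$ is an edge merging map and $\psi_1=(\Phi/h)_\infty$ is again a $1$-block conjugacy onto $X_H$, now with strictly fewer edges in the domain; the induction hypothesis applies to $\psi_1$. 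Combining the two steps, $\varphi=\psi\circ\beta=(\text{edge merging maps})\circ(\text{edge splitting maps})$, as required.

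The main obstacle is the inductive step just described: producing a nontrivial amalgamation of $G$ compatible with $\Phi$. One must use that $\psi^{-1}$ has positive memory or anticipation to locate the right state-partition, deal with the possibility that some $\Phi$-collisions share neither source nor target — so that a first partial amalgamation is needed before the collisions become source- or target-aligned — and verify the partition conditions in the definitions of in-merge and out-merge, namely equal edge multiplicities inside each class together with matchings that are simultaneously consistent with the graph structure and with $\Phi$. Everything else — the recodings of the first step, the base case, and the bookkeeping that a composition of edge splitting maps is again a composition of edge splitting maps, and likewise for mergings — is routine.
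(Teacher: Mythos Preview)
Your route diverges from the paper's at the second step. After recoding to a $1$-block conjugacy $\psi\colon X_G\to X_H$, the paper does \emph{not} attempt to factor $\psi$ as a chain of merges of $G$ alone. It invokes Lemma~\ref{lemmaDecomp}: one in-splits \emph{both} graphs --- taking $\overline{H}=H^{[2]}$ and $\overline{G}$ the in-split of $G$ along the fibres of the block substitution defining $\psi$ --- to obtain a $1$-block conjugacy $\overline{\psi}\colon X_{\overline G}\to X_{\overline H}$ whose inverse has memory $m-1$. Iterating, then dually for the anticipation, one lands at an isomorphism between graphs larger than both $G$ and $H$; unwinding the squares gives $\psi=(\text{merges down to }H)\circ(\text{splits up from }G)$. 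The splittings here are canonical (higher-block on the $H$ side, fibre-split on the $G$ side), so the lemma never has to \emph{search} for a compatible amalgamation.

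Your induction, by contrast, must produce at each stage a nontrivial in- or out-merge of $G$ through which $\Phi$ factors, and this is precisely where the argument is missing. Two concrete issues. First, your pairing of direction with memory/anticipation is reversed: take $G=H^{[2]}$ with $\Phi(e_1,e_2)=e_1$, so $\psi^{-1}$ has memory $0$ and anticipation $1$; then grouping vertices by \emph{outgoing} $\Phi$-profiles (your rule for $b\ge1$) gives only singletons, whereas the \emph{incoming} profiles coincide and give the correct out-merge. Second, and more seriously, even with the orientation fixed you have not shown that the $\Phi$-profile partition is ever nontrivial when $\Phi$ is non-injective, nor that it satisfies the bijection conditions in the definition of an in-/out-merge. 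Your remark that ``a first partial amalgamation is needed before the collisions become source- or target-aligned'' names the difficulty but does not resolve it --- that is exactly the content one would need to supply. The paper's lemma avoids all of this by also enlarging $H$, at the cost of passing through a common refinement larger than either graph.
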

The statement of Theorem~\ref{ShiftDecompositionTheorem} given in
\cite{Lind&Marcus:1995} is less precise, since it does not specify the
order of splitting and merging maps.

The proof relies on the following statement (Lemma 7.1.3 in~\cite{Lind&Marcus:1995}).
\begin{lemma}\label{lemmaDecomp}
Let $G,H$ be graphs and
let $\varphi:X_G\rightarrow X_H$ be a $1$-block conjugacy whose
inverse has memory $m\ge 1$ and anticipation $n\ge 0$. There are
in-splittings $\overline{G},\overline{H}$ of the graphs $G,H$ and a
$1$-block conjugacy with memory $m-1$ and anticipation $n$
$\overline{\varphi}:X_{\overline{G}}\rightarrow X_{\overline{H}}$ such that
the following diagram commutes.
\begin{figure}[hbt]
\centering
\gasset{Nframe=n}
\begin{picture}(20,20)
\node(G)(0,20){$X_G$}\node(tG)(20,20){$X_{\overline{G}}$}
\node(H)(0,0){$X_H$}\node(tH)(20,0){$X_{\overline{H}}$}
\drawedge(G,tG){}\drawedge(H,tH){}
\drawedge(G,H){$\varphi$}\drawedge(tG,tH){$\overline{\varphi}$}
\end{picture}
\end{figure}
\end{lemma}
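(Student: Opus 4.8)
The plan is to construct the in-splittings $\overline{G}$ and $\overline{H}$ explicitly from the structure of the conjugacy $\varphi$ and its inverse, so that the extra memory of $\varphi^{-1}$ can be "absorbed" into the splitting of the codomain graph. Since $\varphi$ is $1$-block, write $\varphi = \phi_\infty$ for a block substitution $\phi : \E(G) \to \E(H)$; since $\varphi^{-1} = f^{[m,n]}_\infty$ with $m \ge 1$, each edge of $G$ is determined by a window $y_{i-m}\cdots y_i \cdots y_{i+n}$ of edges of $H$. The first step is to record, for each state $q$ of $H$, the relevant "past data" that the map $f$ uses beyond the current position: namely the length-$(m-1)$ path in $H$ ending at $q$. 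I would define $\overline{H}$ to have states the pairs $(w,q)$ where $q \in Q_H$ and $w$ is a length-$(m-1)$ path ending at $q$ — equivalently, take the states of $\overline{H}$ to be paths of length $m-1$ in $H$, i.e. the states of $H^{[m]}$ — but with the edge set arranged so that $\overline{H}$ is an \emph{in-split} of $H$ rather than of $H^{[1]}$ via the higher-edge construction. Concretely: for states $p = k_H(r)$, $q = k_H(t)$ of $H$, partition $\E_p^q(H)$ according to the unique state $t$ of $\overline{H}$ compatible with the terminal path data; by Proposition~\ref{defEquivInMerge} (or directly from the definition of in-merge) this exhibits $\overline{H}$ as an in-split of $H$.

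The second step is to transport this splitting back along $\varphi$ to obtain $\overline{G}$. Because $\varphi$ is a $1$-block conjugacy, the partition of the edges of $H$ that defines the in-split of $H$ pulls back through $\phi$ to a partition of the edges of $G$, and the terminal-state data on $H$ pulls back to terminal-state data on $G$; one checks this pullback still satisfies the in-merge axioms, so it defines an in-split $\overline{G}$ of $G$. This gives the two vertical maps in the diagram (the edge in-splitting maps $X_G \to X_{\overline G}$ and $X_H \to X_{\overline H}$, using Proposition~\ref{inMergingMap}). The third step is to define $\overline{\varphi}: X_{\overline G} \to X_{\overline H}$ and verify the claimed block bounds. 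On edges, $\overline{\varphi}$ sends a split edge of $G$ (an edge of $G$ together with a choice of terminal split-state) to the corresponding split edge of $H$; this is a $1$-block map by construction. Its inverse is computed from $f$: knowing the current edge of $\overline H$ already supplies the length-$(m-1)$ past that $f$ needed, so $f$ restricted to $\overline H$ needs only memory $m-1$ and anticipation $n$ — giving $\overline{\varphi}^{-1} = \bar f^{[m-1,n]}_\infty$. Commutativity of the square is then immediate from the fact that all four maps are defined compatibly on the underlying edges of $G$ and $H$ and the vertical maps forget the split-state coordinate.

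The main obstacle, I expect, is the bookkeeping in the second step: verifying that the pullback along $\varphi$ of the in-merge partition on $H$ genuinely yields a \emph{graph} in-split of $G$ — in particular that for each edge $e$ of $G$ all edges in the preimage (after splitting) have a well-defined common terminal vertex, and that the partition classes $\E_p^q(t)$ on $G$ are nonempty exactly when they should be. This requires using in an essential way that $\varphi$ is a conjugacy (not just a morphism): the anticipation and memory bounds on $\varphi^{-1}$ are what guarantee that the terminal-state assignment on $G$ induced from $H$ is consistent, i.e. that reading one more edge ahead in $X_H$ determines the current edge of $X_G$, which is precisely what lets the "first-edge" ambiguity in the length-$(m-1)$ window be resolved. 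A secondary point to be careful about is the edge case $n = 0$ versus $n \ge 1$, and the degenerate situation where $\overline G$ or $\overline H$ coincides with $G$ or $H$ (a trivial split by a permutation matrix), but these do not affect the argument. Once the in-split structures are in place, the verification that $\overline\varphi$ has the stated memory $m-1$ and anticipation $n$ is a direct unwinding of the definition of $f$.
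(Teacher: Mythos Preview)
Your overall strategy---split $H$ to record past data, pull the split back through $\varphi$ to $G$, and observe that the recorded past reduces the memory of the inverse---is exactly the right one, and matches the paper's approach (the paper does not prove this lemma directly but proves the analogous Lemma~\ref{lemmaDecompSymbol} for general shift spaces by the same method). However, there is a genuine error in your choice of $\overline{H}$.

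You take the states of $\overline{H}$ to be length-$(m-1)$ paths in $H$, i.e.\ the states of $H^{[m]}$. But $H^{[m]}$ is \emph{not} an in-split of $H$ when $m\ge 3$: by Proposition~\ref{stExtensionShift} it is obtained from $H$ by a chain of $m-1$ successive in-merges, not a single one. Concretely, for $\overline{H}$ to be an in-split of $H$, all edges of $\overline{H}$ lying over a fixed edge $e$ of $H$ must share the same terminal vertex in $\overline{H}$; in $H^{[m]}$ the terminal vertex of a path $(e_1,\ldots,e_m)$ is $(e_2,\ldots,e_m)$, which depends on $e_2,\ldots,e_{m-1}$ and not only on $e_m$. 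So no rearrangement of the edge set with that state set will give an in-split of $H$. There is also an internal inconsistency: if the current edge of $\overline{H}$ already encoded a length-$(m-1)$ past in $H$, the inverse would drop to memory $0$, not $m-1$.

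The fix is simply to record \emph{one} past edge rather than $m-1$ of them: take $\overline{H}=H^{[2]}$, which \emph{is} an in-split of $H$ (Proposition~\ref{stExtensionShift}), and take $\overline{G}$ to be the in-split of $G$ relative to the $1$-block substitution $\phi:\E(G)\to\E(H)$, so that edges of $\overline{G}$ are pairs $\phi(e_1)e_2$ with $e_1e_2\in\B_2(X_G)$. The map $\overline{\varphi}$ sends $\phi(e_1)e_2\mapsto \phi(e_1)\phi(e_2)$, a $1$-block map whose inverse now sees one extra edge of past for free and hence has memory $m-1$ and anticipation $n$. This is precisely the construction in the paper's proof of Lemma~\ref{lemmaDecompSymbol}, specialized to edge shifts via Proposition~\ref{propSymbolicSplit}. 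Your steps two and three are then correct as written, and your worry about the ``pullback bookkeeping'' disappears: the in-split of $G$ is defined directly by the map $\phi$, with no consistency check needed.
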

The horizontal edges in the above diagram represent the edge in-splitting
maps from $X_G$ to $X_{\overline{G}}$ and from $X_H$ to $X_{\overline{H}}$
respectively.
%%%%%%%%%%%%%%%%%%%%%%%%%%%%%%%%%
\intertitre{The Classification Theorem}

Two nonnegative integral square matrices $M,N$ are \emph{elementary
  equivalent}%
\index{elementary equivalent matrices}%
\index{matrix!elementary equivalent} if there exists a pair $R,S$ of
nonnegative integral matrices such that
\begin{displaymath}
  M=RS\,,\quad N=SR\,.
\end{displaymath}
Thus if a graph $H$ is a split of a graph $G$, then, by
Proposition~\ref{defEquivInMerge}, the matrices $M(G)$ and $M(H)$ are
elementary equivalent.  The matrices $M$ and $N$ are \emph{strong
  shift equivalent}%
\index{strong shift equivalent matrices}%
\index{matrix!strong shift equivalent} if there is a sequence
$(M_0,M_1,\ldots,M_n)$ of nonnegative integral matrices such that
$M_i$ and $M_{i+1}$ are elementary equivalent for $0\le i<n$ with $M_0=M$
and $M_n=N$.

The following theorem is Williams' Classification Theorem (Theorem
7.2.7 in~\cite{Lind&Marcus:1995})\index{Classification
  Theorem}\index{theorem!Classification}.

\begin{theorem}\label{ClassificationTheorem}
  Let $G$ and $H$ be two graphs. The edge shifts $X_G$ and $X_H$ are
  conjugate if and only if the matrices $M(G)$ and $M(H)$ are strong
  shift equivalent.
\end{theorem}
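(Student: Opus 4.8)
The plan is to derive Williams' Classification Theorem as a consequence of the Decomposition Theorem (Theorem~\ref{ShiftDecompositionTheorem}), Proposition~\ref{BBEP:prop:conj}, and Proposition~\ref{defEquivInMerge} together with its out-split analogue. The two implications are handled separately, and the bridge in both directions is the dictionary between graph splits/merges and elementary equivalence of adjacency matrices.

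For the ``only if'' direction, suppose $X_G$ and $X_H$ are conjugate. First I would reduce to the case where $G$ and $H$ are essential: passing to the essential part of a graph does not change its edge shift, and it changes the adjacency matrix only by deleting rows and columns that are identically zero, which does not affect the strong shift equivalence class (one checks that a matrix and the same matrix bordered by zero rows/columns are elementary equivalent, e.g. via the factorizations through the smaller matrix). Now, given a conjugacy $\varphi:X_G\to X_H$, by the Decomposition Theorem $\varphi$ factors as a composition of edge splitting maps followed by edge merging maps; equivalently there is a finite chain of graphs $G=G_0, G_1,\dots,G_n=H$ (all essential) in which each consecutive pair is related by a split or a merge. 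By Proposition~\ref{defEquivInMerge} and its out-split version, each such step makes $M(G_i)$ and $M(G_{i+1})$ elementary equivalent (an in-split gives $M(G_i)=E D$, $M(G_{i+1})=DE$ for a column division matrix $D$; an out-split is symmetric; a merge is the reverse of a split, and elementary equivalence is symmetric). Composing along the chain gives a strong shift equivalence from $M(G)$ to $M(H)$.

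For the ``if'' direction, suppose $M(G)$ and $M(H)$ are strong shift equivalent via a chain $M(G)=M_0, M_1,\dots,M_n=M(H)$ of nonnegative integral matrices with $M_i=R_iS_i$, $M_{i+1}=S_iR_i$. The key point is that a general elementary equivalence $M=RS$, $N=SR$ need not directly arise from a single graph split, but it can be realized by conjugate edge shifts. One standard way: build the ``bipartite'' graph associated to the pair $(R,S)$ whose adjacency structure encodes both products, and observe that collapsing it one way yields a graph with adjacency matrix $M$ and collapsing it the other way yields one with adjacency matrix $N$, each collapse being a composition of splits and merges; alternatively, invoke directly that elementary equivalent matrices have conjugate edge shifts (this is the ``easy half'' of Williams' theorem, obtained by exhibiting an explicit sliding block conjugacy between $X$ of $RS$ and $X$ of $SR$ using the factorization). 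Either way, each step $M_i\rightsquigarrow M_{i+1}$ yields a conjugacy $X_{G_i}\cong X_{G_{i+1}}$ between the corresponding edge shifts; composing these conjugacies along the chain and using that $X_{G}$ is determined up to conjugacy by $M(G)$ (Proposition~\ref{BBEP:prop:conj} lets us replace any graph by one realizing a given adjacency matrix) gives a conjugacy $X_G\cong X_H$.

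The main obstacle is the ``if'' direction, specifically realizing an abstract elementary equivalence $M=RS$, $N=SR$ by an honest conjugacy of edge shifts. A single graph in-split only produces the special shape where one of the two factors is a column division matrix, so for general nonnegative integral $R,S$ one must either construct the auxiliary bipartite graph and verify that both collapsing maps decompose into splits and merges, or write down the explicit sliding block map between $X$ of $RS$ and $X$ of $SR$ and check it is a conjugacy (its inverse being the block map built from the transposed factorization). I expect the cleanest route in the exposition is the latter explicit conjugacy, since it is self-contained and avoids a detour through bipartite graphs; the verification that it commutes with the shift and is bijective is the only genuinely computational part, and it is the same computation that appears in the proof of the ``easy half'' of Williams' theorem in~\cite{Lind&Marcus:1995}.
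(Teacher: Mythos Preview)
Your proposal is correct and in fact more detailed than what the paper offers at this point. The paper does not give a full proof of Theorem~\ref{ClassificationTheorem}: it simply cites \cite{Lind&Marcus:1995} and adds a one-paragraph note that the ``only if'' direction follows from the Decomposition Theorem (Theorem~\ref{ShiftDecompositionTheorem}) together with the fact that a split/merge yields an elementary equivalence of adjacency matrices. This matches your ``only if'' argument exactly, modulo your extra care with the reduction to essential graphs, which the paper does not discuss here.

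For the ``if'' direction the paper says nothing at this point, but your anticipated approach --- realize an elementary equivalence $M=RS$, $N=SR$ by an explicit sliding block conjugacy --- is precisely what the paper carries out later in the automata setting: Proposition~\ref{BBEP:PropInSplit} constructs the two $2$-block maps $f$ and $g$ from the factorization $(D,N)$ and checks they are mutually inverse, which is the ``same computation'' you refer to from \cite{Lind&Marcus:1995}. The bipartite alternative you mention also appears in the paper, in Section~\ref{sectionSymbolicConjugacy} (Propositions~\ref{propBipartiteAutomata} and~\ref{propElemEquivIsSim}). So both routes you propose are the ones the paper eventually uses, just deferred to the labeled/symbolic context rather than spelled out for plain edge shifts.
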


Note that one direction of this theorem is contained in the
Decomposition Theorem. Indeed, if $X_G$ and $X_H$ are conjugate, there
is a sequence of edge splitting and edge merging maps from $X_G$ to $X_H$. And
if $G$ is a split or a merge of $H$, then $M(G)$ and $M(H)$ are
elementary equivalent, whence the result in one direction follows.
Note also that, in spite of the easy definition of strong shift
equivalence, it is not even known whether there exists a decision
procedure for determining when two nonnegative integral matrices are
strong shift equivalent.

%%%%%%%%%%%%%%%%%%%%%%%%%%%%%%%%%%%%%%%%%%%%%
\subsection{Flow equivalence}\label{subSectionFlowEquivalence}

In this section, we give basic definitions and properties concerning
flow equivalence of shift spaces. The notion comes from the notion of
equivalence of continuous flows, see Section~13.6 of
\cite{Lind&Marcus:1995}. A characterization of flow equivalence for
 shift spaces (which we will take below as our definition of flow
equivalence for shift spaces) 
 is due to Parry and Sullivan~\cite{ParrySullivan:1975}. It is
noticeable that the flow equivalence of irreducible shifts of finite
type has an effective characterization, by Franks' Theorem
(Theorem~\ref{thmFranks}).

Let $A$ be an alphabet and $a$ be a letter in $A$.  Let $\omega$ be a
letter which does not belong to $A$. Set $B=A\cup\omega$. The
\emph{symbol expansion}\index{symbol!expansion}
\index{expansion!symbol} of a set $W\subset A^+$ relative to $a$ is
the image of $W$ by the semigroup morphism $\varphi:A^+\rightarrow
B^+$ such that $\varphi(a)=a\omega$ and $\varphi(b)=b$ for all $b\in
A\setminus a$.  Recall that a \emph{semigroup morphism}%
\index{semigroup morphism}\index{morphism!semigroup} $f:A^+\to B^+$ is
a map satisfying $f(xy)=f(x)f(y)$ for all words $x,y$. It should not be
confused with the morphisms of shift spaces defined earlier.  The
semigroup morphism $\varphi$ is also called a symbol expansion.  Let
$X$ be a shift space on the alphabet $A$. The \emph{symbol expansion}
of $X$ relative to $a$ is the least shift space $X'$ on the alphabet
$B=A\cup\omega$ which contains the symbol expansion of $\B(X)$.  Note
that if $\varphi$ is a symbol expansion, it defines a bijection from
$\B(X)$ onto $\B(X')$. The inverse of a symbol expansion is called a
\emph{symbol
  contraction}\index{symbol!contraction}\index{contraction!symbol}.

Two shift spaces $X,Y$ are said to be \emph{flow
  equivalent}\index{flow equivalent}\index{shift space!flow equivalent} if there
is a sequence $X_0,\ldots,X_n$ of shift spaces such that $X_0=X$,
$Y_n=Y$
and for $0\le i\le n-1$, either $X_{i+1}$ is  the image of $X_i$
by a conjugacy, a symbol expansion or a symbol contraction.

\begin{example}
  Let $A=\{a,b\}$. The symbol expansion of the full shift $A^\Z$
  relative to $b$ is conjugate to the golden mean shift.  Thus the
  full shift on two symbols and the golden mean shift are flow
  equivalent.
\end{example}

For edge shifts, symbol expansion can be replaced by another
operation. Let $G$ be a graph and let $p$ be a vertex of $G$.  The
\emph{graph expansion}\index{graph!expansion}\index{expansion!graph}
of $G$ relative to $p$ is the graph $G'$ obtained by replacing $p$ by
an edge from  a new vertex $p'$ to $p$ to and replacing all  edges
coming in
 $p$ by  edges coming in $p'$ (see
Figure~\ref{figGraphExpansion}). The inverse of a graph expansion is
called a \emph{graph contraction}\index{graph!contraction}%
\index{contraction!graph}.
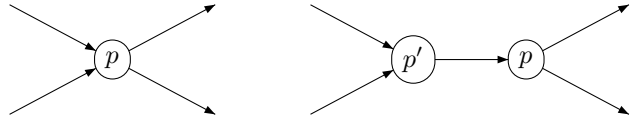
\begin{figure}[hbt]
\centering
\begin{picture}(80,20)
\gasset{Nframe=n,Nadjust=wh}
\node(1)(0,0){}\node(2)(0,16){}\node[Nframe=y](3)(15,8){$p$}\node(4)(30,0){}
\node(5)(30,16){}
\drawedge(1,3){}\drawedge(2,3){}\drawedge(3,4){}\drawedge(3,5){}

\node(6)(40,0){}\node(7)(40,16){}
\node[Nframe=y](8)(55,8){$p'$}\node[Nframe=y](9)(70,8){$p$}
\node(10)(85,0){}\node(11)(85,16){}
\drawedge(6,8){}\drawedge(7,8){}\drawedge(8,9){}\drawedge(9,10){}\drawedge(9,11){}
\end{picture}
\caption{Graph expansion}\label{figGraphExpansion}
\end{figure}
Note that graph expansion (relative to vertex 1) changes the adjacency matrix of a graph as
indicated below.
\begin{displaymath}
\begin{bmatrix}
a_{11}&a_{12}&\ldots&a_{1n}\\
a_{21}&a_{22}&\ldots&a_{2n}\\
\vdots&     &      &      \\
a_{n1}&a_{n2}&\ldots&a_{nn}
\end{bmatrix}
\longrightarrow
\begin{bmatrix}
0&a_{11}&a_{12}&\ldots&a_{1n}\\
1&0    &0     &\ldots&0\\
0&a_{21}&a_{22}&\ldots&a_{2n}\\
\vdots &     &&      &     \\
0&a_{n1}&a_{n2}&\ldots&a_{nn}
\end{bmatrix}
\end{displaymath}
\begin{proposition}
The flow equivalence relation on edge shifts is generated by
conjugacies
and graph expansions.
\end{proposition}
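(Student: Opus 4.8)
Here is a plan for the proof. Throughout, write $\approx$ for the equivalence relation on edge shifts generated by conjugacies and graph expansions (equivalently: by conjugacies, graph expansions and graph contractions). We may assume all graphs essential, the general case being a triviality. The plan is to prove the two inclusions between $\approx$ and the flow equivalence relation restricted to edge shifts.

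For the inclusion of $\approx$ into flow equivalence it is enough to show that a graph expansion is a flow equivalence. Let $G'$ be the graph expansion of $G$ at a vertex $p$, and let $e_1,\dots,e_k$ ($k\ge 1$) be the edges of $G$ ending in $p$. First I would start from $X_G$ and perform, one after another, the symbol expansions relative to $e_1,\dots,e_k$, using a fresh new letter $f_j$ at the $j$-th step; in the resulting shift $Z$ each $f_j$ occurs only immediately after $e_j$. Then the $1$-block map identifying all the $f_j$ with a single new letter $f$ is injective on $Z$ — from the letter preceding an occurrence of $f$ one reads off which $f_j$ it was, and that letter is always one of the $e_j$ — so it is a conjugacy, and a direct inspection of $G'$, whose new vertex $p'$ receives exactly the edges $e_1,\dots,e_k$ and emits only the new edge $p'\to p$, shows that its image is $X_{G'}$. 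Hence $X_{G'}$ is flow equivalent to $X_G$, and so is any graph contraction by symmetry.

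For the converse, suppose the edge shifts $X_G$ and $X_H$ are joined by a chain $X_0=X_G,X_1,\dots,X_n=X_H$ whose consecutive terms differ by a conjugacy, a symbol expansion or a symbol contraction. First I would note that every $X_i$ is a shift of finite type: this class is closed under conjugacy (as proved above) and, by an elementary bound on the lengths of forbidden words, also under symbol expansion and symbol contraction. So each $X_i$ is conjugate to an edge shift $X_{G_i}$ by Proposition~\ref{BBEP:prop:conj}, with $G_0=G$ and $G_n=H$, and it suffices to prove $X_{G_i}\approx X_{G_{i+1}}$ for each $i$. If the $i$-th step is a conjugacy this is clear; and since graph contractions are available, it remains by symmetry to treat the case where $X_{i+1}$ is the symbol expansion of $X_i$ relative to a letter $a$. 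Here I would present $X_i$ as the higher block shift $X_i^{[N]}=X_{\widehat{G}}$ for $N$ large, with the $1$-block conjugacy $\pi$ carrying each $N$-block to its last letter, and let $P_a$ be the set of vertices of $\widehat{G}$ — that is, $(N-1)$-blocks of $X_i$ — ending in $a$. The key point is that the edges of $\widehat{G}$ sent to $a$ by $\pi$ are precisely those whose terminal vertex lies in $P_a$, and \emph{every} incoming edge of a vertex of $P_a$ is of this kind; so this set of edges is the disjoint union, over $p\in P_a$, of the incoming edges of $p$. Performing a graph expansion of $\widehat{G}$ at each vertex of $P_a$ in turn — these do not interfere, since an edge has a single terminal vertex — yields a graph $\widehat{G}'$ whose edge shift is $X_{\widehat{G}}$ with a fresh letter $f_p$ inserted after each incoming edge of $p$, for all $p\in P_a$. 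As in the first part, identifying all the $f_p$ with a single letter $\omega$ is a conjugacy, and transporting the outcome along $\pi$ identifies it with $X_i$ with $\omega$ inserted after each occurrence of $a$, namely with $X_{i+1}$. Thus $X_{i+1}\cong X_{\widehat{G}'}$, the latter is obtained from $X_{G_i}$ by conjugacies and graph expansions, and $X_{i+1}\cong X_{G_{i+1}}$, which gives $X_{G_i}\approx X_{G_{i+1}}$.

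The hard part is exactly the observation in the previous paragraph. A graph expansion can only realize the pattern ``insert one new letter after all incoming edges of a single vertex'', whereas a generic edge-shift presentation of $X_i$ will not exhibit the letter $a$ in this form; passing to a higher block presentation is what forces the preimage of $a$ to split into a disjoint union of such patterns (indexed by the vertices ending in $a$), and the fact that, from bounded context, one can recover which of those vertices a given occurrence came from is what makes the final letter-identification a genuine conjugacy rather than merely a factor map. The remaining ingredients — closure of shifts of finite type under the three operations, and the non-interference of the successive graph expansions — are routine.
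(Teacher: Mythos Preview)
Your proof is correct. The first direction (a graph expansion is a flow equivalence) is argued exactly as in the paper: perform a symbol expansion at each edge entering $p$ and then merge the fresh letters by a $1$-block conjugacy.

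For the converse you take a genuinely different, more global route than the paper. The paper treats only the generating step and does it with a \emph{single in-split}: given an edge shift $X_G$ and an edge $e$ ending at $q$, one in-splits $q$ so that $e$ becomes the unique edge entering its copy $q_1$, and then performs one graph expansion at $q_1$; this already realises the symbol expansion of $X_G$ at $e$. Your argument instead passes to a high-order higher block presentation $X_{\widehat G}$, identifies the set $P_a$ of vertices whose incoming edges are exactly the $a$-edges, performs a graph expansion at each vertex of $P_a$, and finally merges the new letters. What this buys you is a clean treatment of the whole flow-equivalence chain: you explicitly handle the fact that intermediate shifts $X_i$ need not be edge shifts (only SFTs), whereas the paper's short proof tacitly leaves this reduction to the reader. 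Conversely, the paper's in-split trick is more economical at the level of a single step, replacing your ``higher block plus several graph expansions plus a merging conjugacy'' by one split and one expansion.

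One small point worth making explicit in your write-up: when you ``transport the outcome along $\pi$'' to identify it with $X_{i+1}$, the extended map $\tilde\pi$ (sending $\omega\mapsto\omega$ and acting as $\pi$ on old letters) is obviously $1$-block, but its inverse needs a window of length roughly $2N$ rather than $N$, since the $N$ non-$\omega$ letters needed to apply $\pi^{-1}$ may be interspersed with $\omega$'s. This is immediate from the fact that $\omega$'s are isolated, but it is the one place where a reader might pause.
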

\begin{proof}
Let $G=(Q,E)$ be a graph and let $p$ be a vertex of $G$. The graph
expansion
of $G$ relative to $p$ can be obtained by a symbol expansion of
each of the edges coming into $p$ followed by a conjugacy which
merges all the new symbols into one new symbol.
Conversely, let $e$ be an edge of $G$. The symbol expansion of $X_G$
relative to $e$ can be obtained by a input split which makes $e$
the only edge going into its end vertex $q$ followed by a graph
expansion relative to $q$.
\end{proof}

The \emph{Bowen-Franks group}\index{Bowen-Franks group}
 of a square $n\times n$-matrix $M$ with
integer elements is 
the Abelian group 
\begin{displaymath}
BF(M)=\Z^n/\Z^n(I-M)
\end{displaymath}
where $\Z^n(I-M)$ is the image of $\Z^n$ under the matrix $I-M$ acting
on the right. In other terms, $\Z^n(I-M)$ is the Abelian group
generated by the rows of the matrix $I-M$. This notion is due to Bowen
and Franks~\cite{Bowen&Franks:1977}, who have shown that it is an
invariant for flow equivalence.

The following result is due to
Franks~\cite{Franks:1984}\index{theorem!Franks'}.
We say that a graph is \emph{trivial} if it is  reduced to one
cycle.
\begin{theorem}\label{thmFranks}
  Let $G,G'$ be two strongly connected nontrivial graphs and let $M,M'$ be their
  adjacency matrices.  The edge shifts $X_G,X_{G'}$ are flow
  equivalent if and only if $\det(I-M)=\det(I-M')$ and the groups $BF(M)$,
  $BF(M')$ are isomorphic.
\end{theorem}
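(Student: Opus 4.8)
The plan is to prove the two implications separately; the forward implication (necessity) is a short invariance check, while the converse --- Franks' contribution --- needs a normal-form argument. For \emph{necessity}, I would verify that both $\det(I-M)$ and $BF(M)$ are preserved by the two moves that, by the proposition above, generate flow equivalence of edge shifts: conjugacies and graph expansions (together with their inverses). By the Decomposition Theorem a conjugacy of edge shifts factors through edge splittings and mergings, and a single split or merge makes the adjacency matrices elementary equivalent, say $M=ED$ and $N=DE$ with $E,D$ nonnegative integral (possibly rectangular). Sylvester's determinant identity then gives $\det(I-ED)=\det(I-DE)$, while the cokernels $\Z^n/\Z^n(I-ED)$ and $\Z^m/\Z^m(I-DE)$ are isomorphic; in fact $BF$ is a flow equivalence invariant by the theorem of Bowen and Franks cited above, so this part may simply be quoted. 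For a graph expansion the adjacency matrix changes as displayed just before the statement; a single elementary row operation clears the lone nonzero entry of the first column below the diagonal, and Laplace expansion along that column gives $\det(I-M')=\det(I-M)$, while the same unimodular operation identifies the two cokernels. This settles necessity.

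For \emph{sufficiency}, the idea is to reduce every strongly connected nontrivial graph, through flow equivalence, to a normal form depending only on the pair consisting of $BF(M)$ and the sign of $\det(I-M)$. These data agree for $G$ and $G'$ exactly under the hypotheses of the theorem: when $\det(I-M)\ne 0$ the group $BF(M)$ is finite with $|BF(M)|=|\det(I-M)|$, so knowing the group and the sign is the same as knowing the group and the determinant, and when $\det(I-M)=0$ the group is infinite and the sign carries nothing. The first step is to record the operations on the integer matrix $I-M$ that are realizable by flow equivalence moves on $G$: (i) simultaneous permutation of rows and columns, from state renaming; (ii) the passage between $I-M$ and the bordered matrix produced by a graph expansion at a chosen vertex, together with its inverse (contraction); and (iii) the substitution $M\mapsto DE$ whenever $M=ED$, which is an elementary equivalence and hence yields a conjugate, a fortiori flow equivalent, edge shift.

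The crux is then to show that (i)--(iii) together suffice to realize an arbitrary transformation $I-M\mapsto U(I-M)V$ with $U,V\in GL_n(\Z)$: one first uses graph expansions to enlarge $I-M$ with auxiliary rows and columns that give room to perform elementary $\Z$-row and $\Z$-column operations, carries out the operations that implement $U$ and $V$, and then contracts the scratch rows and columns away again. Granting this, $X_G$ is flow equivalent to the canonical edge shift attached to the Smith normal form of $I-M$ and the sign of its determinant; since $G$ and $G'$ carry the same such data, they reach the same normal form, and hence $X_G$ and $X_{G'}$ are flow equivalent. The degenerate case $\det(I-M)=0$ (equivalently $BF(M)$ infinite) is handled by the analogous normal form for a singular $I-M$.

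The \emph{main obstacle} is step (iii) in combination with (ii): an individual elementary operation on $I-M$ will usually produce a matrix that is no longer of the form $I-(\text{adjacency matrix of a graph})$ --- nonnegativity of the off-diagonal part may fail, or strong connectedness, or nontriviality --- so one cannot simply run the Smith-normal-form reduction directly. The operations must be ordered carefully, and graph expansions and contractions must be deployed as scratch space so that every intermediate matrix genuinely corresponds to a strongly connected nontrivial graph; this is precisely where the hypotheses on $G$ and $G'$ are used in an essential way. That bookkeeping, together with the separate treatment of the singular case, is the heart of the argument; by comparison, the invariance of $\det(I-M)$ and the final identification of normal forms are routine.
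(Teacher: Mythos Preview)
The paper does not prove this theorem. It is stated as a quoted result, attributed to Franks~\cite{Franks:1984}, and immediately followed by a remark about the trivial case and an example; there is no argument in the paper to compare your proposal against. So there is nothing to match or contrast here on the paper's side.

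That said, your outline is broadly in the spirit of Franks' original proof: invariance of $\det(I-M)$ and $BF(M)$ under the Parry--Sullivan generators for necessity, and a normal-form reduction for sufficiency. You correctly identify the genuine difficulty, namely that arbitrary $GL_n(\Z)$ operations on $I-M$ do not a priori stay within the class of matrices $I-(\text{nonnegative adjacency matrix of a strongly connected nontrivial graph})$, so one cannot simply run Smith normal form. What your sketch leaves open is exactly the substance of Franks' paper: the explicit list of ``Franks moves'' on $I-M$ that are realizable by flow equivalences, the proof that these moves generate enough of $GL_n(\Z)$ to reach a canonical representative, and the construction of the concrete model shifts (one for each admissible pair of sign and Bowen--Franks group). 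Your proposal names the obstacle but does not overcome it; as written it is a plan rather than a proof, which is appropriate given that the paper itself only cites the result.
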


In the case trivial graphs, the theorem is false. Indeed, any two
edge shifts on strongly connected
trivial graphs are flow equivalent and
are not flow equivalent to any edge shift on a nontrivial irreducible
graph.
 For any trivial
graph $G$ with adjacency matrix $M$, one has $\det(I-M)=0$ and
$BF(M)\sim \Z$. However there are nontrivial strongly connected graphs
such that $\det(I-M)=0$ and $BF(M)\sim \Z$.

The case of arbitrary shifts of finite type has been solved by 
 Huang (see~\cite{Boyle:2002,Boyle&Huang:2003}).  A similar characterization for
sofic shifts is not known (see~\cite{Boyle:2008}).

\begin{example}
Let 
\begin{displaymath}
M=\begin{bmatrix}4&1\\1&0\end{bmatrix},\quad 
M'=\begin{bmatrix}3&2\\1&0\end{bmatrix}.
\end{displaymath}
One has $\det(I-M)=\det(I-M')=-4$. Moreover $BF(M)\sim
\Z/4\Z$. Indeed, the rows of the matrix
$I-M$ are $\begin{bmatrix}-3&-1\end{bmatrix}$ and
  $\begin{bmatrix}-1&1\end{bmatrix}$. They generate the same group as
$\begin{bmatrix}4&0\end{bmatrix}$ and
      $\begin{bmatrix}-1&1\end{bmatrix}$.
Thus $BF(M)\sim\Z/4\Z$. In the same way, $BF(M')\sim\Z/4\Z$.
Thus, according to Theorem~\ref{thmFranks}, the edge shifts $X_G$
and $X_{G'}$ are flow equivalent.

Actually
$X_G$ and $X_{G'}$ are both flow equivalent to the full shift on $5$ symbols.
\end{example}
%%%%%%%%%%%%%%%%%%%%%%
%
%   Automata
%
%%%%%%%%%%%%%%%%%%%%%%%%%%%%
\section{Automata}\label{sectionAutomata}

In this section, we start with the definition and notation for
automata recognizing shifts, and we show that sofic shifts are
precisely the shifts recognized by finite automata
(Proposition~\ref{BBEP:PropSoficRec}). 

We introduce the notion of labeled conjugacy; it is a conjugacy
preserving the labeling. We extend the Decomposition Theorem and the
Classification Theorem to labeled conjugacies
(Theorems~\ref{AutomataDecompositiontheorem} and
\ref{AutomataClassificationtheorem}).  
\subsection{Automata and sofic shifts}

The automata considered in this section are finite automata. We do not
mention the initial and final states  in the notation when all
states are both initial and final. Thus, an automaton is denoted by
$\A=(Q,E)$ where $Q$ is the finite set of
\emph{states}\index{automaton!state} and $E\subset Q\times A\times Q$
is the set of \emph{edges}\index{automaton!edge}. The edge $(p,a,q)$
has initial state $p$, label $a$ and terminal state $q$.  The
underlying graph of $\A$ is the same as $\A$ except that the labels of
the edges are not used.

An automaton is \emph{essential}%
\index{automaton!essential}\index{essential!automaton} if 
its underlying graph is essential. The \emph{essential part}
of an automaton is its restriction to the essential part of
its underlying graph.

We denote by $X_\A$ the set of biinfinite paths in $\A$. It is the
edge shift of the underlying graph of $\A$.  Note that since the
automaton is supposed finite, the shift space $X_\A$ is on a finite
alphabet,
as required for a shift space. We denote by $L_\A$ the
set of labels of biinfinite paths in $\A$. We denote by $\lambda_\A$
the $1$-block map from $X_\A$ into the full shift $A^\Z$ which assigns
to a path its label. Thus $L_\A=\lambda_\A(X_\A)$.  If this holds, we
say that $L_\A$ is the shift space \emph{recognized} by
$\A$.\index{automaton!shift recognized}%
\index{shift space!recognized by an automaton}

The following
propositions describe how this notion of recognition is related to
that for finite words.  In the context of finite words, we denote by
$\A=(Q,I,E,T)$ an automaton with distinguished subsets $I$ (resp. $T$)
of initial (resp. terminal) states. A word $w$ is \emph{recognized} by $\A$
if there is a path from a state in $I$ to a state in $T$ labeled 
$w$. Recall that a set is recognizable if it is the set of words
recognized by a finite automaton.  An automaton $\A=(Q,I,T)$ is
\emph{trim}\index{trim automaton}\index{automaton!trim} if, for every
state $p$ in $Q$, there is a path from a state in $I$ to $p$ and a
path from $p$ to a state in $T$.

\begin{proposition}\label{BBEP:propAutomatonSofic1}
  Let $W\subset A^*$ be a recognizable set and let $\A=(Q,I,T)$ be a
  trim finite automaton recognizing the set $A^*\setminus A^*WA^*$.
  Then $L_\A=X^{(W)}$.
\end{proposition}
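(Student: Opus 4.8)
The plan is to prove the two inclusions $L_\A \subseteq X^{(W)}$ and $X^{(W)} \subseteq L_\A$ separately, using the defining property of $\A$ as a trim automaton recognizing $A^* \setminus A^*WA^*$. The key observation throughout is the standard fact that a finite word $u$ is a factor of some word in $A^* \setminus A^*WA^*$ if and only if $u$ itself lies in $A^* \setminus A^*WA^*$, i.e.\ $u$ contains no element of $W$ as a factor; equivalently, the set of factors of $A^* \setminus A^*WA^*$ is exactly $A^* \setminus A^*WA^*$. I will use this to identify the finite labels read along paths in $\A$ with precisely the words avoiding $W$.

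For the inclusion $L_\A \subseteq X^{(W)}$: take $y \in L_\A$, so $y = \lambda_\A(x)$ for some biinfinite path $x$ in $\A$. Any finite factor $u$ of $y$ is the label of a finite sub-path of $x$, hence of a path between two states of $\A$. Since $\A$ is trim, this path extends to a path from a state in $I$ to a state in $T$, so $u$ is a factor of a word recognized by $\A$, i.e.\ a factor of a word in $A^* \setminus A^*WA^*$. By the observation above, $u$ contains no factor from $W$. As this holds for every finite factor of $y$, we get $y \in X^{(W)}$.

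For the converse inclusion $X^{(W)} \subseteq L_\A$: take $y \in X^{(W)}$. For each $n \ge 1$, the central factor $y_{-n} \cdots y_n$ avoids $W$, hence lies in $A^* \setminus A^*WA^*$ and so is recognized by $\A$ along some path $\pi_n$ from $I$ to $T$; in particular it is the label of a genuine path in $\A$. The task is to patch these finite paths into a single biinfinite path labeled $y$. This is where the main obstacle lies: the finite paths $\pi_n$ need not be coherent (the path reading $y_{-n}\cdots y_n$ need not be a sub-path of the one reading $y_{-n-1}\cdots y_{n+1}$). I will handle this by a compactness / König's lemma argument: since $\A$ is finite, for each position $i$ there are only finitely many choices of state, so one extracts (by a diagonal argument over $n$) a biinfinite sequence of states $(q_i)_{i \in \Z}$ together with edges labeled $y_i$ from $q_i$ to $q_{i+1}$ that is a limit of (tails of) the $\pi_n$. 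Concretely, one may note that the set of biinfinite paths in $\A$ whose label agrees with $y$ on $[-n,n]$ is a nonempty closed subset of the compact space $X_\A$ (nonempty because the finite path reading $y_{-n}\cdots y_n$ can be completed to a biinfinite path, $\A$ being essential as it is trim), these sets are nested, so their intersection is nonempty; any point in the intersection is a biinfinite path labeled $y$, giving $y \in L_\A$. Combining the two inclusions yields $L_\A = X^{(W)}$.
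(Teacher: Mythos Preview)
Your proof is correct and follows essentially the same approach as the paper: prove the two inclusions separately, using trimness to extend finite sub-paths to accepted words for $L_\A \subseteq X^{(W)}$, and K\"onig's lemma/compactness on the finite paths $\pi_n$ for $X^{(W)} \subseteq L_\A$. One minor caveat: trim does not in general imply essential (an initial state need have no incoming edge, a terminal state no outgoing one), so your alternative formulation via nested closed subsets of $X_\A$ is not quite justified as written; but your primary K\"onig's-lemma argument on the finite paths $\pi_n$ is exactly what the paper does and suffices.
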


\begin{proof}
 The label of a biinfinite path in the
  automaton $\A$ does not contain a factor $w$ in $W$. Otherwise, there is a
  finite path $p\edge{w}q$ which is a segment of this infinite
  path. The path $p\edge{w}q$ can be extended to a path
  $i\edge{u}p\edge{w}q\edge{v}t$ for some $i\in I, t\in T$, and $uwv$ is
  accepted by $\A$, which is a contradiction.
  
  Next, consider a biinfinite word $x=(x_i)_{i\in\Z}$ in $X^{(W)}$. For
  every $n\ge0$, there is a path $\pi_n$ in the automaton $\A$ labeled
  $w_n=x_{-n}\cdots x_0\cdots{x_n}$ because the word $w_n$ has no
  factor in $W$. By compactness (K\"onig's lemma) there is an infinite
  path in $\A$ labeled $x$. Thus $x$ is in $L_\A$.
\end{proof}
The following proposition states in some sense the converse.

\begin{proposition}\label{BBEP:propAutomatonSofic0}
  Let $X$ be a sofic shift over $A$, and let $\A=(Q,I,T)$ be a trim
  finite automaton recognizing the set $\B(X)$ of blocks of $X$.  Then
  $L_\A=X$.
\end{proposition}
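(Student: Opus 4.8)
The statement to prove is that if $X$ is a sofic shift over $A$ and $\A = (Q, I, T)$ is a trim finite automaton recognizing $\B(X)$, then $L_\A = X$. The plan is to reduce this to Proposition~\ref{BBEP:propAutomatonSofic1}, which already handles the case of an automaton recognizing $A^* \setminus A^* W A^*$ for a recognizable forbidden set $W$. The key observation is that $\B(X)$ is exactly a set of the form $A^* \setminus A^* W A^*$ for a suitable $W$: indeed, since $X$ is sofic, $X = X^{(W)}$ for some recognizable $W$, and the blocks of $X^{(W)}$ are precisely the finite words having no factor in $W$, which is to say $\B(X) = A^* \setminus A^* W A^*$. (One should be slightly careful about the empty word, but it is harmless to assume $X$ is nonempty, and then $\varepsilon \in \B(X)$ on both sides.)

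\textbf{First step.} I would first record that $\B(X)$ is a \emph{factorial} and \emph{extendable} language: factorial because any factor of a block of $X$ is again a block of $X$, and extendable because any block of $X$ extends on both sides to a longer block of $X$ (every finite factor of a biinfinite sequence sits inside a longer one). Factoriality gives $\B(X) = A^* \setminus A^* W A^*$ where $W = A^* \setminus \B(X)$ is the complement (the set of all non-blocks), or more economically $W$ can be taken to be the minimal forbidden words; in any case $W$ is recognizable because $\B(X)$ is. This identifies the hypothesis on $\A$ with the hypothesis of Proposition~\ref{BBEP:propAutomatonSofic1}, so that proposition immediately yields $L_\A = X^{(W)}$.

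\textbf{Second step.} It then remains to check $X^{(W)} = X$ for this choice of $W$. Since $X = X^{(V)}$ for the recognizable set $V$ witnessing soficity, and $\B(X) = A^* \setminus A^* V A^*$ as well, the two sets $W$ and $V$ define the same forbidden-factor condition on biinfinite sequences: a biinfinite $x$ avoids all factors in $W$ iff every finite factor of $x$ lies in $\B(X)$ iff $x$ avoids all factors in $V$ iff $x \in X$. Hence $X^{(W)} = X$, and combining with the previous step gives $L_\A = X$.

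\textbf{Main obstacle.} The only real subtlety is that Proposition~\ref{BBEP:propAutomatonSofic1} is stated for a trim automaton recognizing $A^* \setminus A^* W A^*$ \emph{with $W$ recognizable}, so I must make sure the $W$ I produce is genuinely recognizable and that $\B(X)$ coincides on the nose with $A^* \setminus A^* W A^*$ (including edge cases like $X = \emptyset$, where $\B(X) = \emptyset$ and $L_\A = \emptyset$ trivially, or words of length $0$). Recognizability of $W$ is immediate from recognizability of $\B(X)$: the latter is recognized by the trim automaton $\A$ itself viewed with its initial and final states, so its complement $W$ is recognizable, and then $A^* W A^*$ and its complement are recognizable by closure properties. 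Once this bookkeeping is in place, the proof is a two-line invocation of Proposition~\ref{BBEP:propAutomatonSofic1} plus the factoriality remark, so I do not expect any deeper difficulty.
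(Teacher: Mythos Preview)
Your proposal is correct and follows essentially the same route as the paper: set $W = A^* \setminus \B(X)$, use factoriality of $\B(X)$ to identify $\B(X)$ with $A^* \setminus A^* W A^*$, check $X = X^{(W)}$, and invoke Proposition~\ref{BBEP:propAutomatonSofic1}. The paper's proof is just these three lines without the surrounding discussion; your extra care about recognizability of $W$ and edge cases is sound but not strictly needed beyond what you wrote.
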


\begin{proof}
  Set $W=A^*\setminus \B(X)$. Then one easily checks that $X=X^{(W)}$.
  Next, $\A$ recognizes $A^*\setminus A^*WA^*$. By
  Proposition~\ref{BBEP:propAutomatonSofic1}, one has $L_\A=X$.
\end{proof}

\begin{proposition}\label{BBEP:PropSoficRec}
  A shift $X$ over $A$ is sofic if and only if there is a finite
  automaton $\A$ such that $X=L_\A$.
\end{proposition}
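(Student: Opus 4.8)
The plan is to prove both directions by tying the statement to the two preceding propositions. First, suppose $X$ is sofic. By definition there is a recognizable set $W\subseteq A^*$ with $X=X^{(W)}$. The set $A^*\setminus A^*WA^*$ is recognizable (the class of recognizable sets is closed under complement, product and star, and $W$ recognizable implies $A^*WA^*$ recognizable), so there is a finite automaton recognizing it; restricting to its trim part (which does not change the recognized language when that language is nonempty — and it always contains the empty word here) gives a trim finite automaton $\A=(Q,I,T)$ recognizing $A^*\setminus A^*WA^*$. By Proposition~\ref{BBEP:propAutomatonSofic1}, $L_\A=X^{(W)}=X$. Taking the essential part of the underlying graph does not change $L_\A$, so we obtain a finite automaton $\A$ (in the all-states-initial-and-final sense of this section) with $X=L_\A$.

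Conversely, suppose $X=L_\A$ for some finite automaton $\A$. One should check that $\B(X)=\B(L_\A)$ is recognizable: a finite word $w$ is a block of some biinfinite path label exactly when $w$ labels a finite path in $\A$ all of whose states lie on some biinfinite path, i.e.\ $w$ labels a path in the essential part of $\A$. Declaring every state of the essential part of $\A$ to be both initial and final yields a finite automaton recognizing $\B(X)$, so $\B(X)$ is recognizable. Then set $W=A^*\setminus\B(X)$; this is recognizable as the complement of a recognizable set, and as in the proof of Proposition~\ref{BBEP:propAutomatonSofic0} one checks $X=X^{(W)}$ (if $x\in A^\Z$ has all its finite factors in $\B(X)$, a König's-lemma argument produces a biinfinite path of $\A$ labeled $x$, so $x\in L_\A=X$; the reverse inclusion is immediate). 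Hence $X$ is sofic.

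The only genuinely delicate point is the forward direction of the second implication, namely arguing that $\B(L_\A)$ is recognizable rather than merely that each $\B_k(L_\A)$ is finite. The clean way to handle it is to pass first to the essential part of $\A$ and then invoke the already-proved fact (implicit in Proposition~\ref{BBEP:propAutomatonSofic0}'s setup, via the trim automaton recognizing $\B(X)$) that the set of labels of finite paths of an essential finite automaton is recognizable; everything else is bookkeeping about closure properties of recognizable sets and the harmless passage to trim/essential subautomata. Alternatively, one can shortcut the whole argument: the two directions are exactly Proposition~\ref{BBEP:propAutomatonSofic0} (used with a trim automaton recognizing $\B(X)$, which exists once $\B(X)$ is seen to be recognizable) and the construction in the first paragraph, so the proof amounts to assembling those two results together with the observation that the essential part of an automaton recognizes the same shift.
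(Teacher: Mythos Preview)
Your proof is correct and follows essentially the same strategy as the paper: the forward implication via Proposition~\ref{BBEP:propAutomatonSofic1}, and the converse by exhibiting a recognizable set $W$ with $X=X^{(W)}$ and invoking compactness. The one point worth noting is that the paper's converse is slightly more streamlined: instead of passing to the essential part of $\A$ in order to identify $\B(X)$ and then setting $W=A^*\setminus\B(X)$, the paper simply takes $W$ to be the set of words that are \emph{not} labels of any finite path in $\A$ (essential or not). This $W$ is immediately recognizable, the inclusion $X\subset X^{(W)}$ is clear, and the reverse inclusion follows by the same K\"onig argument you give, with no need to discuss essential parts at all. Your detour through $\B(X)$ is not wrong, but it introduces an extra step that the direct choice of $W$ avoids.
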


\begin{proof}
  The forward implication results from Proposition~\ref{BBEP:propAutomatonSofic1}.
  Conversely, assume that $X=L_\A$ for some finite automaton $\A$.
  Let $W$ be the set of finite words which are not labels of paths in
  $\A$. Clearly $X\subset X^{(W)}$. Conversely, if $x\in X^{(W)}$,
  then all its factors are labels of paths in $\A$. Again by
  compactness, $x$ itself is the label of a biinfinite path in~$\A$.
\end{proof}

\begin{example}\label{ExAutomataSofic}
  The golden mean shift of Example~\ref{ExGoldenMeanShift} is
  recognized by the automaton of Figure~\ref{figAutomatonEven} on the
  left while the even shift of Example~\ref{ExEvenShift} is recognized
  by the automaton of Figure~\ref{figAutomatonEven} on the right.
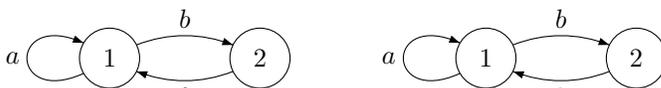
\begin{figure}[hbt]
\centering
\begin{picture}(70,10)
\put(0,0){
\begin{picture}(20,10)(0,-3)
\node(1)(0,0){$1$}\node(2)(20,0){$2$}
\drawloop[loopangle=180](1){$a$}\drawedge[curvedepth=3](1,2){$b$}
\drawedge[curvedepth=3](2,1){$a$}
\end{picture}
}
\put(50,0){
\begin{picture}(20,10)(0,-3)
\node(1)(0,0){$1$}\node(2)(20,0){$2$}
\drawloop[loopangle=180](1){$a$}\drawedge[curvedepth=3](1,2){$b$}
\drawedge[curvedepth=3](2,1){$b$}
\end{picture}
}
\end{picture}
\caption{Automata recognizing the golden mean and the even
  shift}\label{figAutomatonEven} 
\end{figure}
\end{example}

The \emph{adjacency matrix}\index{transition!
  matrix}\index{matrix!transition} of the automaton $\A=(Q,E)$ is the
$Q\times Q$-matrix $M(\A)$ with elements in $\N\langle A\rangle$
defined by
% \begin{displaymath}
%   M(\A)_{pq}=\sum\{a\in A\mid p\edge{a}q\}\,.
% \end{displaymath}
\begin{displaymath}
  (M(\A)_{pq},a)=
  \begin{cases}
    1&\text{if $(p,a,q)\in E$\,,}\\0&\text{otherwise.}
  \end{cases}
\end{displaymath}
We write $M$ for $M(\A)$ when the automaton is understood.
% Observe that the entries in the matrix $M$ are polynomials with
% coefficients $0$ and $1$. 
The entries in the matrix $M^n$, for $n\ge0$, have an easy
combinatorial interpretation: for each word $w$ of length $n$, the
coefficient $(M^n_{p,q},w)$ is the number of distinct paths from $p$
to $q$ carrying the label $w$.

A matrix $M$ is called
\emph{alphabetic}\index{alphabetic!matrix}\index{matrix!alphabetic}
over the alphabet $A$ if its elements are homogeneous polynomials of
degree~$1$ over $A$ with nonnegative coefficients. Adjacency matrices
are special cases of alphabetic matrices. Indeed, its elements are
homogeneous polynomials of degree~$1$ with coefficients $0$ or~$1$.

%%%%%%%%%%%%%%%%%%%%%%%%%%%%%%%%%%%%%%%%%%%%%
\subsection{Labeled conjugacy} 

Let $\A$ and $\B$ be two automata on the alphabet $A$.  A
\emph{labeled conjugacy}%
\index{labeled!conjugacy}\index{conjugacy!labeled} from $X_\A$ onto
$X_\B$ is a conjugacy $\varphi$ such that
$\lambda_\A=\lambda_\B\varphi$, that is such that the following diagram is
commutative.
\begin{figure}[hbt]
\centering
\gasset{Nframe=n}
\begin{picture}(20,25)
  \node(A)(0,20){$X_\A$}
  \node(B)(20,20){$X_\B$}
  \node(S)(10,0){$A^\Z$}
  \drawedge(A,B){$\varphi$}
  \drawedge[ELside=r](A,S){$\lambda_\A$}
  \drawedge(B,S){$\lambda_\B$}
\end{picture}
\end{figure}
We say that $\A$ and $\B$ are \emph{conjugate}%
\index{automaton!conjugate}%
\index{conjugate automata} if there exists a labeled conjugacy from
$X_\A$ to $X_\B$.  The aim of this paragraph is to give two
characterizations of labeled conjugacy.

\intertitre{Labeled split and merge}

Let $\A=(Q,E)$ and $\B=(R,F)$ be two automata. Let $G,H$ be the
underlying graphs of $\A$ and $\B$ respectively.

A \emph{labeled in-merge}\index{automaton!labeled in-merge}%
\index{labeled!in-merge}\index{in-merge!labeled} from $\B$ onto $\A$
is an in-merge $(h,k)$ from $H$ onto $G$ such that for each $f\in F$
the labels of $f$ and $h(f)$ are equal.  We say that $\B$ is a
\emph{labeled
  in-split}\index{labeled!in-split}\index{in-split!labeled}%
\index{automaton!labeled in-split} of $\A$, or that $\A$ is a
\emph{labeled in-merge} of $\B$.

The following statement is the analogue of
Proposition~\ref{inMergingMap} for automata.

\begin{proposition}\label{stLabeledInMergeIsCojugacy}
  If $(h,k)$ is a labeled in-merge from the automaton $\B$ onto the
  automaton $\A$, then the map $h_\infty$ is a labeled conjugacy from
  $X_\B$ onto $X_\A$.
\end{proposition}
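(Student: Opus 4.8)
The plan is to reduce this statement to Proposition~\ref{inMergingMap}, which already gives that $h_\infty$ is a conjugacy from $X_H$ onto $X_G$ (where $G,H$ are the underlying graphs of $\A,\B$) and that its inverse is $2$-block. Since $X_\A = X_G$ and $X_\B = X_H$ by definition of the edge shift of the underlying graph, the only thing left to verify is that $h_\infty$ is \emph{labeled}, i.e.\ that $\lambda_\B = \lambda_\A \circ h_\infty$. So the real content is bookkeeping about labels, and the underlying-graph conjugacy is free.

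First I would unwind the definition of $h_\infty$ on a biinfinite path. If $y = (f_i)_{i\in\Z}$ is a biinfinite path in $\B$, then $h_\infty(y) = (h(f_i))_{i\in\Z}$, which is a biinfinite path in $\A$ by the in-merge condition (the commuting diagrams of Figure~\ref{f} guarantee $h$ sends consecutive edges to consecutive edges). Then I would apply the definition of a labeled in-merge directly: for each $f\in F$, the label of $f$ equals the label of $h(f)$. Hence, writing $\lambda_\B(y) = (\mathrm{label}(f_i))_{i\in\Z}$ and $\lambda_\A(h_\infty(y)) = (\mathrm{label}(h(f_i)))_{i\in\Z}$, the two biinfinite words over $A$ agree coordinate by coordinate. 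This gives $\lambda_\B = \lambda_\A \circ h_\infty$, which is exactly commutativity of the triangle defining a labeled conjugacy.

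To assemble the argument cleanly, I would first cite Proposition~\ref{inMergingMap} to get that $h_\infty \colon X_H \to X_G$ is a $1$-block conjugacy; since $X_H = X_\B$ and $X_G = X_\A$ this is a conjugacy $X_\B \to X_\A$. Then I would verify the labeling identity as above, concluding that $h_\infty$ is a labeled conjugacy by the very definition given just before the proposition.

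There is essentially no obstacle here: the only subtlety worth a sentence is that ``labeled in-merge'' was defined so that $h$ preserves labels edge-by-edge, and since $h_\infty$ acts coordinatewise by $h$, preservation of labels is immediate at the level of biinfinite sequences. The harder structural facts — continuity, shift-commutation, bijectivity, and the $2$-block bound on the inverse — are all inherited verbatim from Proposition~\ref{inMergingMap} applied to the underlying graphs, so they need no re-proof.
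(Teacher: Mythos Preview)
Your proposal is correct and follows essentially the same approach as the paper: cite Proposition~\ref{inMergingMap} to get that $h_\infty$ is a $1$-block conjugacy from $X_\B$ onto $X_\A$, then use the defining property of a labeled in-merge (that $h$ preserves edge labels) to conclude $\lambda_\B = \lambda_\A \circ h_\infty$. The paper's proof is just a two-sentence version of what you wrote.
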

\begin{proof}
Let $(h,k)$ be a labeled in-merge from $\B$ onto $\A$.
By Proposition~\ref{inMergingMap}, the map $h_\infty$ is
a $1$-block conjugacy from $X_\B$ onto $X_\A$. Since the labels
of $f$ and $h(f)$ are equal for each edge $f$ of $\B$, this map is a
labeled
conjugacy.
\end{proof}
The next statement is the analogue of
Proposition~\ref{defEquivInMerge} for automata.
\begin{proposition}\label{defEquivLabeledInSplit}
  An automaton $\B=(R,F)$ is a labeled in-split of the automaton
  $\A=(Q,E)$ if and only if there is an $R\times Q$-column division
  matrix $D$ and an alphabetic $Q\times R$-matrix $N$ such that
\begin{equation}\label{EqInSplitAutomata}
M(\A)=ND,\quad M(\B)=DN.
\end{equation}
\end{proposition}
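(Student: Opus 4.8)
The plan is to reduce the statement about automata to the already-proven structural characterization of in-splits for graphs (Proposition~\ref{defEquivInMerge}), and then check that the labeling is correctly tracked through that correspondence. Concretely, let $G$ and $H$ be the underlying graphs of $\A$ and $\B$. A labeled in-merge $(h,k)$ from $\B$ onto $\A$ is by definition an in-merge of the underlying graphs that preserves labels, so the first direction follows almost immediately: assuming $\B$ is a labeled in-split of $\A$, apply Proposition~\ref{defEquivInMerge} to get the $R\times Q$-column division matrix $D$ and the $Q\times R$-matrix $E$ with $M(G)=ED$, $M(H)=DE$. The point then is that $E$ can be refined into an alphabetic matrix $N$ recording labels: the entry $E_{qr}$ counts edges in $\E_p^q(t)$ for the appropriate $p$ with $k(p) $ ranging, so replacing each such count by the corresponding sum of letters $\sum_{a} a$ over the labels of those edges yields an alphabetic matrix $N$ with $M(\A)=ND$ and $M(\B)=DN$ once one observes that the partition classes $\E_p^q(t)$ are exactly matched by $h$ to the edge sets $\F_r^t$ with the same labels. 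I would carry out this bookkeeping by fixing the partition $(\E_p^q(t))_{t\in k^{-1}(q)}$ from the definition of in-merge and directly reading off the entries of $N$ from it.

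For the converse, suppose we are given a column division matrix $D$ and an alphabetic matrix $N$ with $M(\A)=ND$ and $M(\B)=DN$. Applying the ``forgetful'' map that sends an alphabetic matrix to its underlying nonnegative integer matrix (evaluate each letter at $1$), we get $E$ from $N$ with $M(G)=ED$ and $M(H)=DE$, so by Proposition~\ref{defEquivInMerge} the graph $H$ is an in-split of $G$ via some graph morphism $(h,k)$ built from $D$. The remaining task is to show that this graph morphism can be chosen to preserve labels, i.e.\ to realize it as a labeled in-merge. Here the column division matrix $D$ tells us how states of $R$ project to states of $Q$: $k(r)=q$ iff $D_{rq}=1$. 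The matrix equation $M(\B)=DN$ then says precisely that the edges of $\B$ out of $r$ are a ``copy'', with identical labels, of the edges described by the $k(r)$-th row of $N$, which in turn (via $M(\A)=ND$) describes the edges of $\A$; one reads off the bijection $h$ on edges from this and checks it preserves labels by construction.

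The main obstacle I anticipate is not any single hard idea but rather the care needed to make the label-tracking precise: Proposition~\ref{defEquivInMerge} is stated purely at the level of adjacency matrices over $\N$, and one must verify that the specific graph morphism $(h,k)$ it produces (or some choice of such morphism) can be upgraded to respect labels, rather than merely asserting that \emph{some} in-split structure exists. The cleanest way to handle this is probably to not invoke Proposition~\ref{defEquivInMerge} as a black box but to re-derive the in-split structure directly from the equations $M(\A)=ND$, $M(\B)=DN$ in the alphabetic setting, mimicking the proof of the graph version but carrying letters along throughout; this way the map $h$ on edges is defined together with its label-preservation property from the start. In outline: the rows of $D$ partition $R$ into fibers $k^{-1}(q)$; for $r\in k^{-1}(p)$ and $t\in k^{-1}(q)$, the equation $M(\B)_{rt}=\sum_{q'} D_{rq'} N_{q' t}$ collapses (since $D_{rq'}=0$ unless $q'=p$) to $M(\B)_{rt}=N_{pt}$, exhibiting the edges of $\B$ from $r$ to $t$ as labeled by the polynomial $N_{pt}$; summing over $t\in k^{-1}(q)$ and using $M(\A)_{pq}=\sum_t N_{pt}D_{tq}=\sum_{t\in k^{-1}(q)}N_{pt}$ shows that these edge-sets assemble, with labels, into $\E_p^q$, giving both the partition $(\E_p^q(t))_t$ and the required label-preserving bijections $h\colon \F_r^t\to \E_p^q(t)$. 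This is the same computation in both directions, just read forwards or backwards.
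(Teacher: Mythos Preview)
Your proposal is correct, and the direct approach you settle on at the end---defining $k$ from the rows of $D$, reading off $M(\B)_{rt}=N_{k(r),t}$ from $M(\B)=DN$, and assembling the partition $\E_p^q(t)$ from the letters appearing in $N_{pt}$---is exactly the route the paper takes. The paper does not pass through Proposition~\ref{defEquivInMerge} as a black box; it works directly with the alphabetic equations and defines $h(r,a,s)=(k(r),a,k(s))$ edge by edge, which is the same computation you outline, just phrased in terms of individual edges rather than polynomial entries.
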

\begin{proof}
%%   Let $G,H$ be the underlying graphs of $\A$ and $\B$. 

%% For a $P\times
%%   Q$-matrix $M$ with elements in $\N\langle A\rangle$ and $a\in A$, we
%%   denote by $(M,a)$ the $P\times Q$-matrix defined
%%   $(M,a)_{pq}=(M_{pq},a)$.

  Suppose first that $D$ and $N$ are as described in the statement,
  and define a map $k:R\rightarrow Q$ by $k(r)=q$ if $D_{rq}=1$. We
  define $h:F\rightarrow E$ as follows. Consider an edge $(r,a,s)\in
  F$.  Set $p=k(r)$ and $q=k(s)$. Since $M(\B)=DN$, we have
  $(N_{ps},a)=1$.  Since $M(\A)=ND$, this implies that
  $(M(\A)_{pq},a)=1$ or, equivalently, that $(p,a,q)\in E$. We set
  $h(r,a,s)=(p,a,q)$. Then $(h,k)$ is a labeled in-merge. Indeed $h$
  is associated with the partitions defined by 
\begin{displaymath}
E_p^q(t)=\{(p,a,q)\in  E\mid (N_{pt},a)=1\mbox{ and } k(t)=q\}.
\end{displaymath}

Suppose conversely that $(h,k)$ is a labeled in-merge from $\B$ onto $\A$.
Let $D$ be the $R\times Q$-column division matrix defined by
\begin{displaymath}
D_{rq}=\begin{cases}1&\mbox{ if } k(r)=q\\0&\mbox{ otherwise
}\end{cases}
\end{displaymath}
For $p\in Q$ and $t\in R$, we define $N_{rt}$ as follows. Set $q=k(t)$.
By definition of an in-merge, there is a partition
$(E_p^q(t))_{t\in k^{-1}(q)}$ of $E_p^q$ such that $h$ is a bijection
from $F_r^t$ onto $E_p^q(t)$. For $a\in A$, set
\begin{displaymath}
(N_{pt},a)=\begin{cases}1&\mbox{ if } (p,a,q)\in E_p^q(t)\\0&\mbox{
    otherwise }\end{cases}
\end{displaymath}
Then $M(\A)=ND$ and $M(\B)=DN$.
\end{proof}

%======================= Figure  ================================
\ifthenelse{\boolean{colorprint}}{%
\begin{figure}[hbt]
  \centering
  \begin{picture}(75,35)(-12,-8)
  %\put(-12,-8){\framebox(75,35){}}
  \gasset{Nadjust=wh}
    \put(0,-5){
      \begin{picture}(20,20)
        \node(1)(0,10){$1$}\node[linecolor=green](2)(20,10){$2$}
        \drawloop[linecolor=blue](1){$a$}
        \drawloop[loopangle=180,linecolor=blue](1){$c$}
        \drawedge[curvedepth=3,linecolor=green](1,2){$b$}
        \drawedge[curvedepth=3,linecolor=red](2,1){$a$}
      \end{picture}
    }
    \put(40,0){
      \begin{picture}(20,30)
        \node[linecolor=blue](3)(0,15){$3$}
        \node[linecolor=green](5)(20,5){$5$}
        \node[linecolor=red](4)(0,-5){$4$}
        \drawloop[linecolor=blue](3){$a$}
        \drawloop[loopangle=180,linecolor=blue](3){$c$}
        \drawedge[curvedepth=3,linecolor=green](3,5){$b$}
        \drawedge[curvedepth=3,linecolor=red](5,4){$a$}
        \drawedge[curvedepth=3,linecolor=green](4,5){$b$}
        \drawedge[curvedepth=-2,ELside=r,linecolor=blue](4,3){$a$}
        \drawedge[curvedepth=2,linecolor=blue](4,3){$c$}
      \end{picture}
    }
  \end{picture}
  \caption{A labeled in-split from $\A$ to $\B$.}
\label{BBEP:figureInSplit}
\end{figure}
}{%
\begin{figure}[hbt]
  \centering
  \begin{picture}(75,35)(-12,-8)
  \gasset{Nadjust=wh, linewidth=0.3}
  %\put(-12,-8){\framebox(75,35){}}
  \gasset{fillcolor=lightgray!10}
  \put(0,-5){
      \begin{picture}(20,20)
        \node[linewidth=0.1,Nfill=n](1)(0,10){$1$}
        \node[dash={0.5 0.5}0](2)(20,10){$2$}
        \drawloop[linecolor=lightgray](1){$a$}
        \drawloop[loopangle=180,linecolor=lightgray](1){$c$}
        \drawedge[curvedepth=3,dash={0.5 0.5}0](1,2){$b$}
        \drawedge[curvedepth=3](2,1){$a$} 
      \end{picture}
    }
    \put(40,0){
      \begin{picture}(20,30)
        \node[linecolor=lightgray](3)(0,15){$3$}
        \node[dash={0.5 0.5}0](5)(20,5){$5$}
        \node(4)(0,-5){$4$}
        \drawloop[linecolor=lightgray](3){$a$}
        \drawloop[loopangle=180,linecolor=lightgray](3){$c$}
        \drawedge[curvedepth=3,dash={0.5 0.5}0](3,5){$b$}
        \drawedge[curvedepth=3](5,4){$a$}
        \drawedge[curvedepth=3,dash={0.5 0.5}0](4,5){$b$}
        \drawedge[curvedepth=-2,ELside=r,linecolor=lightgray](4,3){$a$}
        \drawedge[curvedepth=2,linecolor=lightgray](4,3){$c$}
      \end{picture}
    }
   \end{picture}
  \caption{A labeled in-split from $\A$ to $\B$.}\label{BBEP:figureInSplit}
\end{figure}
}
%========================= end Figure ==============================
\begin{example}
  Let $\A$ and $\B$ be the automata represented on
  Figure~\ref{BBEP:figureInSplit}. Here $Q=\{1,2\}$ and $R=\{3,4,5\}$.
  One has $M(\A)=ND$ and $M(\B)=DN$ with
  \begin{displaymath}
    N=\begin{bmatrix}a+c&0&b\\0&a&0\end{bmatrix},\quad 
    D=\begin{bmatrix}1&0\\1&0\\0&1\end{bmatrix}.
  \end{displaymath}
\end{example}

A \emph{labeled out-merge}%
\index{labeled!out-merge}\index{out-merge!labeled}\index{automaton!labeled
  out-merge} from $\B$ onto $\A$ is an out-merge $(h,k)$ from $H$ onto
$G$ such that for each $f\in F$ the labels of $f$ and $h(f)$ are
equal.

We say that $\B$ is a \emph{labeled out-split}%
\index{labeled!out-split}\index{out-split!labeled}\index{automaton!labeled
  out-split} of $\A$, or that $\A$ is a \emph{labeled in-merge} of
$\B$.
 
Thus if $\B$ is a labeled out-split of $\A$, there is a labeled
conjugacy
from $X_\B$ onto $X_\A$.
\begin{proposition}
  The automaton $\B=(R,F)$ is a labeled out-split of the automaton
  $\A=(Q,E)$ if and only if there is a $Q\times R$-row division matrix
  $D$ and an alphabetic $R\times Q$-matrix $N$ such that
  \begin{equation}\label{EqOutSplitAutomata}
    M(\A)=DN\,,\quad M(\B)=ND\,.
  \end{equation}
\end{proposition}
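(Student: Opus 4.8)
The plan is to dualize the proof of Proposition~\ref{defEquivLabeledInSplit}, interchanging throughout the roles of the initial and the terminal states; concretely the statement reduces to that proposition by reversing every edge of both automata (which transposes the adjacency matrices and turns column division matrices into row division matrices, while turning a labeled out-merge into a labeled in-merge), but since reversed automata are not introduced in the text I would rather repeat the two constructions directly, as follows.

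For the ``if'' direction, suppose $M(\A)=DN$ and $M(\B)=ND$ with $D$ a $Q\times R$-row division matrix and $N$ alphabetic. Each column of $D$ carries a single $1$, so there is a map $k\colon R\to Q$ determined by $D_{pr}=1\iff k(r)=p$, and $k$ is surjective because each row of $D$ carries at least one $1$. Define $h\colon F\to E$ by $h(r,a,t)=(k(r),a,k(t))$; the identity $M(\B)=ND$ gives $(N_{r\,k(t)},a)=(M(\B)_{rt},a)$, which is $1$ whenever $(r,a,t)\in F$, and then $M(\A)=DN$ forces $(M(\A)_{k(r)\,k(t)},a)\ge 1$, hence $=1$ since $M(\A)$ has coefficients in $\{0,1\}$; thus $(k(r),a,k(t))\in E$ and $h$ is well defined. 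One then checks that $(h,k)$ is a labeled out-merge, taking for the partition of $E_p^q$ the family $\bigl(E_p^q(r)\bigr)_{r\in k^{-1}(p)}$ with $E_p^q(r)=\{(p,a,q)\in E\mid (N_{rq},a)=1\}$: the identity $M(\A)=DN$ shows that each $(p,a,q)\in E$ lands in exactly one such block, and the identity $M(\B)=ND$ shows that $h$ restricts to a bijection from $F_r^t$ onto $E_p^q(r)$ for every $t\in k^{-1}(q)$.

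For the ``only if'' direction, let $(h,k)$ be a labeled out-merge from $\B$ onto $\A$, with partitions $\bigl(E_p^q(r)\bigr)_{r\in k^{-1}(p)}$ of the sets $E_p^q$ as in the definition. Set $D_{pr}=1$ if $k(r)=p$ and $0$ otherwise; since $k$ is a surjective function, $D$ is a $Q\times R$-row division matrix. For $r\in R$, $q\in Q$ and $a\in A$ set $(N_{rq},a)=1$ if $(k(r),a,q)\in E_{k(r)}^{q}(r)$ and $0$ otherwise, so $N$ is alphabetic. It then remains to verify the two identities entrywise. The $(p,q)$-entry of $DN$ collects the $N_{rq}$ over $r\in k^{-1}(p)$, so for each letter $a$ it records whether $(p,a,q)$ lies in some block $E_p^q(r)$; as these blocks partition $E_p^q$ this happens exactly when $(p,a,q)\in E$, giving $M(\A)=DN$. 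The $(r,t)$-entry of $ND$ equals $N_{r\,k(t)}$, which records whether $(k(r),a,k(t))\in E_{k(r)}^{k(t)}(r)$; through the label-preserving bijection $h\colon F_r^t\to E_{k(r)}^{k(t)}(r)$ this is equivalent to $(r,a,t)\in F$, giving $M(\B)=ND$.

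I do not expect a real obstacle: the argument is the mirror image of Proposition~\ref{defEquivLabeledInSplit}. The only care needed is in the bookkeeping---indexing the partition of $E_p^q$ by the copies $k^{-1}(p)$ of its initial state rather than by the copies $k^{-1}(q)$ of its terminal state, and using the side conditions on a row division matrix in the form ``each column has a single $1$, each row at least one''---together with the step, exactly as in the in-split case, where the $\{0,1\}$-valuedness of $M(\A)$ is invoked to rule out a sum of $N$-coefficients exceeding $1$.
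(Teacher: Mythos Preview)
Your proof is correct and follows exactly the route the paper intends: the paper does not give a proof of this proposition at all, treating it as the evident dual of Proposition~\ref{defEquivLabeledInSplit}, and your argument is precisely that dualization carried out in detail. Note, incidentally, that the paper's printed definition of a row division matrix repeats verbatim the wording for column division matrices; your reading (``each column has exactly one $1$, each row at least one'') is the intended symmetric one and is what makes the argument work.
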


%========================= Figure  ==============================
\ifthenelse{\boolean{colorprint}}{%
\begin{figure}[hbt]
\centering
%\gasset{Nadjust=wh}
\gasset{Nw=5,Nh=5}
  \begin{picture}(75,30)(-10,-10)
    %\put(-10,-10){\framebox(75,30){}}
\put(0,-5){
\begin{picture}(20,20)
\node(1)(0,10){$1$}
\node[linecolor=green](2)(20,10){$2$}
\drawloop[linecolor=blue](1){$a$}
\drawloop[loopangle=180,linecolor=red](1){$c$}
\drawedge[curvedepth=3,linecolor=blue](1,2){$b$}
\drawedge[curvedepth=3,linecolor=green](2,1){$a$}
\end{picture}
}
\put(40,0){
\begin{picture}(20,30)
\node[linecolor=blue](3)(0,15){$3$}
\node[linecolor=green](5)(20,5){$5$}
\node[linecolor=red](4)(0,-5){$4$}
\drawloop[loopangle=180,linecolor=blue](3){$a$}
\drawloop[loopangle=180,linecolor=red](4){$c$}
\drawedge[curvedepth=3,linecolor=blue](3,5){$b$}
\drawedge[curvedepth=3,linecolor=green](5,4){$a$}
\drawedge[curvedepth=3,linecolor=green](5,3){$a$}
\drawedge[curvedepth=2,linecolor=blue](3,4){$a$}
\drawedge[curvedepth=2,linecolor=red](4,3){$c$}
\end{picture}
}
\end{picture}
\caption{A labeled out-split from  $\A$ to $\B$.}\label{BBEP:figureOutSplit}
\end{figure}
}{
\begin{figure}[hbt]
  \centering
  \gasset{Nw=5,Nh=5}\gasset{linewidth=0.3}
  \begin{picture}(75,30)(-10,-10)
    %\put(-10,-10){\framebox(75,30){}}
    \gasset{fillcolor=lightgray!10}
    \put(0,-5){
      \begin{picture}(20,20)
        \node[linewidth=0.1,Nfill=n](1)(0,10){$1$}
        \node[dash={0.5 0.5}0](2)(20,10){$2$}
        \drawloop[linecolor=lightgray](1){$a$}
        \drawloop[loopangle=180](1){$c$}
        \drawedge[curvedepth=3,linecolor=lightgray](1,2){$b$}
        \drawedge[curvedepth=3,dash={0.5 0.5}0](2,1){$a$}
      \end{picture}
    }
    \put(40,0){
      \begin{picture}(20,30)
        \node[linecolor=lightgray](3)(0,15){$3$}
        \node[dash={0.5 0.5}0](5)(20,5){$5$}
        \node(4)(0,-5){$4$}
        \drawloop[loopangle=180,linecolor=lightgray](3){$a$}
        \drawloop[loopangle=180](4){$c$}
        \drawedge[curvedepth=3,linecolor=lightgray](3,5){$b$}
        \drawedge[curvedepth=3,dash={0.5 0.5}0](5,4){$a$}
        \drawedge[curvedepth=3,dash={0.5 0.5}0](5,3){$a$}
        \drawedge[curvedepth=2,linecolor=lightgray](3,4){$a$}
        \drawedge[curvedepth=2](4,3){$c$}
      \end{picture}
    }
  \end{picture}
  \caption{A labeled out-split from  $\A$ to $\B$.}\label{BBEP:figureOutSplit}
\end{figure}
}%-------------------------------------------
\begin{example}
  Let $\A$ and $\B$ be the automata represented on
  Figure~\ref{BBEP:figureOutSplit}. Here $Q=\{1,2\}$ and
  $R=\{3,4,5\}$.  One has $M(\A)=ND$ and $M(\B)=DN$ with
  \begin{displaymath}
    N=\begin{bmatrix}a&b\\c&0\\a&0\end{bmatrix},
    \quad D=\begin{bmatrix}1&1&0\\0&0&1\end{bmatrix}.
  \end{displaymath}
\end{example}

Let $\A=(Q,E)$ be an automaton. For a pair of integers $m,n\ge 0$,
denote by $\A^{[m,n]}$ the following automaton called the $(m,n)$-th
\emph{extension}\index{automaton!extension}\index{extension automaton}
of $\A$.  The underlying graph of $\A^{[m,n]}$ is the higher edge
graph $G^{[k]}$ for $k=m+n+1$. The label of an edge
\begin{displaymath}
  p_0\edge{a_1}p_1\edge{a_2}\cdots\edge{a_m}p_m
  \edge{a_{m+1}}p_{m+1}\edge{a_{m+2}}\cdots
  \edge{a_{m+n}}p_{m+n}\edge{a_{m+n+1}}p_{m+n+1}
\end{displaymath}
is the letter $a_{m+1}$.  Observe that $\A^{[0,0]}=\A$. By this
construction, each graph $G^{[k]}$ produces $k$ extensions according
to the choice of the labeling.

\begin{proposition}\label{stExtension}
  For $m\ge 1,n\ge 0$, the automaton $\A^{[m-1,n]}$ is a labeled
  in-merge of the automaton $\A^{[m,n]}$ and for $m\ge 0,n\ge 1$, the
  automaton $\A^{[m,n-1]}$ is a labeled out-merge of the automaton
  $\A^{[m,n]}$.
\end{proposition}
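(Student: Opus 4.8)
The plan is to reduce the statement to its purely graph-theoretic counterpart, Proposition~\ref{stExtensionShift}, and then to check that the graph morphism it furnishes respects the labeling carried by the extensions. Write $G$ for the underlying graph of $\A=(Q,E)$, so that the underlying graphs of $\A^{[m,n]}$, of $\A^{[m-1,n]}$ and of $\A^{[m,n-1]}$ are $G^{[m+n+1]}$, $G^{[m+n]}$ and $G^{[m+n]}$ respectively (the last two coincide as graphs, differing only in their labeling). For the in-merge assertion I would apply Proposition~\ref{stExtensionShift} with $n$ replaced by $m+n+1\ge 2$: it gives that $G^{[m+n]}$ is an in-merge of $G^{[m+n+1]}$ via the graph morphism $(h,k)$ attached to the equivalence relating two states of $G^{[m+n+1]}$, that is two paths of length $m+n$ in $G$, which differ only in their first edge. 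Concretely $k$ sends a state $(e_1,\dots,e_{m+n})$ to $(e_2,\dots,e_{m+n})$, and $h$ sends an edge of $G^{[m+n+1]}$, that is a path $(e_1,\dots,e_{m+n+1})$ of $G$, to its suffix $(e_2,\dots,e_{m+n+1})$.

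It then remains only to verify the label condition in the definition of a labeled in-merge, namely that $f$ and $h(f)$ carry the same letter for every edge $f$ of $\A^{[m,n]}$. If $e_{m+1}$ is the edge $(p_m,a_{m+1},p_{m+1})$ of $\A$, then by definition of the extension the label of $f=(e_1,\dots,e_{m+n+1})$ in $\A^{[m,n]}$ is $a_{m+1}$, while the label of $h(f)=(e_2,\dots,e_{m+n+1})$ in $\A^{[m-1,n]}$ is the letter carried by the edge sitting in position $m$ of that path (memory $m-1$), which is again $e_{m+1}$, hence $a_{m+1}$. So $(h,k)$ is a labeled in-merge and $\A^{[m-1,n]}$ is a labeled in-merge of $\A^{[m,n]}$. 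The out-merge assertion follows by the mirror argument: using the version of Proposition~\ref{stExtensionShift} for the equivalence that identifies two states of $G^{[m+n+1]}$ differing only in their \emph{last} edge, one obtains a graph morphism $(h',k')$ from $G^{[m+n+1]}$ onto $G^{[m+n]}$ with $k'$ deleting the last edge of a state and $h'$ sending a path $(e_1,\dots,e_{m+n+1})$ to its prefix $(e_1,\dots,e_{m+n})$; since the first $m+1$ edges are left untouched, the label of $f$ in $\A^{[m,n]}$ and the label of $h'(f)$ in $\A^{[m,n-1]}$ (memory $m$) are both $a_{m+1}$, so $(h',k')$ is a labeled out-merge. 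This mirror version of Proposition~\ref{stExtensionShift} can also be derived by applying the in-merge case to the edge-reversal of $\A$, which exchanges in- and out-splittings and identifies $\A^{[m,n]}$ with $(\A^{\mathrm{rev}})^{[n,m]}$.

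There is no genuine obstacle here; the one point that needs care — and essentially the whole content of the argument — is the index bookkeeping in the last step of each case. One must check that deleting the first edge of a length-$(m+n+1)$ path moves the distinguished position from $m+1$ down to $m$, matching the decrease of the memory from $m$ to $m-1$, whereas deleting the last edge keeps the distinguished position at $m+1$, matching the decrease of the anticipation from $n$ to $n-1$. Once Proposition~\ref{stExtensionShift} and its mirror are granted, this verification is immediate from the definition of $\A^{[m,n]}$.
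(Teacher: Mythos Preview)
Your proof is correct and follows essentially the same approach as the paper: both use the map $k$ that erases the first edge of a path of length $m+n$, together with the induced map $h$ on edges, and then observe that the label is preserved. The only cosmetic difference is that the paper packages the label check into the definition of $h$ by writing $h(\pi,a,\rho)=(k(\pi),a,k(\rho))$, whereas you first obtain $(h,k)$ from Proposition~\ref{stExtensionShift} and then verify the index bookkeeping on the distinguished position explicitly; the content is the same.
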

\begin{proof}
  Suppose that $m\ge 1,n\ge 0$. Let $k$ be the map from the paths of
  length $m+n$ in $\A$ onto the paths of length $m+n-1$ which erases
  the first edge of the path. Let $h$ be the map from the set of edges
  of $\A^{[m,n]}$ to the set of edges of $\A^{[m-1,n]}$ defined by
  $h(\pi,a,\rho)=(k(\pi),a,k(\rho))$.  Then $(h,k)$ is a labeled
  in-merge from $\A^{[m,n]}$ onto $\A^{[m-1,n]}$.  The proof that,
  for $m\ge 0,n\ge 1$, the automaton $\A^{[m,n-1]}$ is an out-merge of
  the automaton $\A^{[m,n]}$ is symmetrical.
\end{proof}

The following result is the analogue, for automata, of the
Decomposition Theorem.

\begin{theorem}\label{AutomataDecompositiontheorem}
  Every conjugacy of automata is a composition of
  labeled splits and merges.
\end{theorem}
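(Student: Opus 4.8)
The strategy is to mimic the proof of the classical Decomposition Theorem (Theorem~\ref{ShiftDecompositionTheorem}), but carrying the labels along at every step. So let $\varphi:X_\A\to X_\B$ be a labeled conjugacy between automata $\A$ and $\B$. First I would pass to extensions to normalize $\varphi$: replacing $\A$ by $\A^{[m,n]}$ and $\B$ by a suitable extension, and using Proposition~\ref{stExtension} (which says extensions are obtained by labeled in-merges and out-merges), I may assume that $\varphi$ is a $1$-block map. Symmetrically, by doing the same to $\varphi^{-1}$, I may assume that $\varphi^{-1}$ is also given by a sliding block map with some memory $m\ge 0$ and anticipation $n\ge 0$. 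The goal then is to reduce $(m,n)$ to $(0,0)$, at which point $\varphi$ is a $1$-block map with a $1$-block inverse, hence a renaming of edges — i.e.\ a (labeled) permutation split/merge — and we are done.

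**The inductive step.** The heart of the argument is the labeled analogue of Lemma~\ref{lemmaDecomp}. Given a $1$-block labeled conjugacy $\varphi:X_\A\to X_\B$ whose inverse has memory $m\ge 1$ and anticipation $n\ge 0$, I want labeled in-splits $\overline\A,\overline\B$ of $\A,\B$ and a $1$-block labeled conjugacy $\overline\varphi:X_{\overline\A}\to X_{\overline\B}$ whose inverse has memory $m-1$ and anticipation $n$, fitting into a commuting square whose horizontal arrows are labeled edge in-splitting maps. The construction of the underlying graph in-splits is exactly as in~\cite{Lind&Marcus:1995}, Lemma 7.1.3: one splits each state of $\A$ according to the information needed to recover the extra symbol of memory, and transports this splitting across $\varphi$ to split the states of $\B$. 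The new point is only bookkeeping: since $\varphi$ preserves labels and the in-splitting maps preserve labels (a labeled in-split, by definition, does not change edge labels), the resulting $\overline\varphi$ and the resulting graph morphisms are all label-preserving, so they are labeled in-splits and a labeled conjugacy in the sense of Proposition~\ref{stLabeledInMergeIsCojugacy} and the surrounding definitions. One then iterates this $m$ times to kill the memory, and applies the symmetric (out-split) version $n$ times to kill the anticipation.

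**Assembling the decomposition.** Running the inductive step to termination, we obtain a chain of labeled in-splittings from $\A$ to some $\A'$, a chain of labeled in-splittings from $\B$ to some $\B'$, and a $1$-block labeled conjugacy $\A'\to\B'$ with $1$-block inverse; composing with the (labeled) splits/merges coming from the initial normalization to $1$-block form, and observing that a $1$-block labeled conjugacy with $1$-block inverse is a labeled permutation merge (a renaming of edges respecting labels), we express $\varphi$ as a composition of labeled splits followed by labeled merges: go up from $\A$ by labeled splits to $\A'$, rename, then come down to $\B$ by labeled merges (reversing the splits that produced $\B'$ from $\B$). This is the asserted form.

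**Main obstacle.** The technical content of Lemma~\ref{lemmaDecomp} is already nontrivial and is quoted rather than reproved here, so the real work is to verify that every construction in that proof is compatible with the labeling — i.e.\ that the state-splitting induced on $\B$ by pushing forward the splitting of $\A$ through the label-preserving $\varphi$ is still a \emph{labeled} in-split, and that the diagonal map $\overline\varphi$ still commutes with $\lambda_{\overline\A}$ and $\lambda_{\overline\B}$. This is where I expect the only genuine friction: one must check that two paths of $\A$ that get identified under the reduced memory necessarily carry the same current label, which follows because $\varphi$ is $1$-block on labels, but it has to be spelled out. Everything else (the passage to extensions, the identification of $1$-block/$1$-block conjugacies with renamings, the final assembly into splits-then-merges) is routine once the classical proof is granted.
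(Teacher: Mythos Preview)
Your plan is correct and would go through, but it is more laborious than the route the paper takes. You propose to \emph{redo} the inductive step of Lemma~\ref{lemmaDecomp} while carrying the labels through the construction, checking at each stage that the in-splits remain labeled. The paper instead uses the classical Decomposition Theorem~\ref{ShiftDecompositionTheorem} as a black box on the \emph{underlying graphs}: forget the labels, decompose $\varphi$ as a chain of edge splits $G_0\leftarrow G_1\leftarrow\cdots\leftarrow G_n=H_m\to\cdots\to H_1\to H_0$, and only then re-install the labels by pushing them up along the graph merges from $\A$ on the left and from $\B$ on the right (this is always possible, since a labeled split is just a graph split in which each new edge inherits the label of its image). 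The only thing to check is that the two labelings agree on the common roof $G_n=H_m$; this follows in one line from $\lambda_\A h_\infty=\lambda_\B\varphi h_\infty=\lambda_\B u_\infty$, using that $\varphi$ is labeled.

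So the difference is organizational rather than mathematical: you interleave the graph-theoretic induction with the label bookkeeping, while the paper separates the two completely. The paper's argument is shorter and avoids re-opening the proof of Lemma~\ref{lemmaDecomp}; your approach has the minor advantage of being self-contained and of making explicit why the intermediate automata are well-defined at every step, which the paper leaves implicit. Your identification of the ``main obstacle'' (that the pushed-forward splitting on $\B$ is still labeled) is exactly the point the paper handles with the single equation $\lambda_\A h_\infty=\lambda_\B u_\infty$, applied once at the end rather than stepwise.
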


\begin{proof} Let $\A$ and $\B$ be two conjugate automata.
  Let $\varphi$ be a labeled conjugacy from $\A$ onto $\B$. Let $G_0$
  and $H_0$ be the underlying graphs of $\A$ and $\B$, respectively.
  By the Decomposition Theorem~\ref{ShiftDecompositionTheorem}, there
  are sequences $(G_1,\ldots,G_n)$ and $(H_1,\ldots,H_m)$ of graphs
  with $G_n=H_m$ and such that $G_{i+1}$ is a split of $G_i$ for $0\le
  i<n$ and $H_{j+1}$ is a split of $H_j$ for $0\le j< m$. Moreover,
  $\varphi$ is the composition of the sequence of edge splitting maps from
  $G_i$ onto $G_{i+1}$ followed by the sequence of edge merging maps from
  $H_{j+1}$ onto $H_j$. Let $(h_i,k_i)$, for $1\le i\le n$, be a merge
  from $G_{i}$ onto $G_{i-1}$ and $(u_j,v_j)$, for $1\le j\le m$ be a
  merge from $H_{j}$ onto $H_{j-1}$. Then we may define labels on the
  edges of $G_1,\ldots,G_n$ in such a way that $G_i$ becomes the
  underlying graph of an automaton $\A_i$ and $(h_i,k_i)$ is a labeled
  merge from $\A_{i}$ onto $\A_{i-1}$.  In the same way, we may define
  labels on the edges of $H_j$ in such a way that $H_j$ becomes the
  underlying graph of an automaton $\B_j$ and $(u_j,v_j)$ is a labeled
  merge from $\B_j$ onto $\B_{j-1}$.
  \begin{displaymath}
    G_0\xleftarrow{(h_1,k_1)}G_1\cdots\xleftarrow{(h_n,k_n)}G_n
    =H_m\edge{(u_m,v_m)}\cdots H_1\edge{(u_1,v_1)}H_0\,.
  \end{displaymath}
  Let $h=h_1\cdots h_n$ and $u=u_1u_2\cdots u_m$. Since
  $\varphi=u_\infty h_{\infty}^{-1}$, and $\varphi$ is a labeled
  conjugacy, we have $\lambda_\A h_\infty=\lambda_\B u_\infty$.  This
  shows that the automata $\A_n$ and $\B_m$ are equal. Thus there is a
  sequence of labeled splitting maps followed by a sequence of labeled
  merging maps which is a equal to $\varphi$.
\end{proof}

Let $M$ and $M'$ be two alphabetic square matrices over the same
alphabet $A$.  We say that $M$ and $M'$ are \emph{elementary
  equivalent}%
\index{elementary equivalent matrices}%
\index{matrix!elementary equivalent}%
\index{equivalent matrices!elementary} if there exists a nonnegative
integral matrix $D$ and an alphabetic matrix $N$ such that
\begin{displaymath}
  M=DN\,,\quad M'=ND\quad \text{or vice-versa}.
\end{displaymath}

By Proposition~\ref{defEquivLabeledInSplit}, if $\B$ is an in-split of
$\A$, then $M(\B)$ and $M(\A)$ are elementary equivalent.  We say that
$M,M'$ are \emph{strong shift equivalent}%
\index{strong shift equivalent matrices}%
\index{matrix!strong shift equivalent}%
\index{equivalent matrices!strong shift} if there is a sequence
$(M_0,M_1,\ldots,M_n)$ such that $M_i$ and $M_{i+1}$ are elementary
equivalent for $0\le i<n$ with $M_0=M$ and $M_n=M'$.  The following
result is the version, for automata, of the Classification Theorem.

\begin{theorem}\label{AutomataClassificationtheorem}
  Two automata are conjugate if and only if their adjacency matrices
  are strong shift equivalent.
\end{theorem}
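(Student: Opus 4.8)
The plan is to prove both directions, with essentially all the work in ``strong shift equivalence $\Rightarrow$ conjugacy''.

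For the easy direction, suppose $\A$ and $\B$ are conjugate. By Theorem~\ref{AutomataDecompositiontheorem} a labeled conjugacy between them is a finite composition of labeled in- and out-splits and merges. For each single split or merge step, Proposition~\ref{defEquivLabeledInSplit} (for in-splits) and its out-split analogue write the two consecutive adjacency matrices as $M=DN$, $M'=ND$ with $D$ a column or row division matrix---in particular a nonnegative integral matrix---and $N$ alphabetic; that is, consecutive automata have elementary equivalent matrices. Chaining these along the decomposition shows $M(\A)$ and $M(\B)$ are strong shift equivalent.

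For the main direction, since composites and inverses of labeled conjugacies are again labeled conjugacies, it suffices to treat one elementary equivalence: given automata $\A=(Q,E)$, $\B=(P,F)$ with (up to interchanging them) $M(\A)=DN$ and $M(\B)=ND$ for a nonnegative integral $Q\times P$ matrix $D$ and an alphabetic $P\times Q$ matrix $N$, I want a labeled conjugacy between $X_\A$ and $X_\B$. The obstacle is that $D$ is an \emph{arbitrary} nonnegative integral matrix, while Proposition~\ref{defEquivLabeledInSplit} asks for a division matrix; the idea is to factor $D$ through a refined index set. First, delete from $\A$ every state whose row in $D$ is zero---it has no outgoing edge, hence lies on no biinfinite path---and from $\B$ every state whose column in $D$ is zero; this changes neither $X_\A, X_\B$ nor their labelings, and (after deleting the corresponding rows/columns of $D$ and $N$) preserves the two matrix identities, so I may assume $D$ has no zero row and no zero column. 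Now set
\[
  \widetilde{P}=\{(q,p,\ell)\mid q\in Q,\ p\in P,\ 1\le\ell\le D_{qp}\},
\]
let $\bar{D}$ be the $Q\times\widetilde{P}$ matrix with $\bar{D}_{q,(q',p,\ell)}=1$ iff $q=q'$, and let $\Pi$ be the $\widetilde{P}\times P$ matrix with $\Pi_{(q,p,\ell),p'}=1$ iff $p=p'$. Then $\bar{D}$ is a row division matrix, $\Pi$ is a column division matrix, $\bar{D}\Pi=D$, and hence
\[
  M(\A)=\bar{D}\,(\Pi N),\qquad M(\B)=(N\bar{D})\,\Pi ,
\]
with $\Pi N$ and $N\bar{D}$ alphabetic. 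Using that $M(\A)$ and $M(\B)$ are $0$--$1$ matrices together with the division-matrix shapes of $\bar{D}$ and $\Pi$, a direct computation shows $\Pi N$ and $N\bar{D}$ are $0$--$1$ and that the matrix $\Pi N\bar{D}=(\Pi N)\bar{D}=\Pi(N\bar{D})$ is $0$--$1$; let $\C$ be the automaton on state set $\widetilde{P}$ with $M(\C)=\Pi N\bar{D}$. By Proposition~\ref{defEquivLabeledInSplit} and its out-split analogue, $\C$ is a labeled out-split of $\A$ and a labeled in-split of $\B$, so by Proposition~\ref{stLabeledInMergeIsCojugacy} and its out-merge version there are labeled conjugacies $X_\C\to X_\A$ and $X_\C\to X_\B$; composing one with the inverse of the other yields the desired labeled conjugacy.

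The step I expect to be the real obstacle is exactly this factorization: choosing the right refined index set $\widetilde{P}$ (splitting each ``intermediate'' arrow into its row choice followed by its column choice) and verifying that the resulting products $\Pi N$, $N\bar{D}$, $\Pi N\bar{D}$ remain genuine $0$--$1$ adjacency matrices and that $\C$ is simultaneously an out-split of one side and an in-split of the other. The normalization reducing to ``$D$ has no zero row or column'' and the degree-one/multiplicity-one bookkeeping are routine but need care; a minor further point is that one should first check the strong shift equivalence chain can be taken through genuine adjacency matrices (each step produced above does land at one). Everything else is formal, resting on the already-established automaton Decomposition Theorem and the division-matrix characterizations of labeled splits.
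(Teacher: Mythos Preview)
Your proof is correct but takes a genuinely different route from the paper for the key step. For a single elementary equivalence $M(\A)=ND$, $M(\B)=DN$, the paper (Proposition~\ref{BBEP:PropInSplit}) directly constructs mutually inverse $2$-block maps: a path $p\xrightarrow{a}q\xrightarrow{b}r$ in $\A$ is sent to the unique edge $t\xrightarrow{b}u$ of $\B$ picked out by the factorizations $(N_{pt},a)=D_{tq}=1$ and $(N_{qu},b)=D_{ur}=1$, and the analogous map in the other direction is checked to be its inverse. Your approach instead factors $D=\bar D\,\Pi$ through a row and a column division matrix and interposes an automaton $\C$ with $M(\C)=\Pi N\bar D$ that is simultaneously a labeled out-split of $\A$ and a labeled in-split of $\B$, thereby reducing to the already-established division-matrix case (Proposition~\ref{defEquivLabeledInSplit} and its out-split analogue). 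Your route is more modular and reuses the split/merge machinery cleanly; the paper's is shorter and makes the memory/anticipation of the conjugacy explicit. The technicality you flag at the end---that intermediate matrices in a strong shift equivalence chain need not be $0$--$1$---is handled identically in the paper (it simply invokes Proposition~\ref{BBEP:PropInSplit} per step); the literal iteration is justified by passing to the multi-edge automata introduced at the start of Section~\ref{sectionSymbolicConjugacy}, for which every alphabetic matrix is an adjacency matrix.
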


Note that when $D$ is a column division matrix, the statement results
from Propositions~\ref{stLabeledInMergeIsCojugacy} and
\ref{defEquivInMerge}.  The following statement proves the theorem in
one direction.

\begin{proposition}\label{BBEP:PropInSplit}
  Let $\A$ and $\B$ be two automata. If $M(\A)$ is elementary
  equivalent to $M(\B)$, then $\A$ and $\B$ are conjugate.
\end{proposition}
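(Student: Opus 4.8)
The plan is to build a labeled conjugacy from $X_{\A}$ onto $X_{\B}$ directly, by a zigzag decomposition of biinfinite paths. Write $\A=(Q,E)$ and $\B=(R,F)$. Elementary equivalence of $M(\A)$ and $M(\B)$, like conjugacy, is symmetric in the two automata, so I may assume $M(\A)=DN$ and $M(\B)=ND$, where $D$ is a nonnegative integral $Q\times R$ matrix and $N$ is an alphabetic $R\times Q$ matrix. It is convenient to picture $D$ as an (unlabeled) multigraph with $D_{pr}$ arcs from $p\in Q$ to $r\in R$ — call these \emph{$D$-arcs} — and $N$ as a labeled multigraph with $(N_{rq},a)$ arcs labeled $a$ from $r\in R$ to $q\in Q$ — call these \emph{$N$-arcs}.

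The first step is bookkeeping. From $(M(\A)_{pq},a)=\sum_{r}D_{pr}(N_{rq},a)$ I fix, for each triple $(p,q,a)$, a bijection between the edges of $\A$ from $p$ to $q$ labeled $a$ and the composable pairs $(d,\nu)$ with $d$ a $D$-arc from $p$ to some $r$ and $\nu$ an $N$-arc labeled $a$ from $r$ to $q$. Symmetrically, from $(M(\B)_{rr'},a)=\sum_{q}(N_{rq},a)D_{qr'}$ I fix, for each triple $(r,r',a)$, a bijection between the edges of $\B$ from $r$ to $r'$ labeled $a$ and the composable pairs $(\nu,d)$ with $\nu$ an $N$-arc labeled $a$ from $r$ to some $q$ and $d$ a $D$-arc from $q$ to $r'$.

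Next, the conjugacy itself. Given $x=(e_i)_{i\in\Z}\in X_{\A}$, write $e_i$ (running from its source $p_i$ to its target $p_{i+1}$, with label $a_i$) as the composable pair $(d_i,\nu_i)$ supplied by the first family of bijections, and let $r_i$ be the common vertex of $d_i$ and $\nu_i$. Since $\nu_i$ ends at $p_{i+1}$ and $d_{i+1}$ starts at $p_{i+1}$, the pair $(\nu_i,d_{i+1})$ is composable, hence corresponds via the second family to an edge $f_i$ of $\B$ from $r_i$ to $r_{i+1}$ carrying the label $a_i$. Because $f_i$ ends at $r_{i+1}$ where $f_{i+1}$ starts, $\varphi(x):=(f_i)_{i\in\Z}$ is a biinfinite path in $\B$ and $\lambda_{\B}(\varphi(x))=\lambda_{\A}(x)$. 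As $f_i$ depends only on $e_i$ and $e_{i+1}$, the map $\varphi$ is a sliding block map (memory $0$, anticipation $1$), hence a morphism. Running the recipe in reverse — factor each $f_i$ into a composable pair ($N$-arc then $D$-arc) by the second family, and reconstruct from the $D$-arc coming out of $f_{i-1}$ and the $N$-arc coming out of $f_i$ the corresponding edge of $\A$ — gives a sliding block map $\psi\colon X_{\B}\to X_{\A}$. Checking that $\psi$ and $\varphi$ are mutually inverse then makes $\varphi$ a bijective, label-preserving morphism, i.e.\ a labeled conjugacy, and $\A$ and $\B$ are conjugate.

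The step that needs genuine care — and which I expect to be the main obstacle — is the verification that $\psi\circ\varphi$ and $\varphi\circ\psi$ are the identity. This works precisely because the \emph{same} fixed bijections are used in both directions: re-factoring an edge $f_i$ produced by $\varphi$ recovers the very $N$-arc $\nu_i$ and $D$-arc $d_{i+1}$ that were used to build it, so the $D$-arc taken from $f_{i-1}$ (namely $d_i$) together with the $N$-arc taken from $f_i$ (namely $\nu_i$) form the pair $(d_i,\nu_i)$, whose associated edge of $\A$ is exactly $e_i$; the other composition is symmetric. Everything else — consecutiveness of the $f_i$, preservation of labels, and that $\varphi$ and $\psi$ are sliding block maps — is a routine unwinding of the definitions. (The special case where $D$ is a column division matrix is exactly the one already covered by Propositions~\ref{defEquivLabeledInSplit} and~\ref{stLabeledInMergeIsCojugacy}.)
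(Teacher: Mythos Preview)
Your proof is correct and follows essentially the same zigzag construction as the paper's. The only cosmetic differences are the swapped roles of $D$ and $N$ (so the paper's conjugacy has memory $1$ and anticipation $0$ rather than the reverse) and that the paper exploits the $0$--$1$ entries of $M(\A)$ to speak of a \emph{unique} intermediate state rather than fixing bijections; your bijection phrasing is mildly more general and would carry over unchanged to the multigraph automata introduced later.
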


\begin{proof} Let $\A=(Q,E)$ and $\B=(R,F)$. Let $D$ be an $R\times Q$
  nonnegative integral matrix and let $N$ be an alphabetic $Q\times R$
  matrix such that
\begin{displaymath}
M(\A)=ND,\quad M(\B)=DN.
\end{displaymath}
  Consider the map $f$ from the set of paths of length $2$ in $\A$ into
  $F$ defined as follows (see Figure~\ref{Figfg} on the left).  Let $p\edge{a}q\edge{b}r$ be a path of
  length $2$ in $\A$. Since $(M(\A)_{pq},a)=1$ and $M(\A)=ND$ there is a
  unique $t\in R$ such that $(N_{pt},a)=D_{tq}=1$. In the same way,
  since $(M(\A)_{qr},b)=1$, there is a unique $u\in R$ such that
  $(N_{qu},b)=D_{ur}=1$. Since $M(\B)=DN$, we have
  $(M(\B)_{tu},b)=D_{tq}=(N_{qu},b)=1$ and thus $(t,u,b)$ is an edge of
  $\B$. We set
\begin{displaymath}
  f(p\edge{a}q\edge{b}r)=t\edge{b}u
\end{displaymath}
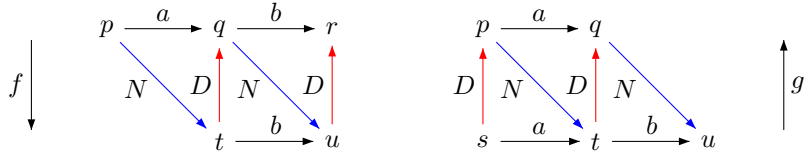
\begin{figure}[hbt]
  \centering
  \gasset{Nframe=n,Nadjust=wh}
\begin{picture}(100,20)
\put(0,0){
  \begin{picture}(60,20)(-15,0)
\node(fh)(-10,15){}\node(fb)(-10,0){}\drawedge[ELside=r](fh,fb){$f$}
    \node(p)(0,15){$p$}\node(q)(15,15){$q$}\node(r)(30,15){$r$}
    \node(t)(15,0){$t$}\node(u)(30,0){$u$}
    \drawedge(p,q){$a$}\drawedge(q,r){$b$}
    \drawedge[linecolor=red](t,q){$D$}\drawedge[linecolor=red](u,r){$D$}
    \drawedge[ELside=r,ELpos=40,linecolor=blue](p,t){$N$}\drawedge[ELside=r,ELpos=40,linecolor=blue](q,u){$N$}
    \drawedge(t,u){$b$}
  \end{picture}
}
\put(50,0){
  \begin{picture}(60,20)(-15,0)
    \node(gh)(40,15){}\node(gb)(40,0){}\drawedge[ELside=r](gb,gh){$g$}
    \node(p)(0,15){$p$}\node(q)(15,15){$q$}
    \node(s)(0,0){$s$}\node(t)(15,0){$t$}\node(u)(30,0){$u$}
    \drawedge(p,q){$a$}
    \drawedge[linecolor=red](s,p){$D$}\drawedge[linecolor=red](t,q){$D$}
    \drawedge[ELside=r,ELpos=40,linecolor=blue](p,t){$N$}\drawedge[ELside=r,ELpos=40,linecolor=blue](q,u){$N$}
    \drawedge(s,t){$a$}\drawedge(t,u){$b$}
  \end{picture}
}
\end{picture}
\caption{The maps $f$ and $g$.}\label{Figfg}
\end{figure}
Similarly, we may define a map $g$ from the set of paths of length $2$ in
$\B$
into $E$ by
\begin{displaymath}
  g(s\edge{a}t\edge{b}u)=p\edge{a}q
\end{displaymath}
if $D_{sp}=(N_{pt},a)=D_{tq}=1$. Let $\varphi=f^{[1,0]}_\infty$ and
$\gamma=g^{[0,1]}_\infty$ (see Figure~\ref{Figfg} on the right). We
verify that
\begin{displaymath}
  \varphi\gamma=\Id_F,\quad \gamma\varphi=\Id_E
\end{displaymath}
where $\Id_E$ and $\Id_F$ are the identities on $E^\Z$ and $F^\Z$.
Let indeed $\pi$ be a path in $X_\A$ and let $\rho=\varphi(\pi)$. Set
$\pi_i=(p_i,a_i,p_{i+1})$
and $\rho_i=(r_i,b_i,r_{i+1})$ (see Figure~\ref{BBEP:fig:strong}). Then, by definition of $\varphi$, we
have for all $i\in\Z$, $b_{i}=a_{i}$ and
$(N_{p_ir_{i+1}},a_i)=D_{r_ip_i}=1$.
Let $\sigma=\gamma(\rho)$ and $\sigma=(s_i,c_i,s_{i+1})$. By
definition of $\gamma$, we have $c_i=b_i$ and
$D_{r_is_i}=(N_{s_ir_{i+1}},b_i)=1$.
Thus we have simultaneously $D_{r_ip_i}=(N_{p_ir_{i+1}},a_i)=1$ and
$D_{r_is_i}=(N_{s_ir_{i+1}},a_i)=1$. Since $M(\A)=DN$, this forces
$p_i=s_i$. Thus $\sigma=\pi$ and this shows that
$\gamma\varphi=\Id_E$.
The fact that $\varphi\gamma=\Id_F$ is proved in the same way.

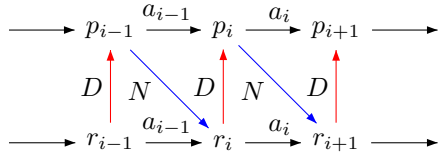
\begin{figure}[hbt]
  \centering
  \gasset{Nframe=n,Nadjust=wh}
  \begin{picture}(60,20)(-15,0)
    \node(-)(-15,15){}\node(p)(0,15){$p_{i-1}$}\node(q)(15,15){$p_i$}\node(r)(30,15){$p_{i+1}$}\node(+)(45,15){}
    \node(--)(-15,0){}\node(s)(0,0){$r_{i-1}$}\node(t)(15,0){$r_i$}\node(u)(30,0){$r_{i+1}$}\node(++)(45,0){}
    \drawedge(-,p){}\drawedge(p,q){$a_{i-1}$}\drawedge(q,r){$a_i$}\drawedge(r,+){}
    \drawedge[linecolor=red](s,p){$D$}\drawedge[linecolor=red](t,q){$D$}\drawedge[linecolor=red](u,r){$D$}
    \drawedge[ELside=r,ELpos=40,linecolor=blue](p,t){$N$}\drawedge[ELside=r,ELpos=40,linecolor=blue](q,u){$N$}
    \drawedge(--,s){}\drawedge(s,t){$a_{i-1}$}\drawedge(t,u){$a_i$}\drawedge(u,++){}
  \end{picture}
\caption{Conjugacy of automata.}\label{BBEP:fig:strong}
\end{figure}
\end{proof}

\begin{proof}[Proof of Theorem~\ref{AutomataClassificationtheorem}]
  In one direction, the above statement is a direct consequence of the
  Decomposition Theorem~\ref{ShiftDecompositionTheorem}. Indeed, if
  $\A$ and $\B$ are conjugate, there is a sequence
  $\A_0,\A_1,\ldots,\A_n$ of automata such that $\A_i$ is a split or a
  merge of $\A_{i+1}$ for $0\le i<n$ with $\A_0=\A$ and $\A_n=\B$. The
  other direction follows from Proposition~\ref{BBEP:PropInSplit}.
\end{proof}

%%%%%%%%%%%%%%%%%%%%%%%%%%%%%%%%%%%%%%%
%
%   Minimal automata
%
%%%%%%%%%%%%%%%%%%%%%%%%%%%%%%%%%%%%
\section{Minimal automata}\label{sectionMinimalAutomata}
%%%%%%%%%%%%%%%%%%%%%%%%%%%%%%%%%%%%%%%%%%%%%%%%

In this section, we define two notions of minimal automaton for sofic
shifts: the Krieger automaton and the Fischer automaton. The first is
defined for any sofic shift, and the second for irreducible ones.

The main result is that the Fischer automaton has the minimal number
of states among all deterministic automata recognizing a given sofic shift
(Proposition~\ref{propReduction}).

We then define the syntactic semigroup of a sofic shift, as an
ordered semigroup. We show that this semigroup is isomorphic to the 
transition semigroup of the Krieger automaton and, for irreducible
shifts, to the transition semigroup of the Fischer automaton
(Proposition~\ref{BBP:syntacticSemigroup}). 

\intertitre{Minimal automata of sets of finite words} Recall that an
automaton $\A=(Q,E)$ recognizes a shift $X$ if $X=L_\A$. There should
be no confusion with the notion of acceptance for sets of finite words
in the usual sense: if $\A$ has an initial state $i$ and a set of
terminal states $T$, the set of finite words recognized by $\A$ is the
set of labels of finite paths from $i$ to a terminal state $t$ in
$T$. In this chapter\footnote{This contrasts the more traditional
  definition which assumes in addition that there is a unique initial
  state.}, an automaton is called
\emph{deterministic}\index{automaton!deterministic}\index{deterministic
  automaton} if, for each state $p$ and each letter $a$, there is at
most one edge starting in $p$ and carrying the label $a$. We write, as
usual, $p\cdot u$ for the unique end state, provided it exists, of a
path starting in $p$ and labeled $u$.  For a set $W$ of $A^*$, there
exists a unique deterministic minimal
automaton\index{automaton!minimal}\index{minimal!automaton} (this time
with a unique initial state) recognizing $W$. Its states are the
nonempty sets $u^{-1}W$ for $u\in A^*$, called the \emph{right
  contexts}\index{context, right}\index{right!context} of $u$, and the
edges are the triples $(u^{-1}W,a,(ua)^{-1}W)$, for $a\in A$ (see the
chapter of J.-\'E.  Pin).

%%%%%%%%%%%%%%%%%%%%%%%%%%%%%%%%%%%%%%%%%%%%%%%%%%
%\intertitre{Reduced automata}

Let $\A=(Q,E)$ be a finite automaton. For a state $p\in Q$, we denote
by $L_p(\A)$ or simply $L_p$
the set of labels of finite paths starting from $p$.  The
automaton $\A$ is said to be \emph{reduced}%
\index{reduced automaton}\index{automaton!reduced} if $p\ne q$ implies
$L_p\ne L_q$.

A word $w$ is \emph{synchronizing}\index{synchronizing word} for a
deterministic automaton $\A$ if the set of paths labeled $w$ is
nonempty
and all paths labeled $w$ end in the same
state. An automaton is \emph{synchronized}%
\index{synchronized automaton}\index{automaton!synchronized} if there
is a synchronizing word. The following result holds because all
states are terminal.

\begin{proposition}\label{propRedIsSync}
  A reduced deterministic automaton is synchronized.
\end{proposition}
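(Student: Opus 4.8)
The plan is to exhibit a synchronizing word by a minimality argument on the image of the state set. For a word $w$ write $Q\cdot w=\{p\cdot w\mid p\in Q,\ p\cdot w\text{ is defined}\}$ for the set of states obtained by reading $w$ from every state of $\A$. Since $\A$ is deterministic, the set of end states of the paths labeled $w$ is exactly $Q\cdot w$, so $w$ is synchronizing precisely when $\Card(Q\cdot w)=1$. The goal is therefore to produce a word $w$ with $\Card(Q\cdot w)=1$.

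First I would observe that $Q\cdot\varepsilon=Q\ne\emptyset$, so the collection of words $w$ with $Q\cdot w\ne\emptyset$ is nonempty and the quantity $\Card(Q\cdot w)$ attains a minimum over it, reached at some word $w_0$ (with $Q\cdot w_0\ne\emptyset$, hence at least one path labeled $w_0$). I would then claim $\Card(Q\cdot w_0)=1$ and argue by contradiction: assume there are distinct $p,q\in Q\cdot w_0$. Here is where the hypothesis that every state is terminal enters — it makes $L_r$ coincide with $\{u\in A^*\mid r\cdot u\text{ is defined}\}$ for each state $r$, so that "$\A$ is reduced" translates into "the partial maps $u\mapsto r\cdot u$ separate distinct states". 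Applying this to $p\ne q$ yields a word $u$ lying, say, in $L_p\setminus L_q$, so that $p\cdot u$ is defined while $q\cdot u$ is not.

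Finally I would read $u$ after $w_0$: the set $Q\cdot(w_0u)=\{r\cdot u\mid r\in Q\cdot w_0,\ r\cdot u\text{ is defined}\}$ contains $p\cdot u$, hence is nonempty, yet it is obtained from $Q\cdot w_0$ by discarding at least the state $q$, so $\Card(Q\cdot(w_0u))<\Card(Q\cdot w_0)$. This contradicts the minimality of $w_0$, so $\Card(Q\cdot w_0)=1$ and $w_0$ is synchronizing. The only delicate point is the bookkeeping in this last step — verifying that passing from $w_0$ to $w_0u$ keeps the reachable-state set nonempty while strictly shrinking it — together with being explicit about where "all states terminal" is used; no trimness or essentiality hypothesis is required.
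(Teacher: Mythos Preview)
Your argument is correct and follows essentially the same route as the paper: both choose a word $x$ (your $w_0$) with $Q\cdot x$ of minimal nonzero cardinality, and use reducedness to conclude that $Q\cdot x$ is a singleton. The only cosmetic difference is that the paper argues directly that any two $p,q\in Q\cdot x$ satisfy $L_p=L_q$ (hence $p=q$), whereas you phrase the same step as a contradiction; your remark that ``all states terminal'' is what makes $L_r$ coincide with the domain of $r\cdot(-)$ is exactly the point the paper highlights before the statement.
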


\begin{proof}
  Let $\A=(Q,E)$ be a reduced deterministic automaton.  Given any word
  $x$, we denote by $Q\cdot X$ the set $Q\cdot x=\{q\cdot x\mid q\in
  Q\}$.

  Let $x$ be a word such that $Q\cdot x$ has minimal nonzero
  cardinality. Let $p,q$ be two elements of the set $Q\cdot x$. If $u$
  is a word such that $p\cdot u$ is nonempty, then $q\cdot u$ is also
  nonempty since otherwise $Q\cdot xu$ would be of nonzero cardinality
  less than $Q\cdot x$.  This implies that $L_p=L_q$ and thus $p=q$
  since $\A$ is reduced.  Thus $x$ is synchronizing.
\end{proof}

\subsection{Krieger automata and Fischer automata}
%%%%%%%%%%%%%%%%%%%%%%%%%%%%%%%%%%%%%%%%%%%
\intertitre{Krieger automata}

We denote by $A^{-\N}$ the set of left infinite words $x=\cdots
x_{-1}x_0$. For $y=\cdots y_{-1}y_0\in A^{-\N}$ and $z=z_0z_1\cdots\in
A^{\N}$, we denote by $y\cdot z=(w_i)_{i\in\Z}$ the biinfinite word defined
by $w_i=y_{i+1}$ for $i<0$ and $w_i=z_i$ for $i\ge 0$.  Let $X$ be a
shift space. For $y\in A^{-\N}$, the set of \emph{right
  contexts}\index{context, right}\index{right!context} of $y$ is the
set $C_X(y)=\{z\in A^{\N}\mid y\cdot z\in X\}$. For $u\in A^+$, we denote
$u^\omega=uu\cdots$.

The \emph{Krieger automaton}\index{Krieger
  automaton}\index{automaton!Krieger} of a shift space $X$ is the
deterministic automaton whose states are the nonempty sets of the form $C_X(y)$ for
$y\in A^{-\N}$, and whose edges are the triples $(p,a,q)$ where
$p=C_X(y)$ for some left infinite word, $a\in A$ and $q=C_X(ya)$.

The definition of the Krieger automaton uses infinite words. One could
use instead of
the sets $C_X(y)$ for $y\in A^{-\N}$, the sets 
\begin{displaymath}
D_X(y)=\{u\in A^*\mid \exists z\in A^{\N}: yuz\in X\}.
\end{displaymath}
Indeed $C_X(y)=C_X(y')$ if and only if $D_X(y)=D_X(y')$. However, one
cannot
dispense completely with infinite words (see Proposition~\ref{propKriegerReduced}).
\begin{example}
  Let $A=\{a,b\}$, and let $X=X^{(ba)}$.  The Krieger automaton of $X$
  is represented in Figure~\ref{FigKrieger}. The states are the sets
  $1=C_X(\cdots aaa)=a^\omega\cup a^*b^\omega$ and $2=C_X(\cdots
  aaab)=b^\omega$.
\end{example}

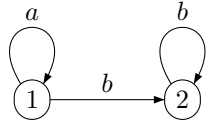
\begin{figure}[hbt]
\centering
\begin{picture}(20,15)(0,-3)
\gasset{Nadjust=wh}
\node(1)(0,0){$1$}\node(2)(20,0){$2$}
\drawloop[loopangle=90](1){$a$}\drawedge(1,2){$b$}\drawloop[loopangle=90](2){$b$}
\end{picture}
\caption{The Krieger automaton of $X^{(ba)}$.}\label{FigKrieger}
\end{figure}

\begin{proposition}\label{propKriegerReduced}
  The Krieger automaton of a shift space $X$ is reduced and recognizes
  $X$. It is finite if and only if $X$ is sofic.
\end{proposition}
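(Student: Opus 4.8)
The plan is to prove the three assertions in Proposition~\ref{propKriegerReduced} in turn: (1) the Krieger automaton $\K_X$ recognizes $X$; (2) it is reduced; (3) it is finite if and only if $X$ is sofic. Throughout, write states as sets $C_X(y)$ for $y\in A^{-\N}$, and recall that $C_X(y)=C_X(y')$ iff $D_X(y)=D_X(y')$, so I may freely compute with the finite-word contexts $D_X$ when convenient.

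\textbf{Recognition.} First I would show $L_{\K_X}=X$. For the inclusion $L_{\K_X}\subset X$: given a biinfinite path in $\K_X$ with label $x=(x_i)_{i\in\Z}$, each finite factor $x_{-n}\cdots x_n$ is a block of $X$ (since consecutive states $C_X(y)$, $C_X(ya)$ with the edge labeled $a$ witness that $ya$ extends to an element of $X$, so in particular $a$ sits inside a block of $X$; more carefully, unwinding the path shows every finite factor lies in $\B(X)$). Since $X$ is closed and shift-invariant, $x\in X$ — here I would invoke that $X=X^{(W)}$ for $W=A^*\setminus\B(X)$, exactly as in the proof of Proposition~\ref{BBEP:propAutomatonSofic0}. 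For the reverse inclusion $X\subset L_{\K_X}$: given $x\in X$, for each $i$ let $y^{(i)}=\cdots x_{i-1}x_i$ be the left-infinite part ending at position $i$; then $C_X(y^{(i)})$ is a nonempty state (it contains $x_{i+1}x_{i+2}\cdots$), and the triple $(C_X(y^{(i-1)}),x_i,C_X(y^{(i)}))=(C_X(y^{(i-1)}),x_i,C_X(y^{(i-1)}x_i))$ is an edge of $\K_X$ by definition. Concatenating over $i\in\Z$ gives a biinfinite path labeled $x$, so $x\in L_{\K_X}$.

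\textbf{Reducedness.} Next I would show $\K_X$ is reduced, i.e.\ distinct states have distinct future languages $L_p$. The key observation is that for a state $p=C_X(y)$, the language $L_p$ of labels of finite paths from $p$ equals $D_X(y)$: a finite path from $C_X(y)$ labeled $u=u_1\cdots u_n$ passes through $C_X(y),C_X(yu_1),\ldots,C_X(yu)$, each nonempty, and nonemptiness of $C_X(yu)$ says precisely that $yu$ extends to a biinfinite element of $X$, i.e.\ $u\in D_X(y)$; conversely if $u\in D_X(y)$ then each prefix $yu_1\cdots u_j$ has nonempty right context, giving the path. Hence $L_{C_X(y)}=D_X(y)$, and since $C_X(y)=C_X(y')$ iff $D_X(y)=D_X(y')$, two states with the same future language are literally the same state. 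So $\K_X$ is reduced.

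\textbf{Finiteness.} Finally, the equivalence with soficity. If $\K_X$ is finite, then $X=L_{\K_X}$ is recognized by a finite automaton, hence sofic by Proposition~\ref{BBEP:PropSoficRec}. Conversely, suppose $X$ is sofic, say $X=X^{(W)}$ with $W$ recognizable; then $\B(X)=A^*\setminus A^*WA^*$ is recognizable, so the minimal deterministic automaton of the set $\B(X)$ is finite, and its states are the right contexts $u^{-1}\B(X)$ for $u\in A^*$ — finitely many. I would then show that the map sending a state $C_X(y)$ of $\K_X$ to the right context $D_X(y)$ is well-defined and injective (well-defined and injective by the $C_X/D_X$ correspondence already used), and that $D_X(y)$ depends only on which right contexts $u^{-1}\B(X)$ the word $y$ realizes — more precisely, $D_X(y)$ is determined by the state reached in the minimal automaton of $\B(X)$ after reading longer and longer suffixes of $y$, equivalently by a subset of the (finite) state set of that automaton obtained as a limit. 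Since there are only finitely many such subsets, there are only finitely many distinct sets $D_X(y)$, hence finitely many states of $\K_X$. \textbf{The main obstacle} I expect is exactly this last finiteness argument: making rigorous that $D_X(y)$ for a left-infinite $y$ is controlled by finitely much data coming from the finite automaton for $\B(X)$ — one must argue that although $y$ is infinite, the relevant information (which states of the $\B(X)$-automaton are "reachable reading a suffix of $y$ and then some word") stabilizes and takes only finitely many values, and that $C_X(y)$, which a priori involves infinite right-extensions, is recovered from this finite data via $D_X$.
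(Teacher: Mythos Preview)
Your proposal is correct. The recognition and reducedness arguments are essentially the paper's (the paper is terser: for reducedness it passes from $L_p=L_q$ directly to equality of the sets of labels of right-infinite paths, which are $C_X(y)$ and $C_X(z)$ themselves; your route via $L_p=D_X(y)$ and the $C_X/D_X$ equivalence is equivalent and arguably cleaner). The paper in fact does not spell out the recognition part at all, so your explicit treatment of $L_{\K_X}=X$ adds detail rather than diverging.

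The one genuine difference is in the direction ``sofic $\Rightarrow$ finite'', precisely where you flag the main obstacle. Your plan goes through the minimal automaton of $\B(X)$ and a stabilization argument for $D_X(y)$ along suffixes of $y$; this works (and it is essentially the content of the paper's next proposition, Proposition~\ref{propAlfredo}), but it is more than is needed here. The paper's argument is shorter: take \emph{any} finite automaton $\A$ with $X=L_\A$; for $y\in A^{-\N}$ let $S(y)$ be the set of states of $\A$ at which some left-infinite path labeled $y$ ends. Then $z\in C_X(y)$ if and only if some state in $S(y)$ is the start of a right-infinite path labeled $z$, so $C_X(y)$ is determined by $S(y)$. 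Since $S(y)$ ranges over subsets of a finite set, there are only finitely many right contexts. This bypasses stabilization entirely and does not require the recognizing automaton to be deterministic or minimal. Your anticipated obstacle thus dissolves once you work with an automaton recognizing $X$ itself rather than one recognizing $\B(X)$.
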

\begin{proof}
  Let $\A=(Q,E)$ be the Krieger automaton of $X$.  Let $p,q\in Q$ and
  let $y,z\in A^{-\N}$ be such that $p=C_X(y)$, $q=C_X(z)$. If
  $L_p=L_q$, then the labels of infinite paths starting from $p$ and
  $q$ are the same. Thus $p=q$. This shows that $\A$ is reduced. If
  $\A$ finite, then $X$ is sofic by
  Proposition~\ref{BBEP:PropSoficRec}.  Conversely, if $X$ is sofic,
  let $\A$ be a finite automaton recognizing $X$.  The set of right
  contexts of a left infinite word $y$ only depends on the set of
  states $p$ such that there is a path in the automaton $\A$ labeled
  $y$ ending in state $p$. Thus the family of sets of right contexts
  is finite.
\end{proof}
We say that a deterministic automaton $\A=(Q,E)$ over the alphabet $A$ is a
\emph{subautomaton} of a deterministic automaton\index{subautomaton}
$\A'=(Q',E')$ if $Q\subset Q'$ and if for each edge $(p,a,q)\in E$
such that $p\in Q$ one has $q\in Q$ and $(p,a,q)\in E'$.

The following proposition appears in \cite{Nasu:1988} and
 in~\cite{Costa2007b}
where an algorithm to compute the states of the minimal automaton
which are in the Krieger automaton is described. 
%It shows that
%$\K(X)$ is contained in $\A(\B(X))$.

\begin{proposition}\label{propAlfredo}
  The Krieger automaton of a sofic shift $X$ is,  up to an
  isomorphism, a subautomaton of the minimal automaton of the set of blocks of $X$.
\end{proposition}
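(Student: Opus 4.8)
The plan is to construct an explicit injective map $\Phi$ from the set of states of the Krieger automaton $\K$ of $X$ into the set of states of the minimal automaton $\mathcal M$ of $\B(X)$, and then to verify that $\Phi$ identifies $\K$ with a subautomaton of $\mathcal M$. Recall that the states of $\mathcal M$ are the nonempty right contexts $w^{-1}\B(X)=\{v\in A^*\mid wv\in\B(X)\}$, $w\in A^*$, and that the $a$-edge out of $w^{-1}\B(X)$, when it exists, leads to $(wa)^{-1}\B(X)$. For a left infinite word $y=\cdots y_{-1}y_0$ write $y_{[-n]}=y_{-n}\cdots y_0$ for its suffix of length $n+1$, and recall from above the set $D_X(y)=\{u\in A^*\mid\exists z\in A^{\N},\ yuz\in X\}$, together with the fact, noted there, that $C_X(y)=C_X(y')$ if and only if $D_X(y)=D_X(y')$. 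I will set $\Phi\bigl(C_X(y)\bigr)=D_X(y)$, once $D_X(y)$ has been shown to be a state of $\mathcal M$.

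The core of the argument is the identity
\begin{displaymath}
D_X(y)=\bigcap_{n\ge 0}\bigl(y_{[-n]}\bigr)^{-1}\B(X).
\end{displaymath}
The inclusion ``$\subseteq$'' is immediate, since $yuz\in X$ implies that $y_{[-n]}u$ is a factor of a point of $X$, hence lies in $\B(X)$. For ``$\supseteq$'', take $u$ in the right-hand side, so that $y_{[-n]}u\in\B(X)$ for every $n$; choose for each $n$ a point $x^{(n)}\in X$ with $x^{(n)}_{[-n,0]}=y_{[-n]}$ and $x^{(n)}_{[1,|u|]}=u$, and extract a convergent subsequence by compactness of $A^{\Z}$ (as in the proof of Proposition~\ref{BBEP:propAutomatonSofic1}); the limit is a point of $X$ of the form $yuz$, so $u\in D_X(y)$. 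Since $\B(X)$ is factor-closed, $\bigl(y_{[-n-1]}\bigr)^{-1}\B(X)\subseteq\bigl(y_{[-n]}\bigr)^{-1}\B(X)$, so the sets in this intersection form a decreasing chain.

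Soficity now enters through the finiteness of $\mathcal M$: since $X=L_\A$ for some finite automaton $\A$ (Proposition~\ref{BBEP:PropSoficRec}), the language $\B(X)$ is recognizable and $\mathcal M$ has finitely many states, so the decreasing chain above is eventually constant. Together with the identity this gives an integer $N$, depending on $y$, with $D_X(y)=\bigl(y_{[-N]}\bigr)^{-1}\B(X)$; this set is nonempty, because $C_X(y)\ne\emptyset$ forces $\varepsilon\in D_X(y)$, so it is genuinely a state of $\mathcal M$. Hence $\Phi$ is well defined, and it is injective because $C_X(y)=C_X(y')$ if and only if $D_X(y)=D_X(y')$.

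Finally I would check that $\Phi$ carries $\K$ isomorphically onto a subautomaton of $\mathcal M$, that is, that its image is closed under outgoing $\mathcal M$-edges and that $\Phi$ is compatible with edges in both directions. For edge compatibility, let $C_X(y)\edge{a}C_X(ya)$ be an edge of $\K$ and fix $N$ with $\bigl(y_{[-N]}\bigr)^{-1}\B(X)=D_X(y)$; every sufficiently long suffix of $ya$ is of the form $y_{[-m]}a$ with $\bigl(y_{[-m]}\bigr)^{-1}\B(X)=D_X(y)$, so applying the previous step to $ya$ gives $D_X(ya)=\bigl(y_{[-N]}a\bigr)^{-1}\B(X)$, which is precisely the state reached from $\Phi(C_X(y))=\bigl(y_{[-N]}\bigr)^{-1}\B(X)$ by the $a$-edge of $\mathcal M$. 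For closure, if the state $\Phi(C_X(y))=D_X(y)$ has an outgoing $a$-edge in $\mathcal M$, then $a\in D_X(y)$, hence $C_X(ya)\ne\emptyset$, so $C_X(ya)$ is a state of $\K$, $C_X(y)\edge{a}C_X(ya)$ is an edge of $\K$, and by edge compatibility $\Phi$ sends it to that $a$-edge of $\mathcal M$. Thus the image of $\Phi$ is a subautomaton of $\mathcal M$ isomorphic to $\K$, which is the assertion. The main obstacle is the identity of the second step and its consequence in the third: the substantive points are the compactness gluing that turns an element of the infinite intersection into a genuine biinfinite point extending $y$, and the observation that soficity --- finiteness of $\mathcal M$ --- is exactly what collapses that infinite intersection onto a single finite-word right context; everything else is routine manipulation of right contexts.
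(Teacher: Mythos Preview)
Your proof is correct and follows essentially the same route as the paper's: both arguments form the decreasing chain $\bigl(y_{[-n]}\bigr)^{-1}\B(X)$, use finiteness of the minimal automaton of $\B(X)$ to make it stabilize, invoke compactness to identify the stable value with $D_X(y)$, and then appeal to the equivalence $C_X(y)=C_X(y')\Leftrightarrow D_X(y)=D_X(y')$ for well-definedness and injectivity. Your write-up is in fact more complete than the paper's, which stops after establishing that $C_X(y)\mapsto s(y)$ is well defined and injective and does not explicitly verify the edge-compatibility and closure conditions required by the notion of subautomaton; your final paragraph fills that gap correctly.
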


\begin{proof}
Let $X$ be a sofic shift. Let $y\in A^{-\N}$ and set $y=\cdots
y_{-1}y_0$ with $y_i\in A$ for $i\le 0$. Set $u_i=y_{-i}\cdots y_0$ and
$U_i=u_i^{-1}\B(X)$. Since $\B(X)$ is regular, the chain
\begin{displaymath}
\ldots \subset U_i\subset\ldots\subset U_1\subset U_0
\end{displaymath}
is stationary. Thus there is an integer $n\ge 0$ such that
$U_{n+i}=U_n$ for all $i\ge 0$. We define $s(y)=U_n$.

We show that the map $C_X(y)\mapsto s(y)$ is well-defined and injective.
Suppose first that $C_X(y)=C_X(y')$ for some $y,y'\in A^{-\N}$. Let
$u\in A^*$ be such that $y_{-m}\cdots y_0u\in \B(X)$ for all $m\ge n$. By compactness, there exists
a $z\in A^\N$ such that $yuz\in X$. Then $y'\cdot uz\in X$ implies
$u\in s(y')$. Symmetrically $u\in s(y')$ implies $u\in s(y)$.
This shows that the map is well-defined.

To show that it is injective, consider $y,y'\in A^{-N}$ such that
$s(y)=s(y')$. Let $z\in C_X(y)$. For each integer $m\ge 0$, we have
$z_0\cdots z_m\in s(y)$ and thus $z_0\cdots z_m\in s(y')$. Since $X$
is closed, this implies that $y'\cdot z\in X$ and thus $z\in C_X(y')$.
The converse implication is proved in the same way.
\end{proof}

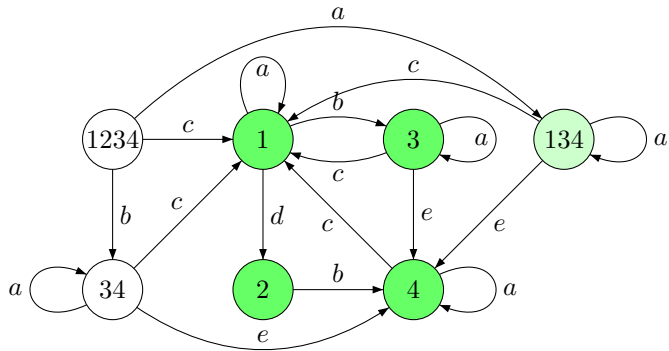
\begin{figure}[hbt]
\centering
\gasset{fillcolor=green!60}
\begin{picture}(80,43)(0,-7)
\gasset{Nfill=n}
\node(1234)(0,20){$1234$}\node(34)(0,0){$34$}
\gasset{Nfill=y}
\node(1)(20,20){$1$}\node(3)(40,20){$3$}\node[fillcolor=green!20](134)(60,20){$134$}
\node(2)(20,0){$2$}\node(4)(40,0){$4$}
\drawedge[curvedepth=15](1234,134){$a$}\drawedge(1234,1){$c$}\drawedge(1234,34){$b$}
\drawloop[loopangle=90,ELside=r](1){$a$}\drawedge[curvedepth=3](1,3){$b$}\drawedge(1,2){$d$}
\drawloop[loopangle=0,ELside=r](3){$a$}\drawedge[curvedepth=3](3,1){$c$}\drawedge(3,4){$e$}
\drawloop[loopangle=0](134){$a$}\drawedge(134,4){$e$}\drawedge[curvedepth=-8,ELside=r](134,1){$c$}
\drawloop[loopangle=180](34){$a$}\drawedge[curvedepth=-8](34,4){$e$}\drawedge(34,1){$c$}
\drawedge(2,4){$b$}
\drawloop[loopangle=0](4){$a$}\drawedge(4,1){$c$}
\end{picture}
\caption{An example of  Krieger automaton.}\label{exNasu}
\end{figure}
\begin{example}\label{exAlfredo}
Consider the automaton on $7$ states given in Figure~\ref{exNasu}. It
is obtained, starting with the subautomaton over  the states $1,2,3,4$,
using the subset construction computing the accessible nonempty sets of states,
starting from the set $\{1,2,3,4\}$.

The subautomaton with dark shaded states $1,2,3,4$ is strongly
connected and recognizes an irreducible sofic shift denoted by $X$.
The whole automaton is the minimal automaton (with initial state
$\{1,2,3,4\}$) of the set of blocks of $X$. The Krieger automaton of
$X$ is the automaton on the five shaded states.  Indeed, with the
notation of the proof, there is no left infinite word $y$ such that
$s(y)=\{1,2,3,4\}$ or $s(y)=\{3,4\}$.
\end{example}

%%%%%%%%%%%%%%%%%%%%%%%%%%%%%%%%%%%%%%%%%%%%%%%%%%%%%%%%%%%%%%%
\intertitre{Fischer automata of irreducible shift spaces}

A shift space $X\subset A^\Z$ is called
\emph{irreducible}\index{irreducible shift}\index{shift space!²irreducible}
if for any $u,v\in\B(X)$ there exists a $w\in\B(X)$ such that $uwv\in
\B(X)$.  

An automaton is said to be strongly connected if its underlying graph
is
strongly connected.
Clearly a shift recognized by a strongly connected automaton is irreducible.

A strongly connected component of an automaton $\A$ is \emph{minimal}%
\index{minimal!strongly connected component}\index{strongly!connected
  component, minimal} if all successors of vertices of the component
are themselves in the component. One may verify that a minimal
strongly
connected component is the same as a strongly connected subautomaton.

The following result is due to Fischer~\cite{Fischer:1975}
(see also \cite[Section 3]{Lind&Marcus:1995}). It implies in
particular that an irreducible sofic shift can be recognized by
a strongly connected automaton.

\begin{proposition}\label{PropFischer}
  The Krieger automaton of an irreducible sofic shift $X$ is
  synchronized and has a unique minimal strongly connected component.
% which is the unique minimal
%deterministic automaton recognizing $X$.
\end{proposition}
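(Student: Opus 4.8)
The first assertion is immediate from earlier results: by Proposition~\ref{propKriegerReduced} the Krieger automaton $\A=(Q,E)$ of $X$ is reduced and deterministic and, since $X$ is sofic, finite, so Proposition~\ref{propRedIsSync} applies and $\A$ is synchronized. Fix a synchronizing word $w$, let $q_0$ be the common end state of the paths labelled $w$, and note that $w\in\B(X)$ (such a path exists and, as every state $C_X(y)$ has an incoming and an outgoing edge so that $\A$ is essential, it extends to a biinfinite path). Put $\mathcal{F}$ for the strongly connected component of $q_0$; the plan is to prove that $\mathcal{F}$ is the unique minimal strongly connected component.

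First I would check that $\mathcal{F}$ is closed under edges, hence a strongly connected subautomaton, i.e.\ a minimal component. Let $q_0\edge{u}q\edge{a}q'$ be a path in $\A$ with $q\in\mathcal{F}$. Then $wua$ labels a path (so $wua\in\B(X)$), and every path labelled $wua$ ends in $q'$ (the initial $w$ forces $q_0$, then $ua$ is read deterministically). By irreducibility of $X$ there is $\delta$ with $wua\,\delta\,w\in\B(X)$; reading this word along a path, we are at $q'$ after $wua$ and at $q_0$ after the final $w$, so $q'\to q_0$ and hence $q'\in\mathcal{F}$. In particular no edge leaves $\mathcal{F}$, so $\mathcal{F}$ is a sink.

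Next I would show $L_{\mathcal{F}}=X$. Given $x\in X$ and $n\ge 1$, irreducibility yields $\alpha,\beta$ with $w\,\alpha\,(x_{-n}\cdots x_n)\,\beta\,w\in\B(X)$; in a path of $\A$ realizing this word we reach $q_0\in\mathcal{F}$ after the first $w$, and since $\mathcal{F}$ is a sink the remainder of the path — in particular the segment labelled $x_{-n}\cdots x_n$ — stays in $\mathcal{F}$. Thus every central factor of $x$ labels a path in $\mathcal{F}$, and König's lemma (used as in Proposition~\ref{BBEP:propAutomatonSofic1}) produces a biinfinite path in $\mathcal{F}$ labelled $x$; as $L_{\mathcal{F}}\subset X$ is clear, $L_{\mathcal{F}}=X$.

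Finally I would show that every state of $\A$ reaches $q_0$; granting this, if $C$ is a minimal component and $p\in C$, a path from $p$ to $q_0$ stays in the sink $C$, so $q_0\in C$ and $C=\mathcal{F}$. Let $C_X(\eta)$ be an arbitrary state and choose $z$ with $\eta z\in X$. By $L_{\mathcal{F}}=X$, $\eta z$ labels a biinfinite path in $\mathcal{F}$, whose negative half is a left-infinite path in $\mathcal{F}$ labelled $\eta$ ending in some $q'\in\mathcal{F}$. Since $\mathcal{F}$ is strongly connected and $w\in\B(X)=\B(L_{\mathcal{F}})$ labels a path in $\mathcal{F}$, I can prolong from $q'$, inside $\mathcal{F}$, by a path to the start of such a $w$-path, then by $w$ (landing at $q_0$), then by any infinite continuation; this gives a right-infinite word $\xi$ with $\eta\xi\in X$ in which $w$ occurs as a factor. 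Reading $\xi$ from $C_X(\eta)$ in $\A$ is then a legitimate path (every $C_X(\eta\xi_0\cdots\xi_i)$ is nonempty), and at the occurrence of $w$ the synchronization sends it to $q_0$, so $C_X(\eta)\to q_0$. The delicate step is exactly this last one: the endpoint of a path labelled $\eta$ is not determined by $\eta$, so an arbitrary state cannot be identified directly with a state of $\mathcal{F}$, and it is the prior identity $L_{\mathcal{F}}=X$ that makes the detour through a synchronizing factor work.
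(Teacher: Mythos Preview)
Your argument is correct, but it takes a longer route than the paper's. After fixing the synchronizing word $x$ with $q=Q\cdot x$, the paper lets $R$ be the set of states reachable from $q$, shows $R$ is strongly connected (for $r\in R$ with $q\edge{y}r$, irreducibility gives $z$ with $yzx\in\B(X)$, and then $q\cdot yzx=q$), and then argues directly that any minimal strongly connected component $S$ must contain $q$: for $p\in S$, irreducibility yields a word $y$ with $p\cdot yx\ne\emptyset$, so $p\cdot yx=q\in S$. The existence of such $y$ is immediate once one knows (from Proposition~\ref{propAlfredo}) that $L_p=u^{-1}\B(X)$ for some finite $u\in\B(X)$: irreducibility gives $v$ with $uvx\in\B(X)$, i.e.\ $vx\in L_p$.

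You avoid invoking Proposition~\ref{propAlfredo} by instead proving the auxiliary fact $L_{\mathcal F}=X$ and then, for an arbitrary state $C_X(\eta)$, manufacturing a right-infinite word $\xi$ containing $w$ with $\eta\xi\in X$. This is more work, but it has two advantages: it is self-contained (no appeal to the minimal automaton of $\B(X)$), and your intermediate step $L_{\mathcal F}=X$ is exactly the content of the next proposition in the paper, which you therefore get for free. Your closing remark that the endpoint of a left-infinite path labelled $\eta$ need not be $C_X(\eta)$ is well taken---this is indeed the subtlety that forces the detour.
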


\begin{proof}
  Let $\A=(Q,E)$ be the Krieger automaton of $X$. By
  Proposition~\ref{propKriegerReduced},
$\A$ is reduced and by Proposition~\ref{propRedIsSync}, it follows that it is synchronized.

  Let $x$ be a synchronizing word.
Let $R$ be the set of states reachable from the state $q=Q\cdot
  x$. The set $R$ is a minimal strongly connected component of $\A$. Indeed, for
  any $r\in R$ there is a path $q\edge{y}r$. Since $X$ is irreducible
  there is a word $z$ such that $yzx\in \B(X)$. Since $q\cdot yzx=q$,
  $r$ belongs to the same strongly connected component as $q$. Next,
  if $p$ belongs to a minimal strongly connected component $S$ of
  $\A$, since $X$ is irreducible, there is a word $y$ such that
  $p\cdot yx$ is not empty.  Thus $q$ is in $S$, which implies
  $S=R$. Thus $R$ is the only minimal strongly component of $\A$.
\end{proof}

\begin{example}\label{ExMinAutomaton}
  Let $X$ be the even shift.  The Krieger and Fischer automata of $X$
  are represented on Figure~\ref{figMinAutomatonEven}.  The word $a$ is
  synchronizing.
\end{example}

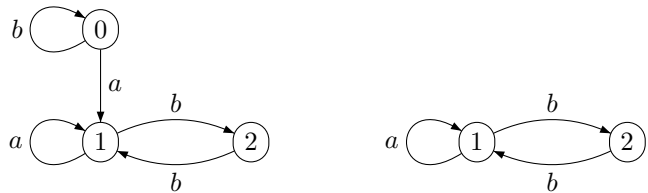
\begin{figure}[hbt]
  \centering\gasset{Nadjust=wh}
  \begin{picture}(70,20)
    \put(0,0){
      \begin{picture}(20,20)(0,-3)
        \node(0)(0,15){$0$}
        \node(1)(0,0){$1$}\node(2)(20,0){$2$}
        \drawloop[loopangle=180](0){$b$}\drawedge(0,1){$a$}
        \drawloop[loopangle=180](1){$a$}\drawedge[curvedepth=3](1,2){$b$}
        \drawedge[curvedepth=3](2,1){$b$}
      \end{picture}
    }
    \put(50,0){
      \begin{picture}(20,10)(0,-3)
        \node(1)(0,0){$1$}\node(2)(20,0){$2$}
        \drawloop[loopangle=180](1){$a$}\drawedge[curvedepth=3](1,2){$b$}
        \drawedge[curvedepth=3](2,1){$b$}
      \end{picture}
    }
  \end{picture}
  \caption{The Krieger and Fischer automata of $X$.}\label{figMinAutomatonEven}
\end{figure}

\begin{example}
The Fischer automaton of the irreducible shift of
Example~\ref{exAlfredo}
is the subautomaton on states $1,2,3,4$ represented with dark shaded
states
in Figure~\ref{exNasu}.
\end{example}

Let $X$ be an irreducible sofic shift $X$.  The minimal strongly
connected component of the Krieger automaton of $X$ is called its
\emph{Fischer automaton}%
\index{Fischer automaton}\index{automaton!Fischer}.

\begin{proposition}
  The Fischer automaton of an irreducible sofic shift $X$ recognizes $X$.
\end{proposition}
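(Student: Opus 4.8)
The plan is to prove that $L_\B=X$, where I write $\B$ for the Fischer automaton of $X$ and $\A=(Q,E)$ for the Krieger automaton. By Proposition~\ref{propKriegerReduced} one has $L_\A=X$, and since $\B$ is a subautomaton of $\A$ every biinfinite path of $\B$ is a biinfinite path of $\A$; hence $L_\B\subseteq L_\A=X$, and only the inclusion $X\subseteq L_\B$ needs an argument. We may assume $X\neq\emptyset$, the empty case being trivial.

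First I would collect a few facts. Each state $C_X(\ell)$ of $\A$ is nonempty, hence has an outgoing edge; following edges one therefore always reaches the unique minimal strongly connected component, so that component --- which is $\B$ --- carries at least one edge and, being strongly connected, is essential. In particular every finite path of $\B$ is a segment of some biinfinite path of $\B$. Next, fix a synchronizing word $w$ of $\A$, which exists by Propositions~\ref{propKriegerReduced} and~\ref{propRedIsSync}, and let $q$ be the common end state of all paths labeled $w$. As in the proof of Proposition~\ref{PropFischer}, $\B$ is precisely the restriction of $\A$ to the set of states reachable from $q$; thus $q$ is a state of $\B$, and, $\B$ being a minimal strongly connected component, its state set is closed under taking successors in $\A$. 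Finally $w\in\B(X)$: a path labeled $w$ starts at some state $C_X(\ell)$ and ends at $q=C_X(\ell w)$, which is nonempty, so $\ell wz\in X$ for some $z\in A^\N$ and $w$ occurs in an element of $X$.

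Now take $x\in X$. For each $n\ge 0$ the block $x_{-n}\cdots x_n$ lies in $\B(X)$, so by irreducibility there is a word $\alpha_n$ with $w\alpha_n x_{-n}\cdots x_n\in\B(X)=\B(L_\A)$, and this word therefore labels a finite path of $\A$. Along that path: after the prefix $w$ the path is at state $q$ (because $w$ is synchronizing), hence at a state of $\B$ after the further factor $\alpha_n$ (because $q$ is a state of $\B$ and $\B$ absorbs successors), and it stays inside $\B$ while reading $x_{-n}\cdots x_n$. Thus $x_{-n}\cdots x_n$ labels a path of $\B$; extending it to a biinfinite path of $\B$ by essentiality produces an element $z^{(n)}\in L_\B$ with $z^{(n)}_i=x_i$ for $-n\le i\le n$. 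Since $L_\B=\lambda_\B(X_\B)$ is a continuous image of the compact edge shift $X_\B$, it is closed, and as $z^{(n)}\to x$ when $n\to\infty$ we conclude $x\in L_\B$.

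The one step deserving care is localizing the reading of $w\alpha_n x_{-n}\cdots x_n$ inside $\B$; this is exactly where the two structural facts about the Fischer automaton enter --- that a synchronizing word routes every path through the state $q$, which lies in $\B$, and that the state set of $\B$, being a minimal strongly connected component, is closed under successors. Everything else is a routine compactness packaging.
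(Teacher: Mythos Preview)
Your proof is correct and follows essentially the same approach as the paper: use irreducibility together with a routing word to show that every block of $X$ labels a finite path inside the Fischer automaton, then pass to biinfinite labels by a compactness argument. The only differences are cosmetic---the paper routes through the initial state of the minimal automaton of $\B(X)$ and finishes by citing Proposition~\ref{BBEP:propAutomatonSofic0}, whereas you route via a synchronizing word of the Krieger automaton and spell out the compactness step directly.
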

\begin{proof}
  The Fischer automaton $\F$ of $X$ is a subautomaton of the Krieger
  automaton of $X$ which in turn is a subautomaton of the minimal
  automaton $\A$ of the set $\B(X)$. Let $i$ be the initial state of
  $\A$. Since $\A$ is trim, there is a word $w$ such that $i\cdot w$
  is a state of $\F$. Let $v$ be any block of
  $X$. Since $X$ is irreducible, there is a word $u$ such that $wuv$
  is a block of $X$. This shows that $v$ is a label of a path in
  $\F$. Thus every block of $X$ is a label of a path in $\F$ and
  conversely. In view of Proposition~\ref{BBEP:propAutomatonSofic0},
  the automaton $\F$ recognizes $X$.
\end{proof}

Let $\A=(Q,E)$ and $B=(R,F)$ be two deterministic automata. A
\emph{reduction}\index{automaton!reduction}%
\index{reduction of automata} from $\A$ onto $\B$ is a map $h$ from
$Q$ onto $R$ such that for any letter $a\in A$, one has $(p,a,q)\in E$
if and only if $(h(p),a,h(q))\in F$. Thus any labeled in or out-merge
is a reduction. However the converse is not true since a reduction is
not, in general, a conjugacy.

For any automaton $\A=(Q,E)$, there is reduction from $\A$ onto a reduced
automaton
$\B$. It is obtained by identifying the pairs of states $p,q\in Q$
such that $L_p=L_q$.

The following statement is Corollary 3.3.20 of~\cite{Lind&Marcus:1995}.
\begin{proposition}\label{propReduction}
  Let $X$ be an irreducible shift space. For any strongly connected deterministic
  automaton $\A$ recognizing $X$ there is a reduction from $\A$ onto
  the Fischer automaton of $X$.
\end{proposition}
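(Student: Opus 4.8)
The plan is to build the reduction map $h$ by sending each state $p$ of $\A$ to the unique state of the Fischer automaton $\F$ whose right‑context language $L_q$ coincides with $L_p$, and to show this is well defined and surjective.

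First I would recall the two structural facts available: the Fischer automaton $\F$ of $X$ is the (unique) minimal strongly connected component of the Krieger automaton $\K$ of $X$, and $\K$ is reduced, meaning distinct states have distinct forward languages $L_p$. So the states of $\F$ are certain sets $C_X(y)$, and they are pairwise distinguished by their forward languages. Next, I would argue that for \emph{every} state $p$ of $\A$ there is a state $q$ of $\F$ with $L_p = L_q$. Since $\A$ is deterministic and strongly connected and recognizes $X$, the forward language $L_p$ is of the form $C_X(y)\cap A^{\mathbb N}$‑type data: concretely, pick a left‑infinite word $y$ whose finite suffixes all lead into $p$ (possible because $\A$ is strongly connected, so $p$ is reachable, and one can prolong a path arriving at $p$ backwards indefinitely), and observe that, because $L_\A = X$, the set of right contexts $C_X(y)$ has forward language exactly $L_p$. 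Hence $C_X(y)$ is a state of $\K$; and because $\A$ is strongly connected the word $y$ can be chosen so that $C_X(y)$ lies in the minimal strongly connected component — using a synchronizing word $x$ of $\F$ (which exists by Proposition~\ref{propRedIsSync} together with Proposition~\ref{PropFischer}) and prefixing its behavior, as in the proof of Proposition~\ref{PropFischer}. This yields the candidate $q = h(p) \in \F$, and $q$ is unique because $\F \subseteq \K$ is reduced.

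Then I would check that $h$ respects edges: if $(p,a,p')$ is an edge of $\A$, then $L_{p'} = a^{-1}L_p = a^{-1}L_{h(p)}$, and since $\F$ is deterministic and strongly connected with every state having all its forward‑followers inside $\F$, the state $h(p)\cdot a$ exists in $\F$ and has forward language $a^{-1}L_{h(p)} = L_{p'}$, so $h(p)\cdot a = h(p')$; conversely, if $(h(p),a,r)$ is an edge of $\F$ then $a\in L_{h(p)} = L_p$, so $p\cdot a$ exists in $\A$ and $h(p\cdot a) = r$. Surjectivity of $h$ follows because every state $q$ of $\F$ arises as $h(p)$ for some $p$: take a block $w$ that leads to $q$ in $\F$ (using that $\F$ recognizes $X$, so there is a path labeled a suitable $w$ ending at $q$), follow the corresponding path in $\A$ from any state, and use determinism of $\F$ to see its endpoint maps to $q$; here irreducibility of $X$ is what lets us extend blocks so that any state of $\A$ can be driven into a state of $\F$.

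The main obstacle I anticipate is the well‑definedness of $h$ on the \emph{non}‑minimal states of $\A$ — a priori $\A$ need not be reduced, and a state $p$ of $\A$ outside any strongly connected‑component structure must still be mapped into $\F$. The crux is the claim that $L_p$ always equals the forward language of \emph{some} state of $\F$, i.e.\ that the forward languages realized by strongly connected deterministic automata recognizing $X$ are exactly those realized inside the Fischer automaton. The argument above routes this through the Krieger automaton and its minimal component, so the real content is the passage ``$L_p = L_{C_X(y)}$ for a suitably chosen $y$,'' together with the fact that $y$ can be forced into the minimal component by appending (the behavior of) a synchronizing word; everything else is bookkeeping with deterministic transitions and the closedness of $X$.
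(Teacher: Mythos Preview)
Your overall plan --- map each state $p$ of $\A$ to the unique state of $\F$ with the same forward language --- is exactly the map the paper builds, and your edge-compatibility check is fine. The gap is precisely where you suspect. You assert that for a left-infinite word $y$ labeling a path in $\A$ ending at $p$, the Krieger state $C_X(y)$ has forward language $L_p$. This is not true in general: determinism of $\A$ controls only forward transitions, so $y$ may also label left-infinite paths in $\A$ ending at distinct states $p=p_1,\ldots,p_k$, and then $D_X(y)=\bigcup_i L_{p_i}$, which can strictly contain $L_p$. Invoking a synchronizing word for $\F$ does not repair this: such a word forces $C_X(y)$ into the minimal component of the Krieger automaton, but says nothing about which states of $\A$ are reached by paths labeled $y$.

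The fix, and what the paper does, is to use a synchronizing word for the \emph{reduction} $\B$ of $\A$ (the quotient by the relation $L_p=L_q$), which exists by Proposition~\ref{propRedIsSync} because $\B$ is reduced. If $w$ synchronizes $\B$ to a state $s$, then for any left-infinite $y$ every path in $\B$ labeled $yw$ ends at $s$, so $C_X(yw)$ genuinely has forward language $L_s$; one then reaches an arbitrary state $t$ of $\B$ as $s\cdot u$ (using that $\B$ is strongly connected) and sends $t\mapsto C_X(ywu)$. Your argument becomes correct once you replace ``synchronizing word of $\F$'' by ``synchronizing word of the reduction of $\A$'' --- but note you cannot assume beforehand that these coincide, since that is essentially the content of the proposition.
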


\begin{proof}
  Let $\A=(Q,E)$ be a strongly connected automaton recognizing $X$.
  Let $\B=(R,F)$ be the reduced automaton obtained from $\A$
  identifying the pairs $p,q\in Q$ such that $L_p=L_q$. By
  Proposition~\ref{propRedIsSync}, $\B$ is synchronized.

We now show that $\B$ can be identified with the Fischer automaton of
$X$.  Let $w$ be a synchronizing word for $\B$. Set $s=Q\cdot w$. Let
$r$ be a state such that $r\cdot w=s$.  and let $y\in A^{-\N}$ be the
label of a left infinite path ending in the state $s$. For any state
$t$ in $R$, let $u$ be a word such that $s\cdot u=t$. The set
$C_X(ywu)$ depends only on the state $t$, and not on the word $u$ such
that $s\cdot u=t$.  Indeed, for each right infinite word $z$, one has
$ywuz$ in $X$ if and only if there is a path labeled $z$ starting at
$t$. This holds because $w$ is synchronizing.

Thus the map $t\mapsto C_X(ywu)$ is well-defined and defines a
reduction from $\B$ onto the Fischer automaton of $X$.
\end{proof}

This statement shows that the Fischer automaton of an irreducible
shift $X$ is minimal\index{automaton!minimal}\index{minimal!automaton} in the sense that it has the minimal number of states
among all deterministic strongly connected automata recognizing $X$.

The statement also gives the following  practical method to compute the Fischer
automaton of an irreducible shift. We start with a strongly connected
deterministic automaton recognizing $X$ and merge the pairs of states
$p,q$ such that $L_p=L_q$. By the above result, the resulting automaton is the
Fischer automaton of $X$.

%%%%%%%%%%%%%%%%%%%%%%%%%%%%%%%%%%%%%%%%%
\subsection{Syntactic semigroup} \label{subSectionsyntacticSemigroup}
Recall that a
\emph{preorder}\index{preorder} on a set is a relation which is
reflexive and transitive. The equivalence associated to a preorder is
the equivalence relation defined by $u\equiv v$ if and only if $u\le
v$ and $v\le u$.  

Let $S$ be a semigroup. A preorder on $S$ is said to be
\emph{stable}\index{stable preorder} if $s\le s'$ implies $us\le us'$
and $su\le s'u$ for all $s,s',u\in S$.  An \emph{ordered
  semigroup}\index{ordered
  semigroup}\index{semigroup!ordered}\index{preorder!stable} $S$ is a
semigroup equipped with a stable preorder.  Any semigroup can be
considered as an ordered semigroup equipped with the equality order.

A \emph{congruence}\index{congruence} in an ordered semigroup $S$ is
the equivalence associated to a stable preorder which is coarser than
the preorder of $S$. The quotient of an ordered semigroup by a
congruence is the ordered semigroup formed by the classes of the
congruence.

The \emph{set of contexts}\index{context!of a word} of a word $u$ with
respect to a set $W\subset A^+$ is the set $\Gamma_W(u)$ of pairs of
words defined by $\Gamma_W(u)=\{(\ell,r)\in A^*\times A^*\mid \ell u
r\in W\}$.  The preorder on $A^+$ defined by $u\le_W v$ if
$\Gamma_W(u)\subset \Gamma_W(v)$ is stable and thus defines a
congruence of the semigroup $A^+$ equipped with the equality order
called the \emph{syntactic congruence}%
\index{syntactic!congruence}\index{congruence!syntactic}.
The \emph{syntactic semigroup}%
\index{syntactic!semigroup}\index{semigroup!syntactic} of a set
$W\subset A^*$ is the quotient of the semigroup $A^+$ by the syntactic
congruence.

Let $\A=(Q,E)$ be a deterministic automaton on the alphabet $A$.
Recall that for $p\in Q$ and $u\in A^+$, there is at most one path $\pi$
labeled $u$ starting in $p$. We set $p\cdot u=q$ if $q$ is
the end of $\pi$ and $p\cdot u=\emptyset$ if $\pi$ does not exist.  The
preorder defined on $A^+$ by $u\le_\A v$ if $p\cdot u\subset
p\cdot v$ for all $p\in Q$ is stable. The quotient of $A^+$ by the
congruence associated to this preorder is the \emph{transition
  semigroup}\index{semigroup!transition}\index{transition!semigroup}
of $\A$. 

The following property is standard, see the chapter of J.-\'E~Pin.
\begin{proposition}\label{propSyntIsTrans}
  The syntactic semigroup of a set $W\subset A^+$ is isomorphic to the
  transition semigroup of the minimal automaton of $W$.
\end{proposition}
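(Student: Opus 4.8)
The plan is to show that the transition preorder $\le_\A$ of the minimal automaton $\A$ of $W$ coincides, on $A^+$, with the syntactic preorder $\le_W$. The syntactic semigroup of $W$ and the transition semigroup of $\A$ are then literally the same quotient of $A^+$, and the isomorphism asserted by the proposition is the map induced by $\mathrm{id}_{A^+}$ — which is a morphism of ordered semigroups precisely because both preorders are stable, a fact already recorded in the text.

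First I would describe $\A$ explicitly, as recalled earlier: its states are the nonempty right quotients $u^{-1}W$ ($u\in A^*$), the initial state is $\varepsilon^{-1}W=W$, and the edges are $u^{-1}W\edge{a}(ua)^{-1}W$, present exactly when the target is nonempty. A short induction on $|w|$ — using that $(uw)^{-1}W\neq\emptyset$ forces $(uw')^{-1}W\neq\emptyset$ for every prefix $w'$ of $w$, since a prefix of a prefix of a word of $W$ is again such a prefix — shows that for a state $p=u^{-1}W$ one has $p\cdot w=(uw)^{-1}W$ when $(uw)^{-1}W\neq\emptyset$, and $p\cdot w$ is undefined otherwise. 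The point that has to be handled with care is the reading of the inclusion ``$p\cdot u\subseteq p\cdot v$'' in the definition of $\le_\A$: each state of $\A$ is the set of words labelling it, so $p\cdot w$ stands here for the set $(uw)^{-1}W$ itself (with the undefined case read as $\emptyset$), and the degenerate quotients $\ell^{-1}W=\emptyset$ contribute nothing.

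With this in hand the verification is immediate. Since every state of $\A$ has the form $\ell^{-1}W$, the relation $v\le_\A w$ says exactly that $(\ell v)^{-1}W\subseteq(\ell w)^{-1}W$ for all $\ell\in A^*$; unfolding membership ($r\in(\ell v)^{-1}W$ iff $\ell v r\in W$) turns this into the assertion that $\ell v r\in W$ implies $\ell w r\in W$ for all $\ell,r\in A^*$, i.e.\ $\Gamma_W(v)\subseteq\Gamma_W(w)$, i.e.\ $v\le_W w$. Hence $\le_\A$ and $\le_W$ agree on $A^+$, so they induce the same congruence and the same order on the quotient, which is the proposition. The argument is just a chain of definition-unfoldings, so I do not expect a real obstacle; the only genuine subtlety, flagged above, is adopting the correct interpretation of $\subseteq$ in the transition preorder (inclusion of the states viewed as the right quotients they are, not as singletons inside the state set) and discarding the empty right quotients — once these are pinned down, both directions of $v\le_\A w\Leftrightarrow v\le_W w$ drop out of the same one-line computation.
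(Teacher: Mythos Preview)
The paper does not prove this proposition; it cites Pin's chapter and calls the result standard. Your argument does yield the semigroup isomorphism, but the route has a real gap. You claim that the preorders $\le_\A$ and $\le_W$ coincide, and to obtain this you reinterpret $p\cdot u\subset p\cdot v$ as inclusion of the right quotients $(\ell u)^{-1}W\subset(\ell v)^{-1}W$. That is not the paper's definition: there $p\cdot u$ is either a single state or $\emptyset$, and the inclusion is taken as subsets of the state set $Q$, so $u\le_\A v$ says that for each $p$ either $p\cdot u$ is undefined or $p\cdot u=p\cdot v$. Under the paper's reading the two preorders differ in general. Take $A=\{a,b\}$ and $W=\{a,aa,ab\}$: the minimal automaton has states $q_0=W$, $q_1=\{\varepsilon,a,b\}$, $q_2=\{\varepsilon\}$, with $q_0\cdot a=q_1$ and $q_0\cdot ab=q_2$. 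Then $ab\le_W a$ since $\Gamma_W(ab)=\{(\varepsilon,\varepsilon)\}\subset\Gamma_W(a)$, yet $ab\not\le_\A a$ since $q_0\cdot ab=\{q_2\}\not\subset\{q_1\}=q_0\cdot a$. Your ``correct interpretation'' is therefore a change of definition, not a clarification, and it would make $\le_\A$ depend on the particular naming of states rather than on the automaton itself.

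What does hold, and what suffices for the proposition read as a plain semigroup isomorphism, is that the associated \emph{equivalences} agree: $u\equiv_\A v$ means $p\cdot u=p\cdot v$ for every state $p=\ell^{-1}W$, i.e.\ $(\ell u)^{-1}W=(\ell v)^{-1}W$ for every $\ell$ with $\ell^{-1}W\ne\emptyset$, hence for every $\ell$; this is exactly $\Gamma_W(u)=\Gamma_W(v)$, i.e.\ $u\equiv_W v$. Your own computation already contains this once you replace the inclusions by equalities throughout and drop the preorder claim.
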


The \emph{syntactic semigroup}\index{syntactic
  semigroup}\index{semigroup!syntactic} of a shift space $X$ is by
definition the syntactic semigroup of $\B(X)$.

\begin{proposition}\label{BBP:syntacticSemigroup}
  Let $X$ be a sofic shift and let $S$ be its syntactic semigroup.
  The transition semigroup of the Krieger automaton of $X$ is
  isomorphic to $S$. Moreover, if $X$ is irreducible, then it is
  isomorphic to the transition semigroup of its Fischer automaton.
\end{proposition}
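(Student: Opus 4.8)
The plan is to route everything through the minimal automaton of $\B(X)$, call it $\A$. By Proposition~\ref{propSyntIsTrans} the syntactic semigroup $S$ is (isomorphic to) the transition semigroup of $\A$, so I would prove the proposition by exhibiting isomorphisms between the transition semigroup of $\A$ and those of the Krieger automaton $\K$ and, in the irreducible case, the Fischer automaton $\F$. The starting point is that $\K$ is, up to isomorphism, a subautomaton of $\A$ (Proposition~\ref{propAlfredo}), and that $\F$, being a minimal strongly connected component of $\K$ and hence a strongly connected subautomaton of it, is a subautomaton of $\K$, hence of $\A$. In particular the states of $\K$ and of $\F$ are nonempty right contexts $w^{-1}\B(X)$ with $w\in\B(X)$.

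For any subautomaton $\C$ of $\A$, restricting the action of $A^*$ to the states of $\C$ defines a surjective morphism of ordered semigroups $\pi$ from the transition semigroup of $\A$ onto that of $\C$: it is well defined and a semigroup morphism because a subautomaton is closed under transitions, so $p\cdot u$ does not depend on whether it is computed in $\C$ or in $\A$ (for $p$ a state of $\C$); it is surjective because the transition semigroup of $\C$ is generated by the transformations induced by letters; and it is monotone because, for the same reason, the right language $L_q$ of a state $q$ of $\C$ is the same in $\C$ and in $\A$. Hence the whole proposition reduces to showing that, for $\C\in\{\K,\F\}$, this $\pi$ is injective and reflects the order, so that it is an isomorphism of ordered semigroups; composing with $S\cong T(\A)$ then yields $S\cong T(\K)$ and, in the irreducible case, $S\cong T(\F)$.

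The crux is the order-reflection. Concretely, I would show: if $u,v\in A^+$ satisfy $L_{p\cdot u}\subseteq L_{p\cdot v}$ for every state $p$ of $\C$, then $\ell u r\in\B(X)$ implies $\ell v r\in\B(X)$ for all $\ell,r\in A^*$ — and this last condition is precisely the syntactic preorder $u\le_X v$, which by Proposition~\ref{propSyntIsTrans} is the preorder of $T(\A)$, so reflection follows. The property of $\C$ that makes this work is that $\C$ \emph{recognizes} $X$: the Krieger automaton does by Proposition~\ref{propKriegerReduced}, and the Fischer automaton of an irreducible sofic shift does by the corresponding proposition. From $L_\C=X$ it follows that every block of $X$ labels a finite path in $\C$; and conversely, since each state of $\C$ is a nonempty right context $w^{-1}\B(X)$ with $w\in\B(X)$, any word labeling a finite path of $\C$ is a factor of such a $w$, hence itself a block of $X$. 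Now, given $\ell u r\in\B(X)$, fix a path $p_0\edge{\ell}p_1\edge{u}p_2\edge{r}p_3$ in $\C$. Then $p_1\cdot u=p_2$, so $r\in L_{p_2}=L_{p_1\cdot u}\subseteq L_{p_1\cdot v}$; thus $p_1\cdot v$ is defined and $r$ labels a path out of it, so $\ell v r$ labels a path $p_0\edge{\ell}p_1\edge{v}(p_1\cdot v)\edge{r}\cdots$ in $\C$, whence $\ell v r\in\B(X)$. This gives $u\le_X v$, completing the argument.

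I expect this order-reflection step to be the only nonroutine part — it is the one place where a genuine property of the Krieger (resp. Fischer) automaton, namely recognition of $X$, is used. The surjectivity, well-definedness and monotonicity of $\pi$, the reduction of the whole claim to a statement about $\pi$, and the transitivity of the subautomaton relation are all straightforward, provided one keeps in mind that a subautomaton is closed under transitions (so right languages of its states are computed unambiguously) and notes that $p_1\cdot v=\emptyset$ cannot happen here because $L_{p_1\cdot v}\supseteq L_{p_2}\ni r$.
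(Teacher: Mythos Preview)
Your proof is correct and follows essentially the same approach as the paper: both compare the preorders $\le_\A$, $\le_\K$, $\le_{\F}$ via the subautomaton relation (Proposition~\ref{propAlfredo}), the easy direction being restriction and the nontrivial direction showing that $u\le_\C v$ forces $u\le_{\B(X)} v$ by tracing a path labeled $\ell u r$ in $\C$. Your treatment unifies the Krieger and Fischer cases by invoking the recognition propositions for $\K$ and $\F$, whereas the paper handles them separately (using an infinite extension $y\cdot\ell urz\in X$ for $\K$ and irreducibility to reach $\F$ via a prefix $ws$ for $\F$), but the underlying argument is the same.
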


\begin{proof}
  Let $\A$ be the minimal automaton of $\B(X)$, and let $\K$ be the
  Krieger automaton of $X$. We have to show that for any $u,v\in A^+$,
  one has $u\le_\A v$ if and only if $u\le_{\K}v$.  Since, by
  Proposition~\ref{propAlfredo}, $\K$ is isomorphic to a subautomaton
  of $\A$, the direct implication is clear. Indeed, if $p$ is a state
  of $\K$, then $L_p(\K)$ is equal to the set $L_p(\A)$. Consequently,
  if $u\le_\A v$ then $u\le_\K v$.  Conversely, suppose that
  $u\le_{\K}v$. We prove that $u\le_{\B(X)} v$. For this, let
  $(\ell,r)\in \Gamma_{\B(X)}(u)$. Then $\ell ur\in\B(X)$.  Then
  $y\cdot\ell urz\in X$ for some $y\in A^{-\N}$ and $z\in A^{\N}$. But
  since $C_X(y\ell u)\subset C_X(y\ell v)$, this implies $rz\in C_X(y\ell v)$
  and thus $\ell vr\in\B(X)$. Thus $u\le_{\B(X)} v$ which implies
  $u\le_\A v$.

Next, suppose that $X$ is irreducible. We have to show that $u\le_\A v$ if and only if
$u\le_{\F(X)}v$. Since
$\F(X)$ is a subautomaton of $\K(X)$ and $\K(X)$ is a subautomaton of
 $\A$, the direct implication is
clear. Conversely, assume that
$u\le_{\F(X)}v$. Suppose that $\ell u r\in \B(X)$.
Let $i$ be the initial state of $\A$ and let
$w$ be such that $i\cdot w$ is a state of $\F(X)$. Since $X$ is
irreducible, there is a word $s$ such that $ws\ell ur\in\B(X)$. But
then $i\cdot ws\ell ur\ne\emptyset$ implies $i\cdot ws\ell
vr\ne\emptyset$.
Thus $\ell vr\in\B(X)$. This shows that
 $u\le_{\B(X)}v$ and thus $u\le_\A v$.
\end{proof}

%%%%%%%%%%%%%%%%%%%%%%%%%%%%%%%%%%%%%%%%%%%%%%%%%%
%
%   Symbolic conjugacy
%
%%%%%%%%%%%%%%%%%%%%%%%%%%%%%%%%%%%%%%%%%%%%%%%%%

\section{Symbolic conjugacy}\label{sectionSymbolicConjugacy}

This section is concerned with a new notion of conjugacy between
automata called symbolic conjugacy. It extends the notion of labeled
conjugacy and captures the fact that the automata may be over
different alphabets. The table below summarizes the various notions.

\begin{center}
\begin{tabular}[c]{|l|l|l|}\hline
object type&isomorphism&elementary transformation\\\hline
shift spaces&conjugacy&split/merge\\
edge shifts&conjugacy&edge split/merge\\
integer matrices&strong shift equivalence&elementary equivalence\\
automata (same alphabet)&labeled conjugacy&labeled split/merge\\
automata &symbolic conjugacy&split/merge\\
alphabetic matrices&symbolic strong shift &elementary
symbolic \\\hline
\end{tabular}
\end{center}

There are two main results in this section. Theorem~\ref{TheoremNasu}
due to Nasu is a version of the Classification Theorem for sofic
shifts. It implies in particular that conjugate sofic shifts have
symbolic conjugate Krieger or Fisher automata.The proof uses the
notion of bipartite automaton, which corresponds to the symbolic
elementary equivalence of adjacency matrices.
Theorem~\ref{TheoremHamachiNasu} is due to Hamachi and Nasu: it
characterizes symbolic conjugate automata by means of their adjacency
matrices.

In this section, we will use for convenience automata in which several
edges with the same source and target can have the same label. Formally,
such an automaton is a pair $\A=(G,\lambda)$ of a graph $G=(Q,\E)$ and a map
assigning to each edge $e\in\E$ of  a label $\lambda(e)\in A$. The adjacency
matrix of $\A$ is the $Q\times Q$-matrix $M(\A)$ with elements in
$\N\langle A\rangle$ defined
by 
\begin{equation}
(M(\A)_{pq},a)=\Card\{e\in\E\mid \lambda(e)=a\}.
\end{equation} 
Note that $M(\A)$ is alphabetic but may have arbitrary nonnegative
coefficients.
The advantage of this version of automata is that for any alphabetic
$Q\times Q$-matrix $M$ there is an automaton $\A$ such that $M(\A)=M$.

We still denote by $X_\A$ the edge shift $X_G$ and by $L_\A$ the set
of labels of infinite paths in $G$.
\intertitre{Symbolic conjugate automata} 
Let $\A,\B$ be two automata.
A \emph{symbolic conjugacy}% 
\index{symbolic!conjugacy of automata}%
\index{automaton!symbolic conjugate}
\index{conjugate automata!symbolic}
from $\A$ onto $\B$ is a pair $(\varphi,\psi)$ of
conjugacies $\varphi:X_\A\rightarrow X_\B$ and
$\psi:L_\A\rightarrow L_\B$ such that the following diagram is
commutative.
\begin{figure}[hbt]
\centering
\gasset{Nframe=n}
\begin{picture}(20,20)
\node(Xa)(0,20){$X_\A$}\node(Xb)(20,20){$X_\B$}
\drawedge(Xa,Xb){$\varphi$}
\node(La)(0,0){$L_\A$}\node(Lb)(20,0){$L_\B$}
\drawedge(Xa,La){$\lambda_\A$}\drawedge(Xb,Lb){$\lambda_\B$}
\drawedge(La,Lb){$\psi$}
\end{picture}
\end{figure}

%%%%%%%%%%%%%%%%%%%%%%%%%%%%%%%%%%%%%%%%%%%%ùù
\subsection{Splitting and merging maps}

Let $A,B$ be two alphabets and let $f:A\rightarrow B$ be a map from
$A$ onto $B$.  Let $X$ be a shift space on the alphabet $A$.  We
consider the set of words $A'=\{f(a_1)a_2\mid a_1a_2\in\B_2(X)\}$ as a
new alphabet.  Let $g:\B_2(X)\rightarrow A'$ be the $2$-block
substitution defined by $g(a_1a_2)=f(a_1)a_2$.

The \emph{in-splitting map}%
\index{in-splitting map}\index{map!in-splitting} defined on $X$ and
relative to $f$ or to $g$ is the sliding block map $g_\infty^{1,0}$
corresponding to $g$. It is a conjugacy from $X$ onto its image by
$X'=g_\infty^{1,0}(X)$ since its inverse is $1$-block.  The shift
space $X'$, is called the \emph{in-splitting}%
\index{shift space!in-splitting} of $X$, relative to $f$ or $g$.  The
inverse of an in-splitting map is called an \emph{in-merging map}\index{map!in-merging}.

In addition, any renaming of the alphabet of a shift space is also
considered to be an in-splitting map (and an in-merging map).

\begin{example}
  Let $A=B$ and let $f$ be the identity on $A$. The out-splitting of a shift
  $X$ relative to $f$ is the second
  higher block shift of $X$.
\end{example}

The following proposition relates splitting maps to edge splittings as
defined in Section~\ref{subSectionConjugacy}.

\begin{proposition}\label{propSymbolicSplit}
  An in-splitting map on an edge shift  is an edge in-splitting
  map, and conversely.
\end{proposition}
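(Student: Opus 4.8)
The plan is to translate directly between the ``symbolic'' description of in-splitting (via a map $f$ on the alphabet) and the ``graph-theoretic'' one (state splitting). Throughout I may assume the graphs are essential, since both an edge shift and the notion of in-merge depend only on the essential part.

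\textbf{Forward direction.} Let $G=(Q,\E)$ be a graph, let $f\colon\E\to B$ be onto, and let $g$ be the $2$-block substitution $g(e_1e_2)=f(e_1)e_2$ on $\B_2(X_G)$; note that $e_1e_2\in\B_2(X_G)$ simply means $t(e_1)=i(e_2)$. I would define a graph $H=(R,\F)$ whose edge set is the alphabet $A'=g(\B_2(X_G))$ of the in-splitting, declaring the edge $(f(e'),e)$ to go from the state $(f(e'),i(e))$ to the state $(f(e),t(e))$; let $R$ be the set of states so produced, $k(b,q)=q$ and $h(b,e)=e$. The first step is to verify that $(h,k)$ is a graph morphism and, with the partition $\E_p^q(t)=\{e\in\E_p^q : f(e)=c\}$ for $t=(c,q)\in k^{-1}(q)$, that it is an in-merge of $H$ onto $G$ in the sense of Section~\ref{subSectionConjugacy}; this is a routine check against the definitions, essentiality supplying surjectivity of $h$ and $k$ and also the fact that $h$ restricts to a bijection on each $\F_r^t$. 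The second step is to observe that $h_\infty\circ g_\infty^{1,0}=\Id_{X_G}$: a biinfinite path $\ldots e_{-1}e_0e_1\ldots$ of $X_G$ is sent by $g_\infty^{1,0}$ to the sequence whose $i$th letter is $(f(e_{i-1}),e_i)$, which one checks is a biinfinite path in $H$, and $h_\infty$ reads off $e_i$ in position $i$. Since $h_\infty$ is a conjugacy by Proposition~\ref{inMergingMap}, it follows that $g_\infty^{1,0}=h_\infty^{-1}$, so $g_\infty^{1,0}(X_G)=X_H$ and $g_\infty^{1,0}$ is the edge in-splitting map inverse to $h_\infty$.

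\textbf{Converse direction.} Let $(h,k)$ be an in-merge of $H=(R,\F)$ onto $G=(Q,\E)$, so $h_\infty^{-1}\colon X_G\to X_H$ is an edge in-splitting map. By the remark following the definition of in-split, all edges of $h^{-1}(e)$ share a terminal vertex in $H$, which I call $\tau(e)$; one has $k(\tau(e))=t(e)$. I would take $f=\tau\colon\E\to R$ (restricted to its image). The crux is to check that the assignment $\theta$ sending a symbol $(\tau(e_1),e_2)\in A'$ (so $t(e_1)=i(e_2)$) to the unique edge of $h^{-1}(e_2)$ with source $\tau(e_1)$ is a well-defined bijection $A'\to\F$: well-definedness and injectivity come from $\tau(e_1)\in k^{-1}(i(e_2))$ together with the bijectivity of $h$ on $\F_{\tau(e_1)}^{\tau(e_2)}$, while surjectivity uses essentiality to write any state $r$ of $H$ as $\tau(h(f'))$ for an edge $f'$ ending at $r$. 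Granting this, for $x=\ldots e_{-1}e_0e_1\ldots\in X_G$ the lift $h_\infty^{-1}(x)$ has in position $i$ the unique edge of $h^{-1}(e_i)$ whose source equals the terminal vertex of its predecessor, namely $\tau(e_{i-1})$; under $\theta$ this is exactly $g_\infty^{1,0}(x)_i$. Hence $h_\infty^{-1}$ is, after the alphabet renaming $\theta$, the in-splitting map relative to $\tau$, hence an in-splitting map.

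\textbf{Expected main obstacle.} The only delicate point is the bookkeeping in the converse direction: one has to see that the copy of a state entered by a lifted edge is governed by the single preceding edge in $G$ (so that $h_\infty^{-1}$ has memory $1$ and anticipation $0$, matching $g_\infty^{1,0}$, rather than the reverse), and that the auxiliary alphabet $A'$ appearing in the definition of an in-splitting map can be identified coherently with the edge set $\F$ of $H$. Both reduce to the bijectivity of $h$ on each $\F_r^t$ and to the partition property of an in-merge, so no idea beyond the definitions is really needed; the proposition amounts to saying that the two descriptions of state splitting are the same operation, and the symmetric statement for out-splittings follows by the same argument with the roles of $i$ and $t$ exchanged.
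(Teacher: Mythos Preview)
Your proposal is correct and follows essentially the same approach as the paper: in both directions you build the same auxiliary graph on states $B\times Q$ (respectively define $f=\tau$ via the common terminal vertex of $h^{-1}(e)$) and identify $g_\infty^{1,0}$ with $h_\infty^{-1}$ through the obvious bijection between the symbolic alphabet $A'$ and the edge set of the split graph. Your write-up is in fact slightly more careful than the paper's about essentiality and about why $\theta$ is a bijection, but the underlying argument is the same.
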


\begin{proof}
  Let first $G=(Q,\E)$ be a graph, and let $f:\E\rightarrow I$ be a
  map from $\E$ onto a set $I$. Set $\E'=\{f(e_1)e_2\mid
  e_1e_2\in\B_2(X_G)\}$. Let $g:\B_2(X_G)\rightarrow \E'$ be the
  $2$-block substitution defined by $g(e_1e_2)=f(e_1)e_2$.  Let
  $G'=(Q',\E')$ be the graph on the set of states $Q'=I\times Q$ defined
  for $e'=f(e_1)e_2$ by $i(e')=(f(e_1),i(e_2))$ and
  $t(e')=(f(e_2),t(e_2))$. Define $h:\E'\rightarrow \E$ and
  $k:Q'\rightarrow Q$ by $h(f(e_1)e_2)=e_2$ for $e_1e_2\in \B_2(X_G)$
  and $k(i,q)=q$ for $(i,q)\in I\times Q$. Then the pair $(h,k)$ is an
  in-merge from $G'$ onto $G$ and $h_\infty$ is the inverse of
  $g_\infty^{1,0}$. Indeed, one may verify that $(h,k)$ is a graph
morphism from $G'$ onto $G$. Next it is an in-merge because for
each $p,q\in Q$, the partition $(\E_p^q(t))_{t\in k^{-1}(q)}$ of
  $\E_p^q$
is defined by $\E_p^q(i,q)=E_p^q\cap f^{-1}(i)$.
  
  Conversely, set $G=(Q,\E)$ and $G'=(Q',\E')$. Let $(h,k)$ be an
  in-merge from $G'$ onto $G$. Consider the map $f:\E\rightarrow Q'$
  defined by $f(e)=r$ if $r$ is the common end of the edges in
  $h^{-1}(e)$.  The map $\alpha$ from $\E'$ to $Q'\times \E$ defined by
  $\alpha(i)=(r,h(i))$ where $r$ is the origin of $i$ is a bijection
  by definition of an in-merge.
  
  Let us show that, up to the bijection $\alpha$, the in-splitting map
  relative to $f$ is inverse of the map $h_\infty$. For $e_1,e_2\in
  \E$, let $r=f(e_1)$ and $e'=\alpha^{-1}(r,e_2)$. Then $h(e')=e_2$
  and thus $h_\infty$ is the inverse of the map $g_\infty^{1,0}$
  corresponding to the $2$-block substitution $g(e_1e_2)=(r,e_2)$.

\end{proof}

Symmetrically an \emph{out-splitting map}
\index{out-splitting map}
\index{map!out-splitting}
 is defined by the substitution
$g(ab)=af(b)$. Its inverse is an out-merging map.

We use the term  splitting to mean either a 
in-splitting or out-splitting. The same convention holds for a merging.

The following result, from~\cite{Nasu:1986}, is a generalization of
the Decomposition 
Theorem (Theorem~\ref{ShiftDecompositionTheorem}) to
arbitrary shift spaces.

\begin{theorem}\label{SymbolicDecompositionTheorem}
  Any conjugacy between shift spaces is a composition of splitting and
  merging maps.
\end{theorem}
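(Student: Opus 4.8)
The plan is to adapt, to arbitrary shift spaces, the proof of the Decomposition Theorem for edge shifts (Theorem~\ref{ShiftDecompositionTheorem}), whose inductive core is Lemma~\ref{lemmaDecomp}. Let $\varphi\colon X\to Y$ be a conjugacy, with $X$ over $A$ and $Y$ over $B$; after discarding the letters that do not occur we may assume that the coding map $f$ produced below is onto. The argument runs in three phases.

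First I would reduce to the case where $\varphi$ is a $1$-block map. If $\varphi$ has memory $\mu$ and anticipation $\nu$, apply to $X$ a sequence of $\mu$ in-splitting maps followed by $\nu$ out-splitting maps, each taken relative to an identity map of the alphabet; each such step is, up to a renaming, the passage to a second higher block shift (cf.\ the example above), so after them $X$ has been replaced by the shift whose letters are the $(\mu+\nu+1)$-blocks of $X$, presented so that $\varphi$ reads off a single letter. On this recoded shift $\varphi$ is a $1$-block conjugacy $f_\infty$, with $f\colon A'\to B$; since in- and out-splitting maps and their inverse merging maps are among the elementary transformations of the statement, it now suffices to decompose a $1$-block conjugacy $\varphi=f_\infty$ whose inverse is $h_\infty^{[m,n]}$.

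Second comes the core step, an analogue of Lemma~\ref{lemmaDecomp}: when $m\ge 1$ one lowers the memory of $\varphi^{-1}$ by one at the cost of a single in-splitting on each side (and symmetrically for the anticipation, with out-splittings). Let $\alpha\colon X\to X'$ be the in-splitting map relative to $f$, so the letters of $X'$ are the pairs $f(a_1)a_2$ with $a_1a_2\in\B_2(X)$ and $\alpha(x)_i=(f(x_{i-1}),x_i)$, and let $\beta\colon Y\to Y'$ be the in-splitting map relative to the identity of $B$, so $\beta(y)_i=(y_{i-1},y_i)$; set $\varphi'=\beta\varphi\alpha^{-1}$. Writing $\pi_1,\pi_2$ for the coordinate projections of these pairs, a short computation with indices gives $\varphi'(x')_i=(\pi_1(x'_i),f(\pi_2(x'_i)))$, so $\varphi'$ is a $1$-block conjugacy from $X'$ onto $Y'$; and $(\varphi')^{-1}(y')_i=(\pi_1(y'_i),\,h(y_{i-m}\cdots y_{i+n}))$, where, using $y_{i-m}=\pi_2(y'_{i-m})=\pi_1(y'_{i-m+1})$, one sees that $(\varphi')^{-1}$ has memory $m-1$ and anticipation $n$. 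By construction $\beta\varphi=\varphi'\alpha$, so $\varphi=\beta^{-1}\varphi'\alpha$ is an in-splitting map followed by $\varphi'$ followed by an in-merging map. Iterating this $m$ times, and then the out-splitting version $n$ times, reduces to a $1$-block conjugacy whose inverse is also $1$-block.

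Third, the base case: if $f_\infty$ and $g_\infty$ are mutually inverse $1$-block conjugacies then $g\circ f$ and $f\circ g$ are identities on the letters that occur, so $f$ is a bijection of the alphabets and $f_\infty$ is a renaming, which by convention is an in-splitting (and in-merging) map. Composing all the maps produced in the three phases exhibits the original $\varphi$ as a composition of splitting maps followed by merging maps. The step I expect to be delicate is the memory-reduction lemma of the second phase: the two in-splittings must be taken \emph{differently} on the two sides — $X$ split by the coding map $f$, $Y$ only by the identity — precisely so that $\varphi'$ stays $1$-block while the extra left-context now carried by the letters of $Y$ makes the decoding window drop its leftmost letter; splitting both sides by the identity, for instance, would leave the memory equal to $m$. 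Everything else is routine index bookkeeping, parallel to the edge-shift Lemma~\ref{lemmaDecomp} and, through Proposition~\ref{propSymbolicSplit}, essentially a restatement of it.
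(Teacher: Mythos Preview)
Your proof is correct and follows essentially the same route as the paper: reduce to a $1$-block conjugacy via higher block shifts, then iterate a memory-reduction lemma in which $X$ is in-split relative to the coding map $f$ while $Y$ is in-split relative to the identity (i.e., passed to $Y^{[2]}$), and finish with the dual out-splitting version and the renaming base case. Your identification of the asymmetric choice of splittings as the crux is exactly the content of the paper's Lemma~\ref{lemmaDecompSymbol}, and you have in fact supplied more of the index bookkeeping than the paper does.
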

The proof is similar to the proof of Theorem
~\ref{ShiftDecompositionTheorem}. It relies on the following lemma,
similar to Lemma~\ref{lemmaDecomp}.

\begin{lemma}\label{lemmaDecompSymbol}
  Let $\varphi:X\rightarrow Y$ be a $1$-block conjugacy whose inverse
  has memory $m\ge 1$ and anticipation $n\ge 0$. There are
  in-splitting maps from $X,Y$ to $\tilde{X},\tilde{Y}$ respectively
  such that the $1$-block conjugacy $\tilde{\varphi}$ making the diagram
  below commutative has an inverse with memory $m-1$ and anticipation $n$.
\begin{figure}[hbt]
\centering
\gasset{Nframe=n}
\begin{picture}(20,20)
\node(G)(0,20){$X$}\node(tG)(20,20){$\tilde{X}$}
\node(H)(0,0){$Y$}\node(tH)(20,0){$\tilde{Y}$}
\drawedge(G,tG){}\drawedge(H,tH){}
\drawedge(G,H){$\varphi$}\drawedge(tG,tH){$\tilde{\varphi}$}
\end{picture}
\end{figure}
\end{lemma}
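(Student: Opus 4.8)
The plan is to mimic the proof of Lemma~\ref{lemmaDecomp}, but working directly with symbol splittings of shift spaces instead of in-splittings of graphs. Let $X$ be on the alphabet $A$ and $Y$ on the alphabet $B$, and write $\varphi = f_\infty^{[0,0]}$ for the $1$-block substitution $f : \B_1(X) \to B$; say $\varphi^{-1} = g_\infty^{[m,n]}$ for a $(m+n+1)$-block substitution $g$ with $m \ge 1$. The idea is that the memory of $\varphi^{-1}$ comes from needing to know one extra symbol to the left in $X$, so I should refine the alphabet of $X$ by remembering that extra symbol, and correspondingly refine the alphabet of $Y$ by remembering the one symbol of $X$ that $\varphi^{-1}$ can already recover from zero memory. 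Concretely, define a map $\rho : \B_2(X) \to A$ (or onto a quotient thereof) that records which ``refined letter'' a pair $x_{i-1}x_i$ determines; the in-splitting map of $X$ relative to $\rho$ produces $\tilde X$, whose letters are pairs carrying the needed extra left-context. For $Y$, I would use the fact that $\varphi^{-1}$ has anticipation $n$ and memory $m \ge 1$: from a block $y_{i}\cdots y_{i+n}y_{i+n+1}$ one recovers $x_i$ as well as a bit of information about $x_{i-1}$ modulo the equivalence that $g$ cannot distinguish; I would define the splitting of $Y$ by a $2$-block substitution that attaches to $y_i$ the relevant decoded information about the symbol of $X$ sitting ``just past the left edge,'' i.e. essentially the letter $f(x_{i-1})$ together with whatever finer data the refinement of $X$ demanded. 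The $1$-block map $\tilde\varphi$ is then forced: on a refined letter of $\tilde X$ it outputs the refined letter of $\tilde Y$ obtained by applying $f$ to the ``current'' component and copying over the remembered component.

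The key steps, in order, would be: (1) set up the refined alphabets $\tilde A$ and $\tilde B$ via explicit $2$-block substitutions $g_X : \B_2(X) \to \tilde A$ and $g_Y : \B_2(Y) \to \tilde B$, and verify that the associated in-splitting maps $\iota_X : X \to \tilde X$ and $\iota_Y : Y \to \tilde Y$ are conjugacies (immediate, their inverses are $1$-block, as recalled in the definition of in-splitting map); (2) define the $1$-block substitution $\tilde f : \B_1(\tilde X) \to \tilde B$ and check that $\tilde\varphi := \tilde f_\infty^{[0,0]}$ makes the square commute, i.e. $\iota_Y \circ \varphi = \tilde\varphi \circ \iota_X$ — this is a symbol-by-symbol identity once the bookkeeping is set up correctly, because both sides decode $\tilde X$ to $\tilde Y$ by the ``same'' rule; (3) show $\tilde\varphi$ is $1$-block (by construction) and a conjugacy (it is a composition of conjugacies: $\iota_Y \circ \varphi \circ \iota_X^{-1}$); and (4) the crux, compute the memory and anticipation of $\tilde\varphi^{-1}$ and show they are $m-1$ and $n$. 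For step (4) I would write $\tilde\varphi^{-1} = \iota_X \circ \varphi^{-1} \circ \iota_Y^{-1}$ and trace through: $\iota_Y^{-1}$ is $1$-block, $\varphi^{-1}$ reads $m$ symbols left and $n$ right, but now the refined letter of $\tilde Y$ already carries the left-context information that accounted for one unit of memory, so the composite only needs to look $m-1$ symbols to the left; the anticipation is unaffected because none of the refinements touched the right side.

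The main obstacle I expect is step (4) — honestly verifying that the memory genuinely drops to $m-1$ rather than just heuristically ``absorbing'' one symbol. The danger is that the refinement of $Y$ I attach might not literally encode $f(x_{i-1})$ but only the coarser data $g$ can use, and one has to check that this coarser data is exactly what $\varphi^{-1}$ consults in its leftmost position, so that after the refinement that position becomes redundant. This requires being careful about what ``the refinement of $X$ demanded'' means: the refined alphabet of $X$ must be designed so that knowing the refined letter at position $i$ is equivalent to knowing $(x_{i-1}, x_i)$ up to the equivalence relation ``$g$ gives the same value on all $(m+n+1)$-blocks agreeing except possibly at their first symbol, where they agree modulo this equivalence'' — and one must then check this relation is a congruence in the appropriate sense so that $\tilde X$ is well-defined and $\tilde\varphi$ respects it. Once the definitions are pinned down precisely, the commutativity and the block-length bookkeeping are routine, exactly as in the proof of Lemma~\ref{lemmaDecomp}; indeed the whole point of the lemma's phrasing is that it is the ``symbolic'' shadow of that graph-theoretic statement, and Proposition~\ref{propSymbolicSplit} guarantees the two notions of splitting agree on edge shifts, which is a useful consistency check on the construction.
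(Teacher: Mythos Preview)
Your overall strategy --- in-split both $X$ and $Y$ so that one unit of left context is absorbed into the alphabet --- is exactly right, and matches the paper. But you are over-engineering the construction and thereby creating the very obstacle you flag in step~(4).

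The paper's choices are much simpler than yours and make all your worries about quotients, congruences, and ``finer data'' evaporate. For $X$, in-split relative to the map $h$ itself (where $\varphi = h_\infty$): the new alphabet is $A' = \{h(a_1)a_2 \mid a_1a_2 \in \B_2(X)\}$, so the refined letter at position $i$ is literally $y_{i-1}x_i$. For $Y$, in-split relative to the \emph{identity}, i.e.\ take $\tilde Y = Y^{[2]}$: the refined letter at position $i$ is $y_{i-1}y_i$. The $1$-block map $\tilde h : A' \to \B_2(Y)$ is then $\tilde h(h(a_1)a_2) = h(a_1)h(a_2)$, and the diagram commutes on the nose.

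The point you are circling around but not landing on is that the ``decoded information about $x_{i-1}$'' you want to attach to $y_i$ is nothing more than $h(x_{i-1}) = y_{i-1}$, which is already sitting one step to the left in $Y$ --- no appeal to $\varphi^{-1}$ is needed to extract it. Once you see this, the refinement of $Y$ is forced to be $Y^{[2]}$, and the refinement of $X$ must record exactly $h(x_{i-1})$ (not $x_{i-1}$ itself, and not some quotient) so that $\tilde\varphi$ can be $1$-block. The memory drop in step~(4) is then a one-line count: with memory $m-1$ and anticipation $n$ in $\tilde Y$ you see $y_{i-m},\ldots,y_{i+n}$, hence recover $x_i$ via the original inverse block map, and you already hold $y_{i-1} = h(x_{i-1})$ from the current refined letter, so you output $h(x_{i-1})x_i \in A'$. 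No congruence check is required.
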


\begin{proof}
  Let $A,B$ the alphabets of $X$ and $Y$ respectively. Let
  $h:A\rightarrow B$ be the $1$-block substitution such that
  $\varphi=h_\infty$. Let $\tilde{X}$ be the in-splitting of $X$
  relative to the map $h$. Set $A'=\{h(a_1)a_2\mid a_1a_2\in
  \B_2(X)\}$.  Let $\tilde{Y}=Y^{[2]}$ be the second higher block
  shift of $Y$ and let $B'=\B_2(Y)$.  Let $\tilde{h}:A'\rightarrow B'$
  be the $1$-block substitution defined by
  $\tilde{h}(h(a_1)a_2)=h(a_1)h(a_2)$. Then the $1$-block map
  $\tilde{\varphi}=\tilde{h}_\infty$ has the required properties.
\end{proof}
Lemma~\ref{lemmaDecompSymbol} has a dual where $\varphi$ is a
$1$-block map whose inverse has memory $m\ge 0$ and anticipation $n\ge
1$ and where in-splits are replaced by out-splits.

\begin{proof}[Proof of Theorem~\ref{SymbolicDecompositionTheorem}]
  Let $\varphi:X\rightarrow Y$ be a conjugacy from $X$ onto $Y$.
  Replacing $X$ by a higher block shift, we may assume that $\varphi$
  is a $1$-block map. Using iteratively Lemma~\ref{lemmaDecompSymbol},
  we can replace $\varphi$ by a $1$-block map whose inverse has memory
  0. Using then iteratively the dual of Lemma~\ref{lemmaDecompSymbol},
  we finally obtain a $1$-block map whose inverse is also $1$-block
  and is thus just a renaming of the symbols.
\end{proof}

%%%%%%%%%%%%%%%%%%%%%%%%%%%%%%%%%%%%%%%%%
\intertitre{Symbolic strong shift equivalence}

Let $M$ and $M'$ be two alphabetic $Q\times Q$-matrices over the
alphabets $A$ and $B$, respectively.  We say
that $M$ and $M'$ are
\emph{similar}\index{similar matrices}\index{matrix!similar} if they
are equal up to a bijection of $A$ onto $B$. We write
$M\leftrightarrow M'$ when $M$ and $M'$ are similar.  We say that two
alphabetic square
matrices $M$ and $M'$ over the alphabets $A$ and $B$ respectively are
\emph{symbolic elementary equivalent}%
\index{equivalent matrices!symbolic elementary}%
\index{matrix!symbolic elementary equivalent}%
\index{symbolic!elementary equivalent matrices} if there exist two
alphabetic 
matrices $R,S$ over the alphabets $C$ and $D$ respectively such that
\begin{displaymath}
  M\leftrightarrow RS,\quad M'\leftrightarrow SR\,.
\end{displaymath}
In this definition, the sets $CD$ and $DC$ of two letter words are
identified with  alphabets in bijection with $A$ and $B$, respectively.

We say that two matrices $M,M'$
are \emph{symbolic strong shift equivalent}%
\index{symbolic!strong shift equivalent matrices}
\index{strong shift equivalent matrices}\index{matrix!symbolic strong
  shift equivalent} \index{equivalent matrices!symbolic strong shift}
if there is a sequence $(M_0,M_1,\ldots,M_n)$ of alphabetic matrices such that $M_i$ and
$M_{i+1}$ are symbolic elementary equivalent for $0\le i<n$ with
$M_0=M$ and $M_n=M'$.

We introduce the following notion.  An automaton
$\A$ on the alphabet $A$
is said to be \emph{bipartite}
\index{bipartite automaton}
\index{automaton!bipartite}
 if there are partitions $Q=Q_1\cup Q_2$
of the set of states and $A=A_1\cup A_2$ of the alphabet such that all
edges labeled in $A_1$ go from $Q_1$ to $Q_2$ and all edges  labeled
in $A_2$ go from $Q_2$ to $Q_1$.

Let $\A$ be a bipartite automaton. Its adjacency matrix has the form
\begin{displaymath}
M(\A)=\begin{bmatrix}0&M_1\\M_2&0\end{bmatrix}
\end{displaymath}
where $M_1$ is a $Q_1\times Q_2$-matrix with elements in $\N\langle
A_1\rangle$
and $M_2$ is a $Q_2\times Q_1$-matrix with elements in $\N\langle
A_2\rangle$
The automata $\A_1$ and $\A_2$ which have $M_1M_2$ and $M_2M_1$ respectively
as adjacency matrix are called the \emph{components}
\index{components}\index{bipartite automaton!components} of $\A$
and the pair $\A_1,\A_2$ is a 
\emph{decomposition}\index{decomposition}\index{automaton!decomposition}
 of $\A$. We denote $\A=(\A_1,\A_2)$ a bipartite automaton $\A$ with
components $\A_1,\A_2$.
Note that $\A_1,\A_2$ are automata on the alphabets $A_1A_2$ and
$A_2A_1$ respectively.
\begin{proposition}\label{propBipartiteAutomata}
  Let $\A=(Q,E)$ be a bipartite  deterministic essential
  automaton.  Its components $\A_1,\A_2$ are 
  deterministic essential automata which are symbolic conjugate. If
  moreover $\A$ is
  strongly
connected (resp. reduced, resp. synchronized), then $\A_1,\A_2$ are
strongly connected (resp.reduced, resp. synchronized).
\end{proposition}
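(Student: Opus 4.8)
The plan is to work with the combinatorial unfolding of the two components. By the meaning of $M(\A_1)=M_1M_2$, an edge of $\A_1$ from $p$ to $q$ labelled $ab$ (with $a\in A_1$, $b\in A_2$) is the same datum as a path $p\edge{a}s\edge{b}q$ of length two in $\A$ with $p,q\in Q_1$ and $s\in Q_2$; symmetrically, an edge of $\A_2$ labelled $ba$ is a path $s\edge{b}p\edge{a}s'$ in $\A$ with $s,s'\in Q_2$ and $p\in Q_1$. Hence a biinfinite path in $\A_1$ (resp.\ in $\A_2$) is, after unfolding, a biinfinite path in $\A$ whose edges at even positions are labelled in $A_1$ (resp.\ in $A_2$), cut into blocks of length two. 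From this the structural claims are immediate: $\A_1$ is deterministic because a length-two path $p\edge{a}s\edge{b}q$ is determined by $p$ together with its label ($a$ forces $s=p\cdot a$, then $b$ forces $q$), and $\A_1$ is essential because, $\A$ being essential and nonempty, both $Q_1$ and $Q_2$ are nonempty and one follows an outgoing (resp.\ incoming) edge of $\A$ twice to build an outgoing (resp.\ incoming) edge of $\A_1$; the same holds for $\A_2$.

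The heart of the proof is the symbolic conjugacy. I would define $\varphi\colon X_{\A_1}\to X_{\A_2}$ by advancing the unfolded path by one step and re-cutting it: if $(c_i)_{i\in\Z}\in X_{\A_1}$ unfolds to $\cdots p_i\edge{a_i}s_i\edge{b_i}p_{i+1}\cdots$ in $\A$, set $\varphi((c_i)_i)=(d_i)_i$, where $d_i$ is the $\A_2$-edge $s_i\edge{b_i}p_{i+1}\edge{a_{i+1}}s_{i+1}$. One checks directly that $\varphi$ is a sliding block map of memory $0$ and anticipation $1$, that it commutes with the shift, that it is onto, and that its inverse (re-cut one step earlier) is again a sliding block map; hence $\varphi$ is a conjugacy. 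Since $\lambda_{\A_1}((c_i)_i)=(a_ib_i)_i$ whereas $\lambda_{\A_2}(\varphi((c_i)_i))=(b_ia_{i+1})_i$, the map $\psi$ sending $(a_ib_i)_i$ to $(b_ia_{i+1})_i$ is a sliding block conjugacy on the corresponding sequence spaces; by construction $\psi\circ\lambda_{\A_1}=\lambda_{\A_2}\circ\varphi$, and since $\varphi$ and the label maps are surjective, $\psi$ carries $L_{\A_1}$ onto $L_{\A_2}$. Thus $(\varphi,\psi)$ is a symbolic conjugacy from $\A_1$ onto $\A_2$ (the reverse direction follows by symmetry, or by inverting the pair). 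I expect the bookkeeping of indices here, together with the verification that the square commutes, to be the most delicate point.

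Finally I would carry the three optional properties through the unfolding, each by a short argument. If $\A$ is strongly connected, any $p,q\in Q_1$ are joined by a path in $\A$, which has even length since it alternates between $Q_1$ and $Q_2$, and therefore splits into $\A_1$-edges; so $\A_1$, and likewise $\A_2$, is strongly connected. If $\A$ is reduced, note that $L_p(\A_1)$, with each of its letters expanded into the corresponding two-letter word over $A$, is exactly the set of words of even length in $L_p(\A)$, and that in an essential automaton $L_p(\A)$ is determined by its even-length part (every odd-length word of $L_p(\A)$ extends, by essentiality, to an even-length one, of which it is a prefix); hence if $p\neq q$ in $Q_1$ then $L_p(\A)\neq L_q(\A)$, so the even-length parts differ, i.e.\ $L_p(\A_1)\neq L_q(\A_1)$, and similarly for $\A_2$. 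If $\A$ is synchronized, start from a synchronizing word of $\A$ and, using essentiality, prepend and/or append at most one letter each to obtain a synchronizing word $w$ of even length ending in $Q_1$ (hence also starting in $Q_1$); then $w$ alternates $A_1A_2A_1A_2\cdots$, so it is a word over the block alphabet $A_1A_2$, and through the correspondence it is synchronizing for $\A_1$, the component $\A_2$ being treated the same way. The only subtlety outside the conjugacy step is the appeal to essentiality in the last two cases to reduce to even-length words.
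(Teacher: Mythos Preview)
Your proposal is correct and follows essentially the same approach as the paper: the symbolic conjugacy is obtained by unfolding a biinfinite $\A_1$-path as an $\A$-path, shifting by one step, and re-cutting into $\A_2$-edges, while the reduced and synchronized properties are transferred by padding words to even length via essentiality. Your treatment is in fact slightly more careful than the paper's (you spell out why $\psi$ is well defined and why $\A_1$ is essential, and you avoid the apparent index typo in the paper's formula for $z_n$).
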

\begin{proof}
Let $Q=Q_1\cup Q_2$ and $A=A_1\cup A_2$ be the partitions
of the set $Q$ and the alphabet $A$ corresponding to the decomposition
$\A=(\A_1,\A_2)$.
It is clear that $\A_1,\A_2$ are  deterministic and that they are
strongly
connected if $\A$ is strongly connected.

Let $\varphi:X_{\A_1}\rightarrow X_{\A_2}$ be the conjugacy defined as follows.
For any $y=(y_n)_{n\in \Z}$ in $X_{\A_1}$ there is an $x=(x_n)_{n\in
  \Z}$ in $X_\A$ such that $y_n=x_{2n}x_{2n+1}$. Then $z=(z_n)_{n\in
  \Z}$ with
$z_n=x_{2n+1}x_{2n}$ is an element of $X_{\A_2}$. We define
$\varphi(y)=z$.
The analogous map $\psi:L_{\A_1}\rightarrow L_{\A_2}$ is such that
$(\varphi,\psi)$ is a symbolic conjugacy from $\A_1$ onto $\A_2$.

Assume that $\A$ is reduced. For $p,q\in Q_1$, there is a word
$w$ such that $w\in L_p(\A)$ and $w\notin L_q(\A)$ (or conversely). Set
$w=a_1a_2\cdots a_n$ with $a_i\in A$. If $n$ is even, then $(a_1a_2)\cdots(a_{n-1}a_n)$
is in $L_p(\A_1)$ but not in $L_q(\A_1)$. Otherwise, since $\A$
is essential, there is a letter $a_{n+1}$ such that
$wa_{n+1}$
is in $L_p(\A)$. Then $(a_1a_2)\cdots(a_na_{n+1})$ is in $L_p(\A_1)$
but not in $L_q(\A_1)$. Thus $\A_1$ is reduced. One proves in the same
way
that $\A_2$ is reduced.

Suppose finally that $\A$ is synchronized. Let $x$ be a synchronizing
word and set $x=a_1a_2\cdots a_n$ with $a_i\in A$. Suppose that all
paths
labeled $x$ end in $q\in Q_1$. Let $a_{n+1}$ be a letter such that
$q\cdot a_{n+1}\ne\emptyset$ and let $a_0$ be a letter such that
$a_0x$
is the label of at least one path. If $n$ is even,
then
$(a_1a_2)\cdots(a_{n-1}a_n)$ is synchronizing for $\A_1$ and 
$(a_0a_1)\cdots(a_na_{n+1})$ is synchronizing for $\A_2$. Otherwise,
$(a_0a_1)\cdots(a_{n-1}a_n)$ is synchronizing for $\A_1$ and 
$(a_1a_2)\cdots(a_na_{n+1})$ is synchronizing for $\A_2$.
\end{proof}
\begin{proposition}\label{propElemEquivIsSim}
Let $\A,\B$ be two automata such that $M(\A)$ and $M(\B)$ are
symbolic elementary equivalent. Then there is a bipartite automaton $\C=(\C_1,\C_2)$
such  that $M(\C_1),M(\C_2)$ are similar to $M(\A),M(\B)$ respectively.
\end{proposition}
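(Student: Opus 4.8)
The plan is to unwind the definition of symbolic elementary equivalence and read off the bipartite automaton directly from the matrices witnessing it. By hypothesis there are alphabetic matrices $R$ over an alphabet $C$ and $S$ over an alphabet $D$ such that $M(\A)\leftrightarrow RS$ and $M(\B)\leftrightarrow SR$; since both products must be defined and square, $R$ is a $P\times Q$-matrix and $S$ a $Q\times P$-matrix for suitable index sets $P$ and $Q$. After renaming letters we may assume that $C$ and $D$ are disjoint, so that $C\cup D$ is an alphabet and the block matrix
\begin{displaymath}
  N=\begin{bmatrix}0&R\\S&0\end{bmatrix},
\end{displaymath}
indexed by $P\cup Q$, is alphabetic over $C\cup D$.

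Next I would invoke the remark made at the start of this section: every alphabetic matrix is the adjacency matrix of some automaton. Applied to $N$, this yields an automaton $\C=(Q_1\cup Q_2,E)$ with $Q_1=P$, $Q_2=Q$ and $M(\C)=N$. Because $N$ has zero diagonal blocks, every edge of $\C$ labeled in $C$ runs from $Q_1$ to $Q_2$ and every edge labeled in $D$ runs from $Q_2$ to $Q_1$; hence $\C$ is bipartite for the partition of its states into $Q_1,Q_2$ and of its alphabet into $C,D$, with $M_1=R$ and $M_2=S$ in the notation preceding the statement.

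It then remains only to compute the components. By definition, $\C_1$ and $\C_2$ have adjacency matrices $M_1M_2=RS$ and $M_2M_1=SR$, regarded as alphabetic matrices over the two-letter alphabets $CD$ and $DC$. Since $M(\A)\leftrightarrow RS$ and $M(\B)\leftrightarrow SR$, the matrices $M(\C_1)$ and $M(\C_2)$ are similar to $M(\A)$ and $M(\B)$ respectively, which is exactly the assertion. The proof is essentially bookkeeping, and I do not anticipate a genuine obstacle; the only points needing a little care are making the two component alphabets disjoint so that $N$ is truly alphabetic, and matching index sets so that the products $RS$ and $SR$ formed inside $\C$ (with two-letter-word labels) coincide with those in the hypothesis.
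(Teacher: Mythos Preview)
Your proposal is correct and follows essentially the same approach as the paper: form the block anti-diagonal matrix $\begin{bmatrix}0&R\\S&0\end{bmatrix}$ over $C\cup D$, take the corresponding bipartite automaton $\C$, and read off that its components have adjacency matrices $RS$ and $SR$. The paper's proof is terser and omits the bookkeeping you spell out (disjointness of $C$ and $D$, matching of index sets), but the construction is identical.
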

\begin{proof}
Let $R,S$ be alphabetic matrices over alphabets
$C$ and $D$ respectively such that $M(\A)\leftrightarrow RS$ and
$M(\B)\leftrightarrow SR$. Let $\C$ be the bipartite automaton
on the alphabet
$C\cup D$ which is defined by the adjacency matrix 
\begin{displaymath}
M({\cal C})=\begin{bmatrix}0&R\\S&0\end{bmatrix}
\end{displaymath}
Then $M(\A)$ is similar to $M(\C_1)$ and $M(\B)$ is similar to
$M(\C_2)$.
\end{proof}
\begin{proposition}\label{BipartiteProp}
If the adjacency matrices of two automata are symbolic strong shift
equivalent, the automata are symbolic conjugate.
\end{proposition}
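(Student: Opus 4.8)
The plan is to peel the chain of symbolic elementary equivalences one step at a time, reducing each step to the bipartite automaton furnished by Proposition~\ref{propElemEquivIsSim}.

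I would first observe that symbolic conjugacy is an equivalence relation on automata: it is symmetric since the inverse of a conjugacy is a conjugacy, transitive by composing the two pairs $(\varphi,\psi)$ (the commuting triangle for the composite is obtained by pasting the two given triangles), and obviously reflexive. I would also record two trivial instances. If the adjacency matrices of $\A$ and $\B$ are similar, say a bijection $\beta\colon A\to B$ of alphabets turns $M(\A)$ into $M(\B)$, then the underlying graphs coincide and the pair consisting of the identity on biinfinite paths together with the letterwise action of $\beta$ on labels is a symbolic conjugacy; in particular any two automata realizing one and the same alphabetic matrix are symbolic conjugate.

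Now let $M_0=M(\A),M_1,\dots,M_n=M(\B)$ be a chain of alphabetic matrices with $M_i$ symbolic elementary equivalent to $M_{i+1}$. Using that every alphabetic matrix is the adjacency matrix of some automaton, I would choose automata $\A_i$ with $M(\A_i)=M_i$ and $\A_0=\A$, $\A_n=\B$. By transitivity it then suffices to prove that $\A_i$ and $\A_{i+1}$ are symbolic conjugate, so I may assume from now on that $M(\A)$ and $M(\B)$ are themselves symbolic elementary equivalent. Proposition~\ref{propElemEquivIsSim} provides a bipartite automaton $\C=(\C_1,\C_2)$ with $M(\C_1)$ similar to $M(\A)$ and $M(\C_2)$ similar to $M(\B)$. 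By the previous paragraph $\A$ is symbolic conjugate to $\C_1$ and $\C_2$ to $\B$, so the whole proposition comes down to showing that the two components of a bipartite automaton are symbolic conjugate.

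For this I would reuse the construction from the proof of Proposition~\ref{propBipartiteAutomata}; note that, although that proposition is stated for deterministic essential automata, the part of its proof producing the symbolic conjugacy between $\C_1$ and $\C_2$ uses neither hypothesis. Writing $Q=Q_1\cup Q_2$ and $A=A_1\cup A_2$ for the partitions attached to the decomposition, every biinfinite path in $\C$ alternates $A_1$-edges and $A_2$-edges, and an element of $X_{\C_1}$ is exactly such a path grouped into consecutive pairs beginning with an $A_1$-edge. Shifting this grouping by one edge sends it to an element of $X_{\C_2}$; this reparsing is a $2$-block map with memory $0$ and anticipation $1$ whose inverse is again a block map, hence a conjugacy $\varphi\colon X_{\C_1}\to X_{\C_2}$ that commutes with the shift, and the same formula applied to label sequences gives a conjugacy $\psi\colon L_{\C_1}\to L_{\C_2}$ with $\lambda_{\C_2}\varphi=\psi\lambda_{\C_1}$, so $(\varphi,\psi)$ is the desired symbolic conjugacy. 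Composing the symbolic conjugacies collected along the chain finishes the argument; the only point requiring care is the index bookkeeping in this bipartite reparsing, everything else being formal.
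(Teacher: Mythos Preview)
Your proof is correct and follows essentially the same route as the paper: reduce to a single symbolic elementary equivalence, invoke Proposition~\ref{propElemEquivIsSim} to obtain a bipartite automaton $\C=(\C_1,\C_2)$ with components similar to $\A$ and $\B$, and conclude via the symbolic conjugacy between components constructed in Proposition~\ref{propBipartiteAutomata}. You are in fact slightly more careful than the paper, which cites Proposition~\ref{propBipartiteAutomata} directly even though $\C$ need not be deterministic or essential; your observation that the symbolic-conjugacy part of that proof uses neither hypothesis is exactly what is needed to close this small gap.
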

\begin{proof}
Since a composition of conjugacies is a conjugacy, it is enough to
consider the case where the adjacency matrices are symbolic elementary
equivalent.
Let $\A,\B$ be such that $M(\A),M(\B)$ are symbolic elementary equivalent.
By Proposition~\ref{propElemEquivIsSim}, there is a bipartite
automaton $\C=(\C_1,\C_2)$ such that $M(\C_1),M(\C_2)$ are similar to $M(\A)$
and $M(\B)$ respectively. By Proposition~\ref{propBipartiteAutomata}, the
automata $\C_1,\C_2$ are symbolic conjugate.
Since automata with similar adjacency matrices are obviously symbolic
conjugate,
the result follows.
\end{proof}
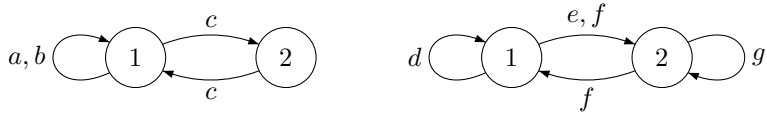
\begin{figure}[hbt]
\centering
\begin{picture}(80,12)(0,-5)
\put(0,0){\begin{picture}(30,30)
\node(1)(0,0){$1$}\node(2)(20,0){$2$}
\drawloop[loopangle=180](1){$a,b$}
\drawedge[curvedepth=3](1,2){$c$}
\drawedge[curvedepth=3](2,1){$c$}
\end{picture}
}
\put(50,0){\begin{picture}(30,30)
\node(1)(0,0){$1$}\node(2)(20,0){$2$}
\drawloop[loopangle=180](1){$d$}
\drawedge[curvedepth=3](1,2){$e,f$}
\drawedge[curvedepth=3](2,1){$f$}
\drawloop[loopangle=0](2){$g$}
\end{picture}
}
\end{picture}
\caption{Two symbolic conjugate automata.}\label{figConjugateSofic}
\end{figure}
\begin{example}\label{ExConjugateSofic}
  Let $\A,\B$ be the automata represented on
  Figure~\ref{figConjugateSofic}.  The matrices $M(\A)$ and
$M(\B)$ are symbolic elementary equivalent.
  Indeed, we have $M(\A)\leftrightarrow RS$ and $M(\B)\leftrightarrow
  SR$ for
\begin{displaymath}
  R=\begin{bmatrix}x&y\\0&x\end{bmatrix},\quad 
  S=\begin{bmatrix}z&t\\t&0\end{bmatrix}.
\end{displaymath}
Indeed, one has
\begin{displaymath}
  RS=\begin{bmatrix}xz+yt&xt\\xt&0\end{bmatrix},\quad 
  SR=\begin{bmatrix}zx&zy+tx\\tx&ty\end{bmatrix}.
\end{displaymath}
Thus the following tables give two bijections between the alphabets.
\begin{displaymath}
  \begin{array}{|c|c|c|}\hline a&b&c\\\hline
    xz&yt&xt\\ \hline 
  \end{array}\,,\quad
  \begin{array}{|c|c|c|c|}\hline d&e&f&g\\\hline
    zx&zy&tx&ty\\ \hline 
  \end{array}\,.
\end{displaymath}
\end{example}

The following result is due to Nasu~\cite{Nasu:1986}.
The equivalence between conditions (i) and (ii) is a version, for sofic shifts, of the
Classification Theorem 
(Theorem 7.2.12 in \cite{Lind&Marcus:1995}). The equivalence between
conditions
(i) and (iii) is due to Krieger~\cite{Krieger:1984}.
\begin{theorem}\label{TheoremNasu}
Let $X,X'$ be two  sofic shifts (resp. irreducible sofic shifts) and
let $\A,\A'$ be their Krieger
(resp. Fischer)
automata. The following conditions
are equivalent.
\begin{enumerate}
\item[(i)] $X,X'$ are conjugate.
\item[(ii)] The adjacency matrices
of $\A,\A'$ are symbolic strong shift equivalent.
\item[(iii)] $\A,\A'$ are symbolic conjugate.
\end{enumerate}
\end{theorem}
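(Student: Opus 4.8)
The strategy is to prove the chain of implications (iii) $\Rightarrow$ (i) $\Rightarrow$ (ii) $\Rightarrow$ (iii), exploiting the fact that the Krieger (resp. Fischer) automaton is a \emph{canonical} construction, so that invariance statements can be derived once the relevant elementary transformations are controlled. The implication (ii) $\Rightarrow$ (iii) is already done: it is exactly Proposition~\ref{BipartiteProp}. The implication (iii) $\Rightarrow$ (i) is almost immediate from the definitions: if $(\varphi,\psi)$ is a symbolic conjugacy from $\A$ onto $\A'$, then $\psi:L_\A\to L_{\A'}$ is a conjugacy, and since $\A$ recognizes $X$ and $\A'$ recognizes $X'$ (Propositions~\ref{propKriegerReduced} and the statement that the Fischer automaton recognizes $X$), we have $L_\A=X$ and $L_{\A'}=X'$, so $X$ and $X'$ are conjugate. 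Thus the entire content of the theorem is the implication (i) $\Rightarrow$ (ii), together with (i) $\Rightarrow$ (iii) which then follows by Proposition~\ref{BipartiteProp}.

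\textbf{Proving (i) $\Rightarrow$ (iii) via the Decomposition Theorem.} By Theorem~\ref{SymbolicDecompositionTheorem}, a conjugacy $X\to X'$ factors as a composition of splitting and merging maps, so it suffices to treat the case where $X'$ is a single in-splitting (or out-splitting) of $X$ relative to a map $f:A\to B$; the merging case is the inverse, and renamings are trivial. The key point is that the Krieger automaton is functorial under such elementary moves: I would show that if $X'$ is the in-splitting of $X$ relative to $f$, then the Krieger automaton $\A'$ of $X'$ is obtained from the Krieger automaton $\A$ of $X$ by a \emph{labeled in-split of the underlying graph followed by a relabeling of the edges} according to $g(a_1a_2)=f(a_1)a_2$ — in other words, by an operation that realizes a symbolic elementary equivalence of adjacency matrices. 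Concretely, the right contexts $C_{X'}(y')$ for left-infinite words $y'$ over $B'$ correspond bijectively to pairs (right context $C_X(y)$ in $X$, last symbol of $y$), which is precisely the state set of an in-split of $\A$; one then checks the edge structure and labels match the bipartite/symbolic-elementary-equivalence pattern. For the Fischer automaton, since it is the unique minimal strongly connected component (Proposition~\ref{PropFischer}) and the in-split operation restricts to strongly connected components, the same argument applies after restricting.

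\textbf{Where the work is, and the main obstacle.} The routine parts are (iii) $\Rightarrow$ (i) and (ii) $\Rightarrow$ (iii); the substantive content is showing that a single split/merge of the shift induces a symbolic-elementary transformation of the Krieger (resp. Fischer) automaton, i.e.\ establishing the explicit correspondence between the right-context state sets of $X$ and of its splitting $X'$. The main obstacle is keeping track of the labels and the bipartite structure simultaneously: an in-splitting map changes the alphabet from $A$ to $A'=\{f(a_1)a_2\}$, so the adjacency matrix of $\A'$ lives over a new two-letter-word alphabet, and one must exhibit the matrices $R,S$ (as in the definition of symbolic elementary equivalence) explicitly in terms of $f$ and the structure of $\A$, then verify $M(\A)\leftrightarrow RS$ and $M(\A')\leftrightarrow SR$ up to similarity. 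Once this single-step lemma is in place, composing the steps along the decomposition of the conjugacy yields a chain of symbolic elementary equivalences, which is exactly condition (ii); and (iii) follows from (ii) by Proposition~\ref{BipartiteProp}. The delicate point throughout is that \emph{canonicity} of the Krieger/Fischer construction is what makes the induced transformation well-defined and independent of choices — this is where Propositions~\ref{propKriegerReduced} and~\ref{PropFischer} do the essential work.
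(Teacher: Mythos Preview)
Your overall architecture matches the paper exactly: the chain (iii)$\Rightarrow$(i)$\Rightarrow$(ii)$\Rightarrow$(iii), with (ii)$\Rightarrow$(iii) by Proposition~\ref{BipartiteProp}, (iii)$\Rightarrow$(i) immediate from the definition, and the reduction of (i)$\Rightarrow$(ii) to a single in-splitting via Theorem~\ref{SymbolicDecompositionTheorem}. The difference is in how the single in-splitting is handled. You propose to compute the Krieger (resp.\ Fischer) automaton of $X'$ directly, by describing its states as pairs built from right contexts of $X$ together with last-letter data, and then exhibit the matrices $R,S$ explicitly. The paper instead constructs an auxiliary irreducible sofic shift $Z$ on $A\cup B$ consisting of the interleaved sequences $\cdots a_i\,f(a_i)\,a_{i+1}\,f(a_{i+1})\cdots$, takes \emph{its} Fischer automaton, observes that this automaton is bipartite, and then invokes Proposition~\ref{propBipartiteAutomata} (reduced, synchronized, strongly connected pass to components) to conclude that the two components are precisely the Fischer automata of $X$ and $X'$; symbolic elementary equivalence then drops out of the bipartite structure for free. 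Your route is more computational and requires getting the bookkeeping right (in particular, the state of $\A'$ should be indexed by $(C_X(y),f(a_0))$ rather than $(C_X(y),a_0)$, since distinct last letters with the same $f$-image give the same right context in $X'$); the paper's route is shorter and more conceptual because canonicity of the Fischer construction is applied once, to $Z$, rather than being re-derived for $X'$ in terms of $X$.
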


\begin{proof}%[Proof of Theorem~\ref{TheoremNasu}] 
  We prove the result for irreducible shifts. The proof of the general
  case is in~\cite{Nasu:1986}.
  
  Assume that $X,X'$ are conjugate.  By the Decomposition Theorem
  (Theorem~\ref{SymbolicDecompositionTheorem}), it is enough to
  consider the case where $X'$ is an in-splitting of $X$.  Let
  $f:A\rightarrow B$ be a map and let $A'=\{f(a_1)a_2\mid
  a_1a_2\in\B_2(X)\}$ in such a way that $X'$ is the in-splitting of
  $X$ relative to $f$. Let $C=A\cup B$ and let $Z$ be the shift space
  composed of all biinfinite sequences $\cdots
  a_if(a_i)a_{i+1}f(a_{i+1})\cdots$ such that $\cdots
  a_ia_{i+1}\cdots$ is in $X$. Then $Z$ is an irreducible sofic shift.
Let $\A$ be the Fischer automaton of
  $Z$. Then $\A$ is bipartite and its components recognize, up to a
  bijection of the alphabets, $X$ and $X'$ respectively.  By
  Proposition~\ref{propBipartiteAutomata} the components are the
  Fischer automata of $X$ and $X'$ respectively.  Since the components
  of a bipartite automaton have symbolic elementary equivalent
  adjacency matrices, this proves that (i) implies (ii).

  That (ii) implies (iii) is Proposition~\ref{BipartiteProp}. Finally,
  (iii) implies (i) by definition of symbolic conjugacy.
\end{proof}

%%%%%%%%%%%%%%%%%%%%%%%%%%%%%%%%%%%%%%%%%%%%
\subsection{Symbolic conjugate automata}

The following result is due to Hamachi and
Nasu~\cite{Hamachi&Nasu:1988}. It shows that, in
Theorem~\ref{TheoremNasu},
the equivalence between conditions (ii) and (iii) holds
for automata which are not reduced.
\begin{theorem}\label{TheoremHamachiNasu}
  Two essential automata are symbolic conjugate if and only if their
  adjacency matrices are symbolic strong shift equivalent.
\end{theorem}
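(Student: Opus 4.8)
The ``if'' direction is exactly Proposition~\ref{BipartiteProp}, whose proof needs no essentiality (or determinism): if $M(\A)$ and $M(\B)$ are symbolic strong shift equivalent, then $\A$ and $\B$ are symbolic conjugate. For the converse the plan is to factor a given symbolic conjugacy $(\varphi,\psi)$ from $\A=(G,\lambda_\A)$ onto $\B=(H,\lambda_\B)$ into a chain of elementary symbolic conjugacies between essential automata, each of which makes the two adjacency matrices symbolic elementary equivalent; concatenating these equivalences then yields the symbolic strong shift equivalence of $M(\A)$ and $M(\B)$. After every construction below I would pass to the essential part, which alters neither the edge shift nor its label shift, so that the matrix characterizations of Propositions~\ref{defEquivInMerge} and~\ref{defEquivLabeledInSplit} stay available. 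Two auxiliary observations will be used repeatedly: a labeled in- or out-split of automata gives $M=ND$, $M'=DN$ with $D$ a column (or row) division matrix and $N$ alphabetic, and decorating the unique $1$ in each row of $D$ with one fresh letter makes $D$ alphabetic, so such a step is in particular a symbolic elementary equivalence; and a renaming of the alphabet makes adjacency matrices similar, hence again symbolic elementary equivalent.

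First I would dispose of $\varphi$. Being a conjugacy of edge shifts, $\varphi$ is, by Theorem~\ref{ShiftDecompositionTheorem}, a composition of edge splitting maps followed by edge merging maps. Arguing as in the proof of Theorem~\ref{AutomataDecompositiontheorem}, I would put labels on all intermediate graphs by pulling $\lambda_\A$ back along the $\A$-side and $\lambda_\B$ back along the $\B$-side, turning each step into a labeled in- or out-split of essential automata (over $A$ on one side, over $B$ on the other), hence into a symbolic elementary equivalence by the observation above. Where the two sides meet, one is left --- exactly as in Theorem~\ref{AutomataDecompositiontheorem} --- with two labelings $\mu$ (over $A$) and $\nu$ (over $B$) of a single essential graph $G_0$ for which $(\mathrm{id}_{X_{G_0}},\psi)$ is a symbolic conjugacy from $\A_0=(G_0,\mu)$ onto $\B_0=(G_0,\nu)$. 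So it remains to treat the case $\varphi=\mathrm{id}$.

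Now $\psi\colon L_{\A_0}\to L_{\B_0}$ is a conjugacy of sofic shifts, so Theorem~\ref{SymbolicDecompositionTheorem} presents it as a composition of splitting maps followed by merging maps, say $L_{\A_0}\xrightarrow{\alpha}Z\xrightarrow{\beta^{-1}}L_{\B_0}$ with $\alpha$ and $\beta$ compositions of splitting maps (so $\psi=\beta^{-1}\alpha$). I would realize each splitting step of $\alpha$ on the automaton side as follows: for an in-splitting map of the current label shift relative to a surjection $f$, take the edge in-split of the current graph relative to the map $e\mapsto f(\mu(e))$ on edges --- which is precisely the in-splitting map of the edge shift furnished by the proof of Proposition~\ref{propSymbolicSplit} --- and label its edge $f(\mu(e_1))e_2$ by the symbol $f(\mu(e_1))\mu(e_2)$ of the split alphabet; this is well defined because the first letter depends only on the source state, the new label shift is the in-splitting of the old one, and the pair (edge in-splitting conjugacy, label in-splitting map) is a symbolic conjugacy. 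A direct computation then shows that the old and new adjacency matrices $M,M'$ are symbolic elementary equivalent: with $N_{p,(a,q)}=\sum\{\mu(e)\mid e\in\E_p^q,\ f(\mu(e))=a\}$ (alphabetic over the old alphabet) and $\hat D$ the division matrix of the in-split with each $1$ replaced by the corresponding letter of the split alphabet (alphabetic), one gets $M\leftrightarrow N\hat D$ and $M'\leftrightarrow\hat DN$; the out-splitting case is symmetric. Realizing the steps of $\alpha$ from the $\A_0$ side and the steps of $\beta$ from the $\B_0$ side, one arrives at two essential automata $\A_1,\B_1$ with $L_{\A_1}=L_{\B_1}=Z$ over the common alphabet of $Z$; the composite symbolic conjugacy $\A_1\to\B_1$ obtained from $(\mathrm{id},\psi)$ and the two chains has identity label component (because $\beta\psi=\alpha$), so its first coordinate is a labeled conjugacy of automata over one alphabet, and Theorem~\ref{AutomataDecompositiontheorem} --- already proved --- finishes by expressing it as a composition of labeled splits and merges. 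Chaining the symbolic elementary equivalences collected from $\A$ to $\A_0$, from $\A_0$ to $\A_1$, from $\A_1$ to $\B_1$, from $\B_1$ to $\B_0$, and from $\B_0$ to $\B$ shows that $M(\A)$ and $M(\B)$ are symbolic strong shift equivalent.

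The step I expect to be the main obstacle is the bookkeeping that makes the two reductions rigorous: carrying the labelings, and the residual conjugacy $\psi$, through the graph-level Decomposition Theorem so that every intermediate object is genuinely a symbolic conjugacy between \emph{essential} automata; verifying that the explicit ``split the states according to the $f$-image of the incoming label, relabel the edges'' construction is the automaton realization of the prescribed splitting of the label shift and is compatible with the edge in-splitting conjugacy; and confirming that essentiality survives each construction so that Propositions~\ref{defEquivInMerge} and~\ref{defEquivLabeledInSplit} keep applying. Once these compatibilities are in hand, the needed identities of alphabetic matrices are the routine computations sketched above.
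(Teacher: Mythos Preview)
Your argument is correct. The key device in Steps~1 and~3 --- replacing every $1$ in the division matrix $D$ by a \emph{single} fresh letter $c$, so that $N\hat D=M(\A)\,c$ and $\hat D N=c\,M(\A')$ are similar to $M(\A)$ and $M(\A')$ via the bijections $a\mapsto ac$ and $a\mapsto ca$ --- does show that every labeled split is already a symbolic elementary equivalence; and your Step~2 computation is essentially Proposition~\ref{propInSplitAutomaton} rediscovered. (A reader might hesitate at ``one fresh letter'': if one used a \emph{different} letter per row of $D$ the alphabets would no longer match, so it is worth saying explicitly that the same letter $c$ is used throughout.)

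The paper takes a shorter, more packaged route. Instead of flattening $\varphi$ first and $\psi$ afterwards, it proves a dedicated decomposition theorem for symbolic conjugacies of automata (Theorem~\ref{DecompTheorAutomata}), whose engine is the ``cube'' Lemma~\ref{lemmaDecompAutomata} that in-splits $X_\A$, $X_\B$, $L_\A$, $L_\B$ simultaneously. This reduces any symbolic conjugacy to a chain of automaton splits and merges in which \emph{both} coordinates are genuine splitting maps; one invocation of Proposition~\ref{propInSplitAutomaton} then finishes. Your three-stage plan has the modularity advantage of reusing Theorems~\ref{ShiftDecompositionTheorem}, \ref{SymbolicDecompositionTheorem} and~\ref{AutomataDecompositiontheorem} off the shelf, at the price of a longer chain and of needing the extra observation about labeled splits; the paper trades that for the one new lemma and gets a two-line proof of Theorem~\ref{TheoremHamachiNasu}.
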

The first element of the proof is a version of the Decomposition
Theorem for automata.

Let $\A,\A'$ be two automata. An
\emph{in-split}\index{automaton!in-split}
 from $\A$ onto $\A'$
is a symbolic conjugacy $(\varphi,\psi)$ such that
$\varphi:X_\A\rightarrow X_{\A'}$ and $\psi:L_\A\rightarrow L_{\A'}$
are in-splitting maps. A similar definition holds for out-splits.

\begin{theorem}\label{DecompTheorAutomata}
  Any symbolic conjugacy between automata is a composition of splits
  and merges.
\end{theorem}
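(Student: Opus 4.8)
The plan is to follow the pattern of the proof of Theorem~\ref{AutomataDecompositiontheorem}, but to drive the decomposition from the label side rather than the edge side. We may assume, as in Theorem~\ref{TheoremHamachiNasu}, that the automata are essential, since neither the edge shift nor the label shift of an automaton sees anything outside its essential part. Let $(\varphi,\psi)$ be a symbolic conjugacy from $\A$ onto $\B$, so that $\varphi:X_\A\to X_\B$ and $\psi:L_\A\to L_\B$ are conjugacies with $\psi\lambda_\A=\lambda_\B\varphi$. First I would apply the Symbolic Decomposition Theorem (Theorem~\ref{SymbolicDecompositionTheorem}) to the conjugacy $\psi$ of shift spaces, writing it as $\psi=\bar\mu\circ\bar\rho$, where $\bar\rho:L_\A\to M$ is a composition of splitting maps, $\bar\mu:M\to L_\B$ is a composition of merging maps, and $M$ is an intermediate shift space.

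The heart of the argument is a lifting construction: every splitting map of the label shift of an automaton is the label component of a split of that automaton. Concretely, let $\A'$ be an automaton over an alphabet $A'$ with underlying graph $G'$, and let $\psi':L_{\A'}\to L''$ be the in-splitting map relative to a surjection $g:A'\to B'$, that is, the sliding block map of the $2$-block substitution $a_1a_2\mapsto g(a_1)a_2$. Set $f:=g\circ\lambda_{\A'}$ on the edge set of $G'$. By Proposition~\ref{propSymbolicSplit}, the in-splitting map $\varphi':X_{\A'}\to X_{G''}$ relative to $f$ is an edge in-splitting map, $G''$ being the corresponding in-split graph. Now label $G''$ by assigning to the new edge $f(e_1)e_2$ the letter $g(\lambda_{\A'}(e_1))\lambda_{\A'}(e_2)$ in the alphabet of $L''$; this is well defined, it turns $G''$ into an automaton $\A''$ with $L_{\A''}=L''$, and a direct check shows that the labeling square commutes, so that $(\varphi',\psi')$ is a symbolic conjugacy --- indeed a symbolic in-split from $\A'$ onto $\A''$, both components being in-splitting maps. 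The out-split case is symmetric.

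Applying this step by step along $\bar\rho$, starting from $\A_0=\A$, yields a chain of symbolic splits $\A=\A_0\to\A_1\to\cdots\to\A_k$ with $L_{\A_k}=M$ whose composite has label component $\bar\rho$. Doing the same on the $\B$ side along $\bar\mu^{-1}$ (which, being the inverse of a composition of merging maps, is itself a composition of splitting maps) yields symbolic splits $\B=\B_0\to\B_1\to\cdots\to\B_l$ with $L_{\B_l}=M$ whose composite has label component $\bar\mu^{-1}$. Since $L_{\A_k}=L_{\B_l}=M$, the automata $\A_k$ and $\B_l$ are over the same alphabet, and the residual symbolic conjugacy $D:\A_k\to\B_l$ obtained by composing the inverse of $\A\to\A_k$, the given $(\varphi,\psi)$, and $\B\to\B_l$ has label component $\bar\mu^{-1}\circ\psi\circ\bar\rho^{-1}=\bar\mu^{-1}\circ(\bar\mu\circ\bar\rho)\circ\bar\rho^{-1}=\Id_M$. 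Hence $D$ is a \emph{labeled} conjugacy from $\A_k$ onto $\B_l$, so by Theorem~\ref{AutomataDecompositiontheorem} it is a composition of labeled splits and merges; each of these is in particular a symbolic split or merge, its label component being the identity. Concatenating, $(\varphi,\psi)$ is the composition of the symbolic splits $\A\to\A_k$, the splits and merges realizing $D$, and the symbolic merges $\B_l\to\B$, which is a composition of splits and merges, as required.

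I expect the genuinely delicate point to be the lifting construction: one must check that $f=g\circ\lambda_{\A'}$ really defines an edge in-splitting map, that the proposed labeling of $G''$ is well defined and gives label shift exactly $L''$, and that the edge-shift square and the labeling square are simultaneously commutative, so that $(\varphi',\psi')$ has the form of a split of automata. The remaining ingredients --- the out-split case, the $\B$ side, and the final concatenation --- are then routine or follow by symmetry.
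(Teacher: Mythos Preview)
Your proof is correct, and it takes a genuinely different route from the paper's own argument.

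The paper does not separate the label conjugacy from the edge conjugacy. Instead it proves a new ``cube lemma'' (Lemma~\ref{lemmaDecompAutomata}) that in-splits all four shift spaces $X_\A$, $X_\B$, $L_\A$, $L_\B$ simultaneously while preserving the commutative square $\psi\lambda_\A=\lambda_\B\varphi$, lowering the memory of both $\varphi^{-1}$ and $\psi^{-1}$ by one. Iterating this lemma and its dual (after first passing to extensions $\A^{[m,n]}$, $\B^{[m,n]}$ to make $\varphi,\psi$ one-block) reduces $(\varphi,\psi)$ to a pair of alphabet renamings, and each step is manifestly an automaton split.

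Your approach is more modular: you decompose $\psi$ alone via Theorem~\ref{SymbolicDecompositionTheorem}, lift each split of the label shift to an automaton split through the explicit construction with $f=g\circ\lambda_{\A'}$, and reduce to a \emph{labeled} conjugacy, which Theorem~\ref{AutomataDecompositiontheorem} already handles. The only new ingredient you need is the lifting construction, which is indeed lighter than the cube lemma. One small point to make explicit: the statement of Theorem~\ref{SymbolicDecompositionTheorem} only says ``composition of splitting and merging maps'', not ``splits followed by merges''; the ordered form $\psi=\bar\mu\circ\bar\rho$ that you rely on comes from the \emph{proof} of that theorem (via Lemma~\ref{lemmaDecompSymbol} and its dual), so you should cite the proof rather than the bare statement. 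With that caveat, your argument goes through and has the pleasant feature of recycling the two earlier decomposition theorems as black boxes.
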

The proof relies on the following variant of
Lemma~\ref{lemmaDecompSymbol}.
\begin{lemma}\label{lemmaDecompAutomata}
Let $\alpha,\beta$ be $1$-block maps and $\varphi,\psi$ be 
 $1$-block conjugacies such  such that the diagram below on the left is
commutative.

If the  inverses of $\varphi,\psi$ have memory $m\ge 1$ and
anticipation $n\ge 0$, there exist in-splits  
$\tilde{X},\tilde{Y},\tilde{Z},\tilde{T}$ of $X,Y,Z,T$ respectively
 and  $1$-block maps $\tilde{\alpha}:\tilde{X}\to\tilde{Z}$,
$\tilde{\beta}:\tilde{Y}\to\tilde{T}$ such that the $1$-block
conjugacies $\tilde{\varphi},\tilde{\psi}$ making the diagram below on
the right commutative have inverses with memory $m-1$ and anticipation $n$.

\begin{figure}[hbt]
\centering
\gasset{Nframe=n}
\begin{picture}(100,35)
\put(0,10){
\begin{picture}(20,20)
\node(X)(0,20){$X$}\node(Y)(20,20){$Y$}
\node(Z)(0,0){$Z$}\node(T)(20,0){$T$}
\drawedge(X,Y){$\varphi$}\drawedge(Z,T){$\psi$}
\drawedge(X,Z){$\alpha$}\drawedge(Y,T){$\beta$}
\end{picture}
}
\put(50,0){
\begin{picture}(40,30)
\node(X)(0,30){$X$}\node(Y)(40,30){$Y$}
\node(Z)(0,0){$Z$}\node(T)(40,0){$T$}
\node(tX)(10,22){$\tilde{X}$}\node(tY)(30,22){$\tilde{Y}$}
\node(tZ)(10,8){$\tilde{Z}$}\node(tT)(30,8){$\tilde{T}$}
\drawedge(X,tX){}\drawedge(tY,Y){}\drawedge(Z,tZ){}\drawedge(tT,T){}
\drawedge(X,Y){$\varphi$}\drawedge(Z,T){$\psi$}
\drawedge(X,Z){$\alpha$}\drawedge(Y,T){$\beta$}
\drawedge(tX,tY){$\tilde{\varphi}$}\drawedge(tZ,tT){$\tilde{\psi}$}
\drawedge(tX,tZ){$\tilde{\alpha}$}\drawedge(tY,tT){$\tilde{\beta}$}
\end{picture}
}
\end{picture}
\end{figure}
\end{lemma}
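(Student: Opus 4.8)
The plan is to run the construction from the proof of Lemma~\ref{lemmaDecompSymbol} simultaneously on the two horizontal conjugacies $\varphi$ and $\psi$, and then connect the two resulting squares by vertical $1$-block maps induced by $\alpha$ and $\beta$. Write $A,B$ for the alphabets of $X,Y$ and $C,D$ for those of $Z,T$. Fix $1$-block substitutions $h\colon A\to B$ and $g\colon C\to D$ with $\varphi=h_\infty$ and $\psi=g_\infty$, and write $\alpha\colon A\to C$ and $\beta\colon B\to D$ also for the $1$-block substitutions underlying the maps $\alpha$ and $\beta$. Commutativity of the given square means precisely that $\beta\circ h=g\circ\alpha$ as maps $A\to D$; this single identity will drive the whole verification.

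Applying Lemma~\ref{lemmaDecompSymbol} to $\varphi$ produces $\tilde X$, the in-splitting of $X$ relative to $h$, with alphabet $A'=\{h(a_1)a_2\mid a_1a_2\in\B_2(X)\}$, together with $\tilde Y=Y^{[2]}$ (alphabet $\B_2(Y)$) and the $1$-block conjugacy $\tilde\varphi=\tilde h_\infty$, where $\tilde h(h(a_1)a_2)=h(a_1)h(a_2)$, whose inverse has memory $m-1$ and anticipation $n$. Applying the same lemma to $\psi$ gives likewise $\tilde Z$, the in-splitting of $Z$ relative to $g$ with alphabet $C'=\{g(c_1)c_2\mid c_1c_2\in\B_2(Z)\}$, together with $\tilde T=T^{[2]}$ and $\tilde\psi=\tilde g_\infty$ whose inverse has memory $m-1$ and anticipation $n$. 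It remains to supply the vertical maps of the new square: let $\tilde\alpha\colon\tilde X\to\tilde Z$ be the $1$-block map determined by $h(a_1)a_2\mapsto g(\alpha(a_1))\alpha(a_2)$, and let $\tilde\beta\colon\tilde Y\to\tilde T$ be the $1$-block map induced coordinatewise by $\beta$, namely $y_1y_2\mapsto\beta(y_1)\beta(y_2)$.

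The remaining checks are routine. First, $\tilde\alpha$ is well defined: from $h(a_1)a_2=h(a_1')a_2'$ we get $a_2=a_2'$ and $h(a_1)=h(a_1')$, and applying $\beta$ and using $\beta\circ h=g\circ\alpha$ yields $g(\alpha(a_1))=g(\alpha(a_1'))$. Moreover $a_1a_2\in\B_2(X)$ forces $\alpha(a_1)\alpha(a_2)\in\B_2(Z)$, so $g(\alpha(a_1))\alpha(a_2)$ is indeed a letter of $C'$; and since every point of $\tilde X$ is the $h$-in-splitting of a point $x$ of $X$, its image under $\tilde\alpha$ is the $g$-in-splitting of $\alpha(x)\in Z$, hence lies in $\tilde Z$. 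Next, the two side faces of the cube commute by direct inspection: sending $x=(x_i)\in X$ first through the in-splitting $X\to\tilde X$ and then through $\tilde\alpha$ yields the sequence $i\mapsto g(\alpha(x_{i-1}))\alpha(x_i)$, which is exactly the result of applying $\alpha$ and then the in-splitting $Z\to\tilde Z$; the same computation with $\beta$ handles the other side face, and the top and bottom faces commute by Lemma~\ref{lemmaDecompSymbol}. Finally, the inner square commutes: on a letter $h(a_1)a_2$ the map $\tilde\psi\tilde\alpha$ returns $g(\alpha(a_1))g(\alpha(a_2))$ while $\tilde\beta\tilde\varphi$ returns $\beta(h(a_1))\beta(h(a_2))$, and these coincide by $\beta\circ h=g\circ\alpha$. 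The dual statement, with in-splits replaced by out-splits and the roles of memory and anticipation exchanged, follows by the mirror-image construction. I expect no real obstacle: the proof is pure bookkeeping, and the only subtle point is checking that $\tilde\alpha$ is well defined and maps into $\tilde Z$, which is precisely where the commutativity hypothesis is needed.
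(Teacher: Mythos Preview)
Your proof is correct and follows essentially the same construction as the paper: apply the in-splitting of Lemma~\ref{lemmaDecompSymbol} to both horizontal conjugacies, take $\tilde Y=Y^{[2]}$ and $\tilde T=T^{[2]}$, and define $\tilde\alpha(h(a_1)a_2)=g(\alpha(a_1))\alpha(a_2)$ and $\tilde\beta(b_1b_2)=\beta(b_1)\beta(b_2)$. The paper's proof is terser, omitting the well-definedness check for $\tilde\alpha$ and the commutativity verifications that you spell out explicitly; your extra care in noting that these rest on the identity $\beta\circ h=g\circ\alpha$ is a worthwhile addition.
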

\begin{proof}
Let $A,B,C,D$ be the alphabets of $X,Y,Z$ and $T$ respectively.
Let $h:A\rightarrow B$ and $k:C\rightarrow D$ be the $1$-block
substitutions such that $\varphi=h_\infty$ and $\psi=k_\infty$.
Set $\tilde{A}=\{h(a_1)a_2\mid a_1a_2\in\B_2(X)\}$ and 
$\tilde{C}=\{k(c_1)c_2\mid c_1c_2\in\B_2(Z)\}$. Let
$\tilde{X}$ (resp. $\tilde{Z}$)
be the image of $X$ (resp. of $Z$) under the in-splitting map
relative to $h$ (resp. $k$). Set $\tilde{Y}=Y^{[2]}$, $\tilde{B}=\B_2(Y)$,
$\tilde{T}=T^{[2]}$ and $\tilde{D}=\B_2(T)$.
Define $\tilde{\alpha}$ and $\tilde{\beta}$ by 
\begin{displaymath}
\tilde{\alpha}(h(a_1)a_2)=k\alpha(a_1)\alpha(a_2),\quad
\tilde{\beta}(b_1b_2)=\beta(b_1)\beta(b_2)
\end{displaymath}
and $\tilde{h}:\tilde{A}\rightarrow \tilde{B}$, 
$\tilde{k}:\tilde{C}\rightarrow\tilde{D}$ by
\begin{displaymath}
\tilde{h}(h(a_1)a_2)=h(a_1)h(a_2),\quad \tilde{k}(k(c_1)c_2)=k(c_1)k(c_2)
\end{displaymath}
Then the $1$-block conjugacies $\tilde{\varphi}=\tilde{h}_\infty$ and
$\tilde{\psi}=\tilde{k}_\infty$ satisfy the conditions of the statement.
\end{proof}
\begin{proof}[Proof of Theorem~\ref{DecompTheorAutomata}]
Let $\A=(G,\lambda)$ and $\A'=(G',\lambda')$ be two automata
with $G=(Q,\E)$ and $G'=(Q',\E')$.
Let $(\varphi,\psi)$ be a symbolic conjugacy from $\A$ onto $\A'$.
Replacing $\A$ and $\B$ by some extension $\A^{[m,n]}$ and
$\B^{[m,n]}$
we may reduce to the case where $\varphi,\psi$ are $1$-block conjugacies. By
using repeatedly Lemma~\ref{lemmaDecompAutomata}, we may reduce to
the case where the inverses of $\varphi,\psi$ have memory $0$. Using
repeatedly the dual version of Lemma~\ref{lemmaDecompAutomata},
we are reduced to the case where $\varphi,\psi$ are renaming of the alphabets.
\end{proof}

The second step for the proof of Theorem~\ref{TheoremHamachiNasu}
is the following statement.

\begin{proposition}\label{propInSplitAutomaton}
Let $\A,\A'$ be two essential automata. If $\A'$ is an in-split of $\A$, the
matrices $M(\A)$ and $M(\A')$ are symbolic elementary equivalent.
\end{proposition}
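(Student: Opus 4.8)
The plan is to realise $\A$ and $\A'$, up to similarity of adjacency matrices, as the two components of a single bipartite automaton $\C$. Once this is done the statement follows at once: the adjacency matrix of a bipartite automaton has zero diagonal blocks and off-diagonal blocks $M_1$ (the edges from the first part of states to the second) and $M_2$ (the edges back), and the two components then have adjacency matrices $M_1M_2$ and $M_2M_1$, which are symbolic elementary equivalent by the very definition; so if $M(\C_1)\leftrightarrow M(\A)$ and $M(\C_2)\leftrightarrow M(\A')$ we are done. Thus the whole proof reduces to constructing $\C$. Write $\A=(G,\lambda)$ with $G=(Q,\E)$ on the alphabet $A$, and $\A'=(G',\lambda')$ with $G'=(R,\F)$. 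Since $\A'$ is an in-split of $\A$ there is a symbolic conjugacy $(\varphi,\psi)$ with $\varphi\colon X_\A\to X_{\A'}$ and $\psi\colon L_\A\to L_{\A'}$ in-splitting maps; I would first treat the main case, where $\psi$ is the in-splitting map relative to some surjection $f\colon A\to B$ (when $\psi$ is merely a renaming the same scheme works, with all the $R$-to-$Q$ edges of $\C$ carrying a single common label). By Proposition~\ref{propSymbolicSplit}, $\varphi$ is an edge in-splitting map, so there is an in-merge $(h,k)\colon G'\to G$ with $h_\infty=\varphi^{-1}$, together with partitions $(\E_p^q(t))_{t\in k^{-1}(q)}$ of the edge sets of $G$ for which $h$ restricts to a bijection $\F_r^t\to\E_{k(r)}^{k(t)}(t)$; I shall call the unique $t$ with $e\in\E_p^q(t)$ the \emph{class} of $e$, so that the terminal vertex of an edge of class $t$ is $k(t)$.

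Next I would read off the labelling of $\A'$. Since $\varphi^{-1}=h_\infty$ is $1$-block and $\psi$ has memory $1$ and anticipation $0$, chasing the commuting square $\lambda_{\A'}\varphi=\psi\lambda_\A$ gives, for every biinfinite path $x$ in $G$ and every index $i$, the identity $\lambda'(\varphi(x)_i)=f(\lambda(x_{i-1}))\,\lambda(x_i)$. Fix $e'\in\F$; since $G'$ is essential, $e'$ lies on a biinfinite path, hence $e'=\varphi(x)_i$ for suitable $x$ and $i$, with $x_i=h(e')$ fixed and $x_{i-1}$ free to be any edge of $G'$ ending in $i(e')$ — whose $h$-image, by the definition of the in-merge, runs over exactly the edges of $G$ of class $i(e')$. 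Therefore $f\circ\lambda$ must be constant, say equal to $\mu(r)\in B$, on the edges of any fixed class $r$, and $\lambda'(e')=\mu(i(e'))\,\lambda(h(e'))$ for all $e'\in\F$; in particular the alphabet of $\A'$ is carried bijectively onto the set of two-letter words $\{\mu(r)\,a\}$.

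Then I would put $\C$ on the state set $Q\sqcup R$ and the alphabet $A_1\sqcup A_2$, with $A_1$ and $A_2$ disjoint copies of $A$ and of $B$: for every $e\in\E$, of class $t$, add an edge from $i(e)$ to $t$ carrying the copy of $\lambda(e)$; and for every $r\in R$, add a single edge from $r$ to $k(r)$ carrying the copy of $\mu(r)$. This $\C$ is bipartite by construction. A path in $\C_1$ from $q$ to $q'$ factors uniquely as $q\to t\to k(t)=q'$, corresponds bijectively to the edge $e\in\E_q^{q'}$ of class $t$, and reads the word $\lambda(e)\,\mu(t)=\lambda(e)\,f(\lambda(e))$; hence $M(\C_1)\leftrightarrow M(\A)$ via the bijection $a\mapsto a\,f(a)$. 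Dually, a path in $\C_2$ from $r$ to $r'$ factors as $r\to k(r)\to r'$, corresponds to the edge $e'\in\F_r^{r'}$ with $h(e')\in\E_{k(r)}^{k(r')}(r')$, and reads $\mu(r)\,\lambda(h(e'))$, which is exactly $\lambda'(e')$ under the identification above; hence $M(\C_2)\leftrightarrow M(\A')$. This completes the argument.

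The main obstacle is the second step: extracting the clean label rule $\lambda'(e')=\mu(i(e'))\,\lambda(h(e'))$, and the well-definedness of $\mu$, from the bare hypothesis that $(\varphi,\psi)$ is a symbolic conjugacy assembled from in-splitting maps. This is pure sliding-block bookkeeping, but it is precisely the place where the essentiality hypothesis enters — it guarantees that every ``new'' letter of $\A'$ is realised along an actual biinfinite path, so the rule is both forced and consistent — and it is also where the (easier) renaming case has to be peeled off. Everything past that point, including the two similarity checks above, is a routine matching of matrix coefficients.
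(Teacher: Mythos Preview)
Your proof is correct and follows essentially the same route as the paper: both extract from the in-merge $(h,k)$ a map $\mu$ (the paper calls it $\pi$) from the split states to $B$, well-defined thanks to essentiality, and use it to factor the adjacency matrices. The only difference is packaging --- the paper writes down the alphabetic matrices $R,S$ directly and checks $M(\A')=RS$, $M(\A)\leftrightarrow SR$, whereas you realise the same two matrices as the off-diagonal blocks of a bipartite automaton $\C$ and read off the components; your derivation of the label rule $\lambda'(e')=\mu(i(e'))\,\lambda(h(e'))$ is in fact more explicit than the paper's.
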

\begin{proof}
Set $\A=(G,\lambda)$ and $\A'=(G',\lambda')$. Let $A'=\{f(a)b\mid
ab\in\B_2(L_\A)\}$ be the alphabet of $\A'$ for a map $f:A\rightarrow B$.
By Proposition~\ref{propSymbolicSplit}, the symbolic in-splitting  map from
$X_G$ onto $X_{G'}$ is also an in-splitting map. Thus there is an
in-merge $(h,k)$ from $G'$ onto $G$ such that the in-split
from $\A$ onto $\A'$ has the form $(h_\infty^{-1},\psi)$.
We define an alphabetic $Q'\times Q$-matrix $R$ and a $Q\times
Q'$-matrix
$S$ as follows. Let $r,t\in Q'$ and let $p=k(r)$, $q=k(t)$. 
Let $e$ be an edge of $\A'$ ending in $r$, 
and set
 $a=\lambda(h(e))$). Then the label of any edge going out of $r$
is of the form $f(a)b$ for some $b\in A$.
Thus $f(a)$ does not depend on $e$ but only on
$r$.
We define a map $\pi:Q'\rightarrow B$ by $\pi(r)=f(a)$.  Then, we set
\begin{displaymath}
R_{rp}=\begin{cases}\pi(r)&\text{if $k(r)=p$}\\0&\text{otherwise}\end{cases}
,\quad S_{pt}=M(\A)_{pq}
\end{displaymath}
Let us verify that  $M(\A')= RS$ and $M(\A)\leftrightarrow SR$. 
We first have for $r,t\in Q'$
\begin{displaymath}
(RS)_{rt}=\sum_{p\in Q}R_{rp}S_{pt}=\pi(r)M_{k(r)k(q)}=M(\A')_{rt}
\end{displaymath}
and thus $RS=M(\A')$.
Next, for $p,q\in Q$
\begin{displaymath}
(SR)_{pq}=\sum_{p\in Q}R_{rp}S_{pt}=\sum_{t\in
    k^{-1}(q)}M(\A)_{pq}\pi(t)
=\sum_{a\in A}(M(\A)_{pq},a)af(a)
\end{displaymath}
and thus $SR\leftrightarrow M(\A)$ using the bijection $a\rightarrow
af(a)$
between $A$ and $AB$.
\end{proof}

\begin{proof}[Proof of Theorem~\ref{TheoremHamachiNasu}]
The condition is sufficient by Proposition~\ref{BipartiteProp}.
Conversely, let $\A,\A'$ be two symbolic conjugate essential
automata. By Theorem~\ref{DecompTheorAutomata}, we may assume that
$\A'$ is a split of $\A$. We assume that $\A'$ is an in-split of $\A$.
By Proposition~\ref{propInSplitAutomaton}, the adjacency matrices of
$\A$ and $\A'$ are symbolic elementary equivalent.
\end{proof}

%%%%%%%%%%%%%%%%%%%%%%%%%%%%%%%%%%%%%%%%%%%%%%%%
%
% Special families of automata
%
%%%%%%%%%%%%%%%%%%%%%%%%%%%%%%%%%%%%%%%%%%%%%%%%
\section{Special families of automata}\label{sectionSpecialFamilies}

In this section, we consider two particular families of automata:
local automata and automata with finite delay. Local automata are
closely related to shifts of finite type. The main result is an
embedding theorem (Theorem~\ref{BBEP:TheoremLocal}) related to Nasu's
Masking Lemma (Proposition~\ref{BBEP:MaskingLemma}). Automata with
finite left and right delay are related to a class of shifts called
shifts of almost finite type (Proposition~\ref{BBP:Almostfinitetype}).

\subsection{Local automata}

Let $m,n\ge0$.  An automaton $\A=(Q,E)$ is said to be
$(m,n)$-\emph{local}%
\index{local automaton}\index{automaton!local} if whenever
$p\edge{u}q\edge{v}r$ and $p'\edge{u}q'\edge{v}r'$ are two paths with
$|u|=m$ and $|v|=n$, then $q=q'$. It is \emph{local} if it is
$(m,n)$-local for some $m,n$.

\begin{example}\label{BBEP:ExampleLocalAutomaton}
  The automaton represented in Figure~\ref{BBEP:figureLocal} is
  $(3,0)$-local.  Indeed, a simple inspection shows that each of the
  six words of length~$3$ which are labels of paths  uniquely determines
  its terminal vertex. It is also $(0,3)$-local. It is not $(2,0)$-local
  (check the word $ab$), but it is $(2,1)$-local and also
  $(1,2)$-local.

\begin{figure}[hbt]
  \centering
  \gasset{Nadjust=wh}
  \begin{picture}(20,20)
    \node(1)(10,0){$1$}\node(2)(0,15){$2$}\node(3)(20,15){$3$}
    \drawedge[curvedepth=3](1,2){$a,b$}\drawedge[curvedepth=3](2,3){$b$}
    \drawedge[curvedepth=3](3,1){$a$}
  \end{picture}
  \caption{A local automaton.}\label{BBEP:figureLocal}
\end{figure}
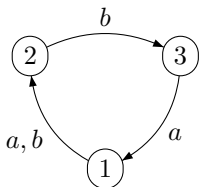
\end{example}

We say that an automaton $\A=(Q,E)$ is
\emph{contained}\index{automaton!contained in another} in an automaton
$\A'=(Q',E')$ if $Q\subset Q'$ and $E\subset E'$.
We note that if $\A$ is contained in $\A'$ and if $\A'$ is local, then
$\A$
is local.

\begin{proposition}\label{BBEP:propLocalAutomatonConjugacy}
  An essential automaton $\A$ is local if and only if the map
  $\lambda_\A:X_\A\to L_\A$ is a conjugacy from $X_\A$ onto $L_\A$.
\end{proposition}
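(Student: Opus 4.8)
The plan is to treat the two implications separately, reducing locality to the statement that $\lambda_\A$ is injective with a sliding-block inverse, via the Curtis--Lyndon--Hedlund theorem and compactness.

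Assume first that $\A$ is $(m,n)$-local. Since $\lambda_\A$ is a $1$-block map it is a continuous, shift-commuting surjection $X_\A\to L_\A$, so it suffices to prove it is injective: then $\lambda_\A$ is a continuous bijection from the compact space $X_\A$ onto the Hausdorff space $L_\A$, hence a homeomorphism, and as it commutes with the shift so does its inverse, making $\lambda_\A$ a conjugacy (the topological argument already used in Section~\ref{subSectionConjugacy}). For injectivity, let $\pi=(\pi_i)$ and $\pi'=(\pi'_i)$ be biinfinite paths with $\lambda_\A(\pi)=\lambda_\A(\pi')=x$, and fix $i\in\Z$. The segment $\pi_{i-m}\cdots\pi_{i+n-1}$ carries the label $x_{i-m}\cdots x_{i+n-1}$, and cutting it after $\pi_{i-1}$ displays it as $p\edge{u}i(\pi_i)\edge{v}r$ with $|u|=m$ and $|v|=n$; the same holds for $\pi'$ with the same $u,v$, so $(m,n)$-locality gives $i(\pi_i)=i(\pi'_i)$. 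Applying this at every index and using $t(\pi_i)=i(\pi_{i+1})$ we also get $t(\pi_i)=t(\pi'_i)$, and since in the convention of this section an edge is the triple $(i(\pi_i),x_i,t(\pi_i))$, we conclude $\pi_i=\pi'_i$ for all $i$. (Equivalently one may exhibit $\lambda_\A^{-1}$ directly as the $(m+n+1)$-block map, with memory $m$ and anticipation $n$, that reads off $i(\pi_i)$, $x_i$ and $t(\pi_i)$ from the window $x_{i-m}\cdots x_{i+n}$, and invoke Curtis--Lyndon--Hedlund.)

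Conversely, assume $\lambda_\A$ is a conjugacy. Then $\lambda_\A^{-1}\colon L_\A\to X_\A$ is a morphism, hence by Curtis--Lyndon--Hedlund a sliding block map, say $\lambda_\A^{-1}=f^{[m,n]}_\infty$ for some $m,n\ge 0$ and some block substitution $f\colon\B_{m+n+1}(L_\A)\to E$; we show $\A$ is $(m,n+1)$-local. Let $P\colon p\edge{u}q\edge{v}r$ and $P'\colon p'\edge{u}q'\edge{v}r'$ be two paths with $|u|=m$ and $|v|=n+1$, so each consists of $m+n+1$ edges. Since $\A$ is essential, every finite path extends to a biinfinite one; extend $P$ to a path $\pi\in X_\A$ whose edges at positions $0,1,\dots,m+n$ are exactly those of $P$, and similarly extend $P'$ to $\pi'$. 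Writing $x=\lambda_\A(\pi)$ and $x'=\lambda_\A(\pi')$ we have $x_0\cdots x_{m+n}=x'_0\cdots x'_{m+n}=uv$, and since the edge at position $m$ of $\lambda_\A^{-1}$ depends only on the window $x_0\cdots x_{m+n}$, we get $\pi_m=f(uv)=\pi'_m$. As $i(\pi_m)=q$ and $i(\pi'_m)=q'$, this forces $q=q'$, so $\A$ is $(m,n+1)$-local, and in particular local.

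The one point requiring care is the index bookkeeping in the converse: $P$ must be placed at positions $0,\dots,m+n$ precisely so that the window $x_{m-m}\cdots x_{m+n}$ read by $f$ when producing the edge at position $m$ coincides with the label $uv$ of $P$ — this is exactly what forces the locality parameters to be $(m,n+1)$ rather than $(m,n)$. Essentiality is used only to extend finite paths to biinfinite ones, and compactness only to upgrade the continuous bijection of the first part to a homeomorphism; everything else is routine.
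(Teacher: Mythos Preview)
Your proof is correct and follows essentially the same approach as the paper's. The only cosmetic difference is in the converse direction: the paper splits the length-$(m+1+n)$ path as $r\edge{u}p\edge{a}q\edge{v}s$ with $|u|=m$, $|v|=n$ and concludes that $\A$ is $(m+1,n)$-local (using that the recovered edge determines its terminal state $q$), whereas you take $|u|=m$, $|v|=n+1$ and conclude $(m,n+1)$-locality (using that the recovered edge determines its initial state); these are symmetric readings of the same argument.
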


\begin{proof}
  Suppose first that $\A$ is $(m,n)$-local. Consider an $m+1+n$-block
  $w= uav$ of $L_\A$, with $|u|=m$, $|v|=n$. All finite paths of $\A$
  labeled $w$ have the form $r\edge{u} p\edge{a}q\edge{v}s$ and share
  the same edge $p\edge{a}q$. This shows that $\lambda_\A$ is
  injective and that $\lambda_\A^{-1}$ is a map with memory $m$ and
  anticipation $n$.
  
  Conversely, assume that $\lambda_\A^{-1}$ exists, and that it has
  memory $m$ and anticipation $n$. We show that $\A$ is
  $(m+1,n)$-local. Let 
  \begin{displaymath}
    r\edge{u} p\edge{a}q\edge{v}s\quad\text{and}\quad 
    r'\edge{u}p'\edge{a}q'\edge{v}s'
  \end{displaymath}
  and be two paths of length $m+1+n$, with $|u|=m$, $|v|=n$ and $a$ a
  letter. Since $\A$ is essential, there exist two biinfinite paths
  which contain these finite paths, respectively.  Since
  $\lambda_\A^{-1}$ has memory $m$ and anticipation $n$, the blocks
  $uav$ of the biinfinite words carried by these paths are mapped by
  $\lambda_\A^{-1}$ onto the edges $p\edge{a}q$ and $p'\edge{a}q'$
  respectively. This shows that $p=p'$ and $q=q'$.
\end{proof}
% \begin{proof}
%   Suppose first that $\A$ is $(m,n)$-local. Consider an $m+1+n$-block
%   $w= a_{-m}\cdots a_0\cdots a_n$ of $L_\A$. All finite paths of $\A$
%   labeled $w$ have the form $p_{-m}\edge{a_{-m}}
%   p_{-m+1}\cdots\edge{a_{-1}}p_0\edge{a_0}p_1\cdots p_n$ share the
%   same edge $p_0\edge{a_0}p_1$. This shows that $\lambda_\A$ is
%   injective and that $\lambda_\A^{-1}$ is a map with memory $m$ and
%   anticipation $n$.
  
%   Conversely, assume that $\lambda_\A^{-1}$ exists, and that it has
%   memory $m$ and anticipation $n$. We show that $\A$ is
%   $(m+1,n)$-local. Let 
%   \begin{displaymath}
%      p_{-m}\edge{a_{-m}} 
%      p_{-m+1}\edge{a_{-m+1}}\cdots\edge{a_{-1}}
%      p_0\edge{a_0}p_1\edge{a_1}\cdots\edge{a_n}  p_n 
% \end{displaymath}
%  and
%   $p'_{-m}\edge{a_{-m}}
%   p'_{-m+1}\cdots\edge{a_{-1}}p'_0\edge{a_0}p'_1\cdots p'_n$ be two
%   paths of length $m+1+n$. Since $\A$ is essential, there exist two
%   biinfinite paths which contain these finite paths, respectively.
%   Since $\lambda_\A^{-1}$ has memory $m$ and anticipation $n$, the
%   blocks $a_{-m}\cdots a_n$ of the biinfinite words carried by these
%   paths are mapped by $\lambda_\A^{-1}$ onto the edges
%   $p_0\edge{a_0}p_1$ and $p'_0\edge{a_0}p'_1$ respectively. This
% shows that $p_0=p'_0$ and $p_1=p'_1$.
% \end{proof}

The next statement is Proposition~10.3.10
in~\cite{Berstel&Perrin&Reutenauer:2009}. 
\begin{proposition}
  The following conditions are equivalent for a strongly connected
  finite automaton $\A$.
\begin{conditionsiii}
\item $\A$ is local;
\item distinct cycles have distinct labels.
\end{conditionsiii}
\end{proposition}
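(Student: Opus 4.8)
The plan is to derive both implications from Proposition~\ref{BBEP:propLocalAutomatonConjugacy}: a strongly connected automaton is essential, so $\A$ is local exactly when $\lambda_\A\colon X_\A\to L_\A$ is injective. Throughout I write $p_i=t(x_i)$ for the states visited by a biinfinite path $x=(x_i)_{i\in\Z}\in X_\A$, and $\lambda(x_i)$ for the label of the edge $x_i$.

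For (i)$\Rightarrow$(ii) I would argue by contraposition of the plain statement. Suppose $C$ and $C'$ are cycles carrying the same label $w$, so $|C|=|C'|=|w|$. Repeating $C$ in both directions yields a biinfinite path $x\in X_\A$ whose label is the biinfinite word $\cdots www\cdots$; align the indexing so that $x_1\cdots x_{|w|}=C$, and build $x'\in X_\A$ from $C'$ the same way. Then $\lambda_\A(x)=\lambda_\A(x')$, so injectivity of $\lambda_\A$ forces $x=x'$; reading off the coordinates $1,\dots,|w|$ gives $C=x_1\cdots x_{|w|}=x'_1\cdots x'_{|w|}=C'$. Hence distinct cycles have distinct labels. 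This direction is routine.

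The real content is (ii)$\Rightarrow$(i), which I would prove by contraposition. Assume $\A$ is not local; then $\lambda_\A$ is not injective, so there are distinct $x,x'\in X_\A$ with $\lambda_\A(x)=\lambda_\A(x')$. Put $a_i=\lambda(x_i)=\lambda(x'_i)$ and let $D=\{i\in\Z:x_i\neq x'_i\}$, a nonempty set; split into two cases. If $D$ is infinite, say unbounded above (the case unbounded below is analogous, using a decreasing sequence of indices), then since $Q\times Q$ is finite some pair $(q,q')\in Q\times Q$ occurs as $(p_i,p'_i)$ along an increasing sequence $n_1<n_2<\cdots$; for each $k$ the blocks $x_{n_k+1}\cdots x_{n_{k+1}}$ and $x'_{n_k+1}\cdots x'_{n_{k+1}}$ are cycles, at $q$ and at $q'$ respectively, both carrying the label $a_{n_k+1}\cdots a_{n_{k+1}}$. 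If all these pairs of blocks coincided, we would have $x_i=x'_i$ for every $i>n_1$, contradicting that $D$ is unbounded above; hence for some $k$ these two cycles are distinct, and (ii) fails. If instead $D$ is finite, fix an interval $[L,L']\supseteq D$; since $L-1,L'+1\notin D$ we get $p_{L-1}=p'_{L-1}$ and $p_{L'}=p'_{L'}$, so $\pi=x_L\cdots x_{L'}$ and $\pi'=x'_L\cdots x'_{L'}$ are two distinct paths from $p_{L-1}$ to $p_{L'}$ with the same label. Using that $\A$ is strongly connected, pick a path $\rho$ from $p_{L'}$ back to $p_{L-1}$; then $\pi\rho$ and $\pi'\rho$ are distinct cycles at $p_{L-1}$ with the same label, so again (ii) fails.

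I expect the main obstacle to be precisely this extraction step in (ii)$\Rightarrow$(i): from a single mismatch between $x$ and $x'$ one must manufacture two closed paths with a common label, and a naive pigeonhole on the pairs $(p_i,p'_i)$ need not produce a recurrence that brackets the mismatch. The dichotomy above—"infinitely many mismatches" dispatched by pigeonhole, "finitely many mismatches" closed up via strong connectivity—is what resolves this. In the write-up I would also verify the small points: that $D$ nonempty and finite is consistent with the convention $E\subseteq Q\times A\times Q$, that $\rho$ exists by strong connectivity, and that $\pi\rho\neq\pi'\rho$ follows immediately from $\pi\neq\pi'$.
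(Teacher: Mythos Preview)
Your argument is correct. Note, however, that the paper does not actually prove this proposition: it is stated with a citation to Proposition~10.3.10 of \cite{Berstel&Perrin&Reutenauer:2009} and no proof is given in the text, so there is nothing to compare your approach against here. Your reduction to injectivity of $\lambda_\A$ via Proposition~\ref{BBEP:propLocalAutomatonConjugacy} is the natural route, and both implications are handled cleanly; in particular, the dichotomy on whether the mismatch set $D$ is finite or infinite is exactly the right way to extract two equally-labeled cycles, and your use of strong connectivity to close up the finite case is where that hypothesis genuinely enters.
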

 Two cycles in this statement are considered to be
distinct if, viewed as paths, they are distinct. 

The following result shows the strong connection between
shifts of finite type and local automata. It gives an effective method
to verify whether or not a shift space is  of finite type.

\begin{proposition}\label{BBEP:propLocalisSFT}
A shift space (resp. an irreducible shift space) is of finite type
if and only if its Krieger automaton (resp. its Fischer automaton) is local.
\end{proposition}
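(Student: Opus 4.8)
The plan is to treat the Krieger automaton first and reduce the Fischer case to it. We may restrict to sofic $X$: if $X$ is not sofic it is not of finite type (a finite set of forbidden words is recognizable), while its Krieger automaton is infinite by Proposition~\ref{propKriegerReduced} and hence not local; so both sides of the equivalence fail.

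For the implication from finite type to locality, I would fix a finite $W$ with $X=X^{(W)}$, choose $N\ge1$ with every word of $W$ of length at most $N$, and prove that the Krieger automaton $\A$ of $X$ is $(N,0)$-local. The crux is a remark on right contexts: if $y\in A^{-\N}$ satisfies $C_X(y)\ne\emptyset$ then $y$ contains no factor in $W$, and therefore, for $z\in A^{\N}$, whether $y\cdot z\in X$ depends only on $z$ and on the suffix of $y$ of length $N-1$ --- since every factor of $y\cdot z$ of length at most $N$ either lies inside $y$, or lies inside $z$, or straddles the origin and is then determined by that suffix of $y$ together with a prefix of $z$. Consequently two left-infinite words with nonempty right context that agree on their last $N-1$ letters have the same right context. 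Given paths $p\edge{u}q$ and $p'\edge{u}q'$ of length $N$ in $\A$, write $p=C_X(y)$, $q=C_X(yu)$, $p'=C_X(y')$, $q'=C_X(y'u)$; as $q$ and $q'$ are states they are nonempty, and $yu$ and $y'u$ share the suffix $u$ of length $N\ge N-1$, so $q=q'$ and $\A$ is $(N,0)$-local.

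For the converse, suppose the Krieger automaton $\A$ of $X$ is local. I would first note that $\A$ is essential: a state $C_X(y)$ receives the edge $\bigl(C_X(\cdots y_{-1}),y_0,C_X(y)\bigr)$, and, choosing $z\in A^{\N}$ with $y\cdot z\in X$, it emits the edge $\bigl(C_X(y),z_0,C_X(yz_0)\bigr)$. Then Proposition~\ref{BBEP:propLocalAutomatonConjugacy} shows $\lambda_\A\colon X_\A\to L_\A$ is a conjugacy, and $L_\A=X$ by Proposition~\ref{propKriegerReduced}. Hence $X$ is conjugate to the edge shift $X_\A$, which is of finite type, so $X$ is of finite type by the result of Section~\ref{subSectionConjugacy} that a shift conjugate to a shift of finite type is of finite type.

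Finally, for an irreducible sofic $X$ with Fischer automaton $\F$, I would close a loop of implications rather than redo the argument. If $X$ is of finite type then, by the above, its Krieger automaton is local; since $\F$ is a subautomaton of it, $\F$ is contained in a local automaton and is therefore local. Conversely, if $\F$ is local, then $\F$ is essential (it is strongly connected and recognizes the nonempty shift $X$), so $\lambda_\F\colon X_\F\to L_\F$ is a conjugacy by Proposition~\ref{BBEP:propLocalAutomatonConjugacy}, and since the Fischer automaton recognizes $X$ this exhibits $X$ as conjugate to an edge shift, hence of finite type. The main obstacle is the first implication --- isolating and proving cleanly that the right context of a left-infinite block of $X^{(W)}$ depends only on a bounded suffix of it; the only other delicate point is verifying essentiality of the Krieger and Fischer automata, which is what licenses the use of Proposition~\ref{BBEP:propLocalAutomatonConjugacy}.
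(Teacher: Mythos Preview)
Your proof is correct. The converse direction and the reduction of the Fischer case to the Krieger case match the paper's argument almost verbatim; the difference lies in the forward implication.

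For ``finite type $\Rightarrow$ local'', the paper does not analyse right contexts directly. Instead it builds the De~Bruijn--type automaton on $Q=A^n\setminus W$ (with $n$ the maximal length of a forbidden word), observes that this automaton is $(n,0)$-local and recognizes $\B(X)$, then invokes two structural facts: reductions preserve locality, and the Krieger automaton is a subautomaton of the minimal automaton of $\B(X)$ (Proposition~\ref{propAlfredo}). Locality is thus inherited along the chain De~Bruijn $\to$ minimal automaton $\to$ Krieger $\to$ Fischer. Your route is more self-contained: by showing that for $X=X^{(W)}$ a nonempty right context $C_X(y)$ depends only on the length-$(N-1)$ suffix of $y$, you get $(N,0)$-locality of the Krieger automaton in one stroke, without calling on Proposition~\ref{propAlfredo} or the (unproved in the paper) claim that reduction preserves locality. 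The paper's approach, on the other hand, makes the role of the standard local automaton explicit and foreshadows the embedding results of Section~\ref{sectionSpecialFamilies}.
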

\begin{proof}
\noindent  Let $X=X^{(W)}$ for a finite set $W\subset A^*$. We may assume that
all words of $W$ have the same length $n$. Let $\A=(Q,i,Q)$ be the
$(n,0)$-local deterministic automaton defined as follows. The set of states
is $Q=A^n\setminus W$ and there is an edge $(u,a,v)$ for every
$u,v\in Q$ and $a\in A$ such that $ua\in Av$. Then   $\A$
recognizes the set $\B(X)$. Since the reduction of
a local automaton is local, the minimal automaton of $\B(X)$ is local.
Since the Krieger automaton of $X$ is contained in the minimal
automaton
of $\B(X)$, it is local. If $X$ is irreducible, then its Fischer
automaton
is also local since it is contained in the Krieger automaton.

Conversely, Proposition~\ref{BBEP:propLocalAutomatonConjugacy} implies
that a shift space recognized by a local automaton is conjugate to a
shift of finite type and thus is of finite type.
\end{proof}

\begin{example}
Let $X$ be the shift of finite type on the alphabet $A=\{a,b\}$
defined by the forbidden factor $ba$. The Krieger automaton of $X$
is represented on Figure~\ref{figReducibleSFT}. It is $(1,0)$-local.
\begin{figure}[htbp]
    \centering
\gasset{Nadjust=wh}
\begin{picture}(40,13)(0,-2)
% etats
\node(1)(0,0){$1$}
\node(2)(20,0){$2$}

% transitions
\drawloop[loopangle=90](1){$a$}
\drawloop[loopangle=90](2){$b$}
\drawedge(1,2){$b$}
\end{picture}
 
\caption{The Krieger automaton of a reducible shift of finite type. }\label{figReducibleSFT}
\end{figure}
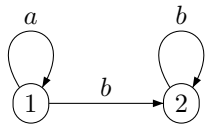
\end{example}

For $m,n\ge 0$, the \emph{standard}\index{standard local automaton}%
\index{local automaton!standard}\index{automaton!standard local}
$(m,n)$-local automaton is the automaton with states the set of words
of length $m+n$ and edges the triples $(uv,a,u'v')$ for $u,u'\in A^m$,
$a\in A$ and $v,v'\in A^n$ such that for some letters $b,c\in A$, one
has $uvc=bu'v'$ and $a$ is the first letter of $vc$.

The standard $(m,0)$-local automaton is also called the De Bruijn
automaton of order $m$.
\begin{example}
The standard $(1,1)$-local automaton on the alphabet $\{a,b\}$ is represented on Figure~\ref{figFree}.
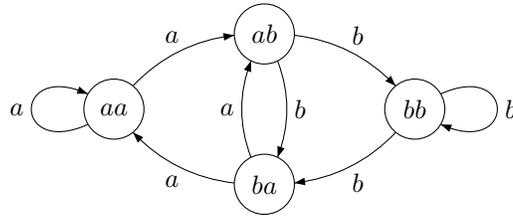
\begin{figure}[htbp]
    \centering
\begin{picture}(40,22)(0,-2)
% etats
\node(1)(0,10){$aa$}
\node(2)(20,20){$ab$}
\node(3)(40,10){$bb$}
\node(4)(20,0){$ba$}
% transitions
\drawloop[loopangle=180](1){$a$}
\drawloop[loopangle=0](3){$b$}
\drawedge[curvedepth=3](1,2){$a$}
\drawedge[curvedepth=3](2,3){$b$}
\drawedge[curvedepth=3](3,4){$b$}
\drawedge[curvedepth=3](4,1){$a$}
\drawedge[curvedepth=3](4,2){$a$}
\drawedge[curvedepth=3](2,4){$b$}
\end{picture}
 
\caption{The standard $(1,1)$-local automaton. }\label{figFree}
\end{figure}
\end{example}

\intertitre{Complete automata}

An automaton $\A$ on the alphabet $A$ is called
\emph{complete}\index{complete automaton}\index{automaton!complete} if
any word on $A$ is the label of some path in $\A$. As an example, the
standard $(m,n)$-local automaton is complete.

The following result is from~\cite{Beal&Lombardy&Perrin:2008}.

\begin{theorem}\label{BBEP:TheoremLocal}
  Any local automaton is contained in a complete local automaton.
\end{theorem}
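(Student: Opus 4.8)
The plan is to complete $\A$ by grafting onto it a copy of the standard $(m,n)$-local automaton, glued along the contexts that $\A$ already realizes. Fix $m,n$ with $\A$ being $(m,n)$-local. For a state $q$ of $\A$, let $P(q)\subseteq A^m$ be the set of labels of paths of length $m$ ending at $q$, and $F(q)\subseteq A^n$ the set of labels of paths of length $n$ starting at $q$. The hypothesis that $\A$ is $(m,n)$-local says precisely that the rectangles $P(q)\times F(q)$, for $q$ ranging over the states of $\A$, are pairwise disjoint in $A^m\times A^n$. Hence the rule
\[ \delta(u,v)=q\ \text{ if }(u,v)\in P(q)\times F(q),\qquad \delta(u,v)=(u,v)\ \text{ otherwise} \]
defines a map $\delta\colon A^m\times A^n\to Q'$, where $Q'=Q\sqcup R$ and $R$ is the set of pairs lying in no rectangle $P(q)\times F(q)$. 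Let $\A'$ have state set $Q'$ and edge set $E'$ consisting of all edges of $\A$ together with, for each word $c_1\cdots c_{m+n+1}$ over $A$, the \emph{window edge}
\[ \delta(c_1\cdots c_m,\,c_{m+1}\cdots c_{m+n})\ \edge{\,c_{m+1}\,}\ \delta(c_2\cdots c_{m+1},\,c_{m+2}\cdots c_{m+n+1}). \]
The window edges are exactly the images under $\delta$ of the edges of the standard $(m,n)$-local automaton $\mathcal{S}$.

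By construction $\A$ is contained in $\A'$. For completeness, observe that $\delta$, extended to edges as above, is a label-preserving surjective morphism from $\mathcal{S}$ onto the subautomaton of $\A'$ carried by the window edges; the image of the complete automaton $\mathcal{S}$ is again complete, hence so is the larger automaton $\A'$. Concretely, sliding the $(m{+}1{+}n)$-letter window along any word $z_1\cdots z_{m+n+k}$ yields, through window edges, a path of $\A'$ labelled $z_{m+1}\cdots z_{m+k}$, and the two border blocks are free, so every finite word is a label of a path of $\A'$.

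It remains to show that $\A'$ is local, and this is where the work lies. One checks that $\A'$ is $(m,n)$-local (or, should this fail for the given $m,n$, that it is $(m',n')$-local for slightly larger parameters), which reduces to the assertion: along any path of $\A'$ of length at least $m+n$, the state occupied after an initial block of length $i$ (for $m\le i\le\mathrm{length}-n$) equals $\delta$ of the last $m$ labels read and the next $n$ labels to come; in particular a path $p_0\edge{x}s\edge{y}t$ with $|x|=m$ and $|y|=n$ forces $s=\delta(x,y)$. This is proved by induction along the path, distinguishing at each edge whether it is an edge of $\A$ or a window edge. Two facts make the induction go through. First, the rectangular shape $P(q)\times F(q)$ of the realized contexts — a direct consequence of the $(m,n)$-locality of $\A$ — is exactly what makes the identifications performed by $\delta$ compatible with following an edge, so that traversing $\A'$ never produces at a merged state a length-$m$ past, or length-$n$ future, that is not already one of $q$ in $\A$. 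Second, a state $q$ of $\A$ with $P(q)=\emptyset$ or $F(q)=\emptyset$ — these are precisely the states of $\A'$ lying outside the image of $\delta$ — receives no window edge, and a short reachability argument shows that every path of $\A'$ through such a state uses only edges of $\A$; such a state therefore never lies strictly between the endpoints of a path of length $m+n$, and so is irrelevant to locality.

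The single genuine obstacle is thus this locality verification: controlling the states where the original transitions of $\A$ meet the newly adjoined window edges, and making sure the merging induced by $\delta$ introduces no context that would contradict the $(m,n)$-locality of $\A'$. The rectangular structure of the sets $P(q)\times F(q)$ is what rules this out, and the part of $\A$ that is not simultaneously $m$-reachable and $n$-co-reachable is disposed of by the reachability argument just described.
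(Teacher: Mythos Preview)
Your construction is natural, and the completeness of $\A'$ is clear. However, the locality of $\A'$ fails in general, and not merely for the chosen parameters: there are local automata $\A$ for which your $\A'$ is not local for \emph{any} parameters.

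Take $A=\{a,b,u\}$, $m=1$, $n=2$, and let $\A$ have a single state $q$ with edges $(q,a,q)$ and $(q,b,q)$; this is trivially $(1,2)$-local. Then $P(q)=\{a,b\}$ and $F(q)=\{a,b\}^2$, so both $(u,aa)$ and $(u,ab)$ lie in $R$. Each has unique out-label $a$ and admits a window edge to $q$ (since $\delta(a,ac)=\delta(a,bc)=q$ for $c\in\{a,b\}$); each has unique in-label $u$, with common predecessor $(a,ua)$. Tracing through, the biinfinite word $\cdots aa\,u\,aaaa\cdots$ is the label of two distinct biinfinite paths in $\A'$, one passing through $(u,aa)$ and the other through $(u,ab)$ at the position just after the letter $u$, both reaching $q$ one step later and staying there. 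Thus $\lambda_{\A'}$ is not injective and $\A'$ is not local.

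The gap is exactly where you assert that ``traversing $\A'$ never produces at a merged state a length-$n$ future that is not already one of $q$ in $\A$.'' This is indeed true at the merged states $q\in Q$, but your induction also needs the analogous statement at the new states $(u,v)\in R$: that their length-$n$ future in $\A'$ is $\{v\}$. That fails. Once a window-edge path out of $(u,v)$ reaches some $q\in Q$ after $k<n$ steps, it may continue with any first letter of $F(q)$, not only the letter $v_{k+1}$. The rectangular structure of the sets $P(q)\times F(q)$ controls what happens \emph{at} $q$, but it does not prevent a state of $R$ from acquiring extra futures \emph{through} $q$.

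The paper takes a different route that sidesteps this difficulty entirely. Instead of quotienting the standard $(m,n)$-local automaton onto $\A$, it realizes $\lambda_\A^{-1}$ as a chain of labelled splits carrying $\A$ to an automaton literally contained in the standard local automaton, and then invokes the Masking Lemma (Proposition~\ref{BBEP:MaskingLemma}) iteratively to pull this embedding back along each split. Each stage produces an automaton \emph{conjugate} to the standard local automaton, and locality is preserved under conjugacy because it is equivalent to $\lambda$ being a conjugacy. No direct locality verification of the completed automaton is required.
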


\noindent The proof relies on the following version of the masking
lemma.

\begin{proposition}[Masking lemma]\label{BBEP:MaskingLemma}
  Let $\A$ and $\B$ be two automata and assume that $M(\A)$ and
  $M(\B)$ are elementary equivalent. If $\B$ is contained in an
  automaton $\B'$, then $\A$ is contained in some automaton $\A'$
  which is conjugate to $\B'$.
\end{proposition}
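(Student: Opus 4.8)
The plan is to prove the statement by directly manipulating the factorization that witnesses the elementary equivalence of $M(\A)$ and $M(\B)$, enlarging it to account for the extension $\B\subseteq\B'$. Write $\A=(Q,E)$ and $\B=(R,F)$ and fix, say, an alphabetic $Q\times R$-matrix $N$ and a nonnegative integral $R\times Q$-matrix $D$ with $M(\A)=ND$ and $M(\B)=DN$; the opposite orientation allowed in the definition of elementary equivalence is handled by the mirror construction, exchanging the roles of the labeled factor $N$ and the unlabeled factor $D$. It is helpful to picture this equivalence through an auxiliary bipartite structure on $Q\sqcup R$: for $p\in Q$ and $r\in R$ put the labeled edges $p\to r$ encoded by $N_{pr}$, and for $r\in R$ and $q\in Q$ put $D_{rq}$ unlabeled edges $r\to q$; then the labeled edges of $\A$ are read off from the length-two paths $p\to r\to q$, and those of $\B$ from the paths $r\to q\to r'$. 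The idea is to enlarge this structure so that its $R$-side becomes $\B'$, and then to let $\A'$ be its $Q$-side.

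Concretely, I would put $Q'=Q\cup\{q_e\mid e\in F'\setminus F\}$, adding one fresh state $q_e$ for each edge $e$ of $\B'$ that is not already an edge of $\B$ (including edges incident to the new states of $R'\setminus R$). Then I define an alphabetic $Q'\times R'$-matrix $N'$ and a nonnegative integral $R'\times Q'$-matrix $D'$ as follows: $N'$ and $D'$ agree with $N$ and $D$ on the old index ranges; for each new edge $e=(r,a,r')$ I set $D'_{r,q_e}=1$ and let $N'_{q_e,r'}$ be the single letter $a$; and every remaining new entry is $0$. Each row of $N'$ indexed by a new state then has exactly one nonzero entry, a letter, so $N'$ is alphabetic, while $D'$ is nonnegative integral. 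Let $\A'$ be the automaton with adjacency matrix $N'D'$.

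Two computations finish the argument. First, $D'N'=M(\B')$: for $r,s\in R$ one gets $(DN)_{rs}=M(\B)_{rs}$ from the old block together with one copy of $a$ for each new edge $(r,a,s)$, which is exactly what $M(\B')_{rs}$ adds to $M(\B)_{rs}$; and for entries involving a state of $R'\setminus R$ only the states $q_e$ contribute, giving precisely the new edges. Second — and this is the delicate point — the restriction of $N'D'$ to $Q\times Q$ is \emph{equal} to $M(\A)$, not merely $\ge M(\A)$: indeed $N'_{p,r}=0$ whenever $p\in Q$ and $r\in R'\setminus R$, so for $p,q\in Q$ the only contributions to $(N'D')_{pq}$ come from $r\in R$ and sum to $(ND)_{pq}$. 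Consequently $\A$ is contained in $\A'$, the edges of $\A'$ lying in $Q\times A\times Q$ being exactly those of $\A$. Finally $M(\A')=N'D'$ and $M(\B')=D'N'$ with $N'$ alphabetic and $D'$ nonnegative integral, so $M(\A')$ and $M(\B')$ are elementary equivalent, and Proposition~\ref{BBEP:PropInSplit} yields a labeled conjugacy from $\A'$ onto $\B'$.

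I expect the main obstacle to be precisely the bookkeeping behind the second computation: the newly adjoined edges of $\B'$ must not induce any new edge among the old states of $\A$. Allotting a separate fresh intermediate state $q_e$ to each new edge, rather than trying to route new edges through existing states of $Q$, is exactly what keeps the $Q\times Q$-block of $N'D'$ untouched and keeps $N'$ alphabetic. A secondary point to watch is the "vice versa" clause of the hypothesis, dispatched by the symmetric construction described above, in which $D$ becomes the labeled factor and $N$ the unlabeled one and the products $M(\A)=DN$, $M(\B)=ND$, $M(\A')=D'N'$, $M(\B')=N'D'$ are used throughout.
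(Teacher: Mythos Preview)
Your proof is correct and essentially identical to the paper's: both adjoin one fresh state to $Q$ for each edge $e\in F'\setminus F$ (the paper writes $Q'=Q\cup(F'\setminus F)$, using the edge itself as the new state name), and both define $N',D'$ by the same rule---route $e=(r,a,s)$ through its private new state via $D'_{r,q_e}=1$ and $N'_{q_e,s}=a$, keeping the old blocks and zeroing everything else. Your write-up in fact verifies more than the paper does (the paper asserts without computation that $N'D'$ contains $M(\A)$ and that $D'N'=M(\B')$), and your remark that the ``vice versa'' orientation is handled by the mirror construction is a point the paper leaves implicit.
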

\begin{proof}
  Let $\A=(Q,E)$, $\B=(R,F)$ and $\B'=(R',F')$. Let $D$ be an $R\times
  Q$ nonnegative integral matrix and $N$ be an alphabetic $Q\times R$
  matrix such that $M(\A)=ND$ and $M(\B)=DN$.  Set $Q'=Q\cup
  (F'\setminus F)$. Let $D'$ be the $R'\times Q'$ nonnegative integral
  matrix defined for $r\in R'$ and $u\in Q'$ by
\begin{displaymath}
D'_{ru}=\begin{cases}
D_{ru}&\mbox{if }
r\in R,u\in Q\\
1&\text{if $u\in F'\setminus F$ and $u$ starts in $r$}\\
0&\mbox{otherwise}
\end{cases}
\end{displaymath}
Let $N'$ be the alphabetic $Q'\times R'$ matrix defined for $a\in A$
for $u\in Q'$ and $s\in R'$ by
\begin{displaymath}
(N'_{us},a)=
\begin{cases}
  (N_{us},a)&\mbox{if }u\in Q,s\in R\\
  1&\text{if $u\in F'\setminus F$ and $u$ is labeled with $a$ and ends
  in $s$},\\
  0&\mbox{otherwise. }
\end{cases}
\end{displaymath}
Then $N'D'$ is the adjacency matrix of an automaton $\A'$. By
definition,
$\A'$ contains $\A$ and it is conjugate to $\B'$ by
Proposition~\ref{BBEP:PropInSplit}.
\end{proof}

We illustrate the proof of Proposition~\ref{BBEP:MaskingLemma} by the
following example.

\begin{example}
  Consider the automata $\A$ and $\B$ given in
  Figure~\ref{BBEP:FigureMasking}. The automaton $\A$ is the local
  automaton of Example~\ref{BBEP:ExampleLocalAutomaton}. The automaton $\B$
  is an in-split of $\A$. Indeed, we have $M(\A)=ND$, $M(\B)=DN$ with
\begin{displaymath}
  N=\begin{bmatrix}0&a&b&0\\0&0&0&b\\a&0&0&0\end{bmatrix}\qquad
  D=\begin{bmatrix}1&0&0\\0&1&0\\0&1&0\\0&0&1\end{bmatrix}.
\end{displaymath}
\begin{figure}[hbt]
  \centering
  \gasset{Nadjust=wh}
  \begin{picture}(70,20)(0,-2)
    \put(0,0){
      \begin{picture}(20,30)        
        \node(1)(10,0){$1$}\node(2)(0,15){$2$}\node(3)(20,15){$3$}        
        \drawedge[curvedepth=3](1,2){$a,b$}\drawedge[curvedepth=3](2,3){$b$}
        \drawedge[curvedepth=3](3,1){$a$}
      \end{picture}      
    }
    \put(50,0){
      \begin{picture}(20,20)       
        \node(1)(10,0){$1$}\node(2)(0,15){$2$}\node(3)(10,10){$3$}
        \node(4)(20,15){$4$}       
        \drawedge[curvedepth=3](1,2){$a$}\drawedge(1,3){$b$}
        \drawedge[curvedepth=3](2,4){$b$}\drawedge(3,4){$b$}
        \drawedge[curvedepth=3](4,1){$a$}
      \end{picture}
    }
  \end{picture}
  \caption{The automaton $\B$ on the right is an in-split of the local
    automaton $\A$ on the left.}\label{BBEP:FigureMasking}
\end{figure}
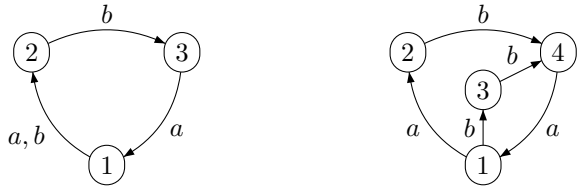

We have represented on the right of Figure~\ref{BBEP:FigureMaskingBis}
the completion of $\B$ as a complete local automaton with the same
number of states. On the left, the construction of the proof of
Proposition~\ref{BBEP:MaskingLemma} has been carried on to produce a
local automaton containing $\A$.
\begin{figure}[hbt]
  \centering
  \gasset{Nw=5,Nh=5}
  \begin{picture}(70,38)(0,-5)
    \put(-10,0){
      \begin{picture}(20,35) (-10,0)       
        \node(1)(10,0){$1$}\node(2)(0,15){$2$}\node(3)(20,15){$3$}
        \node[AHangle=30,linewidth=.5,linecolor=blue](e)(-10,5){$e$}
        \node[AHangle=30,linewidth=.5,linecolor=blue](f)(10,25){$f$}
        \node[AHangle=30,linewidth=.5,linecolor=blue](g)(30,0){$g$}
        \drawedge[curvedepth=3](1,2){$a,b$}\drawedge[curvedepth=3,ELside=r](2,3){$b$}
        \drawedge[curvedepth=3,ELside=r](3,1){$a$}
        \gasset{AHangle=30,linewidth=.5,linecolor=blue}
        \drawedge[curvedepth=3](1,e){$a$}\drawloop[loopangle=180](e){$a$}
        \drawedge(e,2){$a$}
        \drawedge[curvedepth=3](2,f){$b$}\drawloop[loopangle=90,ELside=r](f){$b$}
        \drawedge[curvedepth=2](f,3){$b$}
        \drawedge[curvedepth=2](1,g){$b$}\drawedge[curvedepth=2](g,1){$a$}
      \end{picture}
    }
    \put(50,0){
      \begin{picture}(20,20)        
        \node(1)(10,0){$1$}\node(2)(0,15){$2$}\node(3)(10,10){$3$}\node(4)(20,15){$4$}        
        \drawedge[curvedepth=3](1,2){$a$}\drawedge[curvedepth=3](1,3){$b$}
        \drawedge[curvedepth=3](2,4){$b$}\drawedge(3,4){$b$}
        \drawedge[curvedepth=3](4,1){$a$}
        \gasset{AHangle=30,linewidth=.5,linecolor=blue}
        \drawloop[loopangle=0](4){$b$}\drawedge[curvedepth=2](3,1){$a$}\drawloop[loopangle=180](2){$a$}
        
      \end{picture}
    }
  \end{picture}
  \caption{The automata $\A'$ and $\B'$. Additional edges are drawn thick.}\label{BBEP:FigureMaskingBis}
\end{figure}
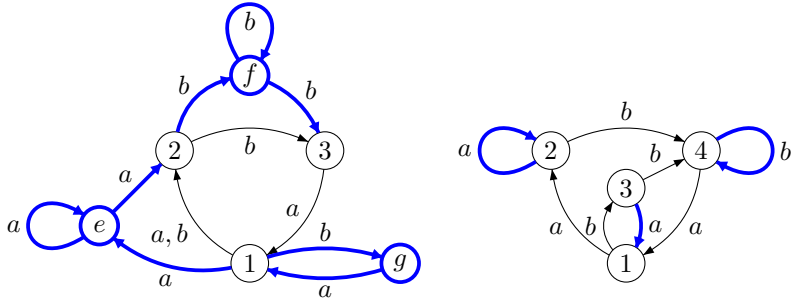
In terms of adjacency matrices, we have $ M(\A')=N'D'$, $M(\B')=D'N'$
with
\begin{displaymath}
 N'=\begin{bmatrix}0&a&b&0\\0&0&0&b\\a&0&0&0\\
\textcolor{red}{0}&\textcolor{red}{a}&\textcolor{red}{0}&\textcolor{red}{0}\\
\textcolor{red}{0}&\textcolor{red}{0}&\textcolor{red}{0}&\textcolor{red}{b}\\
\textcolor{red}{a}&\textcolor{red}{0}&\textcolor{red}{0}&\textcolor{red}{0}\end{bmatrix},
\qquad
D'=\begin{bmatrix}1&0&0&\textcolor{red}{0}&\textcolor{red}{0}&\textcolor{red}{0}\\
0&1&0&\textcolor{red}{1}&\textcolor{red}{0}&\textcolor{red}{0}\\
0&1&0&\textcolor{red}{0}&\textcolor{red}{0}&\textcolor{red}{1}\\
0&0&1&\textcolor{red}{0}&\textcolor{red}{1}&\textcolor{red}{0}
\end{bmatrix}
\end{displaymath}
\end{example}
\begin{proof}[Proof of Theorem~\ref{BBEP:TheoremLocal}]
  Since $\A$ is local, the map $\lambda_\A$ is a conjugacy from $X_\A$
  to $L_\A$. Let $(m,n)$ be the memory and anticipation of
  $\lambda_\A^{-1}$. There is a sequence $(\A_0,\ldots,\A_{m+n})$ of
  automata such that $\A_0=\A$, each $\A_i$ is a split or a merge of
  $\A_{i-1}$ and $\A_{n+m}$ is contained in the standard $(n+m)$-local
  automaton.  Applying iteratively
  Proposition~\ref{BBEP:MaskingLemma}, we obtain that $\A$ is
  contained in an automaton which is conjugate to the
  standard $(m,n)$-local automaton and which is thus complete.
\end{proof}

%%%%%%%%%%%%%%%%%%%%%%%%%%%%%%%
%
%   Automata with finite delay
%
%%%%%%%%%%%%%%%%%%%%%%%%%%%%%%%

\subsection{Automata with finite delay}

An automaton is said to have \emph{right delay}%
\index{right!delay}\index{delay!right}\index{automaton!right delay}
$d\ge 0$ if for any pair of paths
\begin{displaymath}
p\edge{a}q\edge{z}r,\quad p\edge{a}q'\edge{z}r'
\end{displaymath}
with $a\in A$, if $|z|=d$, then $q=q'$.  Thus a deterministic
automaton has right delay $0$. An automaton has \emph{finite right delay} if
it has right delay $d$ for some (finite) integer $d$. Otherwise, it is
said to have \emph{infinite right delay}.

\begin{example}
The automaton represented on Figure~\ref{figRightDelay} has right delay $1$.
\end{example}

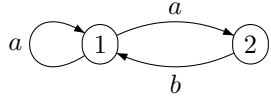
\begin{figure}[hbt]
\centering\gasset{Nadjust=wh}
\begin{picture}(20,10)(0,-5)
\node(1)(0,0){$1$}\node(2)(20,0){$2$}
\drawloop[loopangle=180](1){$a$}\drawedge[curvedepth=3](1,2){$a$}
\drawedge[curvedepth=3](2,1){$b$}
\end{picture}
\caption{A automaton with right delay $1$}\label{figRightDelay}
\end{figure}

\begin{proposition}\label{propInfDel}
  An automaton has infinite right delay if and only if there exist
  paths $p\edge{v}q\edge{u}q$ and $p\edge{v}q'\edge{u}q'$ with $q\ne
  q'$ and $|u|>0$.
\end{proposition}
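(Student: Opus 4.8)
The plan is to study the paths of the \emph{pairing automaton} $\mathcal{A}\times\mathcal{A}$, whose states are the pairs $(p,p')\in Q\times Q$ and which has an edge $(p,p')\edge{a}(q,q')$ precisely when $(p,a,q)$ and $(p',a,q')$ are both edges of $\mathcal{A}$. Call a state $(q,q')$ \emph{diagonal} if $q=q'$ and \emph{split} otherwise. Unravelling the definition of right delay, $\mathcal{A}$ fails to have right delay $d$ if and only if there is a split state of $\mathcal{A}\times\mathcal{A}$ that receives an edge from a diagonal state and from which there issues a path of length $d$: a pair of paths $p\edge{a}q\edge{z}r$, $p\edge{a}q'\edge{z}r'$ with $q\ne q'$ and $|z|=d$ is exactly an edge $(p,p)\edge{a}(q,q')$ into a split state followed by a length-$d$ path $(q,q')\edge{z}(r,r')$. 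Hence $\mathcal{A}$ has infinite right delay if and only if, for every $d$, some split state with a diagonal predecessor admits an outgoing path of length $d$.

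For the implication ``such paths exist $\Rightarrow$ infinite right delay'' I would argue without the pairing automaton. Suppose $p\edge{v}q\edge{u}q$, $p\edge{v}q'\edge{u}q'$ with $q\ne q'$ and $|u|\ge 1$; then $v$ is nonempty (else $q=p=q'$), so write $v=a_1\cdots a_\ell$. The two length-$\ell$ paths from $p$ labeled $v$ end in the distinct states $q$ and $q'$, so they first split after some prefix $a_1\cdots a_i$: they share a state $\hat p$ after reading $a_1\cdots a_{i-1}$, and the two $a_i$-edges out of $\hat p$ lead to distinct states $x\ne y$. Following the first path from $x$ and then looping $k$ times around $u$ gives a path labeled $a_{i+1}\cdots a_\ell\,u^{k}$ ending in $q$; the same label read from $y$ gives a path ending in $q'$. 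Since the length $(\ell-i)+k|u|$ is unbounded in $k$, truncating this common label to length $d$ shows $\mathcal{A}$ does not have right delay $d$, for every $d$; thus the right delay is infinite.

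For the converse I would pump inside the finite graph $\mathcal{A}\times\mathcal{A}$. Assume infinite right delay. The set $T$ of split states having a diagonal predecessor is fixed and finite, and for each $d$ some state of $T$ has an outgoing path of length $d$; since ``having an outgoing path of length $d$'' passes to shorter lengths, these conditions cut out a decreasing chain of nonempty subsets of $T$, so some split state $(s,s')\in T$ has outgoing paths of \emph{every} length. In a finite graph this forces a positive-length cycle to be reachable from $(s,s')$. It then remains to turn this into the required configuration by exhibiting a positive cycle running through a split state $(q,q')$ that is itself reachable from a diagonal state $(p,p)$: reading the labels along a path $(p,p)\edge{v}(q,q')$ and around the cycle $(q,q')\edge{u}(q,q')$ gives exactly $p\edge{v}q\edge{u}q$ and $p\edge{v}q'\edge{u}q'$ with $q\ne q'$ and $|u|>0$. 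Towards this one considers how long the outgoing paths from $(s,s')$ remain split before (if ever) hitting a diagonal state: if they can stay split arbitrarily long, the pigeonhole principle already yields a split cycle accessible from the diagonal; in the remaining case one must use that $\mathcal{A}$ is essential — every state has an incoming and an outgoing edge, so finite paths extend in both directions — to rebuild a split and route a cycle through a split state.

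I expect this last step to be the main obstacle. A straightforward pumping argument only produces \emph{some} reachable cycle in $\mathcal{A}\times\mathcal{A}$, and that cycle may lie entirely on the diagonal — corresponding to the two realising paths of $\mathcal{A}$ having merged — which is worthless for the claim. Forcing a cycle through a split state that stays accessible from a diagonal state is the real content of the proof, and it is here, rather than in the elementary counting, that one has to exploit the structural hypothesis on $\mathcal{A}$ (essentiality); organising that bookkeeping carefully is the crux.
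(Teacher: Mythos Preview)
The paper states this proposition without proof, so there is no ``paper's proof'' to compare against; I evaluate your argument on its own.

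Your forward direction (configuration $\Rightarrow$ infinite right delay) is correct and cleanly argued: locating the first split along $v$ and then pumping $u^k$ gives witnesses of every length.

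For the converse you have correctly isolated the obstruction, and in fact it is fatal to the proposition \emph{as stated}. Take $Q=\{0,1,2,3\}$, $A=\{a\}$, with edges $(0,a,1)$, $(0,a,2)$, $(1,a,3)$, $(2,a,3)$, $(3,a,3)$. Then $0\edge{a}1$ and $0\edge{a}2$ with $1\ne2$, and both $1$ and $2$ admit paths labeled $a^{d}$ for every $d$, so the right delay is infinite; yet the only loop in $\A$ is at state $3$, so there is no pair $q\ne q'$ with $q\edge{u}q$ and $q'\edge{u}q'$. Thus the proposition is false without the hypothesis that $\A$ is essential; your instinct that essentiality is required is exactly right, and you should treat it as a missing hypothesis rather than as something to be ``used'' in the proof.

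What remains is to actually close the gap under essentiality, and here your outline is still incomplete. You produce a split state $(q,q')$ reachable from the diagonal with arbitrarily long outgoing paths in $\A\times\A$, and hence a reachable cycle; the problem, as you note, is that the cycle may be entirely diagonal. Your proposed fix (``rebuild a split and route a cycle through a split state'') needs a concrete argument. One way to finish: let $Z$ be the set of split states of $\A\times\A$ reachable from the diagonal. If no state of $Z$ lies on a cycle then $Z$ is acyclic, so there is a bound $N$ on the length of any path inside $Z$; hence every path of length $>N$ from $(q,q')$ must hit a diagonal state $(r,r)$. Now use essentiality to run backwards from $(p,p)$ along the diagonal to produce a cycle in $\A$ through $p$, and concatenate to obtain a cycle in $\A\times\A$ through $(q,q')$ itself, or argue by counting that some split state along arbitrarily long witnesses must repeat. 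Either way, the bookkeeping you flag as ``the crux'' still has to be written out; as it stands, the converse direction is a plan rather than a proof.
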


The following statement is Proposition 5.1.11 in~\cite{Lind&Marcus:1995}.
\begin{proposition}\label{propFiniteRightDelay}
  An automaton has finite right delay if and only if it is conjugate
  to a deterministic automaton.
\end{proposition}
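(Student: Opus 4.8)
The plan is to treat the two implications separately; throughout I may assume $\A$ is essential, since passing to the essential part changes neither $X_\A$ nor $L_\A$ nor the hypotheses.

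\textbf{Deterministic $\Rightarrow$ finite delay.} Suppose $\A$ is conjugate to a deterministic automaton $\B$, say $\varphi\colon X_\A\to X_\B$ is a conjugacy with $\lambda_\B\varphi=\lambda_\A$, and suppose for contradiction that $\A$ has infinite right delay. By Proposition~\ref{propInfDel} there are paths $p\edge{v}q\edge{u}q$ and $p\edge{v}q'\edge{u}q'$ in $\A$ with $q\neq q'$ and $|u|\ge 1$ (and necessarily $|v|\ge 1$). Using essentiality I would prolong these to two elements $x,x'\in X_\A$, obtained by prepending a common left-infinite path into $p$ and appending $u^{\omega}$; then $x$ and $x'$ share a left-infinite tail, satisfy $\lambda_\A(x)=\lambda_\A(x')$, and are distinct (the $v$-paths to $q$ and to $q'$ cannot agree, as they end at different vertices). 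The point is an elementary fact about a deterministic $\B$: two biinfinite paths agreeing on $(-\infty,N]$ and having the same label are equal, for they share the vertex $t(y_N)$, so determinism (same state, same label) forces equality of the next edge, and one induces forward. Since $\varphi$ is a sliding block map with finite anticipation, $\varphi(x)$ and $\varphi(x')$ still agree on a left-infinite tail and have equal labels, hence $\varphi(x)=\varphi(x')$, contradicting injectivity of $\varphi$.

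\textbf{Finite delay $\Rightarrow$ deterministic.} I would induct on the right delay $d$ of $\A$. For $d=0$, $\A$ is already deterministic. For $d\ge 1$ the crux is to produce an automaton $\A'$ conjugate to $\A$ with right delay at most $d-1$; induction then finishes. I expect $\A'$ to arise from a composition of a labeled split and a labeled merge that ``absorbs one letter of look-ahead into the states'', in the same spirit as Lemma~\ref{lemmaDecomp}, which lowers the memory of an inverse by one unit: one first out-splits $\A$, refining each state by the label of its outgoing edges, and then merges the resulting states onto a graph on which the delay hypothesis of $\A$ guarantees that only $d-1$ further labels after an edge suffice to determine the next state. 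That these operations form a labeled conjugacy is routine (Proposition~\ref{stLabeledInMergeIsCojugacy} together with the split/merge criteria); iterating the reduction $d$ times yields a deterministic automaton conjugate to $\A$.

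\textbf{Main obstacle.} The delicate point is verifying that the split--merge in the inductive step actually lowers the delay, and, more basically, identifying the correct operation. Right delay is a rigid quantity: the obvious recodings — the subset construction, higher block or edge presentations, or decorating states with a bounded window of future letters — either change the conjugacy class of $X_\A$ or leave the delay unchanged, because the extra data stored in the states is precisely the look-ahead that would have been needed anyway. So, just as in the proof of the Decomposition Theorem, one genuinely needs a splitting followed by a merging rather than a single clean move, and pinning this down correctly is where the work lies.
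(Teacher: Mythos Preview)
The paper does not prove this proposition; it simply cites Lind and Marcus, Proposition~5.1.11. So there is no ``paper's own proof'' to compare against, and I will just assess your argument.

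Your first direction (conjugate to deterministic $\Rightarrow$ finite delay) is correct. The key observation --- that in a deterministic automaton two biinfinite paths which agree on a left half and carry the same label must coincide --- is exactly right, and combining it with Proposition~\ref{propInfDel} and the finite anticipation of $\varphi$ gives the contradiction cleanly.

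Your second direction has a genuine gap. You propose to reduce the delay by one via an out-split (refining each state by the label of its outgoing edges) followed by a merge, but you do not specify the merge, and in fact the out-split you describe does \emph{not} lower the right delay: in the out-split automaton a state $(p,a)$ already forces the next label to be $a$, so knowing that label adds nothing, and one still needs $d$ further labels to pin down the successor state. You correctly flag this as the ``main obstacle'', but you do not overcome it; the proof as written is a plan, not an argument.

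The standard route (as in Lind--Marcus) avoids induction entirely and builds the deterministic automaton in one shot. Take $\B$ with state set $\{(p,w): p\in Q,\ w\in A^{d},\ w \text{ labels a path from }p\}$ and, for each letter $b$, an edge
\[
(p,\,aw')\ \edge{\,b\,}\ (q,\,w'b)
\]
whenever $|w'|=d-1$, $(q,w'b)$ is a state, and $p\edge{a}q$ is an edge of $\A$ with a path labelled $w'b$ out of $q$. The hypothesis of right delay $d$ is exactly what makes $q$ unique given $(p,aw')$ and $b$, so $\B$ is deterministic. The labelled conjugacy $\varphi:X_\A\to X_\B$ sends a biinfinite path $\cdots p_{i-1}\edge{a_i}p_i\cdots$ to the $\B$-path whose edge in position $j$ goes from $(p_{j-d-1},a_{j-d}\cdots a_{j-1})$ to $(p_{j-d},a_{j-d+1}\cdots a_j)$ with label $a_j$; this is a sliding block map with memory $d$ and anticipation $0$, and its inverse reads off $e_i$ from the $\B$-edge at position $i+d$. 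This is the missing idea: rather than whittling the delay down step by step, one absorbs all $d$ letters of look-ahead into the state at once and labels each $\B$-edge by the \emph{newly revealed} letter, which is precisely what makes determinism fall out of the delay hypothesis.
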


In the same way the automaton is said to have \emph{left delay}%
\index{left!delay}\index{delay!left}\index{automaton!left delay} $d\ge
0$ if for any pair of paths $p\edge{z}q\edge{a}r$ and
$p'\edge{z}q'\edge{a}r$ with $a\in A$, if $|z|=d$, then $q=q'$.

\begin{corollary}
  If two automata are conjugate, and if one has finite right
  (left) delay, then the other also has.
\end{corollary}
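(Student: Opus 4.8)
The plan is to reduce the statement to Proposition~\ref{propFiniteRightDelay}, exploiting that conjugacy of automata (over a fixed alphabet) is an equivalence relation. The only thing worth checking carefully is transitivity: if $\varphi\colon X_\A\to X_\B$ and $\psi\colon X_\B\to X_{\mathcal{D}}$ are labeled conjugacies, then $\psi\varphi$ is a conjugacy and $\lambda_{\mathcal{D}}(\psi\varphi)=(\lambda_{\mathcal{D}}\psi)\varphi=\lambda_\B\varphi=\lambda_\A$, so $\psi\varphi$ is again labeled. Symmetry is equally immediate, since $\lambda_\A=\lambda_\B\varphi$ forces $\lambda_\B=\lambda_\A\varphi^{-1}$, and reflexivity is trivial.

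Granting this, suppose $\A$ and $\B$ are conjugate and that $\A$ has finite right delay. By Proposition~\ref{propFiniteRightDelay} there is a deterministic automaton $\mathcal{D}$ conjugate to $\A$; by transitivity $\mathcal{D}$ is conjugate to $\B$ as well; and applying Proposition~\ref{propFiniteRightDelay} in the other direction shows that $\B$ has finite right delay. The assertion for left delay is obtained in exactly the same way from the left-handed counterpart of Proposition~\ref{propFiniteRightDelay} (an automaton has finite left delay if and only if it is conjugate to a co-deterministic automaton), or equivalently by passing to the automata in which all edges are reversed, an operation that interchanges left and right delay while preserving conjugacy.

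I do not anticipate any genuine obstacle here: the argument is essentially a transitivity argument wrapped around Proposition~\ref{propFiniteRightDelay}, and the single verification needed — that labeled conjugacy is transitive and symmetric — is immediate from the commuting-triangle definition of a labeled conjugacy.
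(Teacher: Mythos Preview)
Your argument is correct and is exactly the intended one: the paper states the corollary without proof because it follows immediately from Proposition~\ref{propFiniteRightDelay} by the transitivity of (labeled) conjugacy, which is precisely what you spell out. The left-delay case is handled, as you note, by the obvious left--right symmetry.
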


\begin{proposition}\label{propLocalHasFinite}
  An essential $(m,n)$-local automaton has right delay $n$ and left
  delay $m$.
\end{proposition}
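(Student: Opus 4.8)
The plan is to prove both halves — right delay $n$ and left delay $m$ — by two symmetric arguments, each of which reduces directly to the defining property of an $(m,n)$-local automaton once essentialness has been used to pad the given paths out to length $m+n$.

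For the right-delay statement, I would take two paths $p\edge{a}q\edge{z}r$ and $p\edge{a}q'\edge{z}r'$ with $a\in A$ and $|z|=n$, and aim to show $q=q'$. If $m=0$ this is immediate: $q\edge{z}r$ and $q'\edge{z}r'$ are paths carrying the same label of length $n$, so $(0,n)$-locality gives $q=q'$. If $m\ge 1$, essentialness provides a path $\pi$ of length $m-1$ ending in $p$ (the empty path when $m=1$; since every state of an essential automaton has an incoming edge, one may keep extending backwards). Prepending $\pi$ to both given paths yields two paths of length $m+n$; splitting each as a prefix of length $m$ followed by a suffix of length $n$, the two prefixes carry the same label (the label of $\pi$ followed by $a$), the two suffixes both carry $z$, and the vertices sitting between prefix and suffix are $q$ and $q'$. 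Hence $(m,n)$-locality forces $q=q'$, which is exactly right delay $n$.

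The left-delay statement is dual. From two paths $p\edge{z}q\edge{a}r$ and $p'\edge{z}q'\edge{a}r$ with $|z|=m$, I want $q=q'$. If $n=0$, apply $(m,0)$-locality to the length-$m$ prefixes $p\edge{z}q$ and $p'\edge{z}q'$ to conclude $q=q'$. If $n\ge 1$, use essentialness to choose a single path $\rho$ of length $n-1$ starting at $r$, and append it to both given paths (they end at the same vertex $r$, so this is legitimate); the resulting paths have length $m+n$, their length-$m$ prefixes both carry $z$, and their length-$n$ suffixes both carry $a$ followed by the label of $\rho$. Again $(m,n)$-locality yields $q=q'$.

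The only real content, and the place where essentialness is indispensable, is the padding step: in a non-essential automaton a state may lack the incoming (resp. outgoing) edges needed to prepend (resp. append) a path of length $m-1$ (resp. $n-1$), and in fact the proposition is false without essentialness — a short non-essential automaton can be $(m,0)$-local while having two distinct $a$-labelled edges out of some state, hence failing to have right delay $0$. So the step to be careful about is merely the observation that in an essential automaton every finite path extends arbitrarily far in both directions; everything else is bookkeeping about the lengths of subpaths.
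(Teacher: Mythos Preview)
Your proof is correct and follows exactly the paper's approach: use essentialness to extend the given paths backward (resp.\ forward) so that the $(m,n)$-local hypothesis applies directly, then read off $q=q'$. You are in fact slightly more careful than the paper about the length of the padding (prepending $m-1$ edges rather than $m$, and treating the boundary case $m=0$ separately), but the idea is identical.
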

\begin{proof}
  Let $p\edge{a}q\edge{z}r$ and $p\edge{a}q'\edge{z}r'$ be two paths
  with $a\in A$ and $|z|=n$. Since $\A$ is essential there is a path
  $u\edge{y}p$ of length $m$ in $\A$. Since $\A$ is $(m,n)$-local, we
  have $q=q'$.  Thus $\A$ has right delay $n$. The proof for the
  left delay $m$ is symmetrical.
\end{proof}

A shift space is said to have \emph{almost finite type}\index{almost
  finite type shift}\index{shift space!almost finite type} if it can
be recognized by a strongly connected automaton with both finite left
and finite right delay.

An irreducible shift of finite type is also of almost finite type
since a local automaton has finite right and left delay by
Proposition~\ref{propLocalHasFinite}.

\begin{example}
  The even shift has almost finite type. Indeed, the automaton of
  Figure \ref{figAutomatonEven} on the right has right and left delay~$0$.
\end{example}
The following result is from~\cite{Nasu1985}.
\begin{proposition}\label{BBP:Almostfinitetype}
An irreducible shift space is of almost finite type if and only if its
Fischer automaton has finite left delay.
\end{proposition}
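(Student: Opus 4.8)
$(\Leftarrow)$ This direction is immediate: if the Fischer automaton $\F$ of $X$ has finite left delay, then $\F$ is strongly connected (it is the minimal strongly connected component of the Krieger automaton, by Proposition~\ref{PropFischer}), deterministic and hence of right delay $0$, and it recognizes $X$; thus $X$ is presented by a strongly connected automaton of finite left and finite right delay, i.e.\ $X$ is of almost finite type.

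$(\Rightarrow)$ Assume $X=L_\A$ with $\A$ strongly connected of finite left and finite right delay. First I would use the finite right delay together with Proposition~\ref{propFiniteRightDelay} to replace $\A$ by a conjugate deterministic automaton; conjugacy of automata preserves the recognized shift and, by the corollary relating conjugacy and finiteness of delays, keeps the left delay finite, and after restricting to the essential part — which alters neither the biinfinite paths nor $X$, cannot increase the left delay, and is strongly connected because $X$ is irreducible — I obtain a strongly connected \emph{deterministic} automaton $\B$ recognizing $X$ with finite left delay. Then Proposition~\ref{propReduction} provides a reduction $h\colon\B\to\F$ onto the Fischer automaton of $X$, and the whole problem comes down to proving: \emph{a reduction of a finite deterministic automaton of finite left delay onto a strongly connected automaton again has finite left delay.}

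To prove this I would argue by contraposition, using the left-hand analogue of Proposition~\ref{propInfDel}. If $\F$ had infinite left delay, there would be distinct states $\bar q_1\ne\bar q_2$ of $\F$, a nonempty word $u$, a word $v$ and a state $\bar p$ with $\bar q_i\edge{u}\bar q_i\edge{v}\bar p$ for $i=1,2$, and — $\F$ being strongly connected — a word $w$ with $\bar p\edge{w}\bar q_1$. Since $\B$ is deterministic and $h$ is a reduction, every path of $\F$ lifts uniquely to $\B$ from any prescribed preimage of its origin, and $h$ carries the lift back onto the original path; in particular reading $u$ is a self-map of each finite nonempty fibre $F_i=h^{-1}(\bar q_i)$, hence has a nonempty eventual image $C_i\subseteq F_i$ that it permutes, and I can fix $L\ge1$ so that $u^L$ fixes every state of $C_1\cup C_2$. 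Setting $\delta=u^L v w\, u^{ML}$ with $M$ so large that reading $u^{ML}$ sends $F_1$ into $C_1$, one checks that $\delta$ labels a cycle at $\bar q_1$ in $\F$ and that reading $\delta$ maps $C_1$ into $C_1$ and $C_2$ into $C_1$; passing to an idempotent power of the transformation of $C_1$ induced by $\delta$ then yields an exponent $N$ and a state $e$ lying in both $C_1\cdot\delta^{N+1}$ and $C_2\cdot\delta^{N+1}$. Picking $q_i\in C_i$ with $q_i\cdot\delta^{N+1}=e$ gives two \emph{distinct} states $q_1,q_2$ of $\B$ (their $h$-images differ), each carrying a positive-length cycle labelled $u^L$ and each joined to the common state $e$ by a path labelled $\delta^{N+1}$; by the left-hand analogue of Proposition~\ref{propInfDel}, $\B$ then has infinite left delay, contrary to hypothesis.

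Everything apart from the middle of the last paragraph is routine manipulation of the reductions of Section~\ref{sectionMinimalAutomata} and of the delay results of this section. The one point that requires real care — and where I expect the only genuine difficulty — is forcing the two lifted branches of the doubled configuration to meet at a single state of $\B$ while keeping each of them on a cycle with a common label; this is exactly what the eventual-image / idempotent-power bookkeeping for the fibre transformations accomplishes.
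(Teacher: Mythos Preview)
Your argument is correct and tracks the paper's proof closely: contrapose, extract the doubled-cycle configuration $\bar q_1,\bar q_2,u,v$ in the Fischer automaton via the left analogue of Proposition~\ref{propInfDel}, close the loop with $w$ by strong connectedness, pass to a deterministic presentation $\B$ via Proposition~\ref{propFiniteRightDelay}, and pull everything back through the reduction of Proposition~\ref{propReduction} to force a collision in the fibres. The one genuine difference is in how the collision is obtained. The paper takes $x\in u^+$ with $\varphi_\B(x)$ idempotent, sets $T=h^{-1}(\bar q_1)\cap R\cdot x$ and $T'=h^{-1}(\bar q_2)\cap R\cdot x$, and then \emph{uses} the finite left delay of $\B$ to see that $t\mapsto t\cdot vwx$ is injective on $T$, hence a permutation of the finite set $T$; since any $t'\in T'$ also lands in $T$ under $vwx$, it must coincide there with some $t\in T$, and the pair $t,t'$ (each fixed by $x$) witnesses infinite left delay of $\B$. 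You instead manufacture the collision without invoking finite left delay in the middle: your second idempotent-power step makes $\delta$ act as a permutation on its eventual image $E\subseteq C_1$, so $C_1\cdot\delta^{N+1}=E$ while $C_2\cdot\delta^{N+1}\subseteq E$, and any point of the latter gives the desired coincidence. Both routes work; the paper's is a line shorter, yours isolates the purely combinatorial statement that a reduction onto a strongly connected target cannot kill infinite left delay. One small correction: the essential part of your deterministic $\B$ is strongly connected because its \emph{edge shift} $X_\B$ is irreducible (being conjugate to $X_\A$ with $\A$ strongly connected), not merely because $L_\B=X$ is irreducible---an essential deterministic automaton recognizing an irreducible shift need not itself be strongly connected.
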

\begin{proof}
  The condition is obviously sufficient. Conversely, let $X$ be a
  shift of almost finite type. Assume the Fischer automaton $\A=(Q,E)$
  of $X$ does not have finite left delay. Let, in view of
  Proposition~\ref{propInfDel} $u,v\in A^*$ and $p,q,q'\in Q$ with
  $q\ne q'$ be such that $q\cdot u=q$, $q'\cdot u=q'$ and $p=q\cdot
  v=q'\cdot v$.  Since $\A$ is strongly connected, there is a word $w$
  such that $p\cdot w=q$.

  Let $\B=(R,F)$ be an automaton with finite right and left delay
  which recognizes $X$. By Proposition~\ref{propFiniteRightDelay}, we
  may assume that $\B$ is deterministic. Let $\varphi:R\rightarrow Q$
  be a reduction from $\B$ onto $\A$. Since $R$ is finite, there is an
  $x\in u^+$ such that $r\cdot x=r\cdot x^2$ for all $r\in R$ (this
  means that the map $r\mapsto r\cdot x$ is idempotent; such a 
  word exists since each element in the finite transition semigroup of
  the automaton $\B$ has a power
  which is an idempotent). Set
\begin{displaymath}
  S=R\cdot x,\quad T=\varphi^{-1}(q)\cap S,
  \quad T'=\varphi^{-1}(q')\cap S
\end{displaymath}
Since $q\ne q'$, we have $T\cap T'=\emptyset$.
For any $t\in T$, we have $\varphi(t\cdot vw)=q$ and thus $t\cdot
vwx\in T$. For $t,t'\in T$ with $t\ne t'$, we cannot have $t\cdot
vwx=t'\cdot vwx$ since
otherwise $\B$ would have infinite left delay. Thus the map $t\mapsto
t\cdot vwx$ is a bijection of $T$. 

Let $t'\in T'$. Since $\varphi(t'\cdot vw)=q$, we have $t'\cdot vwx\in
T$.
Since the action of $vwx$ induces a permutation on $T$, there exists
$t\in T$ such that $t\cdot vwx=t'\cdot vwx$. This contradicts the fact
that
$\B$ has finite left delay.
\end{proof}
\begin{example}
The deterministic automaton represented on Figure~\ref{figInfiniteleftDelay} has infinite
left delay. Indeed, there are paths $\cdots 1\edge{b}1\edge{a}1$
and $\cdots 2\edge{b}2\edge{a}1$. Since this automaton cannot be reduced,
$X=L_\A$ is not of almost finite type.
\begin{figure}[hbt]
\centering\gasset{Nadjust=wh}
\begin{picture}(30,15)(0,-5)
\node(1)(0,0){$1$}\node(2)(20,0){$2$}
\drawloop[loopangle=90](1){$a,b$}\drawedge[curvedepth=5](1,2){$c$}
\drawedge[curvedepth=5](2,1){$a$}\drawloop[loopangle=0](2){$b$}
\end{picture}
\caption{An automaton with infinite left delay}\label{figInfiniteleftDelay}
\end{figure}
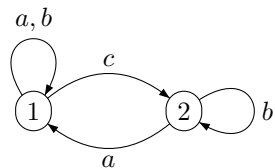
\end{example}
%%%%%%%%%%%%%%%%%%%%%%%%%%%%%%%%%
%
%   Syntactic invariants
%
%%%%%%%%%%%%%%%%%%%%%%%%%%%%%%%%%%%%%%
\section{Syntactic invariants}\label{sectionSyntacticInvariants}

We introduce in this section the syntactic graph of an automaton.  It
uses the Green relations in the transition semigroup of the automaton.
We show that the syntactic graph is an invariant for symbolic
conjugacy (Theorem~\ref{theoremSyntGraphAutomata}). The proof uses
bipartite automata.

The final subsection considers the characterization of sofic shifts
with respect to the families of ordered semigroups known as
pseudovarieties.

\subsection{The syntactic graph}

Let $\A=(Q,E)$ be a deterministic automaton on the alphabet
$A$. Each
word $w\in A^*$ defines a partial map denoted by $\varphi_\A(w)$ from $Q$ to $Q$ which maps
$p\in Q$ to $q\in Q$ if  $p\cdot w=q$. The transition
semigroup
of $\A$, already defined in
Section~\ref{subSectionsyntacticSemigroup}, 
is the image of $A^+$ by the morphism $\varphi_\A$ (in
this subsection, we will not use the order on the transition semigroup).

We give a short summary of \emph{Green relations}\index{Green
  relations} in a semigroup (see~\cite{Howie:1976} for example). Let
$S$ be a semigroup and let $S^{1}=S\cup 1$ be the monoid obtained by
adding an identity to $S$.  Two elements $s,t$ of $S$ are
$\R$-equivalent if $sS^{1}=tS^{1}$. They are $\L$-equivalent if
$S^{1}s=S^{1}t$.  It is a classical result (see~\cite{Howie:1976})
that ${\L\R}={\R\L}$ .  Thus ${\L\R}={\R\L}$ is an equivalence on the
semigroup $S$ called the $\D$-equivalence.  A class of the $\R,\L$ or
$\D$-equivalence is called an $\R,\L$ or $\D$-class\index{$\D$-class}.
An \emph{idempotent}\index{idempotent} of $S$ is an element $e$ such
that $e^2=e$.  A $\D$-class is \emph{regular}\index{regular
  $\D$-class}\index{$\D$-class!regular} if it contains an idempotent.
The equivalence $\H$ is defined as $\H=\R\cap\L$. It is classical
result that the $\H$-class of an idempotent is a group. The $\H$-class
of idempotents in the same $\D$-class are isomorphic groups. The
\emph{structure group}%
\index{structure group}\index{$\D$-class!structure group} of a regular
$\D$-class is any of the $\H$-classes of an idempotent of the
$\D$-class.

When $S$ is a semigroup of partial maps from a set $Q$ into itself,
each element of $S$ has a rank which is the cardinality of its image.
The elements of a $\D$-class all have the same rank, which is called
the \emph{rank}\index{rank of a $\D$-class}\index{$\D$-class!rank} of
the $\D$-class.  There is at most one element of rank $0$ which is the
\emph{zero} \index{zero in a semigroup} of the semigroup $S$ and is
denoted $0$.

A \emph{fixpoint}\index{fixpoint} of a partial map $s$ from $Q$ into
itself is an element $q$ such that the image of $q$ by $s$ is $q$. The
rank of an idempotent is equal to the number of its fixpoints. Indeed,
in this case, every element in the image is a fixpoint.

The preorder $\le_\J$ on $S$ is defined by $s\le_\J t$ if
$S^{1}sS^{1}\subset S^{1}tS^{1}$.
Two elements $s,t\in S$ are $\J$-equivalent if
$S^{1}sS^{1}=S^{1}tS^{1}$. One has $\D\subset \J$ and it is a classical
result that in a finite semigroup $\D=\J$. The preorder $\le_\J$ induces
a partial order on the $\D$-classes, still denoted $\le_\J$.

We associate with $\A$ a labeled graph $G(\A)$ called its
\emph{syntactic graph}\index{syntactic graph}.
The vertices of $G(\A)$ are the regular 
$\D$-classes of the transition semigroup of $\A$.  
Each vertex is labeled by the rank of the
$\D$-class and its structure group.  There is an edge from the vertex
associated with a $\D$-class $D$ to the vertex associated to a
$\D$-class $D'$ if and only if $D\ge_{\J}D'$.
\begin{example}
The automaton $\A$ of Figure~\ref{figSyntacticGraph} on the left
is the Fischer automaton of the even shift (Example~\ref{ExMinAutomaton}).
The semigroup of transitions of $\A$ has $3$ regular $\D$-classes of ranks
$2$ (containing $\varphi_\A(b)$), $1$ (containing $\varphi_\A(a)$), and
$0$
(containing $\varphi_\A(aba)$).
Its syntactic graph is
  represented on the right.
\begin{figure}[hbt]
  \centering
\begin{picture}(100,10)(0,-5)
\put(5,0){
\begin{picture}(20,10)
\node(1)(0,0){$1$}\node(2)(20,0){$2$}
\drawloop[loopangle=180](1){$a$}\drawedge[curvedepth=3](1,2){$b$}
\drawedge[curvedepth=3](2,1){$b$}
\end{picture}
}
\put(45,0){
  \begin{picture}(50,10)
    \gasset{Nadjust=wh}
    \node(1)(0,0){$2$, $\Z/2\Z$}\node(2)(25,0){$1$, $\Z/\Z$}
    \node(3)(50,0){$0$, $\Z/\Z$}
    \drawedge(1,2){}\drawedge(2,3){}
  \end{picture}
}
\end{picture}
  \caption{The syntactic graph of the even shift}\label{figSyntacticGraph}
\end{figure}
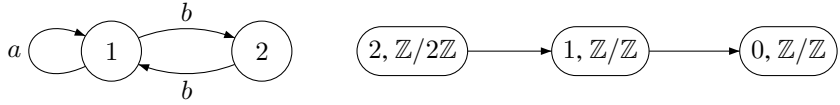
\end{example}
The following result shows that one may reduce to the case
of essential automata.
\begin{proposition}\label{propEssentialPart}
  The syntactic graphs of an automaton and of its essential part are
  isomorphic.
\end{proposition}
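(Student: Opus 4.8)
The plan is to compare the transition semigroups $S=\varphi_\A(A^+)$ and $S'=\varphi_{\A'}(A^+)$, where $\A=(Q,E)$ and $\A'=(Q',E')$ is its essential part. First I would record the purely graph-theoretic facts about the essential part: $Q'$ is the set of states lying on a biinfinite path, and $E'=E\cap(Q'\times A\times Q')$; moreover, if $p,q\in Q'$ and there is a path $p\edge{w}q$ in $\A$, then every state on that path is reachable from a cycle (since $p$ is) and reaches a cycle (since $q$ does), hence lies in $Q'$, so the whole path already lies in $\A'$. In particular, for $p\in Q'$ one has $p\cdot_{\A'}w=q$ exactly when $p\cdot_\A w=q\in Q'$. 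A short check using the same observation shows that the map $\rho\colon S\to S'$ sending $\varphi_\A(w)$ to $\varphi_{\A'}(w)$ --- that is, ``restriction of the partial map to $Q'$'' --- is a well-defined surjective morphism of semigroups.

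The crucial structural fact is about idempotents. If $e=\varphi_\A(w)\in S$ is idempotent and $p$ is a fixpoint of $e$, then $p\cdot w=p$, so $p$ lies on the cycle labelled $w$; hence $\mathrm{Fix}(e)\subseteq Q'$. Consequently $\rho(e)$ is again idempotent with $\mathrm{Fix}(\rho(e))=\mathrm{Fix}(e)$, so $\rho$ preserves the rank of idempotents (the rank being the number of fixpoints). I would also isolate an elementary lemma on idempotent partial maps: if $f\le e$ in the natural order (i.e.\ $ef=fe=f$) and $\mathrm{im}(f)=\mathrm{im}(e)$, then $f=e$ --- one checks that then $\mathrm{dom}(f)=\mathrm{dom}(e)$ and the two maps agree. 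Using idempotent lifting along the surjective morphism $\rho$ (for an idempotent of $S'$ pick any preimage and take its idempotent power), these facts give a bijection between the regular $\D$-classes of $S$ and of $S'$ preserving rank: surjectivity is idempotent lifting; for injectivity, if $\rho(e_1)\,\D\,\rho(e_2)$, lift a linking pair to $x=e_1x_0e_2$, $y=e_2y_0e_1$ in $S$, observe that $(xy)^\omega$ is an idempotent $\le e_1$ with $\rho((xy)^\omega)=\rho(e_1)$, hence with image $\mathrm{im}(e_1)$, hence equal to $e_1$ by the lemma; this forces $e_1\,\D\,e_2$.

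For the vertex labels and edges of the syntactic graph it remains to handle structure groups and the $\le_\J$ order. I would show that $\rho$ maps the $\H$-class $H_e$ of an idempotent $e$ isomorphically onto $H_{\rho(e)}$: it is a homomorphism onto $H_{\rho(e)}$; it is injective because each $\ker e$-class contains the fixpoint it is sent to, which lies in $Q'$, so any $h\in H_e$ (which has the same domain and kernel partition as $e$ and image $\mathrm{im}(e)\subseteq Q'$) is already determined by $\rho(h)$ together with $e$; and it is surjective because a preimage $ete\in eSe$ of an element of $H_{\rho(e)}$ has a power whose image is forced to be $\mathrm{im}(e)$, so by the lemma that power equals $e$ and $ete$ is a unit of $eSe$, i.e.\ lies in $H_e$. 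Finally, the same lifting-plus-lemma technique shows that $\rho$ reflects $\le_\J$ between regular $\D$-classes: from $\rho(e_1)\le_\J\rho(e_2)$ one lifts a factorization to get $g\in e_1Se_1$ with $g\le_\J e_2$ and $\rho(g)=\rho(e_1)$, whence $g^\omega=e_1$ and so $e_1\le_\J e_2$. Combining, $\rho$ induces an isomorphism $G(\A)\cong G(\A')$ of labelled graphs.

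I expect the main obstacle to be precisely the point that a surjective semigroup morphism need not, in general, induce a bijection on regular $\D$-classes, nor reflect the $\J$-order, nor be injective on $\H$-classes; the leverage in all three places is the special feature, proved from the definition of the essential part, that the images (equivalently, the fixpoints) of the idempotents of $S$ already lie in $Q'$ --- this is exactly what makes the little lemma on idempotents applicable. Alternatively, one could quote the standard fact that a surjective morphism induces a surjection between the structure groups of corresponding regular $\D$-classes, and then only needs to supply the injectivity and the rank-preservation established above.
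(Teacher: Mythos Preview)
Your proof is correct and rests on the same key observation as the paper's: every fixpoint of an idempotent $e=\varphi_\A(w)$ lies on a cycle, hence in $Q'$, so $\rho$ preserves the rank of idempotents. The paper's proof, however, is extremely terse --- it records exactly that observation and then simply asserts that $G(\A)\cong G(\A')$ follows.

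Your write-up is considerably more complete. You actually construct the restriction morphism $\rho$, verify it is well defined (using that a path in $\A$ between two states of $Q'$ stays in $Q'$), and then carefully argue the three things the paper leaves implicit: that $\rho$ induces a \emph{bijection} on regular $\D$-classes (via idempotent lifting together with your lemma ``$f\le e$ and $\mathrm{im}(f)=\mathrm{im}(e)$ imply $f=e$''), that the structure groups match (via $H_e\to H_{\rho(e)}$), and that the $\le_\J$ order is \emph{reflected} as well as preserved. These are genuine points: a surjective semigroup morphism does not in general give a bijection on regular $\D$-classes or reflect $\le_\J$, and the leverage in each case is precisely that $\mathrm{im}(e)\subseteq Q'$ forces $\mathrm{im}(\rho(e))=\mathrm{im}(e)$, which feeds your lemma. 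So the approaches coincide in spirit, but your version supplies the semigroup-theoretic glue that the paper's four-line proof omits.
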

\begin{proof}
Let $\A=(Q,E)$ be a deterministic automaton on the alphabet $A$
and let $\A'=(Q',E')$
be its essential part. Let $w\in A^+$ be such that $e=\varphi_\A(w)$
is an idempotent. Then any fixpoint of $e$ is in $Q'$ and thus
$e'=\varphi_{\A'}(w)$ an idempotent of the same rank as $e$.
This shows that $G(\A)$ and $G(\A')$ are isomorphic.
\end{proof}

The following result shows that the syntactic graph characterizes
irreducible shifts of finite type. 

\begin {proposition}\label{propSyntGraphSFT}
  A sofic shift (resp. an irreducible sofic shift)
 is of finite type if and only if the
  syntactic graph of its Krieger automaton (resp. its Fischer automaton)
has nodes of  rank at most $1$.
\end{proposition}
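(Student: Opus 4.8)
The plan is to reduce the statement, via Proposition~\ref{BBEP:propLocalisSFT}, to a combinatorial property of deterministic automata. By that proposition a sofic shift (resp.\ an irreducible sofic shift) $X$ is of finite type if and only if its Krieger automaton (resp.\ its Fischer automaton) is local, and both of these automata are deterministic. So it suffices to prove: \emph{for a deterministic automaton $\A$, the automaton $\A$ is local if and only if every idempotent of its transition semigroup has rank at most~$1$}. This will finish the proof, because the rank of an idempotent equals its number of fixpoints, the regular $\D$-classes are exactly the $\D$-classes containing an idempotent, all elements of a $\D$-class have the same rank, and the vertices of the syntactic graph $G(\A)$ are precisely the regular $\D$-classes; hence ``every idempotent has rank $\le 1$'' is the same as ``every node of $G(\A)$ has rank at most~$1$''.

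For the first implication, suppose $\A$ is $(m,n)$-local and let $e=\varphi_\A(w)$ be an idempotent. I would argue by contradiction: assume $e$ has two distinct fixpoints $q\neq q'$. Pick $j$ with $j|w|\ge m+n$ and set $W=w^{j}$, so that $q\cdot W=q$ and $q'\cdot W=q'$. Let $x$ be the prefix of $W$ of length $m+n$ and $u$ its prefix of length $m$. The paths labelled $x$ starting at $q$ and at $q'$ both exist (they are initial segments of the cycles labelled $W$), so $(m,n)$-locality forces $q\cdot u=q'\cdot u$. Reading the remaining letters of $W$ then gives $q=q\cdot W=q'\cdot W=q'$, a contradiction. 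Hence every idempotent of the transition semigroup of $\A$ has rank $\le 1$.

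For the converse I would assume $\A$ is not local and produce an idempotent of rank $\ge 2$. Put $N=\Card(Q)$, and note $N\ge 2$ since an automaton with at most one state is $(0,0)$-local. As $\A$ is not $(N^{2},N^{2})$-local, there are paths $p\edge{u}q$ and $p'\edge{u}q'$ with $|u|=N^{2}$ and $q\neq q'$ (the common tails of length $N^{2}$ guaranteed by the definition are not needed). Following the two paths in parallel yields a sequence of $N^{2}+1$ pairs of states; since $\A$ is deterministic, once a pair lies on the diagonal the two runs stay equal forever, and as the final pair $(q,q')$ is off the diagonal all the pairs are off the diagonal. There being only $N^{2}-N$ off-diagonal pairs, two of them coincide, which gives states $a\neq b$ and a nonempty word $v$ (a factor of $u$) with $a\cdot v=a$ and $b\cdot v=b$. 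Taking a power $v^{k}$ for which $\varphi_\A(v^{k})$ is idempotent, we obtain an idempotent fixing both $a$ and $b$, hence of rank $\ge 2$; its $\D$-class is a regular $\D$-class of rank $\ge 2$, so $G(\A)$ has a node of rank $\ge 2$. Together with the previous paragraph and Proposition~\ref{BBEP:propLocalisSFT}, this proves the proposition.

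The individual computations are routine; the real content is the equivalence \emph{$\A$ is local $\iff$ its transition semigroup has no idempotent of rank $\ge 2$}. In the forward direction the point is to play the cyclic structure of an idempotent's set of fixpoints against the bounded memory and anticipation of a local automaton; in the converse direction the point is to convert the failure of locality --- an unbounded ``diamond'' of paths --- into an honest pair of distinct states fixed by a common nonempty word, which is where determinism together with a pigeonhole on the finite state set is essential. I expect this equivalence, rather than the subsequent translation into the language of $\D$-classes, to be the only step requiring care.
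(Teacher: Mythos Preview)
Your proof is correct. The forward direction is essentially the same as the paper's (the paper dispatches it in one line, but your explicit argument with a long power of $w$ is what underlies it). The converse, however, is genuinely different. The paper isolates a general finite-semigroup lemma (Proposition~\ref{propNilSimple}): if every idempotent of $S$ lies in an ideal $J$, then $S^{n}\subset J$ for some $n$. Applying this with $J$ the set of elements of rank $\le 1$, one gets that every sufficiently long word has image of cardinality $\le 1$, and hence $\A$ is $(n,0)$-local. Your argument bypasses the semigroup lemma entirely: you run the two paths witnessing non-$(N^{2},N^{2})$-locality in parallel and pigeonhole on the $N^{2}-N$ off-diagonal pairs to find a nonempty word fixing two distinct states, then pass to an idempotent power. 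This is more elementary and gives an explicit locality bound ($\A$ local $\iff$ $\A$ is $(N^{2},N^{2})$-local, with $N=\Card(Q)$), which the paper's approach does not. The paper's route, on the other hand, singles out a reusable semigroup fact and avoids any combinatorics on the automaton itself.
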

In the proof, we use the following
classical
property of finite semigroups. 
\begin{proposition}\label{propNilSimple}
Let $S$ be a finite semigroup and let $J$ be an ideal of $S$.
 The following conditions are equivalent.
\begin{enumerate}
\item[(i)] All idempotents of $S$ are in $J$.
\item[(ii)] There exists an integer $n\ge 1$ such that $S^n\subset J$.

\end{enumerate}
\end{proposition}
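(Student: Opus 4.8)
The plan is to prove the two implications separately; (ii) $\Rightarrow$ (i) is immediate, and (i) $\Rightarrow$ (ii) is the substantial direction. For (ii) $\Rightarrow$ (i): if $S^n\subseteq J$ for some $n\ge1$ and $e\in S$ satisfies $e^2=e$, then $e$ equals the $n$-fold product $e\cdot e\cdots e$, which lies in $S^n\subseteq J$. No further work is needed there.

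For (i) $\Rightarrow$ (ii) the first step is to invoke the classical fact that in a finite semigroup every element $s$ has an idempotent power: the cyclic subsemigroup generated by $s$ is finite, so there are integers $i\ge1$ and $p\ge1$ with $s^i=s^{i+p}$, and then $s^k$ is idempotent whenever $k\ge i$ is a multiple of $p$. By hypothesis (i) such a power $s^k$ lies in $J$, and since $J$ is an ideal every power $s^{k'}$ with $k'\ge k$ lies in $J$ as well. As $S$ is finite, taking a common upper bound over all elements yields a single integer $m\ge1$ with $s^m\in J$ for every $s\in S$.

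The second step is a pigeonhole argument on partial products. Put $n=|S|+1$ and take an arbitrary product $s_1s_2\cdots s_n$ of $n$ elements of $S$. Consider the $n$ suffix products $u_i=s_is_{i+1}\cdots s_n$ for $1\le i\le n$; these are $n$ elements of a set of cardinality $n-1$, so $u_i=u_j$ for some $i<j$. Writing $w=s_is_{i+1}\cdots s_{j-1}\in S$ one gets $u_i=wu_i$, hence $u_i=w^tu_i$ for all $t\ge1$, and in particular $u_i=w^mu_i$, which lies in $J$ since $w^m\in J$ and $J$ is an ideal. Therefore $s_1\cdots s_n=(s_1\cdots s_{i-1})u_i\in J$ (it is simply $u_i$ when $i=1$). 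As the product was arbitrary, $S^{|S|+1}\subseteq J$, which is condition (ii).

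I do not expect a genuine obstacle; the only points that need a little care are making the integer $m$ uniform over the finitely many elements of $S$, and using the ideal property of $J$ twice — once to pass from $w^k\in J$ to $w^m\in J$, and once to absorb the prefix $s_1\cdots s_{i-1}$ back into $J$.
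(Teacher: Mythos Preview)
Your proof is correct and follows essentially the same approach as the paper's: pigeonhole on $|S|+1$ partial products to find a repetition, extract the ``loop'' factor, and use that some power of it is idempotent and hence in $J$. The only cosmetic differences are that the paper uses prefix products rather than suffix products, and finds the idempotent power of the specific loop element directly rather than first establishing a uniform exponent $m$ (your uniform $m$ is harmless but not needed).
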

\begin{proof}
Assume that (i) holds.
Let $n=\Card(S)+1$ and let $s=s_1s_2\cdots s_n$ with $s_i\in S$. Then there exist $i,j$ with
$1\le i< j\le n$ such that $s_1s_2\cdots s_i=s_1s_2\cdots s_i\cdots
s_j$.
Let $t,u\in S^1$ be defined by
$t=s_1\cdots s_i$ and $u=s_{i+1}\cdots s_j$. Since $tu=t$,
we have $tu^k=t$ for all $k\ge 1$. Since $S$ is finite, there is a
$k\ge 1$ such that $u^k$ is idempotent and thus $u^k\in J$. This
implies that $t\in J$ and thus $s\in J$. Thus (ii) holds.\\
\noindent It is clear that (ii) implies (i).
\end{proof}
\begin{proof}[Proof of Proposition~\ref{propSyntGraphSFT}]
Let  $X$ be a shift space (resp. an irreducible shift space), let $\A$ be its Krieger
automaton (resp. its Fischer automaton) and let $S$ be the transition semigroup of $\A$. 

If $X$ is of finite type, by
Proposition~\ref{BBEP:propLocalisSFT},
the automaton $\A$ is local.
Any  idempotent in $S$ has
rank $1$ and thus the condition is satisfied. 

Conversely, assume that the graph $G(\A)$ has nodes of rank
at most $1$. Let
 $J$ be the ideal of $S$ formed of the elements of rank at most $1$.
Since all idempotents of $S$ belong to $J$, by Proposition~\ref{propNilSimple},
the semigroup $S$
satisfies
$S^n=J$ for some $n\ge 1$. This shows that for any sufficiently long word
$x$, the map $\varphi_\A(x)$ has rank at most $1$. Thus for
$p,q,r,s\in Q$, if
 $p\cdot x=r$ and $q\cdot x=s$ then $r=s$. This implies that
$\A$ is $(n,0)$-local.
\end{proof}

The following result is from~\cite{Beal&Fiorenzi&Perrin:2006}.

\begin{theorem}\label{theoremSyntGraphAutomata}
 Two symbolic conjugate automata have isomorphic syntactic graphs.
\end{theorem}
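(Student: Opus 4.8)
The plan is to reduce the statement to a semigroup-theoretic fact about bipartite automata, along the lines of the proof of Theorem~\ref{TheoremNasu}. Since the syntactic graph is only defined for deterministic automata, I assume $\A,\A'$ deterministic; by Proposition~\ref{propEssentialPart} I may replace them by their essential parts, which changes neither the syntactic graphs nor (because an essential automaton and its essential part have the same edge shift and the same set of labels) the hypothesis. So let $\A,\A'$ be essential and deterministic. By the Decomposition Theorem for automata (Theorem~\ref{DecompTheorAutomata}) a symbolic conjugacy is a composition of splits and merges, so it suffices to treat the case where $\A'$ is an in-split of $\A$, say relative to a map $f\colon A\to B$ (the out-split case being symmetric). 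Imitating the construction in the proof of Theorem~\ref{TheoremNasu} (the shift $Z$ of sequences $\cdots a_if(a_i)a_{i+1}f(a_{i+1})\cdots$), I build from $f$ a deterministic essential bipartite automaton $\C=(\C_1,\C_2)$ whose components are $\A$ and $\A'$ up to a bijection of the alphabets. As automata differing only by such a bijection have the same transition semigroup, hence the same syntactic graph, it remains to prove that the two components of a deterministic essential bipartite automaton have isomorphic syntactic graphs.

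So let $\C=(\C_1,\C_2)$ be deterministic, essential and bipartite with $Q=Q_1\cup Q_2$, $A=A_1\cup A_2$ as in Proposition~\ref{propBipartiteAutomata}, and put $S=\mathrm{Tr}(\C)$. The basic remarks are: a nonempty word over $A$ whose first letter lies in $A_i$ acts in $\C$ by a partial map with domain, and (the word then having even length) image, inside $Q_i$; hence a nonzero element of $S$ has its whole domain in $Q_1$ or in $Q_2$, and $\mathrm{Tr}(\C_i)$ consists of $0$ together with the nonzero elements of $S$ of the $i$-th kind. In particular the rank of an element of $\mathrm{Tr}(\C_i)$ is the same in $\mathrm{Tr}(\C_i)$ as in $S$, and every idempotent of $S$ is of the first kind, of the second kind, or zero.

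The core of the argument is a rotation of idempotents. Given an idempotent $e=\varphi_\C(u)\in\mathrm{Tr}(\C_1)$ with $u\in(A_1A_2)^{+}$, write $u=st$ with $s$ the first letter of $u$; then $ts\in(A_2A_1)^{+}$, and using $e^2=e$ one checks that $f:=\varphi_\C\big((ts)^2\big)$ is an idempotent of $\mathrm{Tr}(\C_2)$ with $f=\varphi_\C(t)\,e\,\varphi_\C(s)$ and $e=\varphi_\C(s)\,f\,\varphi_\C(t)$, so that $e\mathrel{\J}f$ and, $S$ being finite, $e\mathrel{\D}f$ in $S$. Thus every regular $\D$-class of $S$ of positive rank contains idempotents of both kinds, which yields a bijection between the regular $\D$-classes of $\mathrm{Tr}(\C_1)$ and of $\mathrm{Tr}(\C_2)$ (each being in bijection with the positive-rank regular $\D$-classes of $S$, plus the common rank-$0$ class). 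This bijection preserves rank, by the remarks above and $e\mathrel{\D}f$. It preserves the structure group: the $\R$-class of $e$ in $S$ already lies in $\mathrm{Tr}(\C_1)$, since every element of $eS^{1}$ has domain inside $\mathrm{dom}(e)\subseteq Q_1$; together with the fact that in a finite group each element's inverse is a power of it, this gives that the group $\H$-class of $e$ is the same computed in $\mathrm{Tr}(\C_1)$ and in $S$ (and likewise for $f$ in $\mathrm{Tr}(\C_2)$), so the two structure groups are isomorphic as $\H$-classes of the $\D$-class of $e$ in $S$. Finally it preserves the order: if $e\ge_{\J}e'$ in $S$, write $e=ae'b$ and then $e=(ea)e'(be)$; both $ea$ and $be$ must have domain meeting $Q_1$ (else the composite is $0$), hence by the basic remarks domain inside $Q_1$, so $ea,be\in\mathrm{Tr}(\C_1)$ and $e\ge_{\J}e'$ there. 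Running the analogous checks for $\mathrm{Tr}(\C_2)$ shows the composite bijection is an isomorphism of labelled graphs $G(\C_1)\cong G(\C_2)$.

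The main obstacle is precisely this bipartite lemma: the reductions in the first paragraph are routine given Sections~2--4 and the proof of Theorem~\ref{TheoremNasu}, but faithfully transferring the Green relations — the $\R$- and $\D$-classes, their ranks and structure groups, and the $\J$-order — between $S$ and its subsemigroups $\mathrm{Tr}(\C_1),\mathrm{Tr}(\C_2)$ rests entirely on the domain/image bookkeeping above. One small point to watch: when invoking Theorem~\ref{DecompTheorAutomata} one must check that determinism survives along the chosen chain of splits and merges; alternatively one may argue via Theorem~\ref{TheoremHamachiNasu} and Proposition~\ref{propElemEquivIsSim}, after noting that the bipartite automaton produced there can be taken deterministic.
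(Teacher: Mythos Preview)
Your approach is essentially the paper's: reduce via the Decomposition Theorem (Theorem~\ref{DecompTheorAutomata}) to a single in-split, pass to a bipartite automaton, and prove that the two components of a bipartite automaton have isomorphic syntactic graphs via the same rotation-of-idempotents argument. The paper packages the last step as a separate Proposition~\ref{propSyntGraphBipartite} and reaches the bipartite automaton through Propositions~\ref{propInSplitAutomaton} and~\ref{propElemEquivIsSim} rather than by a direct $Z$-shift style construction, but the substance---including the determinism caveat you flag---is the same.
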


We use the following intermediary result.
\begin{proposition}\label{propSyntGraphBipartite}
Let $\A=(\A_1,\A_2)$ be a bipartite automaton.
The syntactic graphs of $\A,\A_1$ and $\A_2$ are isomorphic.
\end{proposition}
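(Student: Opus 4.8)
The plan is to relate the transition semigroup $S$ of $\A$ with the transition semigroups $S_1,S_2$ of $\A_1,\A_2$, and to obtain an isomorphism of syntactic graphs from a bijection between regular $\D$-classes. Since the syntactic graph is defined only for deterministic automata, $\A$ is deterministic, hence $\A_1,\A_2$ are deterministic by Proposition~\ref{propBipartiteAutomata}, and by Proposition~\ref{propEssentialPart} we may assume $\A$ essential; by symmetry it suffices to prove $G(\A)\cong G(\A_1)$. Write $Q=Q_1\cup Q_2$ and $A=A_1\cup A_2$ for the bipartition. In $\A$ a nonempty word labels a path only if it alternates between letters of $A_1$ and letters of $A_2$, so every nonzero element of $S$ is $\varphi_\A(w)$ for an alternating word $w$, with domain inside a single $Q_i$ and image inside a single $Q_j$; sending such an element to its \emph{type} $(i,j)$, and $0$ to $0$, defines a homomorphism from $S$ onto the five-element semigroup $\{(1,1),(1,2),(2,1),(2,2),0\}$ of $2\times2$ matrix units. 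In particular every nonzero idempotent of $S$ has type $(1,1)$ or $(2,2)$. The type-$(1,1)$ elements are exactly the $\varphi_\A(w)$ for $w$ a nonempty path label in $(A_1A_2)^{+}$; identifying each block $a_1a_2$ with the corresponding letter of $\A_1$, these together with $0$ form a subsemigroup of $S$ that I identify with $S_1$, and likewise the type-$(2,2)$ elements form $S_2$.

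First I would record the routine consequences of this grading. For $s,s'\in S_1$, the equality $sS^{1}=s'S^{1}$ can only be witnessed by type-$(1,1)$ elements (right-multiplying a type-$(1,1)$ element by a type-$(1,2)$ element gives a type-$(1,2)$ element, whose image leaves $Q_1$), so $\R_S$ restricts on $S_1$ to $\R_{S_1}$; symmetrically for $\L$, hence for $\H$ and $\D$, and the same bookkeeping shows $\le_\J$ restricts to $\le_\J$. Moreover a type-$(1,1)$ element has its image in $Q_1$, so its rank as a map on $Q$ equals its rank on $Q_1$; any element $\H_S$-equivalent to a type-$(1,1)$ element is again of type $(1,1)$, so the structure group of a regular $\D$-class meeting $S_1$ is the same whether computed in $S$ or in $S_1$; and all elements of a $\D$-class share a rank. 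Hence $D\mapsto D\cap S_1$ maps each regular $\D$-class of $S$ that meets $S_1$ onto a single regular $\D$-class of $S_1$ with the same rank and structure group, preserves the $\le_\J$-order (the needed factorisations can again be taken inside $S_1$), is injective, and is surjective since every regular $\D$-class of $S_1$ is $D\cap S_1$ for $D$ the $\D_S$-class of any of its idempotents.

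The main point — and the step I expect to be the real obstacle — is that \emph{every} regular $\D$-class of $S$ meets $S_1$, i.e. contains a type-$(1,1)$ idempotent. Let $e$ be a nonzero idempotent of $S$; if it has type $(1,1)$ there is nothing to prove, so assume it has type $(2,2)$ and write $e=\varphi_\A(av)$ with $a\in A_2$ its first letter, so that $\gamma:=\varphi_\A(a)$ has type $(2,1)$, $\delta:=\varphi_\A(v)$ has type $(1,2)$ and $e=\gamma\delta$. Put $t:=\delta\gamma=\varphi_\A(va)$, of type $(1,1)$. From the identity $(\gamma\delta)^{k+1}=\gamma(\delta\gamma)^{k}\delta$, valid in any semigroup, together with the idempotency of $e=\gamma\delta$, one gets $e=\gamma t^{2}\delta$ and $t^{k}=\delta e\gamma$ for every $k\ge2$; hence $t^{2}$ is an idempotent, it is nonzero (otherwise $e=\gamma t^{2}\delta=0$), it has type $(1,1)$, and $e=\gamma t^{2}\delta$, $t^{2}=\delta e\gamma$ give $e\,\J\,t^{2}$, so $e\,\D\,t^{2}$ since $S$ is finite. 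Thus the $\D$-class of $e$ contains the type-$(1,1)$ idempotent $t^{2}$. (The rank-$0$ class, when it occurs, requires only an elementary separate check.)

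Putting the two steps together, $D\mapsto D\cap S_1$ is a bijection from the regular $\D$-classes of $S$ onto those of $S_1$ preserving ranks, structure groups and the $\le_\J$-order, that is, an isomorphism $G(\A)\cong G(\A_1)$; the symmetric argument gives $G(\A)\cong G(\A_2)$, and the proposition follows. The conceptual heart is the matrix-unit grading of $S$ together with the cyclic-conjugacy identity $(\gamma\delta)^{k+1}=\gamma(\delta\gamma)^{k}\delta$, which is exactly what pairs each idempotent of $S_1$ with one of $S_2$ inside a common $\D$-class of $S$; the remaining work is a careful but mechanical verification of how Green's relations of $S$ restrict to the graded pieces.
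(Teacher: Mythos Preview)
Your argument is correct and follows essentially the same route as the paper: identify $S_1,S_2$ as subsemigroups of $S$, use the cyclic-conjugacy trick $e=\gamma\delta\Rightarrow (\delta\gamma)^2$ idempotent and $\D$-equivalent to $e$ (the paper writes this as $u=au'\mapsto v=u'a$ and checks $\varphi_\A(v^3)=\varphi_\A(v^2)$), and then verify that Green's relations on $S$ restrict to those on $S_1$. One small caveat: your type map to the $2\times2$ matrix-unit semigroup is not literally a homomorphism, since a product of two type-$(1,1)$ elements in $S$ can be $0$ while $(1,1)(1,1)=(1,1)$ in the target; but you only ever use the type constraints on nonzero elements, and those are valid, so nothing in the proof is affected.
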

\begin{proof}
Let $Q=Q_1\cup Q_2$ and $A=A_1\cup A_2$ be the partitions
of the set of of states and of the alphabet of $\A$ corresponding
to the decomposition $(\A_1,\A_2)$. Set $B_1=A_1A_2$ and $B_2=A_2A_1$.
The semigroups $S_1=\varphi_{\A_1}(B_1^+)$ and $S_2=\varphi_{\A_2}(B_2^+)$ are
included in the semigroup $S=\varphi_\A(A^+)$. Thus the
Green relations of $S$ are refinements of the corresponding Green
relations
in $S_1$ or in $S_2$.
Any idempotent $e$ of $S$ belongs either to $S_1$ or
to $S_2$. Indeed, if $e=0$ then $e$ is in $S_1\cap S_2$. Otherwise,
it has at least one fixpoint $p\in Q_1\cup Q_2$. If $p\in Q_1$, then
$e$ is in $\varphi_A(B_1^+)$ and thus $e\in S_1$. Similarly
if $p\in Q_2$ then $e\in S_2$. 

Let $e$ be an idempotent in $S_1$ and let
$e=\varphi_\A(u)$. Since $u\in B_1^+$, we have $u=au'$
with $a\in A_1$ and $u'\in B_2^*A_2$. Let $v=u'a$.
 Then
$f=\varphi_\A(v)^2$ is idempotent. Indeed, we have
\begin{displaymath}
\varphi_\A(v^3)=\varphi_\A(u'au'au'a)=\varphi_\A(u'uua)=\varphi_\A(u'ua)=\varphi_\A(v^2)
\end{displaymath}
 Moreover
$e,f$ belong the same $\D$-class. Similarly, if $e\in S_2$, there is
an idempotent in $S_1$ which is $\D$ equivalent to $e$. This shows that
a regular $\D$-class of $\varphi_\A(A^+)$ contains idempotents in $S_1$
and in $S_2$.

Finally, two elements of $S_1$ which are $\D$-equivalent in $S$
are also $\D$-equivalent in $S_1$. Indeed, let $s,t\in S_1$ be such
that
$s\R\L t$. Let $u,u',v,v'\in S$ be such that 
\begin{displaymath}
suu'=s,\quad v'vt=t,\quad su=tv
\end{displaymath}
in such a way that $s\R su$ and $vt\L t$.
 Then  $su=vt$ implies that $u,v$ are both
in $S_1$. Similarly $suu'=s$ and $v'vt=t$ imply that $u'v'\in S_1$.
 Thus $s\D t$ in $S_1$. This shows that a regular $\D$ class $D$ of
$S$ contains exactly one $\D$-class $D_1$ of $S_1$ (resp. $D_2$ of $S_2$). Moreover,
an $\H$-class of $D_1$ is also an $\H$-class of $D$.

Thus the three syntactic graphs are isomorphic.
\end{proof}

\begin{proof}[Proof of Theorem~\ref{theoremSyntGraphAutomata}]
  Let $\A=(Q,E)$ and $\B=(R,F)$ be two symbolic conjugate automata on
  the alphabets $A$ and $B$, respectively.  By the Decomposition
  Theorem (Theorem~\ref{DecompTheorAutomata}), we may assume that the
  symbolic conjugacy is a split or a merge.  Assume that $\A'$ is an
  in-split of $\A$. By Proposition~\ref{propEssentialPart}, we may
  assume that $\A$ and $\A'$ are essential.  By
  Proposition~\ref{propInSplitAutomaton}, the adjacency matrices of
  $\A$ and $\A'$ are symbolic elementary equivalent.

  By Proposition~\ref{propElemEquivIsSim}, there is a bipartite
  automaton $\C=(\C_1,\C_2)$ such that $M(\C_1),M(\C_2)$ are similar
  to $M(\A),M(\B)$ respectively. By
  Proposition~\ref{propSyntGraphBipartite}, the syntactic graphs of
  $\C_1,\C_2$ are isomorphic. Since automata with similar adjacency
  matrices have obviously isomorphic syntactic graphs, the result
  follows.
\end{proof}

A refinement of the syntactic graph which is also invariant by
flow equivalence has been introduced in ~\cite{Costa:2006}. The
vertices
of the graph are the \emph{idempotent-bound} $\D$ classes, where
an element $s$ of a semigroup $S$ is called idempotent-bound
if there exist idempotents $e,f\in S$ such that $s=esf$. The
elements of a regular $\D$-class are idempotent-bound.

\intertitre{Flow equivalent automata}
Let $\A$ be an automaton on the alphabet $A$ and let $G$ be its
underlying graph. An \emph{expansion}
\index{expansion!automaton}\index{automaton!expansion} of $\A$
is a pair $(\varphi,\psi)$ of a graph expansion of $G$ and a symbol
expansion of $L_\A$ such that the diagram below is commutative.
\begin{figure}[hbt]
\centering
\gasset{Nframe=n}
\begin{picture}(20,20)
\node(XA)(0,20){$X_\A$}\node(XB)(20,20){$X_\B$}
\drawedge(Xa,Xb){$\varphi$}
\node(La)(0,0){$L_\A$}\node(Lb)(20,0){$L_\B$}
\drawedge(Xa,La){$\lambda_\A$}\drawedge(Xb,Lb){$\lambda_\B$}
\drawedge(La,Lb){$\psi$}
\end{picture}
\end{figure}
The inverse of an automaton expansion is called a contraction.
\begin{example}
Let $\A$ and $\B$ be the automata represented on 
Figure~\ref{figFlowEquivAutomata}. The second automaton is an
expansion
of the first one.
\begin{figure}[hbt]
\centering
\begin{picture}(80,15)(0,-5)
\node(1)(0,5){$1$}
\node(2)(20,5){$2$}
\node(3)(45,5){$3$}
\node(4)(65,0){$4$}\node(5)(65,10){$5$}
\node(6)(85,5){$6$}
\drawedge[curvedepth=3](1,2){$a$}\drawedge[curvedepth=3](2,1){$a$}
\drawloop[loopangle=0](2){$b$}
\drawedge(3,4){$a$}\drawedge(4,6){$\omega$}
\drawedge[linecolor=red,ELside=r](5,3){$\omega$}\drawedge[linecolor=red](4,6){$\omega$}
\drawedge[ELside=r](6,5){$a$}
\drawloop[loopangle=0](6){$b$}
\end{picture}
\caption{An automaton expansion}\label{figFlowEquivAutomata}
\end{figure}
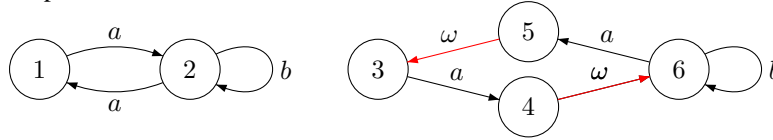
\end{example}

The \emph{flow equivalence} of automata is the equivalence generated
by symbolic conjugacies, expansions and contractions. 

Theorem~\ref{theoremSyntGraphAutomata} has been generalized by
Costa and Steinberg~\cite{CostaSteinberg:2010} to flow equivalence.

\begin{theorem}\label{FlowEquivAuto}
  Two flow equivalent automata have isomorphic syntactic graphs.
\end{theorem}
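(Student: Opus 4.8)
The plan is to reduce to a single elementary expansion and then to compare the transition semigroups of the two automata, following the pattern of the proof of Proposition~\ref{propSyntGraphBipartite}. By definition, flow equivalence of automata is generated by symbolic conjugacies together with expansions and contractions; Theorem~\ref{theoremSyntGraphAutomata} already disposes of the symbolic conjugacies, so it suffices to show that an expansion $\B$ of an automaton $\A$ satisfies $G(\A)\cong G(\B)$ (a contraction being the same assertion read backwards). A general expansion decomposes, up to symbolic conjugacies, into elementary ones, so I may assume that $\B$ is obtained from $\A=(Q,E)$ by fixing a letter $a$, adjoining to every state $r$ that is the target of an $a$-edge a fresh state $r'$ and a fresh edge $(r',\omega,r)$ with $\omega\notin A$, and rerouting each edge $(q,a,r)$ of $\A$ into $(q,a,r')$; using Proposition~\ref{propEssentialPart} I may also assume $\A$ (hence $\B$) essential. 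On labels, the passage from $\A$ to $\B$ is the symbol expansion relative to $a$; write $\psi\colon A^{*}\to(A\cup\{\omega\})^{*}$ for its underlying morphism, $\psi(a)=a\omega$ and $\psi(b)=b$ for $b\neq a$.

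Let $S$ and $T$ be the transition semigroups of $\A$ and $\B$. The features of $\B$ that I would exploit are that $\omega$ labels only the new pendant edges, that every edge of $\B$ entering a rerouted target is its unique $\omega$-edge, and that each fresh state has a single outgoing edge. From these, a run of $\psi(v)$ in $\B$ from an old state is defined precisely when the run of $v$ in $\A$ from that state is defined, and the two coincide once the $\omega$-steps (which merely record the forced detour $q\to r'\to r$) are collapsed; also $\psi(v)$ is the empty map on every fresh state (for $v\neq\varepsilon$) and its image lies in $Q$. Hence $v\mapsto\varphi_{\B}(\psi(v))$ is a morphism whose image $T_{0}:=\varphi_{\B}(\psi(A^{+}))\subseteq T$ is mapped isomorphically onto $S$ by $\varphi_{\B}(\psi(v))\mapsto\varphi_{\A}(v)$, and this isomorphism preserves ranks. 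I would then argue exactly as in Proposition~\ref{propSyntGraphBipartite} with $T_{0}$ in the role of $S_{1}$: (a) every regular $\D$-class of $T$ contains an idempotent all of whose fixpoints lie in $Q$ — indeed an idempotent $e=\varphi_{\B}(w)$ with a fixpoint at a fresh state must have $w=\omega w_{1}$ with $w_{1}$ not starting with $\omega$, and by the classical fact that if $st$ is idempotent then $ts$ is idempotent and $\D$-equivalent to it, $\varphi_{\B}(w_{1}\omega)^{2}$ is an idempotent $\D$-equivalent to $e$ whose fixpoints all lie in $Q$, which one then replaces inside its $\D$-class by one in $T_{0}$; and (b) two elements of $T_{0}$ that are $\D$-equivalent in $T$ are already $\D$-equivalent in $T_{0}$, and the $\H$-class of an idempotent of $T_{0}$ computed in $T$ equals the one computed in $T_{0}$. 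Together (a) and (b) produce a rank-preserving, structure-group-preserving bijection between the regular $\D$-classes of $T$ and those of $S$, compatible with $\le_{\J}$, whence $G(\B)\cong G(\A)$.

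The hard part will be step two: making rigorous the slogan that runs of $\B$ are runs of $\A$ with inserted bookkeeping detours, and deducing from it \emph{simultaneously} that the $\D$-class bijection respects rank, structure group, and the $\J$-order — concretely, showing that the new generator $\varphi_{\B}(\omega)$ spawns no new regular $\D$-class of positive rank, that an idempotent with only old fixpoints can indeed be replaced within its $\D$-class by one in the image of $\psi$, and that the rank-$0$ class behaves well despite the dead-end words (such as $aa$) that the pendant edges introduce; essentiality of the automata is what one leans on here. A conceptually cleaner but heavier alternative would bypass all of this by observing that an elementary expansion does not alter, up to equivalence of categories, the Karoubi envelope of the transition semigroup, and that $G(\A)$ is determined by that category — but setting up the Karoubi envelope goes beyond the machinery developed in this chapter.
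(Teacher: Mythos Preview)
The paper does not actually prove this theorem: it is stated without proof and attributed to Costa and Steinberg~\cite{CostaSteinberg:2010}. So there is no ``paper's own proof'' to compare against here.

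That said, your outline is a sensible direct attack, and it is worth noting that the approach you dismiss at the end as ``heavier'' --- showing that an elementary expansion leaves the Karoubi envelope of the transition semigroup unchanged up to equivalence --- is essentially what Costa and Steinberg do. Their point is precisely that the syntactic graph (in fact a refinement of it) is an invariant of the Karoubi envelope, so that once the categorical machinery is in place, the expansion case becomes clean. Your hands-on argument trades that setup cost for the bookkeeping you flag in your final paragraph.

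On the direct route, the place I would press hardest is your step~(b). In the bipartite proof the subsemigroups $S_1,S_2$ are cut out by \emph{where elements act} (from $Q_1$ or from $Q_2$), and this is what forces the auxiliary factors $u,u',v,v'$ back into the right subsemigroup. In your situation $T_0=\varphi_\B(\psi(A^+))$ is cut out by the \emph{form of the defining word}, not by the action, and a factorisation $s=suu'$ witnessing $s\,\R\,su$ in $T$ can perfectly well have $u$ represented only by words passing through fresh states. You will need an explicit replacement lemma: given $u\in T$ with $su\in T_0$ and $s\in T_0$, produce $\bar u\in T_0$ with $s\bar u=su$ (and dually on the left). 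This is where the ``runs of $\B$ are runs of $\A$ with bookkeeping detours'' slogan must be cashed in precisely, using that any word labelling a path between old states of $\B$ lies in $\psi(A^*)\cup\omega\,\psi(A^*)\cup\psi(A^*)a\cup\omega\,\psi(A^*)a$; the leading $\omega$ and trailing $a$ can then be absorbed into the neighbouring $T_0$-factors. Once that lemma is in hand, the rest of your plan --- including the rank-$0$ class and the $\J$-order --- goes through.
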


\begin{example}
  The syntactic graphs of the automata $\A$, $\B$ of
  Example~\ref{ExConjugateSofic} are isomorphic to the syntactic graph
  of the Fischer automaton $\cal C$ of the even shift.  Note that the automata
  $\A, \B$ are not flow equivalent to $\cal C$ . Indeed, the edge
  shifts 
$X_\A$, $X_\B$ on the underlying graphs of the automata
  $\A$, $\B$ are flow equivalent to the full shift on $3$ symbols
  while the edge shift $X_\C$ is flow equivalent to the full shift on $2$
  symbols. Thus the converse of Theorem~\ref{FlowEquivAuto}
  is false.
\end{example}

%%%%%%%%%%%%%%%%%%%%%%%%%%%%%%%%%%%%%%
%
%   Pseudovarieties
%
%%%%%%%%%%%%%%%%%%%%%%%%%%%%%%%%%%%%%%
\subsection{Pseudovarieties}
In this subsection, we will see how one can formulate characterizations
of some classes of sofic shifts by means of properties of their syntactic
semigroup.
In order to formulate these syntactic characterizations 
of sofic shifts, we introduce the notion of pseudovariety of ordered
semigroups. For a systematic exposition, see the original articles
\cite{Pin:1995},
\cite{PinPinguetWeil:2002}, or the surveys in \cite{Pin:1997} or
\cite{PerrinPin:2004}.

A morphism of ordered semigroups $\varphi$ from $S$ into $T$ is an order
compatible semigroup morphism, that is such that $s\le s'$ implies
$\varphi(s)\le\varphi(s')$.
An ordered subsemigroup of $S$ is a
subsemigroup equipped with the restriction of the preorder.

A \emph{pseudovariety}\index{pseudovariety} of finite ordered semigroups is a
class of ordered semigroups closed under taking ordered subsemigroups,
finite direct products and image under morphisms of ordered
semigroups.

Let $V$ be a pseudovariety of ordered semigroups. We say that
a semigroup $S$ is \emph{locally} in $V$ if all  the submonoids 
of $S$ are in $V$.
The class of these semigroups is a pseudovariety of ordered semigroups.

The following result is due to Costa~\cite{Costa:2007}.

\begin{theorem}
  Let $V$ be a pseudovariety of finite ordered semigroups containing
  the class of commutative ordered monoids such that every element is
  idempotent and greater than the identity.  The class of shifts
  whose syntactic semigroup is locally in $V$ is invariant under
  conjugacy.
\end{theorem}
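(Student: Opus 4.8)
The plan is to peel off, using the decomposition theorem and the bipartite‑automaton machinery of Section~\ref{sectionSymbolicConjugacy}, all of the shift‑theoretic content, leaving a comparison of the local submonoids of three transition semigroups. I would first note that $S$ is locally in $V$ exactly when every local submonoid $eSe$ ($e^2=e\in S$) is in $V$, because every submonoid of $S$ with identity $e$ is contained in $eSe$ and $V$ is closed under ordered subsemigroups. It suffices to treat sofic shifts (soficity is a conjugacy invariant), and by Proposition~\ref{BBP:syntacticSemigroup} the syntactic semigroup of a sofic shift is, as an ordered semigroup, the transition semigroup of its Krieger automaton. By the Decomposition Theorem (Theorem~\ref{SymbolicDecompositionTheorem}) a conjugacy is a composition of splitting and merging maps, and a merging is the inverse of a splitting, so it is enough to show, for $X'$ an in-splitting of $X$ relative to some $f\colon A\to B$, that the Krieger automaton of $X$ has transition semigroup locally in $V$ if and only if that of $X'$ does (out-splittings being symmetric).

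Next I would borrow the construction from the proof of Theorem~\ref{TheoremNasu}: let $Z$ be the sofic shift of all $\cdots a_if(a_i)a_{i+1}f(a_{i+1})\cdots$ with $\cdots a_ia_{i+1}\cdots\in X$, and let $\A$ be its Krieger automaton (its Fischer automaton if $X$ is irreducible; I would write the proof in the irreducible case, the general case being the same with ``Fischer'' replaced by ``Krieger''). Then $\A$ is a reduced essential bipartite automaton $\A=(\A_1,\A_2)$, and by Proposition~\ref{propBipartiteAutomata} the components $\A_1,\A_2$ are, up to a renaming of the alphabet, the Krieger automata of $X$ and $X'$. Writing $S,S_1,S_2$ for the transition semigroups of $\A,\A_1,\A_2$, one gets ordered-semigroup isomorphisms $S_1\cong S(X)$, $S_2\cong S(X')$, and — since a word over the alphabet of pairs acts on the state set $Q_1$ exactly as the corresponding alternating word over $A=A_1\cup A_2$ does — $S_1$ and $S_2$ embed into $S$ as ordered subsemigroups. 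So the theorem is reduced to showing that $S$ is locally in $V$ if and only if $S_1$ is if and only if $S_2$ is.

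To prove that, I would refine Proposition~\ref{propSyntGraphBipartite}. As in its proof, every idempotent of $S$ lies in $S_1$ or in $S_2$, and every regular $\D$-class of $S$ contains idempotents of both $S_1$ and $S_2$ which are $\D$-equivalent in $S$. The new point is that for an idempotent $e\in S_1$ one has $eSe=eS_1e$ (and symmetrically for $S_2$): writing $e=\varphi_\A(w)$ with $w$ a word over $A$ that alternates, starting with a letter of $A_1$ and ending with a letter of $A_2$, the product $\varphi_\A(wvw)$ is for every word $v$ either $0$ (a same-side pair appears at a junction and no path carries $wvw$) or again indexed by such an alternating word, hence lies in $S_1\cup\{0\}=S_1$; combined with $eS_1e\subseteq eSe$ and $x=exe$ for $x\in eSe$ this gives $eSe=eS_1e$. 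Hence the local submonoids of $S_1$ are precisely the $eSe$ with $e\in S_1$; every local submonoid of $S$, and every local submonoid of $S_2$, is of the form $eSe$ with $e$ $\D$-equivalent in $S$ to an idempotent of $S_1$; and invoking the classical fact that $\D$-equivalent idempotents of a finite semigroup have isomorphic local submonoids, the collections of isomorphism types of local submonoids of $S$, $S_1$ and $S_2$ all coincide. Since $V$ is closed under ordered subsemigroups, this gives the three-way equivalence.

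The hard part will be the last paragraph, on two counts. First, the parity and junction bookkeeping behind $eSe=eS_1e$ — in particular the careful treatment of the zero element — has to be done honestly. Second, and more essentially, this is where the hypothesis on $V$ is needed: the isomorphisms of local submonoids that come from $\D$-equivalence of idempotents are a priori only semigroup isomorphisms, not isomorphisms of ordered semigroups, so one must control the local submonoids at the idempotents that do not already occur in $S_1$; the assumption that $V$ contains every commutative ordered monoid in which each element is idempotent and lies above the identity is exactly what guarantees that these possibly order-anomalous local submonoids nonetheless belong to $V$, so that the equivalence still goes through. Making this precise, together with the non-irreducible (Krieger-automaton) case, is where essentially all of the work lies.
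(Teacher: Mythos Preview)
The paper does not prove this theorem; it is stated without proof and attributed to Costa~\cite{Costa:2007}. So there is no ``paper's own proof'' to compare against, and I can only assess your sketch on its merits.

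Your reduction is sound and in the spirit of the paper: conjugacy of shifts $\Rightarrow$ symbolic conjugacy of Krieger/Fischer automata (Theorem~\ref{TheoremNasu}), then decompose into splits/merges and analyse one in-splitting via a bipartite automaton exactly as in the proofs of Theorem~\ref{TheoremNasu} and Proposition~\ref{propSyntGraphBipartite}. The claim that every idempotent of $S$ lies in $S_1$ or $S_2$, and that each regular $\D$-class of $S$ meets both, is correct and already in Proposition~\ref{propSyntGraphBipartite}. The parity argument for $eSe\subseteq S_1\cup\{0\}$ is also correct, though note that in general you only get $eSe=eS_1e\cup\{0\}$: the zero of $S$ (which always exists in the bipartite transition semigroup) need not lie in $S_1$, so your parenthetical ``$S_1\cup\{0\}=S_1$'' is not justified and the bookkeeping around $0$ must be carried through.

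The real problem is your diagnosis of the ``hard part''. You say the isomorphism $eSe\cong fSf$ for $\D$-equivalent idempotents is ``a~priori only a semigroup isomorphism, not an isomorphism of ordered semigroups''. That is false here: the order on a transition semigroup is stable (it is the syntactic order), so if $e=ab$, $f=ba$ then $x\mapsto bxa$ and its inverse $y\mapsto ayb$ are both order-preserving, hence the conjugation is an isomorphism of \emph{ordered} monoids. Consequently the mechanism you propose for invoking the hypothesis on $V$ --- that $V$ should absorb ``order-anomalous'' local submonoids at idempotents outside $S_1$ --- never arises. Your argument, as written, does not use the hypothesis at all; yet the hypothesis is part of the statement, so either you have proved more than Costa (unlikely without checking carefully), or there is a genuine gap you have not located. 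You should trace exactly where the assumption that $V$ contains the commutative idempotent ordered monoids with $1\le x$ is needed in Costa's argument --- it is not where you put it --- and in particular revisit the passage between $S(X)$ and $S_1$ (the component automaton is over the product alphabet $A_1A_2$, not just the effective letters $af(a)$, so a spurious zero may appear) and the non-irreducible case.
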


The following statements give examples of pseudovarieties satisfying
the
above condition.

\begin{proposition}
An irreducible shift space is of finite type if and only if its
syntactic semigroup is locally commutative.
\end{proposition}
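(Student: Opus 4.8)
The plan is to reduce the statement, via the results already established, to one semigroup-theoretic equivalence about the transition semigroup of the Fischer automaton, and then to verify that equivalence by hand. We treat the case where $X$ is sofic, which is the substantial one (if $X$ is not sofic it is not of finite type, and $\B(X)$ being then non‑recognizable its syntactic semigroup is infinite; we do not discuss this case). So let $\A$ be the Fischer automaton of the irreducible sofic shift $X$, and let $S$ be its transition semigroup, which by Proposition~\ref{BBP:syntacticSemigroup} is isomorphic to the syntactic semigroup of $X$. Since the vertices of the syntactic graph of $\A$ are exactly the regular $\D$-classes of $S$, each labelled by its rank, Proposition~\ref{propSyntGraphSFT} says that $X$ is of finite type if and only if every regular $\D$-class of $S$ — equivalently, every idempotent of $S$ — has rank at most $1$. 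It therefore suffices to prove
\[
(\star)\qquad\text{every idempotent of }S\text{ has rank }\le 1\iff S\text{ is locally commutative.}
\]

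For the forward implication of $(\star)$ we fix an idempotent $e\in S$ and look at the monoid $eSe$. If $\operatorname{rank}(e)=0$ then $e$ is the empty partial map and $eSe=\{e\}$. If $\operatorname{rank}(e)=1$, say with image $\{p\}$, then for every $s\in S$ the transformation $ese$ sends each $q$ in the domain of $e$ to $(p\cdot s)\cdot e$ (since $q\cdot e=p$), which equals $p$ when $p\cdot s$ is defined and lies in the domain of $e$ and is undefined otherwise — in either case independently of $q$. Hence $ese$ is either $e$ or the empty partial map, so $eSe$ is commutative; and since every submonoid of $S$ has an idempotent identity $f$ and is contained in $fSf$, the semigroup $S$ is locally commutative.

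For the reverse implication we argue by contraposition, and this is the step I expect to carry the weight. Suppose $e=\varphi_\A(w)$ is an idempotent of $S$ of rank $\ge 2$. Pick two distinct fixpoints $p,p'$ of $e$ (there are at least two, the rank of an idempotent being its number of fixpoints), so that $p\cdot w=p$ and $p'\cdot w=p'$. As $\A$ is reduced, $L_p\ne L_{p'}$, so after exchanging $p$ and $p'$ if necessary we may choose a word $z$ with $p\cdot z$ defined and $p'\cdot z$ undefined; as $\A$ is strongly connected there are words $\gamma,y$ with $(p\cdot z)\cdot\gamma=p$ and $p'\cdot y=p$. Put $t=\varphi_\A(wz\gamma w)$ and $m=\varphi_\A(wyw)$, both in the monoid $eSe$. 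Using $p\cdot w=p$ and $p'\cdot w=p'$ one gets $p\cdot t=p$, $p'\cdot t$ undefined, and $p'\cdot m=p$. Choosing $k$ with $f:=t^{k}$ idempotent, $f$ lies in $eSe$, fixes $p$, and is undefined at $p'$ (its domain being contained in that of $t$). Then $p'\cdot(mf)=(p'\cdot m)\cdot f=p\cdot f=p$, whereas $p'\cdot(fm)=(p'\cdot f)\cdot m$ is undefined since $p'\cdot f$ is. So $mf\ne fm$, $eSe$ is not commutative, and $S$ is not locally commutative.

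The bookkeeping — that $eSe$ is a monoid with identity $e$, the telescoping identities $e(ete)=(ete)e=ete$, the existence of the idempotent power $t^{k}$ — is routine. The real obstacle is the reverse implication of $(\star)$, and the point to get right is that it is precisely the reducedness and strong connectedness of the Fischer automaton (so, in the end, the irreducibility of $X$) that make the construction work: reducedness separates $p$ from $p'$ by a word defined at one but not at the other, and strong connectedness brings $p'$ back to $p$ without leaving $eSe$. Neither can be dropped: an abelian group acting on itself by translations is strongly connected and locally commutative, yet its identity transformation has rank equal to the order of the group — so the argument genuinely needs the Fischer automaton and not an arbitrary automaton recognizing $X$.
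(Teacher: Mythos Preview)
The paper states this proposition without proof, so there is nothing to compare against directly; your argument is correct and is the natural one given the tools the paper develops. You identify the syntactic semigroup with the transition semigroup $S$ of the Fischer automaton via Proposition~\ref{BBP:syntacticSemigroup}, reduce via Proposition~\ref{propSyntGraphSFT} to the purely semigroup-theoretic claim that every idempotent of $S$ has rank $\le 1$ iff $S$ is locally commutative, and then prove that claim by hand. The forward direction (rank $\le 1$ forces $eSe\subset\{e,0\}$) and the passage between ``all submonoids commutative'' and ``all $eSe$ commutative'' are handled correctly. The reverse direction is the substantive step, and your use of reducedness (to separate two fixpoints $p,p'$ by a word defined at one and not the other) together with strong connectedness (to return to $p$) is exactly what is needed; your closing remark that these hypotheses cannot be dropped is apt.

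One minor simplification: the idempotent power $f=t^{k}$ is unnecessary. The element $t=\varphi_\A(wz\gamma w)\in eSe$ already satisfies $p\cdot t=p$ and $p'\cdot t$ undefined, so $p'\cdot(mt)=p$ while $p'\cdot(tm)$ is undefined, giving $mt\ne tm$ directly.
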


 An \emph{inverse
  semigroup}
is a semigroup which can be represented as a semigroup of partial
one-to-one
maps from a finite set $Q$ into itself. The family of inverse
semigroups
does not form a variety (it is not closed under homomorphic image.
However, according to Ash's theorem~\cite{Ash:1987}, the variety
generated by inverse semigroups is characterized by the property
that the idempotents commute.
Using this result, the following result is proved in~\cite{Costa:2007}.
\begin{theorem}
  An irreducible shift space is of almost finite type if and only if
  its syntactic semigroup is locally in the pseudovariety generated by
  inverse semigroups.
\end{theorem}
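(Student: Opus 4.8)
The plan is to translate the statement, by way of Propositions~\ref{BBP:Almostfinitetype} and~\ref{BBP:syntacticSemigroup}, into a purely automata-theoretic property of the Fischer automaton, and then to read that property off Proposition~\ref{propInfDel}. We may assume $X$ is sofic, since otherwise $X$ is not of almost finite type and its syntactic semigroup is infinite, hence locally in no pseudovariety of finite semigroups, so both conditions fail. Let $\A$ denote the Fischer automaton of $X$ — a strongly connected deterministic automaton — and let $S$ be its transition semigroup, which by Proposition~\ref{BBP:syntacticSemigroup} is the syntactic semigroup of $X$. Writing $V$ for the pseudovariety generated by inverse semigroups, Ash's theorem identifies $V$ with the class of finite semigroups whose idempotents commute; hence ``$S$ is locally in $V$'' means precisely that \emph{for every idempotent $e\in S$ the idempotents of the local submonoid $eSe$ commute}. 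In view of Proposition~\ref{BBP:Almostfinitetype}, it therefore suffices to prove: \emph{a strongly connected deterministic automaton $\C$ with transition semigroup $T$ has finite left delay if and only if, for every idempotent $e\in T$, the idempotents of $eTe$ commute} (call this $(\star)$).

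For the implication ``$\Leftarrow$'' of $(\star)$ I would argue contrapositively. Suppose $\C$ has infinite left delay; by the (left‑delay) form of Proposition~\ref{propInfDel} used in the proof of Proposition~\ref{BBP:Almostfinitetype}, there are states $q\ne q'$, a nonempty word $u$ and a word $v$ with $q\cdot u=q$, $q'\cdot u=q'$ and $q\cdot v=q'\cdot v=:p$. After replacing $u$ by a power we may assume $e:=\varphi_\C(u)$ is idempotent, so that $q$ and $q'$ are fixed by $e$. Using strong connectivity, choose words $w_1,w_2$ with $p\cdot w_1=q$ and $p\cdot w_2=q'$, and set $r_i:=\varphi_\C(u\,v\,w_i\,u)=e\,\varphi_\C(vw_i)\,e\in eTe$ for $i=1,2$. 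A direct check gives $q\cdot r_1=q'\cdot r_1=q$ and $q\cdot r_2=q'\cdot r_2=q'$. Passing to idempotent powers $f:=r_1^{k_1}$ and $g:=r_2^{k_2}$, which remain in $eTe$, one gets $q'\cdot f=q$ and $q\cdot g=q'$, whence $q'\cdot fg=q'\ne q=q'\cdot gf$; thus $f$ and $g$ are non-commuting idempotents of $eTe$. Applied to $\A$ this gives: if $S$ is locally in $V$ then $\A$ has finite left delay, hence $X$ is of almost finite type.

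For the implication ``$\Rightarrow$'' of $(\star)$ I would again argue contrapositively, and here the point is that $eTe$ acts \emph{faithfully} on $R:=\mathrm{Im}(e)$, the set of states fixed by $e$ (two elements of $eTe$ that agree on $R$ agree on all of $Q$). Since a finite semigroup of partial injections has commuting idempotents — its idempotents are the partial identities $1_A$, and $1_A1_B=1_{A\cap B}=1_B1_A$ — if $eTe$ has non-commuting idempotents then its faithful action on $R$ is not by partial injections: there are $s\in eTe$ and distinct $a,b\in R$ with $a\cdot s=b\cdot s$. Choosing nonempty words $u,v$ with $\varphi_\C(u)=e$ and $\varphi_\C(v)=s$, the states $a\ne b$ satisfy $a\cdot u=a$, $b\cdot u=b$ and $a\cdot v=b\cdot v$, so Proposition~\ref{propInfDel} shows $\C$ has infinite left delay. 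Applied to $\A$: if $X$ is of almost finite type, then by Proposition~\ref{BBP:Almostfinitetype} $\A$ has finite left delay, hence every $eSe$ has commuting idempotents, i.e.\ $S$ is locally in $V$.

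The substance is thus $(\star)$, and within it the main obstacle is the construction in the ``$\Leftarrow$'' direction: one must ensure the idempotents $f,g$ produced from an infinite-delay pattern lie in the \emph{local} submonoid $eTe$ and not merely in $T$ — this is the reason for conjugating the auxiliary words by $u$ on both sides — because ``$S$ is locally in $V$'' is strictly weaker than ``$S\in V$''; simple $2$-element left-zero transition semigroups (not in $V$, yet locally in $V$) show this distinction is real. The ``$\Rightarrow$'' direction, by contrast, is short once one records the faithfulness of the local action on $\mathrm{Im}(e)$ and the elementary fact about semigroups of partial injections.
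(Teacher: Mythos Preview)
The paper does not actually prove this theorem: it attributes the full result to Costa~\cite{Costa:2007} and one direction to~\cite{Beal&Fiorenzi&Perrin:2006}, noting only that Ash's theorem is the key input. So there is no proof in the paper to compare against, and your argument must stand on its own.

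It does. Your reduction to the criterion $(\star)$ on the Fischer automaton via Propositions~\ref{BBP:Almostfinitetype} and~\ref{BBP:syntacticSemigroup} together with Ash's theorem is exactly right, and both directions of $(\star)$ are correct as written. In the ``$\Leftarrow$'' direction the computation $q'\cdot fg=q'\ne q=q'\cdot gf$ checks out, and your insistence on sandwiching by $u$ so that $f,g\in eTe$ is the essential point. In the ``$\Rightarrow$'' direction the faithfulness of the action of $eTe$ on $\mathrm{Im}(e)$ follows from $es=se=s$ for $s\in eTe$, and the observation that idempotent partial injections on a finite set are partial identities is the clean way to finish; the infinite-left-delay pattern you extract matches the dual of Proposition~\ref{propInfDel} (as already used in the proof of Proposition~\ref{BBP:Almostfinitetype}).

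Compared with the cited literature, your argument is more elementary and self-contained: Costa's proof in~\cite{Costa:2007} goes through profinite methods and the structure of free pro-$V$ semigroups, whereas you work directly with the transition semigroup and a single local monoid. What the profinite route buys is a uniform framework that also yields the invariance results for general pseudovarieties stated just before this theorem; what your route buys is a short, transparent proof of this specific equivalence that requires nothing beyond Ash's identification of the pseudovariety and basic facts about partial maps. One small remark: the paper's pseudovarieties are of \emph{ordered} semigroups, but since Ash's theorem is stated for unordered semigroups and the condition ``idempotents commute'' is order-blind, your silently dropping the order is harmless here; it would be worth a sentence to say so.
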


The fact that  shifts of almost finite type satisfy this condition was proved
in~\cite{Beal&Fiorenzi&Perrin:2006}. The converse was conjectured in
the same paper.

In \cite{CostaSteinberg:2010} it is shown that this result implies
that the class of shifts of almost finite type is invariant under flow
equivalence. This is originally from \cite{FujiwaraOsikawa:1987}.
%======================================================
% Bibliography
%======================================================
\bibliographystyle{abbrv}
\addcontentsline{toc}{section}{References}
\begin{footnotesize}
  \bibliography{abbrevs,symbolicdynamics}
\end{footnotesize}

%======================================================
% The abstract on a new page, at the end
%======================================================

\newpage
\begin{abstract}
  This chapter presents some of the links between automata theory and
symbolic dynamics. The emphasis is on two particular points. The
first one is the interplay between some particular classes of
automata, such as local automata and results on embeddings
of shifts of finite type. The second one is the connection between
syntactic semigroups and the classification of sofic shifts
up to conjugacy.
\end{abstract}

%======================================================
% The index at the very end
%======================================================

\printindex
\end{document}